\numberwithin{equation}{section}
\newtheorem{theorem}{Theorem}[section]
\newtheorem{proposition}[theorem]{Proposition}
\newtheorem{lemma}[theorem]{Lemma}
\newtheorem{corollary}[theorem]{Corollary}
\newtheorem{remark}[theorem]{Remark}
\renewcommand{\i}{\mathrm{i}}
\newcommand{\e}{\mathrm{e}}
\newcommand{\I}[2]{\mathcal{I}_{#1, #2}}
\newcommand{\K}[2]{\mathcal{K}_{#1, #2}}
\newcommand{\J}[2]{\mathcal{J}_{#1, #2}}
\newcommand{\Hpm}[2]{\mathcal{H}^\pm_{#1, #2}}
\newcommand{\Wr}{\mathscr{W}}
\newcommand{\cH}{\mathcal{H}}
\newcommand{\cI}{\mathcal{I}}
\newcommand{\cK}{\mathcal{K}}
\newcommand{\cX}{\mathcal{X}}
\newcommand{\cJ}{\mathcal{J}}
\newcommand{\cY}{\mathcal{Y}}
\newcommand{\cD}{\Dom}
\newcommand{\pder}{\partial}
\newcommand{\suma}[2]{\sum\limits_{#1}^{#2}}
\newcommand{\loc}{\mathrm{loc}}
\newcommand{\naw}[1]{\left( {#1} \right)}
\newcommand{\abs}[1]{{\left| #1 \right|}}
\renewcommand{\Re}{\mathrm{Re}}
\renewcommand{\Im}{\mathrm{Im}}
\newcommand{\bbC}{\C}
\newcommand{\bbR}{\mathbb{R}}
\def\H{\mathcal H}
\def\z{\mathbf{z}}
\def\U{\mathcal U}
\newcommand{\Dom}{{\mathcal D}}
\newcommand{\N}{\mathrm{N}}
\def\Ka{\mathcal K}
\def\Ia{\mathcal I}
\def\Ha{\mathcal H}
\def\Ja{\mathcal J}
\def\R{\mathbb R}
\def\C{\mathbb C}
\def\N{\mathbb N}
\def\Z{\mathbb Z}
\def\B{\mathcal B}
\def\d{\mathrm d}
\def\z{\mathsf z}
\def\max{\mathrm{max}}
\def\min{\mathrm{min}}
\journalname{AIOT}
\begin{document}

\title{On radial Schr\"odinger operators with a Coulomb potential: \\ General boundary conditions}

\author{
Jan~Derezi\'{n}ski \and
J\'er\'emy~Faupin \and
Quang~Nhat~Nguyen \and Serge~Richard}

\institute{
Jan Derezi\'{n}ski \at
Department of Mathematical Methods in Physics, Faculty of Physics,
University of Warsaw, ul. Pasteura 5, 02-093 Warszawa, Poland\\ \email{jan.derezinski@fuw.edu.pl}
\and
J\'er\'emy Faupin
\at Institut Elie Cartan de Lorraine,  Universit\'e de Lorraine
UFR MIM, 3 rue Augustin Fresnel.   57073 Metz Cedex 03, France \\
\email{jeremy.faupin@univ-lorraine.fr}
\and Quang Nhat Nguyen
\at
Graduate school of mathematics,
Nagoya University,
Chikusa-ku,
Nagoya 464-8602, Japan \\
\email{nguyen.quang.nhat@d.mbox.nagoya-u.ac.jp}
\and
 Serge Richard
\at
Graduate school of mathematics,
Nagoya University,
Chikusa-ku,
Nagoya 464-8602, Japan,
and on
leave of absence from Univ.~Lyon, Universit\'e Claude Bernard Lyon 1, CNRS UMR 5208,
Institut Camille Jordan, 43 blvd.~du 11 novembre 1918, F-69622 Villeurbanne cedex,
France\\
\email{richard@math.nagoya-u.ac.jp}
}

\dedication{This paper is dedicated
 to Prof. Franciszek Hugon Szafraniec}

\date{Received: date / Accepted: date}

\maketitle

\begin{abstract}
This paper presents the spectral analysis of
1-dimensional Schr\"odinger operator on the half-line whose potential
is a linear combination of the {\em Coulomb term} $1/r$
and the {\em centrifugal term} $1/r^2$.
The coupling constants are allowed to be complex,
and all possible boundary conditions at $0$ are considered.
The resulting closed operators are organized in
three holomorphic families.
These operators are closely related to the {\em Whittaker equation}.
Solutions of this equation are thoroughly studied in a large appendix to this paper.
Various special cases of this equation are analyzed, namely the {\em degenerate}, the {\em Laguerre} and the {\em doubly degenerate} cases.
A new solution to the Whittaker equation in the doubly degenerate case is also introduced.
\end{abstract}

\keywords{Schr\"odinger operators \and Coulomb potential \and boundary conditions \and Whittaker equation }

\subclass{34L40 \and 81Q10}

\tableofcontents

\section{Introduction}

This paper is devoted to 1-dimensional Schr\"odinger operators with Coulomb and centrifugal potentials. These operators are given by the differential expressions
\begin{equation}\label{whit1.}
L_{\beta,\alpha } :=-\partial_x^2+\Big(\alpha -\frac14\Big)\frac{1}{x^2}-\frac{\beta}{x}.
\end{equation}
The parameters $\alpha$ and $\beta$ are allowed to be complex valued.
We shall study realizations of $L_{\beta,\alpha}$
as closed operators on $L^2(\R_+)$, and consider general boundary conditions.

The operator given in \eqref{whit1.} is one of the most famous and useful exactly
solvable models of Quantum Mechanics. It describes the radial part of the Hydrogen Hamiltonian.
In the mathematical literature, this operator goes back to Whittaker, who studied its eigenvalue equation in \cite{Whi}. For this reason, we call \eqref{whit1.} the {\em Whittaker operator}.

This paper is a continuation of a series of papers  \cite{BDG,DR1,DR2} devoted to an  analysis of exactly solvable 1-dimensional Schr\"odinger operators. We  follow the same philosophy as in \cite{DR1}. We start from a formal differential expression depending on  complex parameters. Then we look for closed realizations of this operator on $L^2(\R_+)$. We do not restrict ourselves to self-adjoint realizations -- we look for realizations  that are {\em well-posed}, that is, possess non-empty resolvent sets. This implies that they satisfy an appropriate boundary condition at $0$, depending on an additional complex parameter. We organize those operators in holomorphic families.

Before describing the holomorphic families introduced in this paper, let us recall the main constructions from the previous papers of this series. In \cite{BDG,DR1} we
considered the operator
\begin{equation}\label{bessel}
L_{\alpha } :=-\partial_x^2+\Big(\alpha -\frac14\Big)\frac{1}{x^2}.
\end{equation}
As is known, it is useful to set $\alpha=m^2$. In \cite{BDG} the following holomorphic family of closed realizations of \eqref{bessel} was introduced:
\begin{align*}
&  H_m,\quad \text{with }-1<\Re(m),\\
& \text{defined by $L_{m^2}$ with boundary conditions} \ \sim x^{\frac12+m}.
\end{align*}
It was proved that for $\Re(m)\geq1$ the operator $H_m$ is the only closed realization of $L_{m^2}$. In the region $-1<\Re(m)<1$ there exist realizations of $L_{m^2}$ with mixed boundary conditions. As described in \cite{DR1}, it is natural to organize them into two holomorphic families:
\begin{align*}
&  H_{m,\kappa},\quad  \text{with }-1<\Re(m)<1,\ m\neq0, \ \kappa\in\C\cup\{\infty\},\\
& \text{defined by $L_{m^2}$ with boundary conditions} \ \sim x^{\frac12+m}+\kappa  x^{\frac12-m},
\end{align*}
and
\begin{align*}
&  H_0^\nu,\quad  \text{with }\nu\in\C\cup\{\infty\},\\
& \text{defined by $L_{0}$ with boundary conditions} \ \sim x^{\frac12}\big(\nu+\ln(x)\big).
\end{align*}
Note that related investigations about these operators have also been performed in \cite{Smi1,Smi2}.

In \cite{DR2} and in the present paper we study closed realizations of the differential operator \eqref{whit1.} on $L^2(\R_+)$. Again, it is useful to set $\alpha=m^2$. In \cite{DR2} we introduced the  family
\begin{align*}
& H_{\beta,m},\quad  \text{with }\beta\in\C,\ -1<\Re(m),\\
&\text{defined by $L_{\beta,m^2}$ with boundary conditions} \ \sim x^{\frac12+m}\Big(1-\frac{\beta}{1+2m} x  \Big).
\end{align*}
It was noted in this reference that this family is holomorphic except for a singularity at $(\beta,m)=\big(0,-\frac12\big)$, which corresponds to the Neumann Laplacian.

For $\Re(m)\geq1$ the operator $H_{\beta,m}$ is also the only closed realization of $L_{\beta,m^2}$. In the region $-1<\Re(m)<1$ there exist other closed realizations of $L_{\beta,m^2}$. The boundary conditions corresponding to $H_{\beta,m}$ are distinguished---we will call them {\em pure}. The goal of the present paper is to describe the most general well-posed realizations of $L_{\beta,m^2}$, with all possible boundary conditions, including the {\em mixed} ones.

We shall show that it is natural to organize all well-posed realizations of $L_{\beta,m^2}$ for  $-1<\Re(m)<1$
in three holomorphic families: The generic family
\begin{align*}
&  H_{\beta,m,\kappa},\quad  \text{with }
\beta\in\C,\ -1<\Re(m)<1,\  m\not \in \big\{-\tfrac12,0,\tfrac12\big\},\ \kappa\in\C\cup\{\infty\},\\
&\text{defined by $L_{\beta,m^2}$ with boundary conditions} \\
&\hspace{15ex}\sim x^{\frac12+m}\Big(1-\frac{\beta}{1+2m} x  \Big)
+\kappa x^{\frac12-m}\Big(1-\frac{\beta}{1-2m} x  \Big),
\end{align*}
the family for $m=0$
\begin{align*}
&  H_{\beta,0}^\nu,\quad  \text{with } \beta\in\C,\ \nu\in\C\cup\{\infty\},\\
&\text{defined by $L_{\beta,0}$ with boundary conditions} \ \sim
x^{\frac12}\big(\nu+\ln(x)\big),
\end{align*}
and the  family for $m=\frac{1}{2}$
\begin{align*}
&  H_{\beta,\frac12}^\nu,\quad  \text{with }
\beta\in\C,\ \nu\in\C\cup\{\infty\}\\
&\text{defined by $L_{\beta,\frac14}$ with boundary conditions} \ \sim
1-\beta x\ln(x)+\nu x.
\end{align*}
The above holomorphic families include all possible well-posed realizations of $L_{\beta,m^2}$ in the region $|\Re(m)|<1$ with one exception: the special case $(\beta,m,\kappa)=\big(0,-\frac12,0\big)$ which corresponds to the Neumann Laplacian $H_{-\frac12}=H_{-\frac12,0}=H_{\frac12,\infty}$, and which is already covered by the families  $H_m$ and $H_{m,\kappa}$.

After having introduced these families and describing a few general results, we provide the spectral analysis of these operators and give the formulas for their resolvents. We also describe the eigenprojections onto  eigenfunctions of these operators. They can be organized into a single family of bounded $1$-dimensional projections $P_{\beta,m} (\lambda)$ such that
$L_{\beta,m}^{\max}P_{\beta,m}(\lambda)=\lambda P_{\beta,m}(\lambda)$. Here $L_{\beta,m}^{\max}$ denotes the maximal operator which is introduced
in Section \ref{sec_max_min}.

There exists a vast literature devoted to Schr\"odinger operators with Coulomb potentials, including various boundary conditions.
Let us mention, for instance, an interesting dispute in Journal of Physics A \cite{FLM,Ku1,Ku2} about self-adjoint extensions of the 1-dimensional Schr\"odinger operator on the real line with a Coulomb potential (without the centrifugal term). Papers \cite{FuLan,KoTe,KuLu}
 discuss generalized Nevanlinna functions naturally appearing in the context of such operators,
especially in the range of parameters $|\Re(m)|\geq1$.
See also \cite{BG,Dol,Gas,Ges,GPT,GTV,Gui,Her,Hum,MOC,Mic,Muk,MZ,Sea} and references therein.
However, essentially all these references are devoted to real parameters $\beta,m$ and self-adjoint realizations of  Whittaker operators.
The philosophy of using holomorphic families of closed operators, which we believe should be one of the standard approaches to the study of special functions, seems to be confined to the series of paper \cite{BDG,DR1,DR2}, which we discussed above.

The main reason why we are able to analyze the operator \eqref{whit1.} so precisely is the fact that it is closely related to an exactly solvable equation,  the so-called {\em Whittaker equation}
\begin{equation*}
\bigg(-\partial_z^2+\Big(m^2-\frac14\Big)\frac{1}{z^2}-\frac{\beta}{z}+\frac{1}{4}\bigg)f(z)=0.
\end{equation*}
Its solutions are called {\em Whittaker functions}, which can be expressed in terms of {\em Kummer's confluent functions}.
The theory of the Whittaker equation is the second subject of the paper. It is extensively developed in a large appendix to this paper. It can be viewed as an extension of the theory of Bessel and Whittaker equation presented in \cite{DR1,DR2}. We discuss in detail various special cases: the {\em degenerate}, the {\em Laguerre} and the {\em doubly degenerate} cases. Besides the well-known Whittaker functions  $\cI_{\beta,m}$ and  $\cK_{\beta,m}$, described for example in \cite{DR2}, we introduce a new kind of Whittaker functions, denoted $\cX_{\beta,m}$. It is needed to fully describe the doubly degenerate case.

The Whittaker equation and its close cousin, the confluent equation, are discussed in many standard monographs, including \cite{AS,confluent,Sl}. Nevertheless, it seems that our treatment  contains a number of  facts about the Whittaker equation, which could not be found in the literature. For example, we have never seen a satisfactory detailed treatment  of the doubly degenerate case. The function $\cX_{\beta,m}$ seems to be our invention. Without this function it would be difficult to analyze the doubly degenerate case. Figures \ref{fig:M1} and \ref{fig:M2}, which illustrate the intricate structure of the degenerate, Laguerre and doubly degenerate cases, apparently appear for the first time in the literature. Another result that seems to be new is a set of explicit formulas for integrals involving
products of solutions of the Whittaker equation. These formulas are related to the eigenprojections of the Whittaker operator.

\section{The Whittaker operator}\label{sec_Whi_op}
\setcounter{equation}{0}

In this section we define the main objects of our paper: the Whittaker operators $H_{\beta,m,\kappa}$, $H_{\beta,\frac12}^\nu$ and $H_{\beta,0}^\nu$ on the Hilbert space $L^2\big(]0,\infty[\big)$.

\subsection{Notations}

We shall use the notations $\R_+=]0,\infty[$, $\N=\{0,1,2,\dots\}$ and $\N^\times = \{ 1 , 2 , \dots \}$. Likewise, we set $\C^\times = \C \setminus \{ 0 \}$ and  $\R^\times = \R \setminus \{ 0 \}$.
We will often consider functions on the Riemann sphere $\C\cup\{\infty\}$ with  the convention $\frac10=\infty$, $\frac1\infty=0$.
 Besides, $\alpha \;\!\infty=\infty$ for any $\alpha\in \C\setminus\{0\}$ and  $\infty + \tau = \infty$.

The Hilbert space $L^2(\R_+)$ is endowed with the scalar product
\begin{equation*}
(h_1|h_2)=\int_0^\infty \overline{h_1(x)}\;\!h_2(x)\;\!\d x.
\end{equation*}
We will also use the bilinear form defined by
\begin{equation*}
\langle h_1|h_2\rangle=\int_0^\infty h_1(x)\;\!h_2(x)\;\!\d x.
\end{equation*}

The Hermitian conjugate of an operator $A$ is denoted by $A^*$. Its transpose is denoted by $A^\#$. If $A$ is bounded, then $A^*$ and $A^\#$ are defined by the relations
\begin{align*}
(h_1|Ah_2)&=(A^*h_1|h_2),\\
\langle h_1|Ah_2\rangle&=\langle A^\#h_1|h_2\rangle.
\end{align*}
The definition of $A^*$ has the well-known generalization to the unbounded case. The definition of $A^\#$ in the unbounded case is analogous.

The following holomorphic functions are understood as their \emph{principal bran\-ches}, that is, their domain is $\C\setminus]-\infty,0]$ and on $]0,\infty[$ they coincide with their usual definitions from real analysis: $\ln(z)$, $\sqrt z$, $z^\lambda$.
We set $\arg (z):=\Im \big(\ln(z)\big)$.
Sometimes it will be convenient to include in the domain of our functions
two copies of $]-\infty,0[$, describing the limits from the
upper and lower half-plane. They correspond to
the limiting cases $\arg(z)=\pm\pi$.

The Wronskian of two continuously differentiable functions $f$ and $g$ on $\R_+$ is denoted by $\Wr(f,g;\cdot)$ and is defined for $x\in \R_+$ by
\begin{equation}\label{wronskian}
\Wr( f,g;x):=f(x)g'(x)-f'(x)g(x).
\end{equation}

\subsection{Zero-energy eigenfunctions of the Whittaker operator}\label{sec_zero-e}

In order to study the realizations of the Whittaker operator $L_{\beta,\alpha}$ one first needs to find out what are the possible boundary conditions at zero. The general theory of 1-dimensional Schr\"odinger operators says that there are two possibilities:
\begin{enumerate}
\item[(i)] there is a 1-parameter family of boundary conditions at zero,
\item[(ii)] there is no need to fix a boundary condition at zero.
\end{enumerate}
One can show that    (i)$\Leftrightarrow$(i') and     (ii)$\Leftrightarrow$(ii'), where
\begin{enumerate}
\item[(i')] for any $\lambda\in\C$ the space of solutions of $(L_{\beta,\alpha}-\lambda)f=0$ which are square integrable around zero is 2-dimensional,
\item[(ii')] for any $\lambda\in\C$ the space of solutions of $(L_{\beta,\alpha}-\lambda)f=0$ which are square integrable around zero is at most 1-dimensional.
\end{enumerate}
We refer to \cite{DG2019} and references therein for more details.

In the above criterion one can choose a convenient $\lambda$. In our case the simplest choice corresponds to $\lambda=0$. Therefore, we first discuss solutions of the zero eigenvalue Whittaker equation
\begin{equation}\label{forsure13}
\bigg(-\partial_x^2+\Big(m^2-\frac14\Big)\frac{1}{x^2} - \frac{\beta}{x}\bigg)f = 0
\end{equation}
for $m$ and $\beta$ in $\C$.
As analyzed in more details in Section \ref{B5}, solutions of \eqref{forsure13} can be constructed from solutions of the Bessel equation. More precisely, for $\beta \neq 0$, let us define the following  function for $x\in \R_+$\;\!:
\begin{equation*}
j_{\beta,m}(x):= \frac{\Gamma(1+2m)}{\sqrt\pi}\beta^{-\frac14-m}x^{1/4}\cJ_{2m}\big(2\sqrt{\beta x}\big) ,
\end{equation*}
where $\cJ_m$ is defined in Section \ref{sec:standard_bessel}. For $\beta=0$ we set
\begin{equation*}
j_{0,m}(x):=x^{m+\frac{1}{2}}.
\end{equation*}
Then, the equation \eqref{forsure13} is solved by the functions $j_{\beta,m}$, see \cite[Sec.~2.8]{DR2} and Section \ref{B5}. For $2m\not\in\Z$, $j_{\beta,m}$ and $j_{\beta,-m}$ span the space of solutions of \eqref{forsure13}. They are square integrable around zero if and only if $|\Re(m)|<1$.

We still need to consider the special cases $m\in\big\{-\frac12,0,\frac12\big\}$. In fact, we shall not consider separately $m=-\frac{1}{2}$ because Equation  \eqref{forsure13} with $m=-\frac12$ coincides with the case $m=\frac12$. As companions to $j_{\beta,0}$ and $j_{\beta,\frac12}$ for $\beta \neq 0$ we introduce
\begin{align*}
y_{\beta,0}(x)&:=\beta^{-\frac14} x^{1/4}\Big[ \sqrt{\pi} \cY_0\big(2\sqrt{\beta x}\big)
-\frac{(\ln(\beta)+2\gamma)}{\sqrt\pi}  \cJ_0\big(2\sqrt{\beta x}\big)\Big], \\
y_{\beta,\frac12}(x)&:=\beta^{\frac14} x^{1/4}\Big[-\sqrt{\pi} \cY_1\big(2\sqrt{\beta x}\big)
+\frac{(\ln(\beta)+2\gamma-1)}{\sqrt\pi} \cJ_1\big(2\sqrt{\beta x}\big)\Big],
\end{align*}
where $\gamma$ is Euler's constant and $\cY_m$ is defined in Section \ref{sec:standard_bessel}. For $\beta =0$ we set
\begin{equation*}
y_{0,0}(x):=x^\frac{1}{2}\ln(x)
\quad \hbox{and} \quad
y_{0,\frac12}(x) := 1.
\end{equation*}
Then $j_{\beta,0}, y_{\beta,0}$ and $j_{\beta,\frac12}, y_{\beta,\frac12}$ span the space of solutions of \eqref{forsure13} for $m=0$ and for $m=\frac12$ respectively. Indeed, a short computation leads to
\begin{equation*}
\Wr(j_{\beta,0},y_{\beta,0};x)=1
\quad \hbox{and}\quad
\Wr(j_{\beta,\frac12},y_{\beta,\frac12};x)=-1.
\end{equation*}
Since the solutions $j_{\beta,0}, y_{\beta,0}$ and  $j_{\beta,\frac12}, y_{\beta,\frac12}$ are also square integrable around zero, for any $m\in \C$ with $|\Re(m)|<1$ the space of solutions of $L_{\beta,\alpha}f=0$ is 2-dimensional.

Let us describe the asymptotics of these solutions near zero. The following results can be computed based on the expressions provided in the appendix of \cite{DR1}. For any $m\in \C$ with $-2m\not \in \N^\times$ one has
\begin{equation}\label{eq_asymp1}
j_{\beta,m}(x)
=x^{\frac{1}{2}+m}
\Big(1-\frac{\beta}{1+2m} x + O\big(x^2\big) \Big).
\end{equation}
In the exceptional cases one has
\begin{align*}
j_{\beta,0}(x)&
=x^{\frac{1}{2}} \big(1-\beta x\big) + O\big(x^{\frac{5}{2}}\big),\\
j_{\beta,\frac12}(x)&
=x \Big(1-\frac{\beta}{2} x \Big) + O\big(x^3\big),
\end{align*}
together with
\begin{align*}
y_{\beta,0}(x) &
= x^{\frac{1}{2}}\ln(x)\big(1 -\beta x\big) +2\beta x^{\frac32}+ O\big(x^{\frac{5}{2}}|\ln(x)|\big),
\\
y_{\beta,\frac{1}{2}}(x)&
= 1 -\beta x\ln(x) + O\big(x^2|\ln(x)|\big).
\end{align*}

\subsection{Maximal and minimal operators}\label{sec_max_min}

For any $\alpha$ and $\beta\in \C$ we consider the differential expression
\begin{equation*}
L_{\beta,\alpha } :=-\partial_x^2+\Big(\alpha -\frac14\Big)\frac{1}{x^2}-\frac{\beta}{x}
\end{equation*}
acting on distributions on $\bbR_+$. The corresponding maximal and minimal operators in $L^2(\R_+)$ are denoted by $L_{\beta,\alpha }^{\max}$ and $L_{\beta,\alpha }^{\min}$, see \cite[Sec.~3.2]{DR2}
for the details. The domain of $L_{\beta,\alpha }^{\max}$ is given by
\begin{equation*}
\Dom(L_{\beta,\alpha }^{\max}) = \Big\{f\in L^2(\R_+) \mid L_{\beta,\alpha } f\in L^2(\R_+)\Big\} ,
\end{equation*}
while $L_{\beta,\alpha}^{\min}$ is
the closure of the restriction of $L_{\beta,\alpha }$ to
$C_{\rm c}^\infty\big(]0,\infty[\big)$, the set of smooth functions with compact supports in $\R_+$.
The operators $L_{\beta,\alpha }^{\min}$ and $L_{\beta,\alpha }^{\max}$ are closed and we have
\begin{equation*}
\big(L_{\beta,\alpha }^{\min}\big)^* = L_{\bar\beta,\bar \alpha }^{\max}\quad \hbox{ and } \quad
\big(L_{\beta,\alpha }^{\min}\big)^\# = L_{\beta, \alpha }^{\max}.
\end{equation*}

We say that $f\in \Dom(L_{\beta,\alpha }^{\min})$ around $0$, (or, by an abuse of notation, $f(x)\in \Dom(L_{\beta,\alpha }^{\min})$ around $0$) if there exists $\zeta\in C_{\rm c}^\infty\big([0,\infty[\big)$ with $\zeta=1$ around $0$ such that $f\zeta\in \Dom(L_{\beta,\alpha }^{\min})$. The following result follows from the theory of one-dimensional Schr\"odinger operators.

\begin{proposition}\label{lem_properties}
Let $\alpha, \beta, m \in \C$.
\begin{enumerate}
\item[(i)] If $f\in \Dom(L_{\beta,\alpha}^{\max})$, then $f$ and $f'$ are continuous functions on $\R_+$ and converge to $0$ at infinity.
\item[(ii)] If $f\in \Dom(L_{\beta,\alpha}^{\min})$, then near $0$ one has:
\begin{enumerate}
\item $f(x) = o\big(x^{\frac{3}{2}}|\ln(x)|\big)$ and $f'(x)=o\big(x^{\frac{1}{2}}|\ln(x)|\big)$ if $\alpha=0$,
\item $f(x)=o\big(x^{\frac{3}{2}}\big)$ and $f'(x)=o\big(x^{\frac{1}{2}}\big)$ if $\alpha \neq0$.
\end{enumerate}
\item[(iii.a)] If $|\Re(m)|<1$ with $m\not \in \big\{-\frac12,0,\frac12\big\}$, then for any $f\in \Dom(L_{\beta,m^2 }^{\max})$
there exists a unique pair $a,b\in \bbC$
such that
\begin{equation*}
f -a\;\!j_{\beta,m} - b\;\!j_{\beta,-m} \in\Dom(L_{\beta,m^2 }^{\min})\hbox{
around }0.
\end{equation*}
\item[(iii.b)] If $f\in \Dom(L_{\beta,0 }^{\max})$, then there exists a unique pair
$a,b\in \bbC$ such that
\begin{equation*}
f -a\;\!j_{\beta,0} - b\;\!y_{\beta,0} \in\Dom(L_{\beta,0 }^{\min})\hbox{
around }0.
\end{equation*}
\item[(iii.c)] If $f\in \Dom(L_{\beta,\frac14 }^{\max})$, then there exists
a unique pair $a,b\in \bbC$ such that
\begin{equation*}
f -a\;\!j_{\beta,\frac12} - b\;\!y_{\beta,\frac12} \in\Dom(L_{\beta,\frac14 }^{\min})\hbox{
around }0.
\end{equation*}
\item [(iv)] If $|\Re(m)| < 1$, then
\begin{align*}
\Dom ( L^{\min}_{\beta,m^2} ) &= \Big \{ f \in \Dom ( L^{\max}_{\beta,m^2} ) \mid \Wr( f , g ; 0 ) = 0 \text{ for all } g \in \Dom ( L^{\max}_{\beta,m^2} ) \Big \} \\
&= \Big \{ f \in \Dom ( L^{\max}_{\beta,m^2} ) \mid f(x) = o\big( x^{\frac12+|\Re(m)|} \big) \text{ near } 0 \Big \}.
\end{align*}
\item[(v)] If $|\Re(m)|\geqslant 1$, then $\Dom(L_{\beta,m^2  }^{\min})=\Dom(L_{\beta,m^2 }^{\max})$.
\end{enumerate}
\end{proposition}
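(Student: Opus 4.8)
The plan is to read Proposition~\ref{lem_properties} as the specialization of the Weyl--Kodaira theory of singular Sturm--Liouville operators to $L_{\beta,\alpha}$, feeding in the explicit zero-energy solutions $j_{\beta,\pm m}$ (and $y_{\beta,0}$, $y_{\beta,\frac12}$) together with their near-zero asymptotics recorded in Section~\ref{sec_zero-e}; see \cite{DG2019} for the underlying generalities. I would dispatch~(i) first, as it is purely local regularity plus decay. On any compact $K\subset\R_+$ the potential is bounded, so $f\in\Dom(L_{\beta,\alpha}^{\max})$ satisfies $f''=\big((\alpha-\tfrac14)x^{-2}-\beta x^{-1}\big)f-L_{\beta,\alpha}f\in L^2_{\mathrm{loc}}(\R_+)$, whence $f\in H^2_{\mathrm{loc}}(\R_+)\hookrightarrow C^1(\R_+)$ by the one-dimensional Sobolev embedding. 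For the behaviour at infinity the same identity shows $f''\in L^2(1,\infty)$, because there the potential is bounded while $f$ and $L_{\beta,\alpha}f$ lie in $L^2$; thus $f\in H^2(1,\infty)\hookrightarrow C^1_0([1,\infty))$ and $f,f'\to0$.

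Next I would set up the limit-point/limit-circle dichotomy at $0$, which yields~(v) at once. Counting square-integrable solutions of $L_{\beta,m^2}f=0$ near zero through the behaviour $x^{\frac12\pm m}$ shows that both solutions are $L^2$ near $0$ iff $|\Re(m)|<1$ (limit circle) and exactly one is iff $|\Re(m)|\geqslant1$ (limit point), the slower solution $x^{\frac12-|\Re(m)|}$ failing integrability precisely when $|\Re(m)|\geqslant1$. At infinity one is always in the limit-point case (for $\lambda\notin\R$ one solution grows and one decays, or directly from the decay in~(i)). Hence for $|\Re(m)|\geqslant1$ the deficiency indices vanish and $L_{\beta,m^2}^{\min}=L_{\beta,m^2}^{\max}$, which is~(v).

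The core is the near-zero decomposition underlying~(ii) and~(iii). For $|\Re(m)|<1$ I would fix the relevant pair of zero-energy solutions $(u_1,u_2)$ --- namely $(j_{\beta,m},j_{\beta,-m})$ generically, and $(j_{\beta,0},y_{\beta,0})$, $(j_{\beta,\frac12},y_{\beta,\frac12})$ in the two exceptional cases --- whose Wronskian is a nonzero constant ($-2m$, resp.\ $1$, resp.\ $-1$). Setting $g:=L_{\beta,\alpha}f\in L^2(\R_+)$ and applying variation of parameters, any $f\in\Dom(L_{\beta,\alpha}^{\max})$ equals $a\;\!u_1+b\;\!u_2+f_{\mathrm p}$ near $0$, where $a,b$ are read off from \eqref{eq_asymp1} and $f_{\mathrm p}$ is the particular integral built from $u_1,u_2,g$. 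The Cauchy--Schwarz bound $\big|\int_0^x u_i\,g\big|\leqslant\|u_i\|_{L^2(0,x)}\|g\|_{L^2(0,x)}$, with $\|g\|_{L^2(0,x)}=o(1)$ as $x\to0$, then gives $f_{\mathrm p}(x)=o(x^{3/2})$ and $f_{\mathrm p}'(x)=o(x^{1/2})$, with an extra factor $|\ln(x)|$ in the degenerate case $\alpha=0$ coming from the logarithm of $y_{\beta,0}$. Being $o(x^{3/2})$ with derivative $o(x^{1/2})$, the remainder has vanishing Wronskian at $0$ against every solution, hence against every maximal-domain function, so $f_{\mathrm p}\in\Dom(L_{\beta,\alpha}^{\min})$ around $0$; this gives existence in~(iii), while uniqueness follows since $u_1,u_2$ are independent modulo the minimal domain. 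Part~(ii) is then immediate: a minimal-domain function is the remainder with $a=b=0$, so it equals $f_{\mathrm p}$ and obeys the stated estimates (for $|\Re(m)|>1$ one instead uses $\min=\max$ and the single $L^2$ solution $\sim x^{\frac12+|\Re(m)|}=o(x^{3/2})$).

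Finally, part~(iv) rests on the boundary-form characterization of the minimal domain: $f\in\Dom(L^{\min}_{\beta,m^2})$ iff $f\in\Dom(L^{\max}_{\beta,m^2})$ and $\Wr(f,g;0)=\Wr(f,g;\infty)=0$ for every $g\in\Dom(L^{\max}_{\beta,m^2})$. The Wronskian at infinity vanishes by~(i). Since altering $g$ by a minimal-domain element leaves $\Wr(f,g;0)$ unchanged, only the homogeneous part of $g$ matters, so the condition collapses to $\Wr(f,j_{\beta,m};0)=\Wr(f,j_{\beta,-m};0)=0$; using $\Wr(j_{\beta,m},j_{\beta,-m};x)=-2m\ne0$ these extract exactly $a$ and $b$. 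They vanish iff $f=o\big(x^{\frac12+|\Re(m)|}\big)$, the decay and Wronskian descriptions coinciding because the near-zero expansion jumps straight from order $x^{\frac12+|\Re(m)|}$ to the $o(x^{3/2})$ remainder with nothing in between. The main obstacle will be the uniform and sharp control of $f_{\mathrm p}$ and $f_{\mathrm p}'$ across the three regimes --- producing the factor $|\ln(x)|$ exactly when $\alpha=0$, handling the resonance at $m=\tfrac12$ where $2m\in\Z$, and treating the borderline $|\Re(m)|=1$ --- together with the verification of this asymptotic gap that makes the two formulations in~(iv) equivalent.
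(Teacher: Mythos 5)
Your treatment of (iv) --- the only item the paper actually proves rather than cites --- coincides with the paper's: the first equality is the abstract Wronskian characterization of $\Dom(L^{\min}_{\beta,m^2})$ from \cite{DG2019} combined with the vanishing of $\Wr(f,g;\infty)$ supplied by (i), and the second equality is obtained by decomposing $f$ via (iii.a) and checking that the coefficients $a,b$ vanish exactly when $f=o\big(x^{\frac12+|\Re(m)|}\big)$. The genuine difference is upstream: the paper disposes of (i)--(iii) and (v) in one line by citing \cite[Prop.~3.1]{DR2}, whereas you reprove them via local elliptic regularity, a limit-point/limit-circle count of the $L^2$ solutions at the two endpoints, and a variation-of-parameters bound $|f_{\mathrm p}(x)|\lesssim |u_2(x)|\,\|u_1\|_{L^2(0,x)}\|g\|_{L^2(0,x)}+\dots$ with Cauchy--Schwarz. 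That is a legitimate and standard route (essentially how \cite{BDG,DR1,DR2} establish these facts), and it has the merit of making visible where the logarithm in (ii.a) and the exponent $\frac32$ come from.

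Two caveats. First, your argument for (ii) is only complete for $|\Re(m)|<1$: when $|\Re(m)|\geq 1$ the second solution $u_2\sim x^{\frac12-|\Re(m)|}$ is no longer $L^2$ near $0$, the Cauchy--Schwarz bound on the particular integral breaks down, and the parenthetical appeal to (v) does not by itself yield the pointwise estimate for an arbitrary element of $\Dom(L^{\max}_{\beta,m^2})=\Dom(L^{\min}_{\beta,m^2})$; you flag the borderline $|\Re(m)|=1$ as an obstacle but do not resolve it (this is precisely the regime the paper covers by citation, and it is also the only regime where the claimed $o(x^{3/2})$ is delicate). Second, in your existence step for (iii), the passage from ``vanishing Wronskian against every solution'' to ``vanishing Wronskian against every maximal-domain function'' needs the a priori bounds $g=O\big(x^{\frac12-|\Re(m)|}\big)$ and $g'=O\big(x^{-\frac12-|\Re(m)|}\big)$ near $0$ for general $g\in\Dom(L^{\max}_{\beta,m^2})$; these follow from the same variation-of-parameters decomposition applied to $g$, but this should be said explicitly, since otherwise the reduction to the homogeneous part of $g$ looks circular with (iii) itself. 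Neither point affects the parts of the proposition that the paper actually uses, all of which live in $|\Re(m)|<1$.
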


\begin{proof}
The statements $(i)$--$(iii)$ and $(v)$ are a reformulation of \cite[Prop.~3.1]{DR2} with the current notations. Only $(iv)$ requires elaboration.
The first equality in $(iv)$ follows from \cite[Thm.~3.4]{DG2019}, given that $\Wr( f , g ; \infty ) = 0$ for all $f,g \in \Dom ( L^{\max}_{\beta,m^2} )$ by $(i)$.

The inclusion $\Dom ( L^{\min}_{\beta,m^2} ) \subset \big\{ f \in \Dom ( L^{\max}_{\beta,m^2} ) \mid f(x) = o\big( x^{\frac12+|\Re(m)|} \big) \text{ near } 0 \big\}$ is a consequence of $(ii)$. To prove the converse inclusion, let $f \in \Dom ( L^{\max}_{\beta,m^2} )$. Assuming for instance that $m \notin \big\{ -\frac{1}{2} , 0 , \frac{1}{2}\big \}$ and applying $(iii.a)$, one can write
\begin{equation*}
f \zeta = a j_{\beta,m} \zeta + b j_{\beta,-m} \zeta + f_{\min} ,
\end{equation*}
for some $\zeta \in C_{\rm c}^\infty\big([0,\infty[\big)$ such that $\zeta=1$ around $0$, $a, b \in \C$ and $f_{\min} \in \Dom ( L^{\min}_{\beta,m^2} )$. From \eqref{eq_asymp1} and $(ii)$, we deduce that if $f(x) = o\big( x^{\frac12+|\Re(m)|}\big)$ near $0$ then, necessarily, $a=b=0$. Hence we have proved that $\big\{ f \in \Dom ( L^{\max}_{\beta,m^2} ) \mid f(x) = o\big( x^{\frac12+|\Re(m)|}\big) \text{ near } 0 \big\}  \subset \Dom ( L^{\min}_{\beta,m^2} ) $ in the case where $m \notin \big\{ -\frac{1}{2} , 0 , \frac{1}{2} \big\}$. The same argument applies if $m=\pm\frac12$ or $m=0$, using $(iii.b)$ or $(iii.c)$ instead of $(iii.a)$.
\end{proof}

\subsection{Families of Whittaker operators}

We can now provide the definition of three  families of Whittaker operators. The first family covers the generic case. The Whittaker operator  $H_{\beta, m,\kappa}$ is defined for any $\beta\in \C$, for any $m\in \C$ with $|\Re(m)|<1$ and $m\not \in \big\{-\frac12,0,\frac12\big\}$, and for any $\kappa\in \C\cup \{\infty\}$:
\begin{align*}
\Dom(H_{\beta, m,\kappa}) & = \Big\{f\in \Dom(L_{\beta,m^2}^{\max})\mid
\hbox{ for some }  c \in \C,\\
&\qquad f- c\big(j_{\beta,m} + \kappa\;\!j_{\beta,-m}
\big)\in\Dom(L_{\beta,m^2}^{\min})\hbox{ around }
0\Big\},\qquad\kappa\neq\infty, \nonumber \\
\Dom(H_{\beta, m,\infty}) & = \Big\{f\in \Dom(L_{\beta,m^2}^{\max})\mid
\hbox{ for some }  c \in \C,\\
&\qquad f- c\;\!j_{\beta,-m}
\in\Dom(L_{\beta,m^2}^{\min})\hbox{ around } 0\Big\}.\nonumber
\end{align*}
The second family corresponds to $m=0$:
\begin{align*}
\Dom(H_{\beta,0}^{\nu}) & = \Big\{f\in \Dom(L_{\beta,0}^{\max})\mid
\hbox{ for some }  c \in \C,\\
&\qquad f- c\big(y_{\beta,0} + \nu\;\!j_{\beta,0}
\big)\in\Dom(L_{\beta,0}^{\min})\hbox{ around }
0\Big\},\qquad\nu\in\C, \nonumber \\
\Dom(H_{\beta, 0}^{\infty}) & = \Big\{f\in \Dom(L_{\beta,0}^{\max})\mid
\hbox{ for some }  c \in \C,\\
&\qquad f- c\;\!j_{\beta,0}
\in\Dom(L_{\beta,0}^{\min})\hbox{ around } 0\Big\}.\nonumber
\end{align*}
Finally, in the special case $m=\frac12$ we have the third family:
\begin{align*}
\Dom(H_{\beta,\frac12}^{\nu}) & = \Big\{f\in \Dom(L_{\beta,\frac14}^{\max})\mid
\hbox{ for some }  c \in \C,\\
&\qquad f- c\big(y_{\beta,\frac12} + \nu\;\!j_{\beta,\frac12}
\big)\in\Dom(L_{\beta,\frac14}^{\min})\hbox{ around }
0\Big\},\qquad\nu\in \C, \nonumber \\
\Dom(H_{\beta, \frac12}^{\infty}) & = \Big\{f\in \Dom(L_{\beta,\frac14}^{\max})\mid
\hbox{ for some }  c \in \C,\\
&\qquad f- c\;\!j_{\beta,\frac12}
\in\Dom(L_{\beta,\frac14}^{\min})\hbox{ around } 0\Big\}.\nonumber
\end{align*}

\begin{remark}
Observe that the above boundary conditions could be described with the help of simpler functions. For example, in the above definitions we can replace
\begin{align*}
j_{\beta,m}(x)&\quad\text{with}\quad x^{\frac12+m}\Big(1-\frac{\beta}{1+2m}x\Big)&\text{ if } -1<\Re(m)\leq 0,\\
j_{\beta,m}(x)&\quad\text{with}\quad x^{\frac12+m}&\text{ if }
\quad 0<\Re(m)<1,\\
y_{\beta,0}(x)&\quad\text{with}\quad
x^{\frac{1}{2}}\ln(x)(1-\beta x)+2\beta x^{\frac32},\\
y_{\beta,\frac{1}{2}}(x)&\quad\text{with}\quad
1-\beta x\ln(x).& 
\end{align*}
Note that this can be seen directly, without passing through Bessel functions.
We describe this approach below, and refer to \cite{DG2019} for the general theory.

The idea is to look for elements of $\cD(L_{\beta,m^2}^{\max})$ with a nontrivial behavior near $0$. First we consider the general case and observe that
\begin{align}
L_{\beta,m^2}x^{\frac12+m}&=-\beta x^{-\frac12+m},\label{tra1}\\
L_{\beta,m^2}
x^{\frac12+m}\Big(1-\frac{\beta}{1+2m}x\Big)
&=\frac{\beta^2}{1+2m}x^{\frac12+m}.\label{tra2}
\end{align}
Clearly, the function in the r.h.s.~of \eqref{tra1} is in $L^2$ near $0$ for $\Re(m)>0$
but not for $\Re(m)\leq 0$. On the other hand,  the r.h.s.~of \eqref{tra2} is in
$L^2$ near $0$ for $\Re(m)>-1$. Thus, for $m\neq\pm\frac12$, we obtain two elements
of the boundary space $\cD(L_{\beta,m^2}^{\max})/\cD(L_{\beta,m^2}^{\min})$.
For $m\neq 0$ these elements are linearly independent since
\begin{align*}
&  \lim_{x\searrow0}   \Wr\bigg(\Big(  x^{\frac12+m}\Big(1-\frac{\beta}{1+2m}x\Big),
x^{\frac12-m}\Big(1-\frac{\beta}{1-2m}x\Big);x\bigg)\\
&= \lim_{x\searrow0}   \Wr\bigg(\Big(  x^{\frac12+m},
x^{\frac12-m}\Big(1-\frac{\beta}{1-2m}x\Big);x\bigg) \\
& =-2m.
\end{align*}

It remains to find a second element of $\cD(L_{\beta,m^2}^{\max})$
when $m=0$ or when $m=\frac12$ (as already mentioned we disregard $m=-\frac12$).
Firstly, we try to find the simplest possible elements of
$\cD(L_{\beta,0}^{\max})$ with
a logarithmic behavior near $0$.  We add more and more terms:
\begin{align}
L_{\beta,0}\ln(x)x^{\frac12}&=-\beta x^{-\frac12}\ln(x),\label{tra3}\\
L_{\beta,0}\ln(x)x^{\frac12}(1-\beta x)&=2\beta x^{-\frac12}+\beta^2 x^{\frac12}\ln(x),\label{tra4}\\
L_{\beta,0}\big(\ln(x)x^{\frac12}(1-\beta x)+2\beta x^{\frac32}\big)&=\beta^2 x^{\frac12}(\ln(x)-2).\label{tra5}
\end{align}
For $\beta\neq0$, the r.h.s.~of \eqref{tra3} and of \eqref{tra4}
are not in $L^2$ near $0$. However the r.h.s.~of \eqref{tra5} is in $L^2$ near $0$.
We have thus obtained two
elements of $\cD(L_{\beta,0}^{\max})/\cD(L_{\beta,0}^{\min})$ which are
linearly independent since
\begin{equation*}
\lim_{x\searrow0}  \Wr\Big(x^{\frac12}(1-\beta x),\big(\ln(x)x^{\frac12}(1-\beta x)+2\beta x^{\frac32}\big); x\Big)=1.
\end{equation*}

Finally, let us look for the simplest possible
elements of $\cD(L_{\beta,\frac14}^{\max})$ with a
logarithmic behavior near $0$:
\begin{align}
L_{\beta,\frac14}1&=-\beta x^{-1},\label{tra6}\\
L_{\beta,\frac14}\big(1-\beta x\ln(x)\big)&=\beta^2\ln(x).\label{tra7}
\end{align}
For $\beta\neq0$, the r.h.s.~of \eqref{tra6} is not in $L^2$ near $0$, but
the r.h.s.~of \eqref{tra7} is in $L^2$ near $0$.
We have thus obtained two
elements of $\cD(L_{\beta,\frac14}^{\max})/\cD(L_{\beta,\frac14}^{\min})$
which are linearly independent since
\begin{equation*}
\lim_{x\searrow0}   \Wr\Big(x,\big(1-\beta x\ln(x)\big); x\Big)=-1.
\end{equation*}
\end{remark}

The three families $H_{\beta,m,\kappa}$, $H_{\beta,\frac12}^\nu$ and $H_{\beta,0}^\nu$ cover all possible well-posed extensions of $L_{\beta,m^2}$ with $|\Re(m)|<1$. As already mentioned, we do not introduce a special family for $m=-\frac{1}{2}$, since it is covered by the family corresponding to $m=\frac{1}{2}$. For convenience, we also extend the definition of the first family to the exceptional cases by setting for $\beta\in\C$ and any $\kappa\in\C\cup\{\infty\}$
\begin{equation*}
H_{\beta,-\frac12,\kappa}:=H_{\beta,\frac12}^\infty,
\quad H_{\beta,0,\kappa}:=H_{\beta,0}^\infty,
\quad \hbox{and} \quad H_{\beta,\frac12,\kappa}:=H_{\beta,\frac12}^\infty.
\end{equation*}

An invariance property follows directly from the definition:

\begin{proposition}\label{propo1}
For any $\beta \in \C$,  $|\Re(m)|<1$
and  $\kappa\in \C\cup \{\infty\}$
the following relation holds
\begin{equation*}
H_{\beta, m, \kappa}=H_{\beta, -m,\kappa^{-1}}.
\end{equation*}
\end{proposition}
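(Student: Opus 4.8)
The plan is to prove the identity of operators by showing that $H_{\beta,m,\kappa}$ and $H_{\beta,-m,\kappa^{-1}}$ have the same domain, since both are restrictions of one and the same maximal operator. Indeed, because $m^2=(-m)^2$, the expression $L_{\beta,m^2}$ and hence the spaces $\Dom(L_{\beta,m^2}^{\max})$ and $\Dom(L_{\beta,m^2}^{\min})$ are manifestly unchanged under $m\mapsto -m$. It therefore suffices to verify that the boundary conditions selecting the two domains describe the same functions.

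First I would treat the generic case $m\notin\{-\frac12,0,\frac12\}$ together with $\kappa\in\C^\times$. The key observation is the elementary linear identity
\[
j_{\beta,-m} + \kappa^{-1}\,j_{\beta,m} = \kappa^{-1}\big(j_{\beta,m} + \kappa\,j_{\beta,-m}\big),
\]
valid for every $\kappa\in\C^\times$. Consequently, for $f\in\Dom(L_{\beta,m^2}^{\max})$ the requirement that $f - c'\big(j_{\beta,-m}+\kappa^{-1} j_{\beta,m}\big)\in\Dom(L_{\beta,m^2}^{\min})$ around $0$ for some $c'\in\C$ becomes, upon setting $c=c'\kappa^{-1}$, the requirement that $f - c\big(j_{\beta,m}+\kappa\, j_{\beta,-m}\big)\in\Dom(L_{\beta,m^2}^{\min})$ around $0$ for some $c\in\C$. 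Since $c'\mapsto c'\kappa^{-1}$ is a bijection of $\C$, the two membership conditions cut out exactly the same set of $f$, whence $\Dom(H_{\beta,-m,\kappa^{-1}})=\Dom(H_{\beta,m,\kappa})$.

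Next I would dispose of the boundary values $\kappa\in\{0,\infty\}$, where the generic identity no longer applies and one must invoke the separate $\infty$-version definition. When $\kappa=0$, the operator $H_{\beta,m,0}$ is governed by the single function $j_{\beta,m}$, while $\kappa^{-1}=\infty$ and $H_{\beta,-m,\infty}$ is governed by $j_{\beta,-(-m)}=j_{\beta,m}$; the two domains thus coincide. The case $\kappa=\infty$ (hence $\kappa^{-1}=0$) is symmetric, matching $H_{\beta,m,\infty}$, governed by $j_{\beta,-m}$, with $H_{\beta,-m,0}$, also governed by $j_{\beta,-m}$.

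Finally, the exceptional values $m\in\{-\frac12,0,\frac12\}$ require no computation at all: by the conventions extending the first family, $H_{\beta,m,\kappa}$ is independent of $\kappa$ there, and one reads off directly that $H_{\beta,0,\kappa}=H_{\beta,0}^\infty=H_{\beta,0,\kappa^{-1}}$ (using $-0=0$), while $H_{\beta,\frac12,\kappa}=H_{\beta,\frac12}^\infty=H_{\beta,-\frac12,\kappa^{-1}}$ and symmetrically $H_{\beta,-\frac12,\kappa}=H_{\beta,\frac12}^\infty=H_{\beta,\frac12,\kappa^{-1}}$. As the whole argument is a direct identification of domains, I do not expect any genuine obstacle; the only point demanding care is the bookkeeping at the boundary values $\kappa\in\{0,\infty\}$ and at the three exceptional values of $m$, where the separate $\infty$-version definition and the extension conventions must be used in place of the generic linear identity.
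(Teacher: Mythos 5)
Your proof is correct and is precisely the direct verification from the definitions that the paper invokes without writing out (the paper states only that the identity ``follows directly from the definition''). Your use of the rescaling $j_{\beta,-m}+\kappa^{-1}j_{\beta,m}=\kappa^{-1}\big(j_{\beta,m}+\kappa\,j_{\beta,-m}\big)$ together with the separate treatment of $\kappa\in\{0,\infty\}$ and of the exceptional values $m\in\big\{-\tfrac12,0,\tfrac12\big\}$ via the stated conventions covers all cases.
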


It is also convenient to introduce another two-parameter family of operators, which cover only special boundary conditions, which we call {\em pure}:
\begin{equation}\label{eq_pure}
H_{\beta,m}:=  H_{\beta,m,0}=H_{\beta,-m,\infty}.
\end{equation}
With this notation, for any $\beta\in\C$, one has
\begin{equation*}
H_{\beta,-\frac12}=H_{\beta,\frac12}^\infty,
\quad H_{\beta,0}=H_{\beta,0}^\infty,
\quad \hbox{and} \quad H_{\beta,\frac12}=H_{\beta,\frac12}^\infty.
\end{equation*}

\begin{remark}
The family  $H_{\beta,m}$ is essentially identical to the family denoted by the same symbol  introduced and studied in \cite{DR2}. The only difference with that reference is that the operator corresponding to $(\beta,m)=\big(0,-\frac12\big)$ was left
undefined in \cite{DR2}. This point
corresponds to a singularity, neverthelss in the current paper we have decided to set $H_{0,-\frac12}:=H_{0,\frac12}$.
\end{remark}

Here is a comparison of the above families with the families $H_{m,\kappa}$, $H_0^\nu$ introduced in \cite{DR1} when $\beta=0$. In the first column we put one of the newly introduced family, in the second column we put the families from \cite{DR1,DR2}.
\begin{align*}
H_{0,m,\kappa}&=H_{m,\kappa} &|\Re(m)|<1, \ m\not \in \big\{-\tfrac12,\tfrac12\big\},\quad \kappa\in\C\cup\{\infty\},\\
H_{0,0}^\nu&=H_0^\nu
& \nu\in\C\cup\{\infty\},\\
H_{0,\frac12}^\nu&=H_{-\frac12,\nu}=H_{\frac12,\frac1\nu}
& \nu\in\C\cup\{\infty\}.
\end{align*}
For completeness, let us also mention two special operators which are included in these families (for clarity, the indices are emphasized). The Dirichlet Laplacian on $\R_+$ is given by
\begin{equation*}
H_{\beta=0,m=-\frac12} =  H_{\beta=0,m=\frac12} = H_{0,\frac12}^\infty
= H_{m=\frac12,\kappa=0} = H_{m=-\frac12,\kappa=\infty}
\end{equation*}
while the Neumann Laplacian is given by
\begin{equation*}
H_{\beta=0,m=\frac12}^0 = H_{m=-\frac12,\kappa=0} = H_{m=\frac12,\kappa=\infty}.
\end{equation*}
Note that the former operator was also described in \cite{DR1} by $H_{m=\frac12}$ while the latter operator was described by $H_{m=-\frac12}$.

We now gather some easy properties of the operators $H_{\beta,m,\kappa}$.

\begin{proposition}\label{propo2}
For $m\in \C$ with $|\Re(m)|<1$ one has
\begin{align*}
\big(H_{\beta,m,\kappa}\big)^*=H_{\bar\beta,\bar m,\bar\kappa}&\qquad
\big(H_{\beta,m,\kappa}\big)^\#=H_{\beta,m,\kappa}
&
\kappa\in\C\cup\{\infty\}, \\
\big(H_{\beta,0}^\nu\big)^*=H_{\bar\beta,0}^{\bar\nu}&\qquad
\big(H_{\beta,0}^\nu\big)^\#=H_{\beta,0}^{\nu},
& \nu\in\C\cup\{\infty\},\\
\big(H_{\beta,\frac12}^\nu\big)^*=H_{\bar\beta,\frac12}^{\bar\nu}&\qquad
\big(H_{\beta,\frac12}^\nu\big)^\#=H_{\beta,\frac12}^{\nu}
& \nu\in\C\cup\{\infty\}.
\end{align*}
\end{proposition}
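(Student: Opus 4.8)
The plan is to identify $H^{*}$ and $H^{\#}$ as realizations of $L_{\bar\beta,\bar m^{2}}$ (respectively $L_{\beta,m^{2}}$) squeezed between the minimal and maximal operators, and then to read off the selected boundary condition from a Green-type formula. Since each of our operators satisfies $L_{\beta,m^{2}}^{\min}\subset H\subset L_{\beta,m^{2}}^{\max}$, taking Hermitian adjoints reverses the inclusions and, using $(L_{\beta,m^{2}}^{\min})^{*}=L_{\bar\beta,\bar m^{2}}^{\max}$ together with involutivity of $*$ on closed densely defined operators (so that $(L_{\beta,m^{2}}^{\max})^{*}=L_{\bar\beta,\bar m^{2}}^{\min}$), gives $L_{\bar\beta,\bar m^{2}}^{\min}\subset H^{*}\subset L_{\bar\beta,\bar m^{2}}^{\max}$. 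The same argument with $\#$ and $(L_{\beta,m^{2}}^{\min})^{\#}=L_{\beta,m^{2}}^{\max}$ yields $L_{\beta,m^{2}}^{\min}\subset H^{\#}\subset L_{\beta,m^{2}}^{\max}$. Hence in all cases it only remains to determine which one-dimensional boundary condition is singled out.

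The key tool is integration by parts. For $f,g\in\Dom(L_{\beta,m^{2}}^{\max})$, using that the boundary term at infinity vanishes by Proposition \ref{lem_properties}(i), one has
\begin{equation*}
\langle f|L_{\beta,m^{2}}g\rangle-\langle L_{\beta,m^{2}}f|g\rangle=\Wr(f,g;0),
\qquad
(f|L_{\beta,m^{2}}g)-(L_{\bar\beta,\bar m^{2}}f|g)=\Wr(\bar f,g;0),
\end{equation*}
where $\Wr(f,g;0):=\lim_{x\searrow0}\Wr(f,g;x)$. By Proposition \ref{lem_properties}(ii) the Wronskian at $0$ vanishes as soon as one argument lies in $\Dom(L^{\min})$, so $\Wr(\cdot,\cdot;0)$ descends to a bilinear form on the two-dimensional quotient $\Dom(L_{\beta,m^{2}}^{\max})/\Dom(L_{\beta,m^{2}}^{\min})$, which is nondegenerate by Proposition \ref{lem_properties}(iv). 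Consequently $\Dom(H^{\#})=\{f\in\Dom(L_{\beta,m^{2}}^{\max}):\Wr(f,g;0)=0\ \text{for all }g\in\Dom(H)\}$, and likewise $\Dom(H^{*})=\{f\in\Dom(L_{\bar\beta,\bar m^{2}}^{\max}):\Wr(\bar f,g;0)=0\ \text{for all }g\in\Dom(H)\}$, where I use that $\bar f\in\Dom(L_{\beta,m^{2}}^{\max})$ whenever $f\in\Dom(L_{\bar\beta,\bar m^{2}}^{\max})$.

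Next I would evaluate the form on the explicit basis of the boundary quotient. Because the differential expression has no first-order term, the Wronskian of two solutions is $x$-independent, so from the asymptotics of Section \ref{sec_zero-e} one gets the constant values $\Wr(j_{\beta,m},j_{\beta,-m};\cdot)=-2m$, $\Wr(j_{\beta,0},y_{\beta,0};\cdot)=1$ and $\Wr(j_{\beta,\frac12},y_{\beta,\frac12};\cdot)=-1$. In the generic family, $\Dom(H_{\beta,m,\kappa})$ is generated modulo $\Dom(L^{\min})$ by $u_\kappa:=j_{\beta,m}+\kappa j_{\beta,-m}$ (and by $j_{\beta,-m}$ when $\kappa=\infty$). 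Writing $f\equiv a\,j_{\beta,m}+b\,j_{\beta,-m}$ modulo $\Dom(L^{\min})$ via Proposition \ref{lem_properties}(iii.a), bilinearity and antisymmetry give $\Wr(f,u_\kappa;0)=2m(b-a\kappa)$; since $m\neq0$ the transpose condition $b=a\kappa$ forces $f\equiv a\,u_\kappa$, whence $H_{\beta,m,\kappa}^{\#}=H_{\beta,m,\kappa}$. For the adjoint one repeats the computation with $\bar f\equiv a\,j_{\beta,m}+b\,j_{\beta,-m}$, obtains again $b=a\kappa$, and then conjugates using $\overline{j_{\beta,m}(x)}=j_{\bar\beta,\bar m}(x)$ for $x>0$, so that $f\equiv\bar a\big(j_{\bar\beta,\bar m}+\bar\kappa\,j_{\bar\beta,-\bar m}\big)$, i.e.\ $H_{\beta,m,\kappa}^{*}=H_{\bar\beta,\bar m,\bar\kappa}$; the endpoints $\kappa=\infty$ are handled identically. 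The two remaining families are treated the same way, with the pair $(j_{\beta,0},y_{\beta,0})$ (resp.\ $(j_{\beta,\frac12},y_{\beta,\frac12})$) via Proposition \ref{lem_properties}(iii.b) (resp.\ (iii.c)) in place of $(j_{\beta,m},j_{\beta,-m})$; there the orthogonality condition reads $a=b\nu$ and selects the parameter $\bar\nu$ (resp.\ $\nu$).

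The main obstacle will be the bookkeeping around the conjugation relations for the special functions and the degenerate endpoints $\kappa,\nu\in\{0,\infty\}$. The identities $\overline{j_{\beta,m}(x)}=j_{\bar\beta,\bar m}(x)$, $\overline{y_{\beta,0}(x)}=y_{\bar\beta,0}(x)$ and $\overline{y_{\beta,\frac12}(x)}=y_{\bar\beta,\frac12}(x)$ for $x>0$ must be extracted from the definitions in Section \ref{sec_zero-e}, which rest in turn on the real-coefficient symmetry $\overline{\cJ_\nu(z)}=\cJ_{\bar\nu}(\bar z)$ of the Bessel functions established in the appendix (and on $\overline{\ln\beta}=\ln\bar\beta$ for $\beta\notin\,]-\infty,0]$). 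Alternatively, to avoid Bessel functions, one may run the whole argument with the elementary representatives of the boundary classes listed in the Remark following the definition of the families, whose complex conjugates are manifest; one then only has to check that the relevant limiting Wronskians are unchanged, which is immediate.
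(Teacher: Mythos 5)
Your proposal is correct and follows essentially the same route as the paper: both reduce the identification of $H^{*}$ and $H^{\#}$ to the vanishing of the boundary Wronskian $\Wr(\bar f,g;0)$ (resp. $\Wr(f,g;0)$) via Green's formula, and then read off the selected boundary condition from the explicit basis of $\Dom(L^{\max})/\Dom(L^{\min})$. The only difference is one of detail: you carry out the Wronskian computation on the basis $(j_{\beta,\pm m})$, resp. $(j,y)$, and flag the conjugation identities $\overline{j_{\beta,m}}=j_{\bar\beta,\bar m}$, which the paper compresses into ``straightforward computations.''
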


\begin{proof} Let us prove the first statement, the other ones can be obtained similarly.
Recall from Proposition \ref{lem_properties} (see also \cite[Prop.~A.2]{BDG}) that for any $f\in \Dom(L_{\beta,m^2}^{\max})$ and $g\in \Dom(L_{\bar \beta,\bar m^2}^{\max})$, the functions $f,f',g,g'$ are continuous on $\R_+$. In addition, the Wronskian of $\bar f$ and $g$, as introduced in \eqref{wronskian}, possesses a limit at zero, and we have the equality
\begin{equation*}
(L_{\beta,m^2}^{\max}f|g) - (f|L_{\bar \beta,\bar m^2}^{\max}g) = -\Wr(\bar f,g;0).
\end{equation*}
In particular, if $f\in \Dom(H_{\beta,m,\kappa})$ one infers that
\begin{equation*}
(H_{\beta,m,\kappa}f|g) = (f|L_{\bar \beta, \bar m^2}^{\max}g) -\Wr(\bar f,g;0).
\end{equation*}
Thus, $g\in \Dom\big((H_{\beta,m,\kappa})^*\big)$ if and only if $\Wr(\bar f,g;0)=0$, and then $(H_{\beta,m,\kappa})^*g=L_{\bar \beta,\bar m^2}^{\max}g$. By taking into account the explicit description of $\Dom(H_{\beta,m,\kappa})$, straightforward computations show that $\Wr(\bar f,g;0)=0$ if and only if $g\in \Dom(H_{\bar \beta,\bar m,\bar \kappa})$. One then deduces that $(H_{\beta,m,\kappa})^*= H_{\bar \beta,\bar m,\bar \kappa}$. The property for the transpose of $H_{\beta,m,\kappa}$ can be proved similarly.
\end{proof}

By combining Propositions \ref{propo1} and \ref{propo2} one easily deduces the following characterization of  self-adjoint operators contained in our families:

\begin{corollary}\label{corol_s_a}
The operator $H_{\beta,m,\kappa}$ is self-adjoint if and only if one of the following sets of conditions is satisfied:
\begin{enumerate}
\item[(i)] $\beta\in \R$, $m\in ]-1,1[$ and $\kappa\in \R\cup\{\infty\}$,
\item[(ii)] $\beta \in \R$, $m\in \i \R^\times$ and $|\kappa|=1$.
\end{enumerate}
The operators  $H_{\beta,0}^\nu$ and $H_{\beta,\frac{1}{2}}^\nu$ are self-adjoint if and only if $\beta \in \R$ and $\nu \in \R\cup \{\infty\}$.
\end{corollary}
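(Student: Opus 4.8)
The plan is to reduce everything to Propositions \ref{propo1} and \ref{propo2}, the only genuinely new input being the faithfulness of the parametrization. By Proposition \ref{propo2} one has $(H_{\beta,m,\kappa})^*=H_{\bar\beta,\bar m,\bar\kappa}$, so $H_{\beta,m,\kappa}$ is self-adjoint precisely when $H_{\beta,m,\kappa}=H_{\bar\beta,\bar m,\bar\kappa}$. For the sufficiency I would simply substitute. If (i) holds then $\bar\beta=\beta$, $\bar m=m$, $\bar\kappa=\kappa$ and the identity is immediate; if (ii) holds then $\bar\beta=\beta$, $\bar m=-m$, and since $|\kappa|=1$ gives $\bar\kappa=\kappa^{-1}$, Proposition \ref{propo1} yields $H_{\bar\beta,\bar m,\bar\kappa}=H_{\beta,-m,\kappa^{-1}}=H_{\beta,m,\kappa}$. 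Thus both families of conditions are sufficient.

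For the necessity I would start from $H_{\beta,m,\kappa}=H_{\bar\beta,\bar m,\bar\kappa}$ and first recover the coefficients of the differential expression. Since both operators act as $L_{\beta,m^2}$ and $L_{\bar\beta,\bar m^2}$ respectively, and every $f\in C_{\rm c}^\infty(\R_+)$ lies in both minimal domains, evaluating the equality on such $f$ and using the linear independence of $x^{-2}$ and $x^{-1}$ forces $\beta\in\R$ and $m^2\in\R$. Together with $|\Re(m)|<1$, the constraint $m^2\in\R$ leaves exactly the two disjoint possibilities $m\in\,]-1,1[$ and $m\in\i\R^\times$. The exceptional values $m\in\{0,\pm\frac12\}$ fall under the extended definition, where $H_{\beta,m,\kappa}$ coincides with $H_{\beta,0}^\infty$ or $H_{\beta,\frac12}^\infty$ and is $\kappa$-independent; their self-adjointness is then decided by the $\nu$-family criterion and amounts to $\beta\in\R$, so I may assume $m\notin\{0,\pm\frac12\}$ in what follows.

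In the real case (necessarily $m\notin\{0,\pm\frac12\}$ by the reduction above) one has $\bar m=m$, so the equality becomes $H_{\beta,m,\kappa}=H_{\beta,m,\bar\kappa}$; in the case $m\in\i\R^\times$ one has $\bar m=-m$, and Proposition \ref{propo1} rewrites the equality as $H_{\beta,m,\kappa}=H_{\beta,m,\bar\kappa^{-1}}$. In both cases the conclusion follows once I know that, for fixed $\beta$ and fixed $m\notin\{0,\pm\frac12\}$, the map $\kappa\mapsto H_{\beta,m,\kappa}$ is injective on $\C\cup\{\infty\}$: the first case then gives $\kappa=\bar\kappa$, i.e. $\kappa\in\R\cup\{\infty\}$ (condition (i)), and the second gives $\kappa=\bar\kappa^{-1}$, i.e. $|\kappa|=1$ (condition (ii)). This injectivity is the main point to establish, and I would obtain it from Proposition \ref{lem_properties}(iii.a): the classes of $j_{\beta,m}$ and $j_{\beta,-m}$ form a basis of the two-dimensional boundary space $\Dom(L_{\beta,m^2}^{\max})/\Dom(L_{\beta,m^2}^{\min})$, so $\Dom(H_{\beta,m,\kappa})$ corresponds to the line spanned by the class of $j_{\beta,m}+\kappa j_{\beta,-m}$, and distinct $\kappa\in\C\cup\{\infty\}$ label distinct lines, hence operators with distinct domains.

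For $H_{\beta,0}^\nu$ and $H_{\beta,\frac12}^\nu$ the argument is shorter since no $m\mapsto-m$ symmetry intervenes. By Proposition \ref{propo2}, $(H_{\beta,0}^\nu)^*=H_{\bar\beta,0}^{\bar\nu}$ and $(H_{\beta,\frac12}^\nu)^*=H_{\bar\beta,\frac12}^{\bar\nu}$, so self-adjointness is equivalent to $H_{\beta,0}^\nu=H_{\bar\beta,0}^{\bar\nu}$, resp. $H_{\beta,\frac12}^\nu=H_{\bar\beta,\frac12}^{\bar\nu}$. Testing on $C_{\rm c}^\infty(\R_+)$ again yields $\beta\in\R$, while the boundary line spanned by the class of $y_{\beta,0}+\nu j_{\beta,0}$ (resp. $y_{\beta,\frac12}+\nu j_{\beta,\frac12}$), which by Proposition \ref{lem_properties}(iii.b)--(iii.c) determines $\nu$ uniquely, forces $\nu=\bar\nu$, i.e. $\nu\in\R\cup\{\infty\}$. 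The converse is immediate by substitution, giving the stated criterion.
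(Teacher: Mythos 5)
Your argument is correct and follows exactly the route the paper indicates (it gives no written proof, saying only that the corollary follows "by combining Propositions \ref{propo1} and \ref{propo2}"): you use $(H_{\beta,m,\kappa})^*=H_{\bar\beta,\bar m,\bar\kappa}$ together with $H_{\beta,m,\kappa}=H_{\beta,-m,\kappa^{-1}}$, and you supply the two details the paper leaves implicit, namely the recovery of $\beta$ and $m^2$ from the action on $C_{\rm c}^\infty(\R_+)$ and the injectivity of $\kappa\mapsto H_{\beta,m,\kappa}$ via the two-dimensional boundary space of Proposition \ref{lem_properties}. Your aside on the exceptional values $m\in\{0,\pm\frac12\}$ (where the extended family is $\kappa$-independent, so the stated criterion should be read as applying to the genuine three-parameter range) is a sensible reading of a point the paper itself does not address.
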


Let us finally mention some equalities about the action of the dilation group. For that purpose, we recall that the unitary group $\{U_\tau\}_{\tau\in\R}$ of dilations acts on $f\in L^2(\R_+)$ as $\big(U_\tau f\big)(x) = \e^{\tau/2}f(\e^\tau x)$. The proof of the following lemma consists in an easy computation.

\begin{proposition}
For $m\in \C$ with $|\Re(m)|<1$ one has
\begin{align*}
U_\tau H_{\beta,m,\kappa}U_{-\tau}
&=\e^{-2\tau}H_{\e^\tau\beta,m,\e^{ -2\tau m}\kappa}
&
\kappa\in\C\cup\{\infty\},\\
U_\tau H_{\beta,0}^\nu U_{-\tau}& =\e^{-2\tau}H_{\e^\tau\beta,0}^{\nu+\tau}
& \nu\in\C\cup\{\infty\},\\
U_\tau H_{\beta,\frac12}^\nu U_{-\tau}
&=\e^{-2\tau}H_{\e^\tau\beta,\frac12}^{\e^\tau(\nu-\beta\tau)}
& \nu\in\C\cup\{\infty\}.
\end{align*}
with the conventions $\alpha \;\!\infty=\infty$ for any $\alpha\in \C\setminus\{0\}$ and  $\infty + \tau = \infty$.
\end{proposition}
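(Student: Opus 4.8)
The plan is to reduce the statement to two elementary ingredients: the conjugation of the formal differential expression $L_{\beta,\alpha}$ by the dilation group, and the transformation under $U_\tau$ of the reference functions $j_{\beta,m}$, $y_{\beta,0}$, $y_{\beta,\frac12}$ entering the boundary conditions. First I would record how each term of $L_{\beta,\alpha}$ scales. Since $(U_{-\tau}f)(x)=\e^{-\tau/2}f(\e^{-\tau}x)$, a direct computation gives $U_\tau\partial_x^2 U_{-\tau}=\e^{-2\tau}\partial_x^2$, $U_\tau\frac{1}{x^2}U_{-\tau}=\e^{-2\tau}\frac1{x^2}$, and $U_\tau\frac1x U_{-\tau}=\e^{-\tau}\frac1x$. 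The crucial point is that the Coulomb term $\frac1x$ scales with $\e^{-\tau}$ rather than $\e^{-2\tau}$, so that after factoring out $\e^{-2\tau}$ the coupling constant $\beta$ is replaced by $\e^\tau\beta$:
\begin{equation*}
U_\tau L_{\beta,\alpha}U_{-\tau}=\e^{-2\tau}L_{\e^\tau\beta,\alpha}.
\end{equation*}

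Next I would promote this identity to the level of the closed realizations. Because $U_\tau$ is unitary and $\Dom(L^{\max}_{\beta,m^2})$ is defined purely by the $L^2$ conditions $f,\,L_{\beta,m^2}f\in L^2(\R_+)$, the operator $U_\tau$ maps $\Dom(L^{\max}_{\beta,m^2})$ bijectively onto $\Dom(L^{\max}_{\e^\tau\beta,m^2})$; likewise it maps $C_{\mathrm c}^\infty(\R_+)$ onto itself and hence $\Dom(L^{\min}_{\beta,m^2})$ onto $\Dom(L^{\min}_{\e^\tau\beta,m^2})$. Thus $U_\tau L^{\max}_{\beta,m^2}U_{-\tau}=\e^{-2\tau}L^{\max}_{\e^\tau\beta,m^2}$ and the same for the minimal operators. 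Since every $H$ in the statement acts as $L^{\max}$ on a domain squeezed between $\Dom(L^{\min})$ and $\Dom(L^{\max})$ by a boundary condition, and since $U_\tau$ already matches the minimal and maximal operators up to the factor $\e^{-2\tau}$, it only remains to check that $U_\tau$ sends each boundary condition to the asserted one.

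For this I would compute $U_\tau$ of the reference functions, using the simpler representatives provided in the Remark (which differ from $j_{\beta,m}$, $y_{\beta,m}$ by terms lying in $\Dom(L^{\min})$ around $0$, and are therefore interchangeable because $U_\tau$ preserves $\Dom(L^{\min})$). The homogeneous solution scales cleanly, $U_\tau j_{\beta,m}=\e^{\tau(1+m)}j_{\e^\tau\beta,m}$, which upon substitution into $j_{\beta,m}+\kappa\,j_{\beta,-m}$ produces the factor $\e^{-2\tau m}$ in front of $\kappa$. The logarithmic cases are governed by the identity $\ln(\e^\tau x)=\ln(x)+\tau$: writing out $(U_\tau y)(x)=\e^{\tau/2}y(\e^\tau x)$ for the representatives $x^{\frac12}\ln(x)(1-\beta x)+2\beta x^{\frac32}$ and $1-\beta x\ln(x)$, the extra additive $\tau$ coming from the logarithm converts part of $y$ into a multiple of $j$, giving
\begin{align*}
U_\tau y_{\beta,0}&=\e^{\tau}\big(y_{\e^\tau\beta,0}+\tau\, j_{\e^\tau\beta,0}\big),\\
U_\tau y_{\beta,\frac12}&=\e^{\tau/2}\big(y_{\e^\tau\beta,\frac12}-\e^\tau\beta\tau\, j_{\e^\tau\beta,\frac12}\big).
\end{align*}
Substituting these into $y_{\beta,0}+\nu\,j_{\beta,0}$ and $y_{\beta,\frac12}+\nu\,j_{\beta,\frac12}$, together with $U_\tau j_{\beta,0}=\e^\tau j_{\e^\tau\beta,0}$ and $U_\tau j_{\beta,\frac12}=\e^{3\tau/2}j_{\e^\tau\beta,\frac12}$, yields precisely the parameter shifts $\nu\mapsto\nu+\tau$ and $\nu\mapsto\e^\tau(\nu-\beta\tau)$.

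The main obstacle is exactly the logarithmic cases $m\in\{0,\frac12\}$: unlike the generic case, the dilation does not leave the second solution quasi-homogeneous, and one must track how the multiplicative factor $\e^\tau$ inside the logarithm spills a contribution onto the companion solution $j$. Handling this cleanly is what forces the affine (rather than purely multiplicative) dependence $\nu\mapsto\nu+\tau$ and $\nu\mapsto\e^\tau(\nu-\beta\tau)$. The safest route is to run the computation with the simpler representatives of the Remark, so that no hidden non-logarithmic $x$-coefficient of $y_{\beta,\frac12}$ can enter; one only needs to note once and for all that replacing $j,y$ by these representatives changes the defining functions by elements of $\Dom(L^{\min})$ around $0$, which $U_\tau$ maps into $\Dom(L^{\min})$ and which therefore leave the boundary class unchanged.
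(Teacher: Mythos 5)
Your proof is correct and carries out exactly the computation the paper leaves implicit (the paper offers no details, stating only that the proof "consists in an easy computation"): the scaling $U_\tau L_{\beta,\alpha}U_{-\tau}=\e^{-2\tau}L_{\e^\tau\beta,\alpha}$, the invariance of the minimal/maximal domains under $U_\tau$, and the transformation of the boundary functions, including the exact identities $U_\tau j_{\beta,m}=\e^{\tau(1+m)}j_{\e^\tau\beta,m}$, $U_\tau y_{\beta,0}=\e^{\tau}(y_{\e^\tau\beta,0}+\tau j_{\e^\tau\beta,0})$ and $U_\tau y_{\beta,\frac12}=\e^{\tau/2}(y_{\e^\tau\beta,\frac12}-\e^\tau\beta\tau j_{\e^\tau\beta,\frac12})$, all check out. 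Your identification of the logarithmic spill-over onto $j$ as the source of the affine shifts in $\nu$ is the right key point.
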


\section{Spectral theory}\label{sec_spectral}
\setcounter{equation}{0}

In this section we investigate the spectral properties of the Whittaker operators.

\subsection{Point spectrum}

The point spectrum is obtained by looking at general solutions of the equation
\begin{equation*}
L_{\beta,m^2}f= -k^2 f
\end{equation*}
for $k\in \C$ with $\Re(k)\geq 0$, and by considering only the solutions
which are in the domain of the operators $H_{\beta,m,\kappa}$, $H^\nu_{\beta,\frac{1}{2}}$, or $H^\nu_{\beta,0}$.

In the following statement,
$\Gamma$ stands for the usual gamma function, $\psi$ is the digamma function defined by $\psi(z) = \Gamma'(z) / \Gamma( z )$ and $\gamma = - \psi( 1 )$. Since the special case $\beta=0$ has already been considered in \cite{DR1}, we assume that $\beta\neq 0$ in the following statement, and recall in Theorem \ref{thm_beta_0} the results obtained for $\beta=0$. It is also useful to note that the condition $\beta\not\in[0,\infty[$ guarantees that either $+\Im(\sqrt{\beta})>0$ or $-\Im(\sqrt{\beta})>0$, due to our definition of the square root.

\begin{theorem}\label{spectrum}
\begin{enumerate}
\item
Let $\beta \in \C^\times$, $|\Re(m)|<1$ with $m\not \in \big\{-\frac12,0,\frac12\big\}$, and let $\kappa\in\C\cup\{\infty\}$.
Then the operator $H_{\beta,m,\kappa}$ possesses an eigenvalue $\lambda\in\C$ in the following cases:
\begin{enumerate}
\item[(i)] $\lambda=-k^2$, $\Re(k)>0$, $\frac{\beta}{2k} + m - \frac12 \notin \N$ and
\begin{equation}\label{eq_Ci1}
\kappa = (2k)^{-2m}\frac{\Gamma(2m)}{\Gamma(-2m)}
\frac{\Gamma\big(\frac{1}{2}-m - \frac{\beta}{2k}\big)}{\Gamma\big(\frac{1}{2}+m- \frac{\beta}{2k}\big)},
\end{equation}
\item[(ii)] $\lambda=\mu^2$, $0<\mu<\pm\Im(\beta)$ and
\begin{equation*}
\kappa =  \e^{\pm \i\pi m} (2\mu)^{-2m}\frac{\Gamma(2m)}{\Gamma(-2m)}
\frac{\Gamma\big(\frac{1}{2}-m\mp \i\frac{\beta}{2\mu}\big)}{\Gamma\big(\frac{1}{2}+m\mp \i\frac{\beta}{2\mu}\big)},
\end{equation*}
\item[(iii)] $\lambda=0$, $\beta \not \in [0,\infty[$, and
\begin{equation*}
\kappa = \frac{\Gamma(2m)}{\Gamma(-2m)\;\!(-\beta)^{2m}}.
\end{equation*}
\end{enumerate}
\item Let $\beta \in \C^\times$ and $\nu\in\C\cup\{\infty\}$. Then $H_{\beta,\frac12}^\nu$ possesses an eigenvalue $\lambda$ in the following cases:
\begin{enumerate}
\item[(i)] $\lambda=-k^2$, $\Re(k)>0$, $\frac{\beta}{2k} \notin \N$ and
\begin{equation*}
\nu=-\beta\left(\frac12\psi\Big(1-\frac{\beta}{2k}\Big)+\frac12\psi\Big(-\frac{\beta}{2k}\Big)+2\gamma-1+\ln(2k) \right),
\end{equation*}
\item[(ii)] $\lambda=\mu^2$, $0<\mu<\pm\Im(\beta)$, and
\begin{equation*}
\nu=-\beta\left(\frac12\psi\Big(1\mp \i\frac{\beta}{2\mu}\Big) + \frac12\psi\Big(\mp \i\frac{\beta}{2\mu}\Big)+2\gamma-1 +\ln(2\mu) \mp \i \frac{\pi}{2} \right),
\end{equation*}
\item[(iii)] $\lambda=0$, $\pm \Im(\sqrt{\beta})>0$, and
\begin{equation*}
\nu =  - \beta \big(\ln( \beta ) + 2\gamma - 1 \mp \i \pi \big).
\end{equation*}
\end{enumerate}
\item Let $\beta \in \C^\times$ and $\nu\in\C\cup\{\infty\}$. Then $H_{\beta,0}^\nu$ possesses an eigenvalue $\lambda$ in the following cases:
\begin{enumerate}
\item[(i)] $\lambda=-k^2$, $\Re(k)>0$, $\frac{\beta}{2k} - \frac12 \notin \N$ and
\begin{equation*}
\nu=\psi\Big(\frac{1}{2}-\frac{\beta}{2k}\Big)+2\gamma+\ln(2k),
\end{equation*}
\item[(ii)] $\lambda=\mu^2$, $0<\mu<\pm\Im(\beta)$, and
\begin{equation*}
\nu=\psi\Big(\frac{1}{2}\mp \i\frac{\beta}{2\mu}\Big) \mp \i\frac{\pi}{2}+2\gamma+ \ln(2\mu),
\end{equation*}
\item[(iii)] $\lambda=0$, $\pm \Im(\sqrt{\beta})>0$, and
\begin{equation*}
\nu = \ln( \beta ) + 2\gamma + 2\ln(2) \mp \i \pi.
\end{equation*}
\end{enumerate}
\end{enumerate}
\end{theorem}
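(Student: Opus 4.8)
The plan is to characterize the eigenvalues through a single matching condition. For $\lambda=-k^2$ the equation $L_{\beta,m^2}f=\lambda f$ becomes, after the substitution $x\mapsto 2kx$, the Whittaker equation analyzed in the appendix, so every solution is a Whittaker function. By Proposition~\ref{lem_properties}(i) any element of $\Dom(L^{\max}_{\beta,m^2})$ tends to $0$ at infinity, and at infinity the operator is in the limit-point case, so an eigenfunction must be a multiple of the \emph{unique} (up to scalar) solution that is square integrable near $+\infty$. Since $|\Re(m)|<1$, both possible behaviours at $0$ are automatically $L^2$ there, so the only remaining requirement is that this distinguished solution satisfy the boundary condition defining the operator. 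Thus $\lambda$ is an eigenvalue if and only if the distinguished solution lies in $\Dom(H_{\beta,m,\kappa})$ (resp.\ in $\Dom(H^\nu_{\beta,0})$ or $\Dom(H^\nu_{\beta,\frac12})$ in the special cases), and the proof reduces to computing the coefficients of this solution in the basis adapted to the boundary condition at $0$.

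First I would pin down the $L^2$-solution at infinity in each regime. For $\Re(k)>0$ (case (i)) the decaying Whittaker function $\cK_{\beta,m}$, rescaled to energy $-k^2$, behaves like $\e^{-kx}$ and is manifestly square integrable. For $\lambda=\mu^2>0$ (case (ii)) one takes $k$ purely imaginary, $k=\mp\i\mu$; the exponential factor is then purely oscillatory and integrability is produced instead by the complex Coulomb phase $x^{\beta/(2k)}$, whose modulus is $x^{\Re(\beta/(2k))}$. A direct computation gives $\Re(\beta/(2k))<-\tfrac12$ precisely when $0<\mu<\pm\Im(\beta)$, which explains both the range of $\mu$ and the choice of sign. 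For $\lambda=0$ (case (iii)) the rescaling degenerates, so I would argue directly with the zero-energy solutions built from Bessel functions in Sections~\ref{sec_zero-e} and~\ref{B5}: the hypothesis $\beta\notin[0,\infty[$, i.e.\ $\pm\Im(\sqrt\beta)>0$, selects the unique Hankel-type combination that decays at infinity.

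Next I would expand the distinguished solution near $0$ via the connection formulas of the appendix. In the generic case this writes it as $A\,j_{\beta,m}+B\,j_{\beta,-m}$ with $A\propto\Gamma(-2m)/\Gamma(\tfrac12-m-\tfrac{\beta}{2k})$ and $B\propto\Gamma(2m)/\Gamma(\tfrac12+m-\tfrac{\beta}{2k})$. Since $\Dom(H_{\beta,m,\kappa})$ requires the $j_{\beta,-m}$-coefficient to equal $\kappa$ times the $j_{\beta,m}$-coefficient, the eigenvalue condition is $\kappa=B/A$; inserting the scaling factor $(2k)^{-2m}$ yields exactly \eqref{eq_Ci1}, and the $k\to0$ limit (or the direct zero-energy connection formula) gives the formula in (iii). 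The excluded values $\tfrac{\beta}{2k}+m-\tfrac12\in\N$ are precisely the Laguerre/degenerate points where $A=0$, i.e.\ where the solution is already pure and $\kappa=\infty$. For $m=0$ and $m=\tfrac12$ the two exponents $x^{1/2\pm m}$ coincide, so the basis must be replaced by $\{j_{\beta,0},y_{\beta,0}\}$ and $\{j_{\beta,\frac12},y_{\beta,\frac12}\}$ respectively; the relevant connection coefficients arise from the confluent (degenerate) limit $m\to0,\tfrac12$ of the Gamma-ratios, and expanding these ratios produces the digamma terms $\psi$ together with the constants $\gamma$, $\ln(2k)$ and the $\mp\i\pi/2$ phases appearing in parts 2 and 3.

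The main obstacle is the careful bookkeeping of these connection formulas: obtaining $A$ and $B$ with the correct $(2k)^{\pm2m}$ and phase factors, and -- most delicately -- performing the degenerate limits $m\to0,\tfrac12$ so that the logarithmic solutions $y_{\beta,0}$, $y_{\beta,\frac12}$ and the digamma expressions emerge with the right additive constants. Everything rests on the detailed analysis of $\cI_{\beta,m}$, $\cK_{\beta,m}$ (and, in the doubly degenerate situations, $\cX_{\beta,m}$) carried out in the appendix; once those asymptotics and connection relations are available, each of the nine formulas follows by reading off a single ratio of coefficients and equating it with $\kappa$ or $\nu$.
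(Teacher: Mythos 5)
Your proposal follows essentially the same route as the paper: in each of the three regimes ($\Re(k)>0$, $\lambda=\mu^2$ with $k=\mp\i\mu$, and $\lambda=0$) one identifies the unique solution that is square integrable near infinity ($\cK_{\frac{\beta}{2k},m}(2k\cdot)$, $\cH^{\pm}_{\frac{\beta}{2\mu},m}(2\mu\cdot)$, and the Hankel-type zero-energy solution, with exactly your criterion $\Re(\beta/(2k))<-\tfrac12$ giving the window $0<\mu<\pm\Im(\beta)$), notes that square integrability at $0$ is automatic for $|\Re(m)|<1$, and reads off $\kappa$ or $\nu$ as the ratio of the coefficients of $j_{\beta,\pm m}$ (resp.\ of $j$ and $y$ in the degenerate cases) in the expansion near $0$. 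The only cosmetic difference is that the paper quotes the explicit small-$z$ Taylor expansions of $\cK_{\delta,\frac12}$, $\cK_{\delta,0}$ and $\cH^{\pm}_{\delta,m}$ from the appendix rather than re-deriving them as confluent limits of the Gamma-ratios, and it treats $\lambda=0$ directly via Hankel asymptotics rather than as a $k\to0$ limit; both variants lead to the same nine formulas.
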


\begin{proof}
We start with the special case $\lambda=-k^2 = 0$. The two solutions of the equation $L_{\beta,m^2}f=0$ are provided by the functions
\begin{equation}\label{hankel}
x\mapsto h_{\beta,m}^\pm(x)
:=x^{1/4}\Ha^{\pm}_{2m}\big(2\sqrt{\beta x}\big),
\end{equation}
with $\Ha^\pm_m$ the Hankel function for dimension 1, see \cite[App.~A.5]{DR1}. We then infer from \cite[App.~A.5]{DR1} that for any $z$ with $-\pi<\arg(z)\leq \pi$, one has as $z\to 0$
\begin{equation*}
\Ha_m^\pm(z)=\left\{ \begin{array}{lcl}
\pm \i \frac{\sqrt{2}}{\sqrt{\pi}} z^{\frac{1}{2}} \big(\ln(z)+\gamma\mp \i \frac{\pi}{2}\big) + O\big(|z|^{\frac52}\ln(|z|)\big)
& {\rm if} & m=0,\\
\mp \i \frac{1}{\sqrt{\pi}}\big(\frac{z}{2}\big)^{-\frac{1}{2}} \pm \i  \frac{2}{\sqrt{\pi}}\Big(\ln\big(\frac{z}{2}\big)+\gamma-\frac{1}{2}\mp \i \frac{\pi}{2}\Big)\big(\frac{z}{2}\big)^{\frac{3}{2}}
+ O\big(|z|^{\frac72}\ln(|z|)\big)
& {\rm if} & m=1,\\
\mp \i \frac{\sqrt{\pi}}{\sin(\pi m)} \left(\frac{z}{2}\right)^{\frac12}
\Big(\frac{1}{\Gamma(1-m)}\big(\frac{z}{2}\big)^{-m}-\frac{\e^{\mp \i \pi m}}{\Gamma(1+m)}\big(\frac{z}{2}\big)^m\Big) + O(|z|^{\frac{5}{2} - | \Re( m ) |})
& \rm if & m \not \in \Z.
\end{array} \right.
\end{equation*}
For $|\Re(m)|<1$, this implies that the two functions $ h_{\beta,m}^\pm$ belong to $L^2(\R_+)$ near $0$. On the other hand, for large $z$ and $|\arg(\mp\i z)|<\pi-\varepsilon$, $\varepsilon>0$, one has
\begin{equation*}
\Ha_m^\pm(z) = \e^{\pm\i (z-\frac{1}{2}\pi m-\frac{1}{4}\pi)}\big(1+O(|z|^{-1})\big).
\end{equation*}
Since $| \arg( 2\sqrt{\beta x} ) | \leq  \pi / 2$,
it follows that
\begin{equation*}
h_{\beta,m}^\pm(x) = x^{1/4}\e^{\pm\i (2\sqrt{\beta x}- \pi m-\frac{1}{4}\pi)}\big(1+O(|x|^{-\frac{1}{2}})\big),
\end{equation*}
Hence  if $\Im(\sqrt{\beta})=0$, then $h_{\beta,m}^\pm$ do not belong to $L^2$ near infinity, while
if $\pm\Im(\sqrt{\beta})>0$, then
$h^\pm_{\beta,m}$ belongs to $L^2$ near infinity, and
$h^\mp_{\beta,m}$ does not.
For $\pm \Im(\sqrt{\beta})>0$, we thus have that $h_{\beta,m}^\pm \in L^2( \R_+ )$ and hence, since in addition $L_{\beta,m^2} h_{\beta,m}^\pm=0$, we deduce that $h_{\beta,m}^\pm \in \Dom(L_{\beta,m^2 }^{\max})$.
It only remains to check in which domain of the operators $H_{\beta,m,\kappa}$, $H^\nu_{\beta,\frac{1}{2}}$, or $H^\nu_{\beta,0}$ does $h_{\beta,m}^\pm$ belong to.
By Proposition \ref{lem_properties}, it suffices to determine the asymptotic expansion near $0$ of $h_{\beta,m}^\pm$ up to remainder terms of order $o(x^{\frac12+|\Re(m)]})$.
This can easily be obtained from the expansion provided above, and yields to the statements $1.(iii)$, $2.(iii)$ and $3.(iii)$.

Let us now prove the statements $1.(ii)$, $2.(ii)$ and $3.(ii)$. We consider the equation $L_{\beta,m^2}f=\mu^2 f$ for some $\mu>0$. Two linearly independent solutions are provided by the functions $x\mapsto \Hpm{\frac{\beta}{2\mu}}{m}(2\mu x)$ introduced in  \cite[Sec.~2.7]{DR2}, see also \eqref{Hpm-definition}. From the asymptotic expansion near infinity given by
\begin{equation}\label{eq_near_infty}
\Hpm{\frac{\beta}{2\mu}}{m}(2\mu x) = \e^{\mp\i\frac{\pi}{2}\naw{\frac{1}{2}+m}}\e^{\frac{\pi\beta}{4\mu}}(2\mu x)^{\pm\i\frac{\beta}{2\mu}}
\;\!\e^{\pm\i\mu x}\big(1 + O(x^{-1})\big) ,
\end{equation}
one infers that at most one of these functions is in $L^2$ near infinity, depending on the sign of $\Im(\beta)$. More precisely, for $\Im(\beta)>0$, the map $x\mapsto \Ha^+_{\frac{\beta}{2\mu},m}(2\mu x)$ belongs to $L^2$ near infinity if $\mu<\Im(\beta)$ and does not belong to $L^2$ near infinity otherwise. Under the same condition $\Im(\beta)>0$, the map $x\mapsto \Ha^-_{\frac{\beta}{2\mu},m}(2\mu x)$ never belongs to $L^2$ near infinity. Conversely, for $\Im(\beta)<0$, the map $x\mapsto \Ha^-_{\frac{\beta}{2\mu},m}(2\mu x)$ belongs to $L^2$ near infinity if $\mu<-\Im(\beta)$ and does not belong to $L^2$ near infinity otherwise. Under the same condition $\Im(\beta)<0$, the map $x\mapsto \Ha^+_{\frac{\beta}{2\mu},m}(2\mu x)$ never belongs to $L^2$ near infinity. Finally, for $\Im(\beta)=0$, none of these functions belongs to $L^2$ near infinity.

For the asymptotic expansion near $0$, the information on $\Ha^\pm_{\delta,m}$ provided in \cite[Eq.~(2.31)]{DR2} is not sufficient. However, the appendix of the current paper contains all the necessary information on these special functions. By taking into account the Taylor expansion of $\cI_{\delta,m}$ near $0$ provided in \eqref{Taylor_2} and the equality $\Gamma(\alpha)\Gamma(1-\alpha)=\frac{\pi}{\sin(\pi\alpha)}$ one infers that for $|\Re(m)|<1$ and $m\not \in \big\{-\frac{1}{2},0,\frac{1}{2}\big\}$ one has
\begin{equation}\label{eq_a2}
\cI_{\delta,m}(z)=\frac{z^{\frac{1}{2}+m}}{\Gamma(1+2m)} \Big(1 -\frac{\delta}{1+2m}z+ O(z^2)\Big)
\end{equation}
and
\begin{align*}
\Hpm{\delta}{m}(z)  = &  \mp \i \e^{\mp i\pi m} \frac{\Gamma(-2m)}{\Gamma\big(\frac{1}{2}-m\mp \i\delta\big)}z^{\frac{1}{2}+m}
\Big(1-\frac{\delta}{1+2m}z\Big) \\
& \quad \mp \i \frac{\Gamma(2m)}{\Gamma\big(\frac{1}{2}+m\mp \i\delta\big)}z^{\frac{1}{2}-m}
\Big(1-\frac{\delta}{1-2m}z\Big) + o\big(z^{\frac{3}{2}}\big).
\end{align*}

For $2m \in \Z$ one has to consider the expression for $\Ka_{\delta,\frac{1}{2}}$ and $\Ka_{\delta,0}$ provided in \eqref{eq_a12} and \eqref{eq_a0} respectively. Then, by considering the Taylor expansion near $0$ of these functions one gets
\begin{align}\notag
\cK_{\delta, \frac{1}{2}}(z)
=& \frac{1}{\Gamma(1-\delta)}+\frac{1}{\Gamma(-\delta)}z\ln(z)\\
&+\frac{1}{\Gamma(-\delta)} \Big(\frac12\psi(1-\delta)+\frac12\psi(-\delta)+2\gamma-1\Big)z
+o\big(z^{\frac{3}{2}}\big),\label{eq_Taylor_12}\\
\label{eq_Taylor_0}
\nonumber \cK_{\delta,0}(z) = &  -\frac{1}{\Gamma\big(\frac{1}{2}-\delta\big)}\Big[
z^\frac{1}{2} \ln(z) + \Big(\psi\Big(\frac{1}{2}-\delta\Big)+2\gamma\Big)z^\frac{1}{2}
-\delta z^\frac{3}{2}\ln(z) \\
& \quad-\delta\Big(
\psi\Big(\frac{1}{2}-\delta\Big)+2\gamma-2
\Big)z^{\frac{3}{2}}\Big]
+o\big(z^{\frac{3}{2}}\big).
\end{align}
From Equation \eqref{Hpm-definition} one finally deduces the relations
\begin{align*}
\Hpm{\delta}{\frac{1}{2}}(z)
= & \mp \i\frac{1}{\Gamma(1\mp \i\delta)} - \frac{1}{\Gamma(\mp \i\delta)}z\ln(z)  \\
& \quad -\frac{1}{\Gamma(\mp \i\delta)} \Big(\frac12\psi(1\mp \i\delta)+\frac12\psi(\mp \i\delta)+2\gamma-1\mp \i \frac{\pi}{2}\Big)z
+o\big(z^{\frac{3}{2}}\big)\\
\Hpm{\delta}{0}(z)
= &\pm \i\frac{1}{\Gamma\big(\frac{1}{2}\mp \i\delta\big)}\Big[
z^\frac{1}{2} \ln(z) + \Big(\psi\Big(\frac{1}{2}\mp \i\delta\Big) \mp \i\frac{\pi}{2}+2\gamma\Big)z^\frac{1}{2}
-\delta z^\frac{3}{2}\ln(z) \Big]+O\big(z^{\frac{3}{2}}\big).
\end{align*}

To show $1.(ii)$ we consider the function $x\mapsto \Ha_{\frac{\beta}{2\mu},m}^+(2\mu x)$ if $\Im(\beta)>0$ and $x\mapsto \Ha_{\frac{\beta}{2\mu},m}^-(2\mu x)$ if $\Im(\beta)<0$, and check for which $\kappa$ these functions belong to $\Dom(H_{\beta,m \kappa})$. For $|\Re(m)|<1$ and $m\not \in \big\{-\frac{1}{2},0,\frac{1}{2}\big\}$ one has
\begin{align*}
\Ha_{\frac{\beta}{2\mu},m}^\pm(2\mu x)  = & \mp \i \e^{\mp \i\pi m} \frac{\Gamma(-2m)}{\Gamma\big(\frac{1}{2}-m\mp \i\frac{\beta}{2\mu}\big)}
(2\mu x)^{\frac{1}{2}+m}
\Big(1-\frac{\beta}{1+2m}x\Big) \\
& \quad \mp \i \frac{\Gamma(2m)}{\Gamma\big(\frac{1}{2}+m\mp \i\frac{\beta}{2\mu}\big)}(2\mu x)^{\frac{1}{2}-m}
\Big(1-\frac{\beta}{1-2m}x\Big) + o\big(x^{\frac{3}{2}}\big) \\
= & \mp \i c\big(j_{\beta,m} + \kappa j_{\beta,-m}(x) \big) + o\big(x^{\frac{3}{2}}\big)
\end{align*}
with $c:=  \e^{\mp \i\pi m} \frac{\Gamma(-2m)}{\Gamma(\frac{1}{2}-m\mp \i\frac{\beta}{2\mu})} (2\mu)^{\frac{1}{2}+m}$
and
\begin{equation*}
\kappa:=\frac{1}{c} \frac{\Gamma(2m)}{\Gamma\big(\frac{1}{2}+m\mp \i\frac{\beta}{2\mu}\big)}(2\mu)^{\frac{1}{2}-m}
=  \e^{\pm \i\pi m} (2\mu)^{-2m}\frac{\Gamma(2m)}{\Gamma(-2m)}
\frac{\Gamma\big(\frac{1}{2}-m\mp \i\frac{\beta}{2\mu}\big)}{\Gamma\big(\frac{1}{2}+m\mp \i\frac{\beta}{2\mu}\big)}.
\end{equation*}
Note that the conditions $\pm \Im(\beta)>0$, $|\Re(m)|<1$, and $\mu<\pm\Im(\beta)$ imply that $\pm \i\frac{\beta}{2\mu}+m-\frac{1}{2}\not \in \N$.

The proof of $2.(ii)$ and $3.(ii)$ can be obtained similarly once the following expressions are taken into account:
\begin{align*}
\Hpm{\frac{\beta}{2\mu}}{\frac{1}{2}}(2\mu x)
=& \frac{2\mu}{\beta} \frac{1}{\Gamma\big(\mp \i\frac{\beta}{2\mu}\big)} \big(1-\beta x\ln(x)\big)  \\
&\hspace{-9ex} -\frac{2\mu}{\Gamma\big(\mp \i\frac{\beta}{2\mu}\big)} \Big[\frac12\psi\Big(1\mp \i\frac{\beta}{2\mu}\Big)+\frac12\psi\Big(\mp \i\frac{\beta}{2\mu}\Big)+2\gamma-1 +\ln(2\mu) \mp \i \frac{\pi}{2} \Big]
x +o\big(x^{\frac{3}{2}}\big),\\
\Hpm{\frac{\beta}{2\mu}}{0}(2\mu x)
= &\pm \i\frac{(2\mu)^{\frac{1}{2}}}{\Gamma\big(\frac{1}{2}\mp \i\frac{\beta}{2\mu}\big)}\Big(
x^\frac{1}{2} \ln(x)  \\
& + \Big[\psi\Big(\frac{1}{2}\mp \i\frac{\beta}{2\mu}\Big) \mp \i\frac{\pi}{2}+2\gamma+ \ln(2\mu)\Big)x^\frac{1}{2}
-\beta x^\frac{3}{2}\ln(x) \Big]+O\big(x^{\frac{3}{2}}\big).
\end{align*}

We shall now turn to the generic case (statements $1.(i)$, $2.(i)$ and $3.(i)$), namely the equation $L_{\beta,m^2}f=-k^2 f$ for some $k\in \C$ with $\Re(k)>0$. In the non-degenerate case, solutions of this equation are provided by the functions
\begin{equation}\label{eq_3sol}
x \mapsto \Ka_{\frac{\beta}{2k},m}(2kx)
\qquad \hbox{and} \qquad
x \mapsto \Ia_{\frac{\beta}{2k},\pm m}(2kx).
\end{equation}
We refer again to the appendix for an introduction to these functions. The behavior for large $z$ of the function $\cK_{\delta,m}(z)$ has been provided in \eqref{Kbm-around-infinity}, from which one infers that the first function in \eqref{eq_3sol} is always in $L^2$ near infinity. On the other hand, since for $|\arg(z)|<\frac{\pi}{2}$ one has
\begin{equation*}
\cI_{\delta,\pm m}(z) = \frac{1}{\Gamma\big(\frac{1}{2}\pm m-\delta\big)}z^{-\delta}\;\!\e^{\frac{z}{2}}\big(1+O(z^{-1})\big)
\end{equation*}
it follows that the remaining two functions in \eqref{eq_3sol} do not belong to $L^2$ near infinity as long as $\frac{\beta}{2k} \mp m -\frac{1}{2}\not \in \N$. Still in the non-degenerate case and when the condition $\frac{\beta}{2k} + m -\frac{1}{2} \in \N$ holds, it follows from relation \eqref{Wr} that the functions $\Ka_{\frac{\beta}{2k},m}(2k\cdot)$ and $\Ia_{\frac{\beta}{2k},-m}(2k\cdot)$ are linearly dependent, but still $\Ia_{\frac{\beta}{2k},m}(2k\cdot)$ does not belong to $L^2$ near infinity. Similarly, when $\frac{\beta}{2k} - m -\frac{1}{2} \in \N$ it is the function $\Ia_{\frac{\beta}{2k},-m}(2k\cdot)$ which does not belong to $L^2$ near infinity.

Let us now turn to the degenerate case, when $m\in \big\{-\frac{1}{2},0,\frac{1}{2}\big\}$. In this situation the two functions $\cI_{\delta, m}$ and $\cI_{\delta,-m}$ are no longer independent, as a consequence of \eqref{wro1}. In the non-doubly degenerate case (see the appendix for more details), which means for $\big(\frac{\beta}{2k},m\big)\not \in \big(\Z,\pm\frac{1}{2}\big)$ or for $\big(\frac{\beta}{2k},m\big)\not \in \big(\Z+\frac{1}{2},0\big)$, the above arguments can be mimicked, and one gets that only the function $\Ka_{\frac{\beta}{2k},m}(2k\cdot)$ belongs to $L^2$ near infinity. In the doubly degenerate case, the function $\cX_{\delta,m}$, introduced in \eqref{eq_def_X}, has to be used. This function is independent of the function $\cK_{\delta,m}$, as shown in \eqref{eq_Wr_2d}. However, this function explodes exponentially near infinity, which means that $\cX_{\frac{\beta}{2k},m}(2k\cdot)$ does not belong to $L^2$ near infinity. Once again, only the function $\Ka_{\frac{\beta}{2k},m}(2k\cdot)$ plays a role.

As a consequence of these observations, it will be sufficient to concentrate on the function $\Ka_{\frac{\beta}{2k},m}(2k\cdot)$ and to check for which $\kappa$ or $\nu$ does this function belong to the domain of the operators $H_{\beta,m,\kappa}$, $H^\nu_{\beta,\frac{1}{2}}$, or $H^\nu_{\beta,0}$ respectively. For the behavior of this function near $0$ one infers from \eqref{generic} and \eqref{eq_a2} that for $m\not \in \big\{-\frac{1}{2},0,\frac{1}{2}\big\}$
\begin{align*}
\cK_{\frac{\beta}{2k},m}(2kx)
= &  -\frac{\pi}{\sin(2\pi m)}\left(\frac{\cI_{\frac{\beta}{2k},m}(2kx)}{\Gamma\big(\frac{1}{2}-m-\frac{\beta}{2k}\big)}
- \frac{\I{\frac{\beta}{2k}}{-m}(2kx)}{\Gamma\big(\frac{1}{2}+m-\frac{\beta}{2k}\big)}\right) \\
= & (2k)^{\frac{1}{2}+m}\frac{\Gamma(-2m)}{\Gamma\big(\frac{1}{2}-m-\frac{\beta}{2k}\big)} x^{\frac{1}{2}+m}
\Big(1 -\frac{\beta}{1+2m}x\Big) \\
& \quad + (2k)^{\frac{1}{2}-m}\frac{\Gamma(2m)}{\Gamma\big(\frac{1}{2}+m-\frac{\beta}{2k}\big)} x^{\frac{1}{2}-m}
\Big(1 -\frac{\beta}{1-2m}x\Big) + o(x^\frac{3}{2}).
\end{align*}
Similarly, it follows from \eqref{eq_a12} and  \eqref{eq_a0} that
\begin{align}\notag
 \cK_{\frac{\beta}{2k},\frac{1}{2}}(2kx)
= & -\frac{1}{\beta}\frac{2k}{\Gamma\big(-\frac{\beta}{2k}\big)} \big(1-\beta x\ln(x)\big)
  +\frac{2k}{\Gamma\big(-\frac{\beta}{2k}\big)} \Big[\frac12\psi\Big(1-\frac{\beta}{2k}\Big)+\frac12\psi\Big(-\frac{\beta}{2k}\Big)\\&+2\gamma-1+\ln(2k)\Big]x
+o\big(x^{\frac{3}{2}}\big),\label{eq_K1/2}\\
\nonumber \cK_{\frac{\beta}{2k},0}(2kx)=&  -\frac{(2kx)^{\frac{1}{2}}}{\Gamma\big(\frac{1}{2}-\frac{\beta}{2k}\big)}\Big[
(1-\beta x)\ln(x) + \Big(\psi\Big(\frac{1}{2}-\frac{\beta}{2k}\Big)+2\gamma+\ln(2k)\Big)  \\
&\quad -\beta\Big(
\psi\Big(\frac{1}{2}-\frac{\beta}{2k}\Big)+2\gamma-2 +\ln(2k)\Big)x\Big]
+o\big(x^{\frac{3}{2}}\big).\label{eq_K0}
\end{align}

The statements $1.(i)$, $2.(i)$, and $3.(i)$ follow then straightforwardly.
\end{proof}

\begin{remark}
A special feature of positive eigenvalues described in Theorem \ref{spectrum} is that the corresponding eigenfunctions have an inverse polynomial decay at infinity, and not an exponential decay at infinity, as it is often expected. This property can be directly inferred from the asymptotic expansion provided in \eqref{eq_near_infty}.
\end{remark}

\begin{remark}
Self-adjoint operators that are included in the families $H_{\beta,m,\kappa}$, $H_{\beta,\frac12}^\nu$ and $H_{\beta,0}^\nu$ do not have eigenvalues in $]0,\infty[$. Indeed, in Theorem \ref{spectrum} a necessary condition for the existence of strictly positive eigenvalues is that $\Im(\beta)\neq0$. This automatically prevents these operators to be self-adjoint, as a consequence of Corollary \ref{corol_s_a}.
\end{remark}

For completeness let us recall the results already obtained in \cite[Sec.~5]{DR1} for $\beta=0$.

\begin{theorem}\label{thm_beta_0}
\begin{enumerate}
\item[(i)]  If $|\Re(m)|<1$, $m\not \in \big\{-\frac{1}{2},0,\frac{1}{2}\big\}$ and $\kappa\in\C\cup\{\infty\}$, the eigenvalues of the operator $H_{0,m,\kappa}$
are of the form $-k^2$ with $\Re(k)>0$, where
\begin{equation*}
\kappa=\Big(\frac{k}{2}\Big)^{-2m}\frac{\Gamma(m)}{\Gamma(-m)},
\end{equation*}
\item[(ii)] If $\nu\in\C\cup\{\infty\}$, the eigenvalues of the operator $H_{0,\frac{1}{2}}^\nu$ are of the form $-k^2$ with $\Re(k)>0$, where $\nu=-k$,
\item[(iii)] If $\nu\in\C\cup\{\infty\}$, the eigenvalues of the operator $H_{0,0}^\nu$ are of the form $-k^2$ with $\Re(k)>0$, where
\begin{equation*}
\nu=\gamma+\ln\Big(\frac{k}{2}\Big).
\end{equation*}
\end{enumerate}
\end{theorem}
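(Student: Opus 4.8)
The plan is to exploit that setting $\beta=0$ reduces $L_{0,m^2}=-\partial_x^2+(m^2-\tfrac14)x^{-2}$ to the pure Bessel operator, so the eigenvalue equation $L_{0,m^2}f=-k^2f$ is a modified Bessel equation with explicit solutions. For each candidate eigenvalue I would first single out the one-dimensional space of solutions that are square integrable near infinity, then compute the behaviour of such a solution near $0$, and finally match this expansion against the boundary condition defining $H_{0,m,\kappa}$, $H_{0,\frac12}^\nu$ or $H_{0,0}^\nu$, reading off the admissible value of $\kappa$ or $\nu$ by means of Proposition \ref{lem_properties}. A preliminary remark is that only negative eigenvalues occur: for $\lambda=\mu^2>0$ the two independent solutions oscillate with bounded amplitude at infinity (being $\sqrt{x}$ times Bessel functions of the first and second kind) and hence lie outside $L^2$, while for $\lambda=0$ neither solution $x^{1/2\pm m}$ is square integrable near infinity in the range $|\Re(m)|<1$. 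This is why the statement lists only $\lambda=-k^2$ with $\Re(k)>0$.

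For the generic case $m\notin\{-\tfrac12,0,\tfrac12\}$ the substitution $f(x)=\sqrt{x}\,g(kx)$ turns the equation into the modified Bessel equation of order $m$, whose unique (up to scalar) solution decaying like $\e^{-kx}$ at infinity is the Macdonald function, giving $f(x)=\sqrt{x}\,K_m(kx)$. Inserting the small-argument expansion $K_m(z)=\tfrac12\big[\Gamma(m)(z/2)^{-m}+\Gamma(-m)(z/2)^{m}\big]+\cdots$ yields, near $0$,
\begin{equation*}
\sqrt{x}\,K_m(kx)=\tfrac12\Gamma(-m)(k/2)^{m}\,j_{0,m}(x)+\tfrac12\Gamma(m)(k/2)^{-m}\,j_{0,-m}(x)+o\big(x^{3/2}\big).
\end{equation*}
By Proposition \ref{lem_properties} this function belongs to $\Dom(H_{0,m,\kappa})$ precisely when $\kappa$ is the ratio of the two coefficients, namely $\kappa=(k/2)^{-2m}\Gamma(m)/\Gamma(-m)$, which is statement (i). The same value follows, after the Legendre duplication formula, by setting $\beta=0$ in the connection coefficients of $\cK_{\frac{\beta}{2k},m}$ computed in the proof of Theorem \ref{spectrum}.

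The two exceptional cases are treated directly, since there the potential either vanishes or produces a logarithm. For $m=\tfrac12$ one has $m^2-\tfrac14=0$, so the equation is simply $f''=k^2f$ with square-integrable solution $\e^{-kx}=1-kx+O(x^2)$; comparing with the boundary data $y_{0,\frac12}+\nu\,j_{0,\frac12}=1+\nu x$ forces $\nu=-k$, which is (ii). For $m=0$ the Macdonald function carries a logarithm, $K_0(z)=-\ln(z/2)-\gamma+\cdots$, so $\sqrt{x}\,K_0(kx)=-x^{1/2}\ln(x)-\big(\gamma+\ln(k/2)\big)x^{1/2}+\cdots$; matching against $y_{0,0}+\nu\,j_{0,0}=x^{1/2}\ln(x)+\nu\,x^{1/2}$ gives $\nu=\gamma+\ln(k/2)$, which is (iii).

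The only genuine point requiring care, exactly as in the proof of Theorem \ref{spectrum}, is the first step: one must verify that the complementary solution at infinity, the exponentially growing $\sqrt{x}\,I_m(kx)$ (respectively $\e^{kx}$), is ruled out by square integrability, so that it is the single boundary condition at $0$ that selects the eigenvalue. Once this is in place the rest is the bookkeeping of connection constants, for which I would rely on the Bessel asymptotics collected in the appendix of \cite{DR1}.
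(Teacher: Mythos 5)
Your proof is correct. It is worth noting, though, that the paper does not actually prove Theorem \ref{thm_beta_0} from scratch: it recalls the statement from \cite{DR1} and then, in the remark that follows, rederives the three formulas as $\beta\to 0$ specializations of Theorem \ref{spectrum}, using the Legendre duplication formula for $(i)$, the identities $\psi(1+z)=\psi(z)+\tfrac1z$, $\psi(1)=-\gamma$ for $(ii)$, and $\psi\big(\tfrac12\big)=-2\ln 2-\gamma$ for $(iii)$. Your route is instead a self-contained direct argument: reduce $L_{0,m^2}f=-k^2f$ to the modified Bessel equation, isolate $\sqrt{x}\,K_m(kx)$ as the unique (up to scalar) solution that is $L^2$ at infinity, and read off $\kappa$ (resp.\ $\nu$) from its expansion at $0$ against the boundary data $j_{0,\pm m}$, $y_{0,0}$, $y_{0,\frac12}$ via Proposition \ref{lem_properties}. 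This is exactly the strategy the paper uses for $\beta\neq 0$ in Theorem \ref{spectrum} (and presumably what \cite{DR1} does), so it is arguably the more transparent proof; what the paper's limiting remark buys instead is a consistency check between the $\beta\neq 0$ and $\beta=0$ formulas without redoing any Bessel asymptotics. Your preliminary elimination of $\lambda\geq 0$ (oscillating, non-$L^2$ solutions for $\lambda>0$; $x^{\frac12\pm m}$ not $L^2$ at infinity for $\lambda=0$) correctly accounts for why only $\lambda=-k^2$ with $\Re(k)>0$ appears, matching the fact that cases $(ii)$ and $(iii)$ of Theorem \ref{spectrum} require $\Im(\beta)\neq 0$ or $\Im(\sqrt{\beta})\neq 0$. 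The connection coefficients you quote for $K_m$, $K_0$ and the resulting values of $\kappa$ and $\nu$ all check out, and the remainder $O\big(x^{\frac52-|\Re(m)|}\big)$ is indeed $o\big(x^{\frac12+|\Re(m)|}\big)$ for $|\Re(m)|<1$, so the matching argument via Proposition \ref{lem_properties}(iv) is legitimate.
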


\begin{remark}
Note that Theorem \ref{thm_beta_0} can be derived from Theorem \ref{spectrum}. Indeed, for $m\not \in \big\{-\frac12,0,\frac12\big\}$ we infer from the Legendre duplication formula
\begin{equation*}
\Gamma(z)\Gamma\Big(\frac{1}{2}+z\Big)=2^{1-2z}\sqrt{\pi}\;\!\Gamma(2z),
\end{equation*}
that
\begin{equation*}
(2k)^{-2m}\frac{\Gamma(2m)}{\Gamma(-2m)}
\frac{\Gamma\big(\frac{1}{2}-m - \frac{\beta}{2k}\big)}{\Gamma\big(\frac{1}{2}+m- \frac{\beta}{2k}\big)}\Big|_{\beta=0}
=\Big(\frac{k}{2}\Big)^{-2m}\frac{\Gamma(m)}{\Gamma(-m)}.
\end{equation*}
For $m=\frac{1}{2}$, we first note that $\Gamma\big(\frac{1}{2}\big)=\sqrt{\pi}$ and $\Gamma\big(-\frac{1}{2}\big)=-2\sqrt{\pi}$. Then we use the relations $\psi(1+z)=\psi(z)+\frac{1}{z}$ and $\psi(1)=-\gamma$, and infer that
\begin{equation*}
\lim_{\beta\to0}
-\beta\Big(\frac12\psi\Big(1-\frac{\beta}{2k}\Big)+\frac12\psi\Big(-\frac{\beta}{2k}\Big)+2\gamma-1+\ln(2k) \Big)
=\Big(\frac{k}{2}\Big) \frac{\Gamma\big(-\frac12\big)}{\Gamma\big(\frac12\big)}
=-k.
\end{equation*}
Finally for $m=0$, from the equality $\psi\big(\frac{1}{2}\big)=-2\ln(2)-\gamma$ one gets
\begin{equation*}
\psi\Big(\frac{1}{2}-\frac{\beta}{2k}\Big)+2\gamma+\ln(2k)
\Big|_{\beta=0}
=\gamma+\ln\Big(\frac{k}{2}\Big).
\end{equation*}
\end{remark}

As a consequence of the expressions provided in Theorem \ref{spectrum}, the discreteness of the spectra of all operators can be inferred in $\C \setminus [0,\infty[$.

\subsection{Green's functions}

Let us now turn our attention to the continuous spectrum. We shall first look for an expression for  Green's function. We will  use the well-known theory of 1-dimensional Schr\"odinger operators, as presented for example in the appendix of \cite{BDG} or in \cite{DG2019}. We begin by recalling a result on which we shall rely.

Let $AC(\R_+)$ denote the set of absolutely continuous functions from $\R_+$ to $\C$, that is functions whose distributional derivative belongs to $L^1_\loc(\R_+)$. Let also $AC^1(\R_+)$ be the set of functions from $\R_+$ to $\C$ whose distributional derivatives belong to $AC(\R_+)$. If $V \in L^1_\loc(\R_+)$, it is not difficult to check that the operator $-\partial_x^2 + V$ can be interpreted as a linear map from $AC^1(\R_+)$ to $L^1_\loc(\R_+)$. The maximal operator associated to $-\partial_x^2 + V$ is then defined as
\begin{align*}
&\Dom( L^\max) := \Big \{ f \in L^2(\R_+) \, \cap \, AC^1(\R_+)
\mid \big( -\partial_x^2 + V \big ) f \in L^2(\R_+) \Big \} \\
& L^\max f := \big ( -\partial_x^2 + V \big ) f , \quad f \in \Dom( L^\max ).
\end{align*}
The minimal operator $L^{\min}$ is the closure of $L^\max$ restricted to compactly supported functions. Note that $L^\max=(L^\min)^\#$.

As before, we say that a function $f : \R_+ \to \C$ belongs to $L^2$ around $0$ (respectively around $\infty$) if there exists $\zeta \in C_{\rm c}^\infty\big([0,\infty[\big)$ with $\zeta = 1$ around $0$ such that $f \zeta \in L^2(\R_+)$ (respectively $f(1-\zeta) \in L^2(\R_+)$).

The following statement contains several results proved in \cite{DG2019}.

\begin{proposition}\label{red}
Let $V\in L_\loc^1(\R_+)$. Let $k \in \C$ and suppose that $u(k,\cdot),v(k,\cdot)\in AC^1(\R_+)$ solve
\begin{align*}
\big(-\partial_x^2+V\big)u(k,\cdot)&=-k^2u(k,\cdot),\\
\big(-\partial_x^2+V\big)v(k,\cdot)&=-k^2v(k,\cdot).
\end{align*}
Assume that $u( k , \cdot)$, $v( k , \cdot )$ are linearly independent and that $u( k , \cdot ) \in L^2$ around $0$, $v ( k , \cdot ) \in L^2$ around $\infty$. Let $\Wr(k):=\Wr\big(u(k,\cdot),v(k,\cdot) ; x\big)$ be the Wronskian of these two solutions. Set
\begin{equation*}
R(-k^2 ; x,y):= \frac{1}{\Wr(k)}
\left \{\begin{array}{ll}
u(k,x)\;\!v(k,y) & \text{ for }0<x<y,\\
u(k,y)\;\!v(k,x) & \text{ for }0<y<x,
\end{array}
\right.
\end{equation*}
and assume that $R(-k^2 ; x ,y)$ is the integral kernel of a bounded operator $R(-k^2)$. Then there exists a unique closed realization $H$ of $-\partial_x^2+V$ with the boundary condition at $0$ given by $u(k,\cdot)$ and at $\infty$ given by $v(k,\cdot)$ in the sense that
\begin{align*}
  &  \Dom( H ) = \big \{ f \in \Dom ( L^\max ) , \, f - u( k , \cdot ) \in \Dom(  L^\min) \hbox{ around } 0 \big \} , \\
&  \phantom{ \Dom( H ) } = \big \{ f \in \Dom ( L^\max) , \, f - v( k , \cdot ) \in \Dom( L^{\min}) \hbox{ around } \infty \big \} , \\
& H f = \big ( -\partial_x^2+V \big ) f,  \quad f \in \Dom( H ).
\end{align*}
Moreover $- k^2$ belongs to the resolvent set of $H$ and
$R(-k^2)=(H+k^2)^{-1}$.
\end{proposition}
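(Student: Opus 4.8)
The statement is a resolvent construction through the Green's function, so the plan is to verify directly that $R:=R(-k^2)$ is a two-sided inverse of $H+k^2$, where $H$ is defined by either of the two boundary conditions, and then to reconcile the two descriptions of $\Dom(H)$. First I would fix $g\in L^2(\R_+)$ and set $f:=Rg$, which by the definition of the kernel reads
\begin{equation*}
f(x)=\frac{1}{\Wr(k)}\Big(v(k,x)\int_0^x u(k,y)\;\!g(y)\,\d y+u(k,x)\int_x^\infty v(k,y)\;\!g(y)\,\d y\Big).
\end{equation*}
Differentiating twice and using that $u(k,\cdot)$ and $v(k,\cdot)$ both solve $(-\partial_x^2+V)\phi=-k^2\phi$, the contributions from the variable endpoints combine, through the definition \eqref{wronskian} of $\Wr(k)$, to produce exactly $g$; this is the classical variation-of-parameters identity and shows $f\in AC^1(\R_+)$ with $(-\partial_x^2+V+k^2)f=g$ distributionally, hence $(L^{\max}+k^2)f=g$. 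The global hypothesis that $R$ is bounded on $L^2(\R_+)$ is precisely what guarantees $f\in L^2(\R_+)$, since the pointwise kernel alone only controls $f$ locally and near each endpoint.

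Next I would read off the endpoint behaviour of $f=Rg$ to verify the boundary conditions. The representation exhibits $f$ near $0$ as $\Wr(k)^{-1}u(k,\cdot)\int_\cdot^\infty v g\,\d y$ plus $\Wr(k)^{-1}v(k,\cdot)\int_0^\cdot u g\,\d y$; since $u(k,\cdot)$ is $L^2$ around $0$ the integral $\int_0^x ug\,\d y$ vanishes at $0$, and one checks that the second summand lies in $\Dom(L^{\min})$ around $0$ while the first is a multiple of $u(k,\cdot)$ modulo $\Dom(L^{\min})$. Thus there is a constant $c$ with $f-c\,u(k,\cdot)\in\Dom(L^{\min})$ around $0$, which is the required boundary condition at $0$; the symmetric computation at $\infty$, using $v(k,\cdot)\in L^2$ around $\infty$, yields $f-c'\,v(k,\cdot)\in\Dom(L^{\min})$ around $\infty$. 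Consequently $\operatorname{Ran}(R)$ is contained in both candidate domains, and $R$ is a right inverse of $L^{\max}+k^2$ restricted to either of them.

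For injectivity I would take any $f\in\Dom(L^{\max})$ with $(L^{\max}+k^2)f=0$ satisfying the boundary condition at $0$. Being a homogeneous solution, $f$ is globally a linear combination $a\,u(k,\cdot)+b\,v(k,\cdot)$, and the boundary condition at $0$ forces the $v$-component to be absent, so $f\in\operatorname{span}\{u(k,\cdot)\}$. But $u(k,\cdot)$, being linearly independent of the solution $v(k,\cdot)$ that governs $L^2$-behaviour at $\infty$, is not $L^2$ around $\infty$, so the only such $f\in L^2(\R_+)$ is $f=0$. Hence $L^{\max}+k^2$ is injective on each candidate domain; combined with the surjectivity provided by $R$, this gives $R=(H+k^2)^{-1}$, so $-k^2\in\rho(H)$ and $H$ is in particular closed. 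Finally, the two descriptions of $\Dom(H)$ and the uniqueness of $H$ follow from the same computation, since each candidate operator admits $R$ as its resolvent and must therefore coincide with it.

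I expect the genuine obstacle to lie not in the variation-of-parameters identity, which is routine, but in making the boundary conditions ``$f-u(k,\cdot)\in\Dom(L^{\min})$ around $0$'' precise and in matching them at the two endpoints. The subtlety is that $v(k,\cdot)$ need not be $L^2$ around $0$, so the separation of $f=Rg$ into its $u$-part and a $\Dom(L^{\min})$-remainder near $0$ cannot be argued by naive convergence of $\int_0^\infty vg\,\d y$; it requires the endpoint decomposition of $\Dom(L^{\max})/\Dom(L^{\min})$ and the characterisation of $\Dom(L^{\min})$ as the functions with vanishing boundary data, expressed through the boundary bilinear form $\Wr(\cdot,\cdot;0)$. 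This is exactly the Sturm--Liouville boundary-space theory of \cite{DG2019}, on which I would rely for this step (the Whittaker-specific analogue being Proposition \ref{lem_properties}).
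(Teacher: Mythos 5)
First, a caveat on the comparison: the paper does not prove Proposition \ref{red} at all --- it is imported from \cite{DG2019} (``The following statement contains several results proved in \cite{DG2019}'') --- so your proposal can only be measured against the standard argument, which is indeed the route you sketch: variation of parameters to obtain $(L^{\max}+k^2)R=I$, identification of the boundary behaviour of $Rg$ at both endpoints, and a homogeneous-solution argument for injectivity. Your first paragraph and the surjectivity part are sound, and you correctly flag that decomposing $Rg$ near $0$ into a $u$-part plus a $\Dom(L^{\min})$-remainder is the technical heart. (As literally written in your second paragraph it is not quite right: when $v\notin L^2$ around $0$ the summand $\Wr(k)^{-1}u(k,x)\int_x^\infty vg\,\d y$ is not by itself ``a multiple of $u$ modulo $\Dom(L^{\min})$'', since $\int_x^\infty vg\,\d y$ may diverge as $x\to 0$; only the combination $v\int_0^x ug\,\d y+u\int_x^1 vg\,\d y$ lands in $\Dom(L^{\min})$ around $0$. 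You acknowledge this in your last paragraph and defer it to the boundary-space theory of \cite{DG2019}, which is acceptable given that the paper itself does the same.)

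The genuine gap is in the injectivity step and, consequently, in the claimed coincidence of the two descriptions of $\Dom(H)$. You assert that $u(k,\cdot)$, ``being linearly independent of $v(k,\cdot)$'', is not $L^2$ around $\infty$. This is a non sequitur: linear independence is compatible with \emph{every} solution being square integrable near $\infty$ (the limit-circle situation at $\infty$, e.g.\ $V(x)=-x^4$; near a regular or limit-circle endpoint all solutions are $L^2$). In that situation $u\in L^2(\R_+)$ and the kernel $R(-k^2;\cdot,\cdot)$ is even Hilbert--Schmidt, so boundedness of $R$ does not rescue the claim; yet the operator defined by the boundary condition at $0$ alone then has $u$ in the kernel of $H+k^2$, is not injective, and its domain strictly contains $\operatorname{Ran}(R)$, while the two one-sided descriptions of the domain genuinely differ ($u$ satisfies the condition at $0$ but not the condition at $\infty$). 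Hence your closing sentence --- that the equality of the two descriptions ``follows from the same computation, since each candidate operator admits $R$ as its resolvent'' --- assumes exactly what must be proved. The clean repair is to define $H$ by imposing \emph{both} boundary conditions (injectivity is then immediate: a homogeneous solution $au+bv$ in the domain forces $b=0$ from the condition at $0$ and $a=0$ from the condition at $\infty$, whether the endpoint is limit-point or limit-circle), and to treat the equivalence of the one-sided descriptions as a separate assertion requiring extra input; in the paper's applications it holds because $\infty$ is in the limit-point case for the Whittaker operator, so the condition at $\infty$ is automatic and $v$ spans the $L^2$-near-$\infty$ solutions. You need to make that input explicit rather than derive it from linear independence.
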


By using such a statement, it has been proved in \cite{DR2} that, for $k \in \C$ such that $\Re( k ) > 0$ and $\frac{\beta}{2k} - \frac{1}{2} - m \notin \N$, we have that $- k^2 \notin \sigma( H_{\beta,m} )$ and $R_{\beta,m}(-k^2):=(k^2+H_{\beta,m})^{-1}$ has the integral kernel
\begin{align*}
\nonumber &R_{\beta,m}(-k^2;x,y)\\
& = \tfrac{1}{2k}\Gamma\naw{\tfrac{1}{2}+m-\tfrac{\beta}{2k}} \begin{cases} \I{\frac{\beta}{2k}}{m}(2k x)\K{\frac{\beta}{2k}}{m}(2k y) & \mbox{ for }0<x<y,\\
\I{\frac{\beta}{2k}}{m}(2k y)\K{\frac{\beta}{2k}}{m}(2k x) & \mbox{ for }0<y<x .
\end{cases}
\end{align*}

Let us now describe the integral kernel of the resolvent of all operators under investigation.  We recall that our parameters are $\beta\in \C$, $\kappa \in \C\cup \{\infty\}$, $\nu \in \C\cup \{\infty\}$, and $m\in \C$ satisfying $-1<\Re(m)<1$.

\begin{theorem}\label{thm_hard}
Let $k \in \C$  with $\Re( k ) > 0$. We have the following properties.
\begin{enumerate}[label=(\roman*)]
\item For  $\kappa\neq\infty$ and $m\not \in \big\{-\frac{1}{2},0,\frac{1}{2}\big\}$ set
\begin{align}
\nonumber\gamma_{\beta,m}(k)
&:=\frac{(2k)^{-m}}{\Gamma\big(\frac12+m-\frac{\beta}{2k}\big)\Gamma(1-2m)},\\
\label{def_omegabm} \omega_{\beta,m,\kappa}(k) &:=
\frac{\gamma_{\beta,m}(k)+\kappa \gamma_{\beta, -m}(k)}{\kappa \gamma_{\beta,-m}(k)}
\frac{\pi}{\sin(2\pi m)}.
\end{align}
If $\gamma_{\beta,m}(k)+\kappa \gamma_{\beta, -m}(k)\neq0$,
then $-k^2\not\in\sigma(H_{\beta,m,\kappa})$ and the integral kernel of $R_{\beta,m,\kappa}( - k^2 ) := ( H_{\beta,m,\kappa} + k^2 )^{-1}$ is given by
\begin{align}\label{res-gen}
\nonumber & R_{\beta,m,\kappa}(-k^2 ; x,y) \\
\nonumber & =\frac{1}{\gamma_{\beta,m}(k)+\kappa \gamma_{\beta, -m}(k)}
\Big(\gamma_{\beta,m}(k)R_{\beta,m}(-k^2 ; x,y)
+\kappa \gamma_{\beta,-m}(k)R_{\beta,-m}(-k^2 ; x,y)\Big) \\
&= R_{\beta,m}(-k^2;x,y)
+ \frac{\Gamma\big(\frac12+m-\frac{\beta}{2k}\big)\Gamma\big(\frac12-m-\frac{\beta}{2k}\big)}
{2k\;\!\omega_{\beta,m,\kappa}(k)}
\cK_{\frac{\beta}{2k},m}(2k y)  \cK_{\frac{\beta}{2k},m}(2k x).
\end{align}

If $\kappa = \infty$ and $\frac{\beta}{2k}+m-\frac{1}{2}\not \in \N$, then
$-k^2\not\in\sigma(H_{\beta,m,\infty})$ and $R_{\beta,m,\infty}(-k^2)=R_{\beta,-m}(-k^2)$.

\item  For   $\nu \neq \infty$, $m = \frac{1}{2}$ and $\frac{\beta}{2k}\not \in \N^\times$ set
\begin{equation*}
\omega_{\beta,\frac12}^\nu(k):= -\frac12\psi\Big(1- \frac{\beta}{2k} \Big)- \frac12\psi\Big(- \frac{\beta}{2k} \Big) -2\gamma - \ln(2k) + 1-\frac{\nu}{\beta}.
\end{equation*}
If $\omega_{\beta,\frac{1}{2}}^\nu( k )  \neq 0$, then $-k^2\not\in\sigma(H_{\beta,\frac{1}{2}}^\nu)$ and the integral kernel of
$R_{\beta,\frac{1}{2}}^\nu(-k^2) := ( H_{\beta,\frac{1}{2}}^\nu + k^2 )^{-1}$ is given by
\begin{align}\label{res-1/2}
\nonumber & R_{\beta,\frac12}^\nu(-k^2;x,y) \\
& = R_{ \beta, \frac12 }( -k^2 ; x , y )
+ \frac{\Gamma\big(-\frac{\beta}{2k}\big) \Gamma\big(1-\frac{\beta}{2k}\big)}{2k\;\!\omega_{\beta,\frac12}^\nu(k)} \cK_{\frac{\beta}{2k},\frac12}(2kx) \cK_{\frac{\beta}{2k},\frac12}(2ky) .
\end{align}
If $\nu = \infty$ and $\frac{\beta}{2k}\not \in \N^\times$, then $-k^2\not\in\sigma(H_{\beta,\frac{1}{2}}^\infty)$ and $R_{\beta,\frac12}^\infty(-k^2)=R_{\beta,\frac12}(-k^2)$.

\item For  $\nu \neq \infty$, $m = 0$ and $\frac{\beta}{2k}-\frac{1}{2}\not \in \N$ set
\begin{equation*}
\omega_{\beta,0}^\nu(k):=\psi\Big(\frac12-\frac{\beta}{2k}\Big)+2\gamma+\ln(2k)-\nu .
\end{equation*}
If $\omega_{\beta,0}^\nu(k) \neq 0$, then $-k^2\not\in\sigma(H_{\beta,0}^\nu)$ and the integral kernel of $R_{\beta,0}^\nu( - k^2 ) :={ ( H_{\beta,0}^\nu + k^2 )^{-1}}$ is given by
\begin{align}\label{res-zero}
\nonumber & R_{\beta,0}^\nu( - k^2 ; x , y ) \\
& = R_{\beta,0}(-k^2;x,y) +\frac{\Gamma\big(\frac12-\frac{\beta}{2k}\big)^2} {2k\;\! \omega_{\beta,0}^\nu( k ) } \cK_{\frac{\beta}{2k},0}(2k x) \cK_{\frac{\beta}{2k},0}(2k y).
\end{align}
If $\nu = \infty$ and  $\frac{\beta}{2k}-\frac{1}{2}\not \in \N$, then $-k^2\notin\sigma(H_{\beta,0}^\infty)$ and
$R_{\beta,0}^\infty(-k^2)=R_{\beta,0}(-k^2)$.
\end{enumerate}
\label{green0}
\end{theorem}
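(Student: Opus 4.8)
The unifying tool is Proposition \ref{red}: for each operator I will exhibit two solutions of $L_{\beta,m^2}f=-k^2f$, one carrying the boundary condition at $0$ that defines the operator and being $L^2$ around $0$, the other being $L^2$ around $\infty$, and then read off the kernel as $\tfrac{1}{\Wr(u,v)}u(k,x)v(k,y)$ for $x<y$ and symmetrically for $y<x$. The solution at infinity is the same in all three families: since $\Re(k)>0$, the expansion \eqref{Kbm-around-infinity} shows that $v(k,x):=\cK_{\frac{\beta}{2k},m}(2kx)$ is $L^2$ around $\infty$; recall also that $\cK_{\delta,m}=\cK_{\delta,-m}$ by \eqref{generic}. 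The pure resolvent $R_{\beta,m}(-k^2)$ recalled before the statement corresponds precisely to the choice $u_+(k,x):=\cI_{\frac{\beta}{2k},m}(2kx)$, with Wronskian $\Wr(u_+,v)=2k/\Gamma\big(\tfrac12+m-\tfrac{\beta}{2k}\big)$ (equivalently \eqref{Wr}), and likewise $R_{\beta,-m}(-k^2)$ corresponds to $u_-(k,x):=\cI_{\frac{\beta}{2k},-m}(2kx)$ and the same $v$.

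For the generic family (item (i)), the boundary condition $\sim j_{\beta,m}+\kappa j_{\beta,-m}$ is realized by a combination $u=c_+u_++c_-u_-$: using the Taylor expansion \eqref{eq_a2} near $0$, the coefficients $c_\pm$ are fixed by matching the leading terms $x^{\frac12\pm m}$ of $j_{\beta,\pm m}$. Since the Wronskian is linear in its first argument and $u_\pm(k,x)v(k,y)=\Wr(u_\pm,v)\,R_{\beta,\pm m}(-k^2;x,y)$, the kernel factorizes as the weighted average
\begin{equation*}
R_{\beta,m,\kappa}=\frac{c_+\Wr(u_+,v)\,R_{\beta,m}+c_-\Wr(u_-,v)\,R_{\beta,-m}}{c_+\Wr(u_+,v)+c_-\Wr(u_-,v)} ,
\end{equation*}
and a direct computation shows that $c_+\Wr(u_+,v)$ and $c_-\Wr(u_-,v)$ equal $\gamma_{\beta,m}(k)$ and $\kappa\gamma_{\beta,-m}(k)$ up to a common factor, yielding the first equality in \eqref{res-gen}. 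The non-vanishing of $\gamma_{\beta,m}(k)+\kappa\gamma_{\beta,-m}(k)$ is exactly the linear independence of $u$ and $v$ (equivalently, comparing with \eqref{eq_Ci1}, that $-k^2\notin\sigma(H_{\beta,m,\kappa})$), which legitimizes Proposition \ref{red} and supplies the boundedness hypothesis. For the second equality I subtract $R_{\beta,m}$ and use \eqref{generic} to rewrite $\Gamma\big(\tfrac12-m-\tfrac{\beta}{2k}\big)u_--\Gamma\big(\tfrac12+m-\tfrac{\beta}{2k}\big)u_+$ as a multiple of $v$; since the product $v(k,x)v(k,y)$ is symmetric, $R_{\beta,-m}-R_{\beta,m}$ is the rank-one kernel proportional to $\cK_{\frac{\beta}{2k},m}(2kx)\cK_{\frac{\beta}{2k},m}(2ky)$, and the prefactor collapses to $\frac{\Gamma(\frac12+m-\frac{\beta}{2k})\Gamma(\frac12-m-\frac{\beta}{2k})}{2k\,\omega_{\beta,m,\kappa}(k)}$ after inserting \eqref{def_omegabm}. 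The case $\kappa=\infty$ is the pure operator $H_{\beta,-m}$, whence $R_{\beta,m,\infty}=R_{\beta,-m}$.

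The two degenerate families (items (ii) and (iii)) follow the same scheme, except that the \emph{second} independent solution at $0$ is now $\cK$ itself, which carries the logarithm: by \eqref{eq_K1/2} (resp. \eqref{eq_K0}), $v=\cK_{\frac{\beta}{2k},\frac12}(2k\cdot)$ (resp. $\cK_{\frac{\beta}{2k},0}(2k\cdot)$) behaves near $0$ like a multiple of $y_{\beta,\frac12}$ (resp. $y_{\beta,0}$) plus a multiple of $j_{\beta,\frac12}$ (resp. $j_{\beta,0}$). Writing the boundary solution of $H_{\beta,m}^\nu$ as $u=a\,\cI_{\frac{\beta}{2k},m}(2k\cdot)+b\,v$, the coefficients $a,b$ are read off by matching $\sim y_{\beta,m}+\nu j_{\beta,m}$, and the $\psi$-function constants of \eqref{eq_K1/2}--\eqref{eq_K0} enter through the ratio $b/a$. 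Because $\Wr(u,v)=a\,\Wr\big(\cI_{\frac{\beta}{2k},m}(2k\cdot),v\big)$ and $v(k,x)v(k,y)$ is symmetric, Proposition \ref{red} gives at once
\begin{equation*}
R_{\beta,m}^\nu(-k^2;x,y)=R_{\beta,m}(-k^2;x,y)+\frac{b}{a\,\Wr\big(\cI_{\frac{\beta}{2k},m}(2k\cdot),v\big)}\,\cK_{\frac{\beta}{2k},m}(2kx)\,\cK_{\frac{\beta}{2k},m}(2ky) ,
\end{equation*}
and simplifying $b/a$ produces the stated coefficients with $\omega_{\beta,\frac12}^\nu(k)$ (resp. $\omega_{\beta,0}^\nu(k)$). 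The conditions $\omega_{\beta,\frac12}^\nu(k)\neq0$ and $\omega_{\beta,0}^\nu(k)\neq0$ are again the linear independence of $u$ and $v$, matching the eigenvalue conditions of Theorem \ref{spectrum}, and the cases $\nu=\infty$ reduce to the pure operators.

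The routine but delicate part, and the main obstacle, is the constant-chasing in the degenerate cases: one must carry the logarithmic expansions of $\cK$ near $0$ to the precise order $o\big(x^{3/2}\big)$ and keep exact track of the digamma terms so that $b/a$ lands on the given formulas for $\omega_{\beta,\frac12}^\nu$ and $\omega_{\beta,0}^\nu$. In the generic case the only genuine verification is that $c_+\Wr(u_+,v)$ and $c_-\Wr(u_-,v)$ share a common factor proportional to $\gamma_{\beta,\pm m}(k)$; everything else is the connection formula \eqref{generic} together with the definition \eqref{def_omegabm}.
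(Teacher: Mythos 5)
Your proposal is correct and follows essentially the same route as the paper: both rest on Proposition \ref{red}, take $\cK_{\frac{\beta}{2k},m}(2k\cdot)$ as the solution at infinity, build the solution at $0$ as the combination of $\cI_{\frac{\beta}{2k},\pm m}$ (resp.\ of $\cI$ and $\cK$ in the degenerate cases) that matches the defining boundary condition via the expansions \eqref{eq_a2}, \eqref{eq_K1/2}, \eqref{eq_K0}, and reduce the cases $\kappa=\infty$, $\nu=\infty$ to the pure operators. The only loose point is your remark that linear independence ``supplies the boundedness hypothesis'' of Proposition \ref{red}: boundedness is not automatic and must be verified, as the paper does, from the representation of the kernel as a combination of the pure resolvents of \cite{DR2} plus a rank-one operator built from the $L^2$ function $\cK_{\frac{\beta}{2k},m}(2k\cdot)$ --- a representation your weighted-average formula does provide, so this is a misattribution rather than a gap in substance.
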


For the proof of this theorem, we shall mainly rely on a similar statement which was proved in \cite[Sec.~3.4]{DR2}. The context was less general, but some of the estimates turn out to be still useful.

\begin{proof}[Proof of Theorem \ref{green0}.]
The proof consists in checking that all conditions of Proposition \ref{red} are satisfied.

For $(i)$ we need to show that the integral kernel $R_{\beta,m,\kappa}(-k^2; x , y)$ defines a bounded operator on $L^2(\R_+)$. This follows from \eqref{res-gen}, because all numerical factors are harmless and because by \cite[Thm.~3.5]{DR2} $R_{\beta,m} ( - k^2 ; x , y )$ and $R_{\beta,-m} ( - k^2 ; x , y )$ are the kernels defining bounded operators.

Moreover, we can write
\begin{align}\label{resolvent_generic}
\nonumber & R_{\beta,m,\kappa}(-k^2 ; x,y)
=\frac{1}{2k\;\! \big(\gamma_{\beta,m}(k)+\kappa \gamma_{\beta, -m}(k)\big)} \\
&\times
\begin{cases}
\Big( \frac{(2k)^{-m}}{\Gamma(1-2m)} \cI_{\frac{\beta}{2k},m}(2k x) + \kappa \frac{(2k)^m}{\Gamma(1+2m)}\cI_{\frac{\beta}{2k},-m}(2k x)\Big) \cK_{\frac{\beta}{2k},m}(2k y) & \mbox{ for }0<x<y,\\[2ex]
\Big(\frac{(2k)^{-m}}{\Gamma(1-2m)} \cI_{\frac{\beta}{2k},m}(2k y) +\kappa\frac{(2k)^{m}}{\Gamma(1+2m)}\cI_{\frac{\beta}{2k},-m}(2k y)\Big) \cK_{\frac{\beta}{2k},m}(2k x) &  \mbox{ for }0<y<x.
\end{cases}
\end{align}
Since $\cK_{\frac{\beta}{2k},m}(2k \cdot )$ belongs to $L^2(\R_+)$, this solution is $L^2$ around $\infty$.  For the other solution, one verifies by \eqref{eq_a2} that
\begin{align*}
& \frac{(2k)^{-m}}{\Gamma(1-2m)} \cI_{\frac{\beta}{2k},m}(2k x) +\kappa \frac{(2k)^m}{\Gamma(1+2m)}\cI_{\frac{\beta}{2k},-m}(2k x) \\
& = \frac{ (2k)^{\frac12} }{ \Gamma( 1 + 2m ) \Gamma( 1 - 2m ) } \Big[ x^{\frac12+m} \Big ( 1 - \frac{ \beta }{ 1 + 2m }x \Big ) + \kappa x^{\frac12 - m} \Big ( 1 - \frac{ \beta }{ 1 - 2m }x \Big ) \Big] \\
& \quad + O \big ( x^{\frac{5}{2}-|\Re(m)|} \big ).
\end{align*}
Therefore, this function belongs to $L^2$ around $0$ and satisfies the same boundary condition at $0$ as $j_{\beta,m,} + \kappa j_{\beta,-m}$. By Proposition \ref{red}, this proves $(i)$ when $\kappa \neq \infty$. Note that in the special case $\kappa = \infty$, it is enough to observe that $H_{\beta,m,\infty} =H_{\beta,-m,0}$ and to apply the previous result.

To prove $(ii)$, consider first $\nu \neq \infty$ and $\frac{\beta}{2k}\not \in \N^\times$. It has been proved in \cite[Thm.~3.5]{DR2} that the first kernel of \eqref{res-1/2} defines a bounded operator. The second kernel corresponds to a constant multiplied by a rank one operator defined by the function $\Ka_{\frac{\beta}{2k},m}(2k \cdot)\in L^2(\R_+)$ and therefore this operator is also bounded. Next we write
\begin{align}\label{resolvent_1/2}
&R_{\beta,\frac12}^\nu(-k^2;x,y) =\frac{\Gamma\big(-\frac{\beta}{2k}\big)\Gamma\big(1-\frac{\beta}{2k}\big)}{2k\;\!\omega_{\beta,\frac12}^\nu(k)}  \\
&\times
\begin{cases}
\Big(\frac{\omega_{\beta,\frac12}^\nu(k)}{\Gamma(-\frac{\beta}{2k})}\cI_{\frac{\beta}{2k},\frac12}(2kx)+\cK_{\frac{\beta}{2k},\frac12}(2kx)\Big)\cK_{\frac{\beta}{2k},\frac12}(2ky) & \mbox{ for }0<x<y,\\[2ex]
\Big(\frac{\omega_{\beta,\frac12}^\nu(k)}{\Gamma(-\frac{\beta}{2k})}\cI_{\frac{\beta}{2k},\frac12}(2ky)+\cK_{\frac{\beta}{2k},\frac12}(2ky)\Big)\cK_{\frac{\beta}{2k},\frac12}(2kx) & \mbox{ for }0<y<x.
\end{cases}\notag
\end{align}
We deduce from \eqref{eq_a2} and \eqref{eq_K1/2} that
\begin{align*}
& \frac{\omega_{\beta,\frac12}^\nu(k)}{\Gamma\big(-\frac{\beta}{2k}\big)}\cI_{\frac{\beta}{2k},\frac12}(2kx)+\cK_{\frac{\beta}{2k},\frac12}(2kx) \\
& = \frac{1}{ \Gamma\big( 1 - \frac{\beta}{2k} \big) } \big ( 1- \beta x \ln(x) + \nu x \big )
+ o\big(x^\frac32\big) ,
\end{align*}
which belongs to $L^2$ around $0$ and corresponds to the boundary condition defining $H_{\beta,\frac{1}{2}}^\nu$.

The proof of $(iii)$ is analogous. We use first \eqref{res-zero} for the boundedness. Then we rewrite Green's function as
\begin{align} \label{resolvent_0}
\nonumber & R_{\beta,0}^\nu(-k^2;x,y) = \frac{\Gamma\big(\frac12-\frac{\beta}{2k}\big)^2}{2k\;\! \omega_{\beta,0}^\nu(k)} \\
&\times
\begin{cases}
\Big(\frac{\omega_{\beta,0}^\nu(k)}{\Gamma(\frac12-\frac{\beta}{2k})}\cI_{\frac{\beta}{2k},0}(2kx)+\cK_{\frac{\beta}{2k},0}(2kx)\Big)\cK_{\frac{\beta}{2k},0}(2ky) & \mbox{ for }0<x<y,\\[2ex]
\Big(\frac{\omega_{\beta,0}^\nu(k)}{\Gamma(\frac12-\frac{\beta}{2k})}\cI_{\frac{\beta}{2k},0}(2ky)+\cK_{\frac{\beta}{2k},0}(2ky)\Big)\cK_{\frac{\beta}{2k},0}(2kx) & \mbox{ for }0<y<x.
\end{cases}
\end{align}
We check that
\begin{align*}
&\frac{\omega_{\beta,0}^\nu(k)}{\Gamma\big(\frac12-\frac{\beta}{2k}\big)}\cI_{\frac{\beta}{2k},0}(2kx)+\cK_{\frac{\beta}{2k},0}(2kx)\\ =& - \frac{ (2k)^{\frac12} }{ \Gamma\big( \frac12 - \frac{\beta}{2k} \big) } \big ( x^{\frac12} (1-\beta x)\ln( x ) + 2\beta x^{\frac32}+\nu x^{\frac12}(1-\beta x) \big ) + O \big( x^{\frac52}|\ln(x)| \big) ,
\end{align*}
by \eqref{eq_a2} and \eqref{eq_K0}, see also \eqref{eq_a0}.
\end{proof}

Strictly speaking, the formulas of Thm \ref{green0} are not valid in doubly degenerate points, when the functions $\cK_{\beta,m}$ and $\cI_{\beta,m}$ are proportional to one another, and the operator $H_{\beta,m}$ has an eigenvalue. To obtain well defined formulas one needs to use the function $\cX_{\beta,m}$
defined in \eqref{eq_def_X}, as described in the following proposition:

\begin{proposition}\label{green1}
Let $k \in \C$  with $\Re( k ) > 0$.
We have the following properties.
\begin{enumerate}[label=(\roman*)]
\item[(ii')]
For $m = \frac{1}{2}$, $\nu \neq \infty$ and $\frac{\beta}{2k} \in \N^\times$, set
\begin{equation*}
\xi_{\beta,\frac12}^\nu(k):= \frac12 \psi \Big ( 1 + \frac{ \beta }{ 2 k } \Big )
+ \frac12 \psi \Big ( \frac{ \beta }{ 2 k } \Big )
+ 2\gamma + \ln(2k) - 1  + \frac{ \nu}{\beta}.
\end{equation*}
Then $-k^2\not\in\sigma(H_{\beta,\frac12}^\nu)$ and the integral kernel of $R_{\beta,\frac12}^\nu( - k^2 )$ is given by
\begin{align*}
& R_{\beta,\frac12}^\nu(-k^2;x,y)\\
= &\frac{1}{2k}
\begin{cases}
\Big((-1)^{\frac{\beta}{2k}}\cX_{\frac{\beta}{2k},\frac12}(2kx)
+ \frac{\xi_{\beta,\frac12}^\nu(k) }
{\Gamma(\frac{\beta}{2k})\Gamma(1+\frac{\beta}{2k})}\cK_{\frac{\beta}{2k},\frac12}(2kx)\Big) \cK_{\frac{\beta}{2k},\frac12}(2ky), & \mbox{ for }0<x<y,\\[2ex]
\Big((-1)^{\frac{\beta}{2k}}\cX_{\frac{\beta}{2k},\frac12}(2ky)
+ \frac{\xi_{\beta,\frac12}^\nu(k)}{\Gamma(\frac{\beta}{2k})\Gamma(1+\frac{\beta}{2k})}  \cK_{\frac{\beta}{2k},\frac12}(2ky)\Big)\cK_{\frac{\beta}{2k},\frac12}(2kx), & \mbox{ for }0<y<x.
\end{cases}\notag
\end{align*}

\item[(iii')]
For $m = 0$, $\nu \neq \infty$, and $\frac{\beta}{2k}-\frac{1}{2} \in \N$, set
\begin{equation*}
\xi_{\beta,0}^\nu(k):=  -\psi \Big ( \frac12 + \frac{ \beta }{ 2 k } \Big )- 2\gamma -
\ln( 2 k ) + \nu .
\end{equation*}
Then $-k^2\not\in\sigma(H_{\beta,0}^\nu)$ and the integral kernel of $R_{\beta,0}^\nu( - k^2 )$ is given by
\begin{align*}
& R_{\beta,0}^\nu(-k^2;x,y)\\ = &\frac{1}{2k}
\begin{cases}\Big((-1)^{\frac{\beta}{2k}+\frac12}\cX_{\frac{\beta}{2k},0}(2kx)
+ \frac{\xi_{\beta,0}^\nu(k) }
{\Gamma(\frac12+\frac{\beta}{2k})^2}\cK_{\frac{\beta}{2k},0}(2kx)\Big) \cK_{\frac{\beta}{2k},0}(2ky), & \mbox{ for }0<x<y,\\[2ex]
\Big((-1)^{\frac{\beta}{2k}+\frac12}\cX_{\frac{\beta}{2k},0}(2ky)
+ \frac{\xi_{\beta,0}^\nu(k)}{\Gamma(\frac12+\frac{\beta}{2k})^2}  \cK_{\frac{\beta}{2k},0}(2ky)\Big)\cK_{\frac{\beta}{2k},0}(2kx), & \mbox{ for }0<y<x.
\end{cases}
\end{align*}
\end{enumerate}
\end{proposition}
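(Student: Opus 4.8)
\emph{Plan.} The route is the same as for Theorem~\ref{green0}: exhibit two linearly independent solutions of $L_{\beta,m^2}f=-k^2f$, one realising the boundary condition at $0$ defining $H^\nu_{\beta,\frac12}$ (resp.\ $H^\nu_{\beta,0}$) and one lying in $L^2$ near infinity, and then invoke Proposition~\ref{red}. The novelty in the doubly degenerate regime $\frac{\beta}{2k}\in\N^\times$ (resp.\ $\frac{\beta}{2k}-\frac12\in\N$) is that $\cK_{\frac{\beta}{2k},m}$ and $\cI_{\frac{\beta}{2k},m}$ become proportional, so the second independent solution can no longer be drawn from the $\cI$-family; instead we use the function $\cX_{\frac{\beta}{2k},m}$ of \eqref{eq_def_X}. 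Note that $\nu=\infty$, i.e.\ the pure operator $H_{\beta,m}$, is genuinely excluded here, since at a doubly degenerate point $-k^2$ is one of its eigenvalues. I would treat $(ii')$ in detail and indicate the identical changes for $(iii')$.

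For the solution that is $L^2$ near infinity I would again take $x\mapsto\cK_{\frac{\beta}{2k},\frac12}(2kx)$, whose exponential decay follows from \eqref{Kbm-around-infinity}. For the boundary behaviour at $0$ I would form $u(x):=(-1)^{\frac{\beta}{2k}}\cX_{\frac{\beta}{2k},\frac12}(2kx)+\tfrac{\xi^\nu_{\beta,\frac12}(k)}{\Gamma(\frac{\beta}{2k})\Gamma(1+\frac{\beta}{2k})}\cK_{\frac{\beta}{2k},\frac12}(2kx)$. Since $\Wr(\cK,\cK)=0$, linear independence of $u$ and $\cK_{\frac{\beta}{2k},\frac12}(2k\cdot)$ reduces to the nonvanishing of $\Wr(\cX_{\frac{\beta}{2k},\frac12},\cK_{\frac{\beta}{2k},\frac12})$ recorded in \eqref{eq_Wr_2d}; combined with the scaling identity $\Wr\big(f(2k\cdot),g(2k\cdot);x\big)=2k\,\Wr(f,g)\big|_{2kx}$ and the normalising phase $(-1)^{\frac{\beta}{2k}}$, this is exactly what yields a constant Wronskian equal to $2k$, hence the prefactor $\tfrac1{2k}$ in the claimed kernel.

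The computational heart is to pin down $\xi^\nu_{\beta,\frac12}(k)$. Here I would insert the near-$0$ expansion of $\cX_{\frac{\beta}{2k},\frac12}(2k\cdot)$ derived from \eqref{eq_def_X}, add to it the expansion \eqref{eq_K1/2} of $\cK_{\frac{\beta}{2k},\frac12}(2k\cdot)$ weighted by $\tfrac{\xi^\nu_{\beta,\frac12}(k)}{\Gamma(\frac{\beta}{2k})\Gamma(1+\frac{\beta}{2k})}$, and impose that the sum be asymptotic to $y_{\beta,\frac12}+\nu\,j_{\beta,\frac12}\sim 1-\beta x\ln(x)+\nu x$, as required by $\Dom(H^\nu_{\beta,\frac12})$. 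Matching the coefficient of $x\ln(x)$ fixes the normalisation of the $\cX$-part (the role of $(-1)^{\frac{\beta}{2k}}$), while matching the coefficient of the regular term $x$ produces the stated $\xi^\nu_{\beta,\frac12}(k)$, with its combination $\tfrac12\psi(1+\frac{\beta}{2k})+\tfrac12\psi(\frac{\beta}{2k})+2\gamma+\ln(2k)-1+\frac\nu\beta$. For $(iii')$ the same scheme uses \eqref{eq_K0} and the phase $(-1)^{\frac{\beta}{2k}+\frac12}$ and yields $\xi^\nu_{\beta,0}(k)$. I expect this coefficient bookkeeping—reconciling the digamma values, the $\ln(2k)$ shifts and the $(-1)^{\frac{\beta}{2k}}$ phases carried by $\cX$—to be the main obstacle: purely a matter of careful expansion, but precisely where sign and branch errors would creep in.

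Finally, for boundedness I would split the kernel into the rank-one piece proportional to $\cK_{\frac{\beta}{2k},m}(2kx)\cK_{\frac{\beta}{2k},m}(2ky)$, bounded because $\cK_{\frac{\beta}{2k},m}(2k\cdot)\in L^2(\R_+)$, and the remaining $\cX$–$\cK$ piece. For the latter, the behaviours $\cX_{\frac{\beta}{2k},m}(z)\sim\e^{z/2}$ and $\cK_{\frac{\beta}{2k},m}(z)\sim\e^{-z/2}$ near infinity, together with the fact that $\cX$ is always evaluated at the \emph{smaller} of the two arguments, give the off-diagonal bound $|R(x,y)|\lesssim\e^{-\Re(k)|x-y|}$ for large arguments, while near $0$ the kernel stays locally integrable; a Schur test then yields boundedness (alternatively, the estimates of \cite[Sec.~3.4]{DR2} apply verbatim to this combination). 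With $u$ and $v$ identified, their Wronskian equal to $2k$, boundedness established, and $u$ realising the boundary condition of $H^\nu_{\beta,m}$, Proposition~\ref{red} delivers $-k^2\notin\sigma(H^\nu_{\beta,m})$ together with exactly the stated kernel.
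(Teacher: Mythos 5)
Your proposal is correct and follows essentially the same route as the paper: the paper's own proof simply says that $(ii')$ and $(iii')$ are ``proved similarly as $(ii)$ and $(iii)$ of Theorem~\ref{green0}'' and records the key near-zero identities obtained from \eqref{eq_Wr_2d}, \eqref{exp_X12}, \eqref{eq_Taylor_12} (resp.\ \eqref{exp_X0}, \eqref{eq_Taylor_0}), which is exactly the matching of the $\cX$--$\cK$ combination to the boundary condition $1-\beta x\ln(x)+\nu x$ that you describe. The only difference is that you make explicit the Wronskian normalisation via \eqref{eq_Wr_2d} and the boundedness of the $\cX$--$\cK$ kernel, details the paper leaves implicit in its reference to the earlier argument.
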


\begin{proof} $(ii')$ is proved similarly as $(ii)$ of Theorem \ref{green0}, by using for $m = \frac{1}{2}$, $\nu \neq \infty$ and $\frac{\beta}{2k} \in \N^\times$ that
\begin{align*}
& (-1)^{\frac{\beta}{2k}}\cX_{\frac{\beta}{2k},\frac12}(2kx)
+ \frac{\xi_{\beta,\frac12}^\nu(k) }
{\Gamma(\frac{\beta}{2k})\Gamma(1+\frac{\beta}{2k})}\cK_{\frac{\beta}{2k},\frac12}(2kx)
\\
& = \frac{(-1)^{\frac{\beta}{2k}+1}}{\Gamma\big(1+\frac{\beta}{2k}\big)}
\big( 1 - \beta x \ln x + \nu x + o ( x )\big).
\end{align*}
This follows from \eqref{eq_Wr_2d}, \eqref{exp_X12},  and \eqref{eq_Taylor_12}.

$(iii')$ is proved similarly as $(iii)$ of Theorem \ref{green0}. In particular, using  \eqref{eq_Wr_2d}, \eqref{exp_X0}, and \eqref{eq_Taylor_0} one verifies that
\begin{align*}
& (-1)^{\frac{\beta}{2k}+\frac12}\cX_{\frac{\beta}{2k},0}(2kx)
+ \frac{\xi_{\beta,0}^\nu(k) }
{\Gamma(\frac12+\frac{\beta}{2k})^2}\cK_{\frac{\beta}{2k},0}(2kx) \\
& =(-1)^{\frac{\beta}{2k}-\frac{1}{2}}
\frac{(2k)^{\frac{1}{2}}}{\Gamma\big(\frac{1}{2}+\frac{\beta}{2k}\big)}
x^{\frac{1}{2}}
\big((1-\beta x)\ln(x)+2\beta x + \nu (1-\beta x)\big)
+ o \big( x^{\frac32} \big).
\end{align*}
\end{proof}

\subsection{Holomorphic families of closed operators}

In this section we show that the families of operators introduced before are holomorphic for suitable values of the parameters. A general definition of a {\em holomorphic family of closed operators} can be found in \cite{Kato}, see also \cite{DW}. Actually, we will not need its most general definition. For us it is enough to recall  this concept in the special case where the operators possess a nonempty resolvent set.

Let $\H$ be a complex Banach space. Let $\{ H( \z ) \}_{ \z \in \Theta }$ be a family of closed operators on $\H$ with nonempty resolvent set, where $\Theta$ is an open subset of $\C^d$.  $\{ H( \z ) \}_{ \z \in \Theta }$ is called holomorphic on $\Theta$ if for any $\z_0 \in \Theta$, there exist $\lambda \in \C$ and a neighborhood $\Theta_0 \subset \Theta$ of $\z_0$ such that, for all $\z \in \Theta_0$, $\lambda$ belongs to the resolvent set of $H( \z )$ and the map $\Theta_0 \ni \z \mapsto ( H( \z ) - \lambda )^{-1} \in \B( \H )$ is holomorphic on $\Theta_0$.
Note that if $\Theta_0 \ni \z \mapsto ( H( \z ) - \lambda )^{-1} \in \B( \H )$ is locally bounded on $\Theta_0$ and if there exists a dense subset $\Dom \subset \H$ such that, for all $f , g \in \Dom$, the map $\Theta_0 \ni \z \mapsto ( f | ( H( \z ) - \lambda )^{-1} g )$ is holomorphic on $\Theta_0$, then $\Theta_0 \ni \z \mapsto ( H( \z ) - \lambda )^{-1} \in \B( \H )$ is holomorphic on $\Theta_0$.  Besides, by Hartog's theorem, $\z \mapsto ( f | ( H( \z ) - \lambda )^{-1} g )$ is holomorphic if and only if it is separately analytic in each variable.

This definition naturally generalizes to families of operators defined on $( \C \cup \{\infty\} )^d$ instead of $\C^d$, recalling that a map $\varphi : \C \cup \{\infty\} \to \C$ is called holomorphic in a neighborhood of $\infty$ if the map $\psi : \C \to \C$, defined by $\psi( z ) = \phi ( 1 / z )$ if $z \neq 0$ and $\psi( 0 ) = \phi( \infty )$, is holomorphic in a neighborhood of $0$.

Recall that the family $H_{\beta,m}$ has been defined on $\C\times\{m \in \C \mid \Re(m)>-1\}$ in \cite{DR2}, see also \eqref{eq_pure}. However, it is not holomorphic on the whole domain. The following has been proved in \cite{DR2}.

\begin{theorem}
The family of closed operators
$(\beta,m)\mapsto H_{\beta,m}$ is holomorphic on
\begin{equation*}
\C\times\{m \in \C \mid \Re(m)>-1\}\backslash\big\{\big(0,-\tfrac12\big)\big\}.
\end{equation*}
However, it cannot be extended by continuity
to include the point $\big(0,-\frac12\big)$.
\end{theorem}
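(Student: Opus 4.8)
The plan is to read off both holomorphy and the singular behaviour at $(0,-\tfrac12)$ directly from the explicit resolvent of the pure operator $H_{\beta,m}=H_{\beta,m,0}$ recalled just before Theorem~\ref{green0}: the kernel built from $\cI_{\frac{\beta}{2k},m}$, $\cK_{\frac{\beta}{2k},m}$ and the prefactor $\frac{1}{2k}\Gamma\big(\frac12+m-\frac{\beta}{2k}\big)$, which holds (so that $-k^2\notin\sigma(H_{\beta,m})$) whenever $\Re(k)>0$ and $\frac{\beta}{2k}-\frac12-m\notin\N$. For holomorphy I would invoke the criterion and the Hartogs remark recalled above: it suffices, for each base point $\z_0=(\beta_0,m_0)$, to produce $\lambda=-k^2$ lying in the resolvent set of $H_{\beta,m}$ throughout a neighbourhood $\Theta_0$ of $\z_0$ with locally bounded resolvent norm, and to check holomorphy of $(\beta,m)\mapsto(f|R_{\beta,m}(-k^2)g)$ in each variable separately for $f,g$ in the dense set $C_{\rm c}^\infty(\R_+)$; Hartogs then gives joint analyticity of the matrix elements, and local boundedness upgrades weak to norm holomorphy.

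First I would fix $\z_0$ in the punctured domain and choose $\Re(k)$ large. If $m_0\neq-\tfrac12$ then $-m_0-\tfrac12\notin\N$ (because $\Re(m_0)>-1$), so this value has positive distance to $\N$; for $k$ large the quantity $\frac{\beta}{2k}-m-\tfrac12$ stays within that distance on a small neighbourhood and thus avoids $\N$. If $m_0=-\tfrac12$ then necessarily $\beta_0\neq0$, and at $\z_0$ the quantity equals $\frac{\beta_0}{2k}\neq0$, which is small and nonzero for $k$ large; it therefore stays off $\N$ on a sufficiently small $\Theta_0$. On $\Theta_0$ the functions $\cI_{\delta,m}$ and $\cK_{\delta,m}$ are jointly holomorphic in $(\delta,m)$, the apparent singularities of $\cK_{\delta,m}$ at $2m\in\Z$ being removable as shown in the appendix, and $\Gamma\big(\frac12+m-\frac{\beta}{2k}\big)$ is holomorphic off its poles, so the kernel is jointly holomorphic in $(\beta,m)$. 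Since $f,g$ are supported away from $0$ and $\infty$, the matrix element $(f|R_{\beta,m}(-k^2)g)$ is an integral over a compact set of a jointly holomorphic integrand, hence holomorphic; local boundedness of the operator norm follows from the kernel estimates of \cite[Thm.~3.5]{DR2}. This yields holomorphy at every $\z_0\neq(0,-\tfrac12)$.

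For the obstruction at $(0,-\tfrac12)$ I would exhibit path-dependence of the limit. Writing $m=-\tfrac12+\epsilon$ in \eqref{eq_asymp1} gives $j_{\beta,m}(x)\sim x^{\epsilon}\big(1-\frac{\beta}{2\epsilon}x\big)$, so along the rays $\beta=2c\,\epsilon$ with $\epsilon\to0$ one has $\frac{\beta}{1+2m}=c$ and the boundary function tends to $1-cx$. Since $y_{0,\frac12}+\nu\,j_{0,\frac12}=1+\nu x$, this is precisely the boundary condition of $H^{-c}_{0,\frac12}$; thus distinct values of $c$ give distinct limiting operators, for instance the Neumann Laplacian $H^{0}_{0,\frac12}$ when $c=0$. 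I would make this rigorous at the level of resolvents using the explicit formula: along the ray, $\Gamma\big(\frac12+m-\frac{\beta}{2k}\big)=\Gamma\big(\epsilon(1-\tfrac{c}{k})\big)\sim\big(\epsilon(1-\tfrac{c}{k})\big)^{-1}$ while $\cI_{\frac{\beta}{2k},m}$ and $\cK_{\frac{\beta}{2k},m}$ converge to their values at $(0,\tfrac12)$, and one checks $R_{2c\epsilon,-\frac12+\epsilon}(-k^2)\to R^{-c}_{0,\frac12}(-k^2)$. As these limits differ for different $c$, the resolvent has no limit as $(\beta,m)\to(0,-\tfrac12)$, so no continuous, and a fortiori no holomorphic, extension to that point exists.

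The routine ingredients are the joint holomorphy of the kernel and the operator-norm bounds, both inherited from the special-function theory of the appendix and from \cite{DR2}. The delicate point, and the conceptual heart of the statement, is the singularity analysis at $(0,-\tfrac12)$: one must track how the diverging coefficient $\frac{\beta}{1+2m}$ in $j_{\beta,m}$ competes with the degenerating exponent $x^{\frac12+m}\to1$, and verify, through the exact cancellation between the pole of $\Gamma\big(\frac12+m-\frac{\beta}{2k}\big)$ and the coalescence of $\cI$ and $\cK$, that the whole family $\{H^{-c}_{0,\frac12}\}_{c\in\C}$ is genuinely attained as a limit, its members having pairwise distinct resolvents.
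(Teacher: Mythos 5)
Your overall strategy is sound and matches the paper's: the paper itself defers the proof of this theorem to \cite{DR2} and only sketches the obstruction at $\big(0,-\tfrac12\big)$ by comparing two paths, $\lim_{\beta\to0}H_{\beta,-\frac12}=H_{\frac12}\neq H_{-\frac12}=\lim_{m\to-\frac12}H_{0,m}$ (Dirichlet versus Neumann), whereas your ray analysis $\beta=2c\epsilon$, $m=-\tfrac12+\epsilon$ recovers these as the cases $c=\infty$ and $c=0$ and exhibits the full family $H^{-c}_{0,\frac12}$ of limits, in the spirit of the paper's later ``blow-up'' construction. The holomorphy half (large $\Re(k)$ to keep $\tfrac{\beta}{2k}-\tfrac12-m$ off $\N$, joint holomorphy of the kernel, matrix elements on $C_{\rm c}^\infty$ plus local boundedness and Hartogs) is the same mechanism the paper uses for Theorem \ref{thm:holomorphy}.

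There is, however, one concrete error in your key computation along the ray. You assert that $\cI_{\frac{\beta}{2k},m}$ ``converges to its value at $(0,\tfrac12)$'' while $\Gamma\big(\tfrac12+m-\tfrac{\beta}{2k}\big)\sim\big(\epsilon(1-\tfrac{c}{k})\big)^{-1}$ diverges; if that were so, the kernel $\tfrac{1}{2k}\Gamma\big(\tfrac12+m-\tfrac{\beta}{2k}\big)\cI_{\frac{\beta}{2k},m}(2kx)\cK_{\frac{\beta}{2k},m}(2ky)$ would blow up and no finite limit $R^{-c}_{0,\frac12}(-k^2)$ could emerge. In fact $\cI_{\beta,-\frac12}=-\beta\,\cI_{\beta,\frac12}$, so $\cI_{0,-\frac12}\equiv0$ (this is Region $\mathrm{I}_-$ of the doubly degenerate case), and along your ray $\cI_{\frac{c\epsilon}{k},-\frac12+\epsilon}(2kx)$ vanishes to first order in $\epsilon$: from \eqref{eq_a2} one gets $\cI_{\frac{c\epsilon}{k},-\frac12+\epsilon}(2kx)=\frac{(2kx)^{\epsilon}}{\Gamma(2\epsilon)}\big(1-cx+O(x^2)\big)\sim 2\epsilon\,(1-cx)$ near $x=0$. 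It is precisely the cancellation of this simple zero against the simple pole of $\Gamma\big(\epsilon(1-\tfrac{c}{k})\big)$ that produces the finite, $c$-dependent limit $\tfrac{2}{1-c/k}(1-cx)+\cdots$, i.e.\ the boundary condition $1-cx=y_{0,\frac12}-c\,j_{0,\frac12}$ of $H^{-c}_{0,\frac12}$. With the asymptotics corrected in this way (and the limit of $\Gamma\big(\tfrac12+m-\tfrac{\beta}{2k}\big)\cI_{\frac{\beta}{2k},m}$ identified as the appropriate combination of $\cI_{0,\frac12}$ and $\cK_{0,\frac12}$), your argument goes through and yields a strictly stronger conclusion than the two-path sketch; as written, the stated intermediate limits contradict the conclusion you draw from them.
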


Let us sketch what happens at $\big(0,-\frac12\big)$.
Recall that in \cite{BDG,DR1} a holomorphic family
$\big\{m \in \C \mid \Re(m)>-1\big\}\ni m\mapsto H_m$ has been
introduced, and satisfies $H_m=H_{0,m}$ for  $m\neq -\frac12 $.
Note also that for any $\beta$ we have $H_{\beta,-\frac12}=H_{\beta,\frac12}$.
It then turns out that
\begin{equation*}
\lim_{\beta\to 0}H_{\beta,-\frac12}=H_{\frac12}\neq
H_{-\frac12}=\lim_{m\to-\frac12}H_{0,m},
\end{equation*}
where these limits have to be understood as weak resolvent limits.
Note that in the sequel and in particular in \eqref{discountinuity2},  \eqref{discountinuity3}, and \eqref{discountinuity4}, the limits should be understood in such a sense.

Let us consider now the families of operators involving mixed boundary conditions.
To this end, it will be convenient to introduce the notation $$
\Pi:=\{m\in \C\mid -1<\Re(m)<1\}.
$$ Recall that
 $(\beta,m,\kappa)\mapsto \{H_{\beta,m,\kappa}\}$
has been defined
on $\C\times\Pi\times(\C\cup\{\infty\})$. However, it is not holomorphic on this whole set:

\begin{theorem}\label{thm:holomorphy}
\begin{enumerate}[label=(\roman*)]
\item  The family of closed operators
$\{H_{\beta,m,\kappa}\}$ is holomorphic on
$\C\times\Pi\times \big(\C\cup\{\infty\}\big)$ except for
\begin{equation}\label{except}
\big(0,-\tfrac12\big){\times}\big(\C\cup\{\infty\}\big)\,\cup\,
\big(0,\tfrac12\big){\times}\big(\C\cup\{\infty\}\big)\,\cup\,
\C\times(0,-1).
\end{equation}
\item The family of closed operators
$\{H_{\beta,0}^\nu\}$ is holomorphic on
$\C \times \big(\C\cup\{\infty\}\big)$.
\item The family of closed operators
$\big\{H_{\beta,\frac{1}{2}}^\nu\big\}$ is holomorphic on $\C \times  \big(\C\cup\{\infty\}\big)$.
\end{enumerate}
\end{theorem}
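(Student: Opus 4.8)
The plan is to verify, for each of the three families, the definition of a holomorphic family recalled at the beginning of this subsection: for every base point $\z_0$ in the relevant parameter domain, exhibit a single $\lambda=-k^2$ with $\Re(k)>0$ lying in the resolvent set of $H(\z)$ for all $\z$ in a neighbourhood $\Theta_0$ of $\z_0$, and show that $\z\mapsto(H(\z)+k^2)^{-1}$ is holomorphic on $\Theta_0$. By the criterion recalled above, together with Hartogs' theorem, it is enough to check that this resolvent is locally bounded and that the matrix elements $\big(f\,|\,(H(\z)+k^2)^{-1}g\big)$ are separately holomorphic in each variable for $f,g$ in a dense set such as $C_{\rm c}^\infty(\R_+)$. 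All the needed resolvents are given explicitly by the integral kernels of Theorem \ref{green0} and Proposition \ref{green1}, and the functions $\cI_{\frac{\beta}{2k},m}$, $\cK_{\frac{\beta}{2k},m}$, $\cX_{\frac{\beta}{2k},m}$ together with the $\Gamma$ and $\psi$ factors are holomorphic (resp.\ meromorphic) in all their arguments by the appendix. The only genuine work is thus to choose $k$ well at each base point and to control the degenerate slices $m\in\big\{0,\pm\frac12\big\}$.

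For $(i)$ away from $m\in\frac12\Z$ the argument is direct. Given $\z_0=(\beta_0,m_0,\kappa_0)$ with $m_0\notin\frac12\Z$, I would pick $k$ with $\Re(k)>0$ so that $\gamma_{\beta_0,m_0}(k)+\kappa_0\gamma_{\beta_0,-m_0}(k)\neq0$, whence $-k^2$ is in the resolvent set by Theorem \ref{green0} and remains so on a neighbourhood. Formula \eqref{res-gen} then displays $R_{\beta,m,\kappa}(-k^2)$ as $R_{\beta,m}(-k^2)$ plus a rank-one term whose coefficient $\tfrac{\Gamma(\frac12+m-\frac{\beta}{2k})\Gamma(\frac12-m-\frac{\beta}{2k})}{2k\,\omega_{\beta,m,\kappa}(k)}$ and generating function $\cK_{\frac{\beta}{2k},m}(2k\,\cdot)$ are holomorphic in $(\beta,m,\kappa)$; boundedness follows from \cite[Thm.~3.5]{DR2} and the rank-one structure. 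The chart $\kappa_0=\infty$ is reduced to $\kappa_0=0$ through $H_{\beta,m,\infty}=H_{\beta,-m,0}$ and Proposition \ref{propo1}, working in the coordinate $1/\kappa$.

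The crux is crossing the degenerate values, and here I would rewrite the rank-one coefficient using $\tfrac{1}{\omega_{\beta,m,\kappa}(k)}=\tfrac{\kappa\gamma_{\beta,-m}(k)}{\gamma_{\beta,m}(k)+\kappa\gamma_{\beta,-m}(k)}\tfrac{\sin(2\pi m)}{\pi}$, so the correction to $R_{\beta,m}$ carries an explicit factor $\sin(2\pi m)$ vanishing at $m\in\frac12\Z$. Near $m=0$ the remaining factor tends to $\tfrac{\kappa}{1+\kappa}$ and is holomorphic precisely when $\kappa\neq-1$ (the denominator being $(1+\kappa)\gamma_{\beta,0}(k)$ to leading order); since $R_{\beta,m}$ is holomorphic across $m=0$ and the correction has a removable zero there, the family extends holomorphically with limiting value $R_{\beta,0}=R_{\beta,0}^\infty$, consistent with the convention $H_{\beta,0,\kappa}=H_{\beta,0}^\infty$, and failing exactly on $\C\times(0,-1)$. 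The analysis near $m=\pm\frac12$ is the hard part and the genuine obstacle: one must simultaneously track the simple zero of $\gamma_{\beta,m}(k)$ produced by $\Gamma(1-2m)^{-1}$ as $m\to\frac12$, the compensating factor $\sin(2\pi m)$, and the boundary-condition matching, and then confirm the extension is holomorphic in $(\beta,m,\kappa)$. This succeeds for $\beta_0\neq0$, where $R_{\beta,m}$ is holomorphic at $m=\pm\frac12$, but breaks at $\beta=0$, where the pure family already carries its Dirichlet/Neumann discontinuity at $\beta=0,\ m=\pm\frac12$; this produces the two exceptional planes in \eqref{except}. Arguing that these are the \emph{only} failures—rather than a whole codimension-one set—is precisely where the interplay between the eigenvalue locus $\gamma_{\beta,m}(k)+\kappa\gamma_{\beta,-m}(k)=0$ (whose leading slope at $m=\frac12$ equals $-2/\beta$, independent of $k$) and the choice of $k$ has to be handled with care.

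For $(ii)$ and $(iii)$ the situation is cleaner, since there is no $\kappa$-type cancellation to create exceptions. For a base point $(\beta_0,\nu_0)$ with $\nu_0\neq\infty$ I would choose $k$ so that $\tfrac{\beta_0}{2k}-\tfrac12\notin\N$ (resp.\ $\tfrac{\beta_0}{2k}\notin\N^\times$) and $\omega_{\beta_0,0}^{\nu_0}(k)\neq0$ (resp.\ $\omega_{\beta_0,\frac12}^{\nu_0}(k)\neq0$); these are generic conditions on $k$, and \eqref{res-zero} (resp.\ \eqref{res-1/2}) then yields a manifestly holomorphic, locally bounded resolvent, the factor $\omega_{\beta,0}^\nu(k)=\psi\big(\tfrac12-\tfrac{\beta}{2k}\big)+2\gamma+\ln(2k)-\nu$ (resp.\ its $m=\tfrac12$ analogue) being holomorphic in $(\beta,\nu)$ there. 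The chart $\nu=\infty$ is handled via $R_{\beta,0}^\infty=R_{\beta,0}$ and $R_{\beta,\frac12}^\infty=R_{\beta,\frac12}$, holomorphic in $\beta$ on all of $\C$ because the pure family $H_{\beta,m}$ is holomorphic at $m=0$ and $m=\tfrac12$ for every $\beta$, its only singularity $\big(0,-\tfrac12\big)$ lying off these slices. Finally, at the doubly degenerate energies where the formulas of Theorem \ref{green0} break down one either chooses $k$ to avoid them or invokes Proposition \ref{green1}, whose kernels built from $\cX_{\frac{\beta}{2k},m}$ are again holomorphic in the parameters; in either case no further exceptional set arises, so both families are holomorphic on all of $\C\times\big(\C\cup\{\infty\}\big)$.
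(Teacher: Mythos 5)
Your overall strategy coincides with the paper's: fix a base point, choose $k$ with $\Re(k)>0$ so that $-k^2$ lies in the resolvent set on a whole neighbourhood, and read off holomorphy from the explicit kernels of Theorem \ref{green0}, using local boundedness plus Hartogs. Your treatment of the generic case, of the chart $\kappa=\infty$ via $H_{\beta,m,\infty}=H_{\beta,-m,0}$, of the slice $m=0$ (with the exception $\kappa=-1$), and of parts $(ii)$ and $(iii)$ matches the paper's argument. The genuine gap is that you leave the case $m_0=\pm\frac12$ unproven: you call it ``the hard part and the genuine obstacle'' and then assert that ``this succeeds for $\beta_0\neq0$'' without an argument. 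That is exactly the step the paper has to supply. Its device is to work with the form \eqref{resolvent_generic} of the kernel rather than with your decomposition ``$R_{\beta,m}$ plus a rank-one correction carrying a factor $\sin(2\pi m)$'', and to observe that the only dangerous denominator, $\eta_{\beta,m,\kappa}(k)=\gamma_{\beta,m}(k)+\kappa \gamma_{\beta,-m}(k)$ from \eqref{def_eta}, extends continuously across $m=-\frac12$ to $\eta_{\beta,-\frac12,\kappa}(k)=(2k)^{1/2}/\Gamma\big(-\tfrac{\beta}{2k}\big)$, independent of $\kappa$ because $\gamma_{\beta,\frac12}(k)=0$. Choosing $k$ with $\frac{\beta_0}{2k}\notin\N$ makes this nonzero, the kernel extends holomorphically, and on the slice it is identified with the kernel of $\big(H_{\beta,\frac12}^\infty+k^2\big)^{-1}$, consistent with the convention $H_{\beta,\pm\frac12,\kappa}=H_{\beta,\frac12}^\infty$. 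The exclusion of $\beta_0=0$ then drops out mechanically, since $0\in\N$ forces the extended $\eta$ to vanish for every $k$; your attribution of this exclusion to the Dirichlet/Neumann singularity of the pure family is off (that singularity sits only at $\big(0,-\tfrac12\big)$, while the obstruction at $\big(0,\tfrac12\big)$ is the discontinuity \eqref{discountinuity2}, established in the proposition following the theorem).

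Your own decomposition in fact makes the skipped case harder, not easier. The correction coefficient you propose is proportional to $\sin(2\pi m)\,\kappa\gamma_{\beta,-m}(k)/\eta_{\beta,m,\kappa}(k)$; at $m=0$ the quotient is regular for $\kappa\neq-1$ and the sine factor kills the correction, exactly as you say, but at $(m,\kappa)=\big(\tfrac12,0\big)$ (equivalently $\big(-\tfrac12,\infty\big)$) both $\kappa\gamma_{\beta,-m}(k)$ and $\eta_{\beta,m,\kappa}(k)$ vanish, so the quotient is a genuine indeterminate form in the joint variables and your route stalls precisely at the sub-case you admit requires care. Your worry is well founded: the eigenvalue locus $\eta_{\beta,m,\kappa}(k)=0$, i.e.\ $\kappa=-\gamma_{\beta,m}(k)/\gamma_{\beta,-m}(k)$, tends to $\kappa=0$ as $m\to\frac12$ for every admissible $k$ (with the common tangent $\kappa\approx\frac{1-2m}{\beta}$ you note), so the points where $-k^2$ is an eigenvalue accumulate at $\big(\beta_0,\tfrac12,0\big)$ for every choice of $k$; this is the phenomenon the blow-up families $H^{(\frac12),\nu}_{\beta,m}$ of the next subsection are designed to resolve, and it shows this sub-case cannot be dispatched by the same one-line continuity argument as the others (the paper itself is terse here, saying only that ``the argument easily adapts''). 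In short, the architecture of your proof is right, but the case you skipped is the one carrying the mathematical content of part $(i)$, and what is missing is precisely the continuous extension and non-vanishing of $\eta_{\beta,m,\kappa}(k)$ at $m=\pm\frac12$.
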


\begin{proof}
For shortness, let us set
\begin{equation}\label{def_eta}
\eta_{\beta,m,\kappa}(k)
:=\gamma_{\beta,m}(k)+\kappa \gamma_{\beta, -m}(k).
\end{equation}
This expression appears in the numerator of \eqref{def_omegabm}
and plays an important role in the expression \eqref{resolvent_generic} for the resolvent of
$H_{\beta,m,\kappa}$.

$(i)$ Let $(\beta_0,m_0,\kappa_0)$ belong to the domain $\C\times\Pi\times \big(\C\cup\{\infty\}\big)$.
First assume that $m_0 \notin \big\{ -\frac{1}{2} , 0 , \frac{1}{2} \big\}$ and that $\kappa_0 \in \C$.
Let $k \in \C$ with $\Re( k ) > 0$ such that
$\eta_{\beta_0,m_0,\kappa_0}(k)\neq0$, where $\eta_{\beta,m,\kappa}(k)$
is defined in \eqref{def_eta}.
By continuity of the map $( \beta , m , \kappa ) \mapsto \eta_{\beta,m,\kappa}(k)$, there exists a neighborhood $\U_0$ of $( \beta_0 , m_0 , \kappa_0 )$ such that for all $( \beta , m , \kappa )$ in this neighborhood, we have $\eta_{\beta,m,\kappa}( k ) \neq 0$. Hence, by Theorem \ref{green0}, we infer that $-k^2 \notin \sigma ( H_{\beta,m,\kappa} )$, and the resolvent $( H_{\beta,m,\kappa} + k^2 )^{-1} \in \B \big( L^2 ( \R_+ ) \big)$ is the operator whose kernel is given by \eqref{resolvent_generic}. It then easily follows from the analyticity properties of the maps $( \beta , m , \kappa ) \mapsto \cI_{\frac{\beta}{2k},\pm m}( 2kx )$ and $( \beta , m , \kappa ) \mapsto \cK_{\frac{\beta}{2k},m}( 2kx )$
(for fixed $x > 0$ and $k$) that, for all $f , g \in L^2(\R_+)$, the map $( \beta , m , \kappa ) \mapsto ( f | ( H_{\beta,m,\kappa} + k^2 )^{-1} g )$ is holomorphic on $\U_0$. Hence $\{H_{\beta,m,\kappa}\}$ is holomorphic on $\U_0$.

If $m_0 \notin  \big\{ -\frac{1}{2} , 0 , \frac{1}{2} \big\}$ and $\kappa_0 = \infty$, the statement directly follows from the equality $H_{\beta,m,\infty} = H_{\beta,-m,0}$.

Suppose now that $m_0 = 0$ and that $\kappa_0 \in \C \setminus \{ - 1 \}$.
We extend by continuity the definition of $\eta_{\beta,m,\kappa}(k)$ in \eqref{def_eta} for $m=0$ by setting
\begin{equation*}
\eta_{\beta,0,\kappa}( k ) := \frac{ 1 + \kappa }{ \Gamma\big( \frac12 - \frac{ \beta }{ 2 k } \big) }.
\end{equation*}
We also choose $k \in \C$ with $\Re( k ) > 0$ such that $\frac{\beta_0}{2k} -\frac{1}{2} \not \in \N$. This latter requirement implies that $\eta_{\beta_0,m_0,\kappa_0}( k ) \neq 0$, and by continuity of the map $( \beta , m , \kappa ) \mapsto \eta_{\beta,m,\kappa}( k ) $,  there exists a neighborhood $\U_0$ of $( \beta_0 , 0 , \kappa_0 )$ such that for all $( \beta , m , \kappa )$ in this neighborhood,  $\eta_{\beta,m,\kappa}( k ) \neq 0$. In particular, by Theorem \ref{green0}, one verifies that, for all $f , g \in L^2(\R_+)$, the map $( \beta , m , \kappa ) \mapsto ( f | ( H_{\beta,m,\kappa} + k^2 )^{-1} g )$ is well-defined and holomorphic on $\U_0$ provided that \eqref{resolvent_generic} is extended to $\U_0 \cap \{ (\beta,0,\kappa) \mid \beta \in \C , \kappa \in \C \}$ by
\begin{align*}
&R_{\beta,0,\kappa}(-k^2 ; x,y) =\frac{ \Gamma\big( \frac12 - \frac{ \beta }{ 2 k } \big) }{2k }
\begin{cases}
 \cI_{\frac{\beta}{2k},0}(2k x) \cK_{\frac{\beta}{2k},0}(2k y) & \mbox{ for }0<x<y,\\[2ex]
 \cI_{\frac{\beta}{2k},0}(2k y) \cK_{\frac{\beta}{2k},0}(2k x) &  \mbox{ for }0<y<x.
\end{cases}
\end{align*}
Note that this corresponds to the integral kernel of $(H_{\beta,0,0} + k^2 )^{-1}
 = (H_{\beta,0}^\infty+k^2)^{-1}$. This shows that $\{H_{\beta,m,\kappa}\}$ is holomorphic on $\U_0$ (provided that $\U_0$ is chosen small enough so that $(\beta,0,-1)\not \in \U_0$).

If $m_0 = 0$ and $\kappa_0 = \infty$, the argument is similar once it is observed that
\begin{equation*}
 (H_{\beta,0,\infty} + k^2 )^{-1}
= (H_{\beta,0}^\infty+k^2)^{-1}
= (H_{\beta,0,0} + k^2 )^{-1}.
\end{equation*}

It remains to consider the cases $m_0 = \pm \frac{1}{2}$ and $\beta_0 \neq 0$. Assume for instance that $m_0 = - \frac{1}{2}$, $\beta_0 \neq 0$, and $\kappa_0 \in \C$. We extend by continuity the definition of $\eta_{\beta,m,\kappa}(k)$ in \eqref{def_eta} for $m=-\frac{1}{2}$ by setting
\begin{equation*}
\eta_{\beta,-\frac12,\kappa}( k ) := \frac{ (2k)^{\frac12} }{ \Gamma\big( - \frac{ \beta }{ 2 k } \big) }.
\end{equation*}
We also choose $k \in \C$ with $\Re( k ) > 0$ such that $\frac{\beta_0}{2k} \notin \N$. Then we have $\eta_{\beta_0, -\frac{1}{2} , \kappa_0 } (k) \neq 0$,  and by continuity of $( \beta , m , \kappa ) \mapsto \eta_{\beta,m,\kappa}( k ) $ there exists a neighborhood $\U_0$ of $( \beta_0 , -\frac{1}{2} , \kappa_0 )$ such that  $\eta_{\beta,m,\kappa}(k) \neq 0$ for all $( \beta , m , \kappa )$ in $\U_0$. By Theorem \ref{green0}, one then verifies that for all $f , g \in L^2(\R_+)$, the map $( \beta , m , \kappa ) \mapsto ( f | ( H_{\beta,m,\kappa} + k^2 )^{-1} g )$ is well-defined and holomorphic on $\U_0$ provided that \eqref{resolvent_generic} is extended to $\U_0 \cap \big\{\big(\beta,-\frac{1}{2},\kappa\big) \mid \beta \in \C , \kappa \in \C \big\}$ by
\begin{align*}
R_{\beta,-\frac{1}{2},\kappa}(-k^2 ; x,y) &=\frac{1}{2k \;\!\eta_{\beta,-\frac12,\kappa}(k) }
\begin{cases}
 (2k)^{\frac12} \cI_{\frac{\beta}{2k},-\frac12}(2k x) \cK_{\frac{\beta}{2k},-\frac12}(2k y) & \mbox{ for }0<x<y,\\[2ex]
(2k)^{\frac12} \cI_{\frac{\beta}{2k},-\frac12}(2k y) \cK_{\frac{\beta}{2k},-\frac12}(2k x) &  \mbox{ for }0<y<x ,
\end{cases} \\
&=\frac{ \Gamma \big( 1 - \frac{ \beta }{ 2 k } \big) }{2k }
\begin{cases}
 \cI_{\frac{\beta}{2k},\frac12}(2k x) \cK_{\frac{\beta}{2k},\frac12}(2k y) & \mbox{ for }0<x<y,\\[2ex]
 \cI_{\frac{\beta}{2k},\frac12}(2k y) \cK_{\frac{\beta}{2k},\frac12}(2k x) &  \mbox{ for }0<y<x .
\end{cases}
\end{align*}
Note that this corresponds to the integral kernel of $\big(H_{\beta,\frac{1}{2},0} + k^2 \big)^{-1}=\big(H_{\beta,\frac{1}{2}}^\infty+k^2\big)^{-1}$. This shows that $\{H_{\beta,m,\kappa}\}$ is holomorphic on $\U_0$. The argument easily adapts to the case  $m_0 = \frac{1}{2}$ and $\beta_0 \neq 0$.

As before, if $m_0=\pm \frac{1}{2}$, $\beta_0 \neq 0$, and $\kappa_0 = \infty$, the statement follows from the equalities
\begin{equation*}
\big(H_{\beta,\pm\frac{1}{2},\infty} + k^2 \big)^{-1}
= \big(H_{\beta,\frac{1}{2}}^\infty+k^2\big)^{-1}
= \big(H_{\beta,\pm\frac{1}{2},0} + k^2 \big)^{-1}.
\end{equation*}

The second part of the statement $(i)$ follows directly from \cite[Thm.~3.5]{DR2}. To prove $(ii)$ and $(iii)$, the argument is analogous and simpler: it suffices to use the formulas \eqref{resolvent_1/2} to prove $(ii)$ and \eqref{resolvent_0} to prove $(iii)$.
\end{proof}

The following statement shows that the domains of holomorphy obtained in Theorem \ref{thm:holomorphy} are maximal  for $m\in \Pi $. In particular, we will prove that \eqref{except} are sets of non-removable singularities of the family $(\beta,m,\kappa)\mapsto \{H_{\beta,m,\kappa}\}$.

\begin{proposition}
\begin{enumerate}[label=(\roman*)]
\item For any fixed $\kappa \in \C^\times$, the family of closed operators
$(\beta,m)\mapsto H_{\beta,m,\kappa}$
defined on $\C\times \Pi \setminus \{ (0,-\frac{1}{2} ) , (0,\frac{1}{2}) \}$ cannot be extended by continuity at $( 0 , -\frac{1}{2} )$ and $( 0 , \frac{1}{2} )$.
If $\kappa = 0$, the family $(\beta,m)\mapsto H_{\beta,m,0}$ defined on $\C\times \Pi  \setminus \{ (0,-\frac{1}{2} ) \}$ cannot be extended by continuity at $( 0 , -\frac{1}{2} )$,
and for $\kappa = \infty$ the family $(\beta,m)\mapsto H_{\beta,m,\infty}$ defined on
$\C\times \Pi \setminus \{ (0,\frac{1}{2} ) \}$ cannot be extended by continuity at $( 0 , \frac{1}{2} )$.
\item For any fixed $\beta \in \C$, the family
$(m,\kappa)\mapsto H_{\beta,m,\kappa}$
defined on $\Pi \times  \big(\C \cup \{\infty\}\big) \setminus \{ (0,-1 ) \}$ cannot be extended by continuity at $( 0 , -1 )$.
 \end{enumerate}
\end{proposition}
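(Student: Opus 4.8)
The guiding principle is the one already used for the pure family at $\big(0,-\frac12\big)$: to prove that a point is a non-removable singularity it suffices to exhibit two curves approaching it along which the weak resolvent limits of the family differ, since any continuous (hence holomorphic) extension would force all such limits to coincide. All limits below are understood in the weak resolvent sense explained above. For part (i) I fix $\kappa$ and treat the point $\big(0,-\frac12\big)$. Along the first curve I keep $m=-\frac12$ and let $\beta\to0$: by the convention $H_{\beta,-\frac12,\kappa}=H_{\beta,\frac12}^\infty$ and the holomorphy of $(\beta,\nu)\mapsto H_{\beta,\frac12}^\nu$ from Theorem \ref{thm:holomorphy}$(iii)$, this curve converges to $H_{0,\frac12}^\infty$, the Dirichlet Laplacian. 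Along the second curve I keep $\beta=0$ and let $m\to-\frac12$, so $H_{0,m,\kappa}=H_{m,\kappa}$ is the family of \cite{DR1}; its boundary condition $x^{\frac12+m}+\kappa x^{\frac12-m}$ degenerates to $1+\kappa x$, i.e.\ to the one defining $H_{0,\frac12}^\kappa$, and I claim the limit is $H_{0,\frac12}^\kappa$. Since $H_{0,\frac12}^\infty\neq H_{0,\frac12}^\kappa$ whenever $\kappa\neq\infty$, this settles the cases $\kappa\in\C^\times$ and $\kappa=0$ at $\big(0,-\frac12\big)$. The assertions about $\big(0,\frac12\big)$ for $\kappa\in\C^\times$, and about $\kappa=\infty$ at $\big(0,\frac12\big)$, then follow at once from the reflection $H_{\beta,m,\kappa}=H_{\beta,-m,\kappa^{-1}}$ of Proposition \ref{propo1}, which sends $\big(0,-\frac12\big)$ to $\big(0,\frac12\big)$ and $\kappa$ to $\kappa^{-1}$.

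The one computational point to justify in part (i) is the second limit $\lim_{m\to-\frac12}H_{0,m,\kappa}=H_{0,\frac12}^\kappa$. I would establish it at the level of resolvents, starting from \eqref{resolvent_generic} with $\beta=0$ and letting $m\to-\frac12$, checking that the generic kernel built from $\cI_{0,\pm m}$ and $\cK_{0,m}$ degenerates to the $m=\frac12$ kernel \eqref{resolvent_1/2} with $\nu=\kappa$. This is the analogue of the confluence computations in the proof of Theorem \ref{spectrum}, and the same Gamma-function identities used there (notably $\Gamma\big(\frac12\big)=\sqrt\pi$, $\Gamma\big(-\frac12\big)=-2\sqrt\pi$ and $\psi(1+z)=\psi(z)+1/z$) are exactly what is needed. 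When $\Re(\kappa)<0$ there is a quicker independent check that bypasses this limit: by Theorem \ref{thm_beta_0}$(i)$ the operator $H_{0,m,\kappa}$ has an eigenvalue $-k^2$ with $\kappa=(k/2)^{-2m}\Gamma(m)/\Gamma(-m)$, and as $m\to-\frac12$ this condition tends to $\kappa=-k$; hence an eigenvalue approaches $-\kappa^2\in\C\setminus[0,\infty)$, a point of the resolvent set of the Dirichlet Laplacian, so the resolvents cannot converge to those of $H_{0,\frac12}^\infty$.

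For part (ii) I fix $\beta$ and treat $(m,\kappa)=(0,-1)$, the point where $\eta_{\beta,0,-1}(k)=0$ in \eqref{def_eta} and where the boundary condition $x^{\frac12+m}\big(1-\tfrac{\beta}{1+2m}x\big)+\kappa x^{\frac12-m}\big(1-\tfrac{\beta}{1-2m}x\big)$ has vanishing leading part. I would approach the point along the curves $\kappa=-1+2cm$, $m\to0$, for a free parameter $c\in\C$. Substituting $\kappa=-1+2cm$ into $\omega_{\beta,m,\kappa}$ of \eqref{def_omegabm} and using $\frac{\pi}{\sin(2\pi m)}=\frac{1}{2m}\big(1+O(m^2)\big)$ together with the leading-order expansion of $\gamma_{\beta,m}-\gamma_{\beta,-m}$ in $m$, one finds that $\omega_{\beta,m,\kappa}(k)$ has a finite limit that is affine and nonconstant in $c$. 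Comparing this limit with $\omega_{\beta,0}^\nu(k)$ from Theorem \ref{green0}$(iii)$, and using the degeneration of $\cI_{\frac{\beta}{2k},m}$ and $\cK_{\frac{\beta}{2k},m}$ into $j_{\beta,0}$ and $y_{\beta,0}$ (equivalently \eqref{res-zero} and the Taylor expansion \eqref{eq_Taylor_0}), I would conclude that the resolvent converges to that of $H_{\beta,0}^{\nu(c)}$ with $\nu(c)$ an affine nonconstant function of $c$. Distinct values of $c$ thus give distinct limit operators $H_{\beta,0}^{\nu(c)}$, so no continuous extension of $(m,\kappa)\mapsto H_{\beta,m,\kappa}$ can exist at $(0,-1)$.

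The main obstacle in both parts is the same: controlling the confluent degeneration of the two-parameter kernel \eqref{resolvent_generic} as $m$ hits the exceptional values $\pm\frac12$ or $0$, where the power solutions $x^{\frac12\pm m}$ coalesce and a logarithm emerges. The bookkeeping is delicate because several Gamma factors blow up or vanish simultaneously, but each such limit is of the type already handled in the proof of Theorem \ref{spectrum}; the genuinely new feature is that the subleading terms must be tracked carefully in order to identify the precise parameter of the limiting operator, namely $\kappa$ in part (i) and $\nu(c)$ in part (ii).
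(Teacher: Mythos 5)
Your proof is correct and follows essentially the same route as the paper. For (i) the paper argues exactly as you do: along $m=\pm\frac12$ one has $H_{\beta,\pm\frac12,\kappa}=H_{\beta,\frac12}^\infty$ identically, so the $\beta\to0$ limit is $H_{0,\frac12}^\infty$, while along $\beta=0$ the limit as $m\to\mp\frac12$ is $H_{0,\frac12}^{\kappa^{\pm1}}$ (the paper gets this by citing the holomorphy of the family $H_{m,\kappa}$ of \cite{DR1} rather than redoing the confluent resolvent computation you sketch, but the content is the same), and the $\kappa=0,\infty$ cases and the point $\big(0,\frac12\big)$ are handled via $H_{\beta,m,\kappa}=H_{\beta,-m,\kappa^{-1}}$. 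For (ii) the paper uses the two curves $\{m=0\}$ (along which the family is constantly $H_{\beta,0}^\infty$) and $\{\kappa=-1\}$ (along which the limit is $H_{\beta,0}^{0}$), whereas you use the pencil of lines $\kappa=-1+2cm$; your computation of $\lim\omega_{\beta,m,\kappa}(k)$ is correct and gives $\nu(c)=c$, so this is a harmless variant that in fact exhibits the whole blow-up structure, at the cost of having to carry out the confluence for two values of $c$ instead of one. The only point I would push back on is your ``quicker independent check'' via eigenvalues when $\Re(\kappa)<0$: since the limits in this proposition are only weak resolvent limits, the convergence of eigenvalues $-k_m^2\to-\kappa^2$ to a point of the resolvent set of the Dirichlet Laplacian does not by itself exclude that weak limit (spectrum can be lost under weak or strong resolvent convergence); that argument would be conclusive only for a norm-resolvent-continuous extension. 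Since it is offered as an optional shortcut and your main argument does not rely on it, this does not affect the validity of the proof.
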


\begin{proof}
$(i)$ Let us first consider $\beta=0$. Recall that in \cite{DR1}
the family of closed operators $\Pi\times(\C\cup\{\infty\})\ni(m,\kappa)\mapsto H_{m,\kappa}$ has been introduced, and that this family is holomorphic on $\Pi\times(\C\cup\{\infty\})\setminus \{0\}\times (\C\cup\{\infty\})$. Here is its relationship to the families from the present article:
\begin{equation*}
 H_{m, \kappa}:=\left\{
\begin{array}{ll}
H_{0,m,\kappa} & \text{if } m \notin \{ -\frac12 , \frac12 \} \\
H_{0,\frac12}^{\kappa^{-1}}  & \text{if } m = \frac12 \\
H_{0,\frac12}^{\kappa}  & \text{if } m = - \frac12
\end{array}
\right.
\end{equation*}
Let us now focus on $m=-\frac12$ and on $m=\frac12$. We have for any $\kappa \in \C\cup\{\infty\}$
\begin{equation*}
H_{\beta,-\frac12,\kappa}=  H_{\beta,\frac12,\kappa}=H_{\beta,\frac12}=H_{\beta,\frac12}^\infty.
\end{equation*}
Therefore, for $\kappa\neq0$,
\begin{equation}\label{discountinuity2}
\lim_{\beta\to0}H_{\beta,\frac12,\kappa}=H_{0,\frac12}^\infty
\neq H_{0,\frac12}^{\kappa^{-1}}=\lim_{m\to\frac12}H_{0,m,\kappa}.
\end{equation}
Similarly, for $\kappa\neq\infty$,
\begin{equation}\label{discountinuity3}
\lim_{\beta\to0}H_{\beta,-\frac12,\kappa}=H_{0,\frac12}^\infty
\neq H_{0,\frac12}^{\kappa}=\lim_{m\to-\frac12}H_{0,m,\kappa}.
\end{equation}
This proves (i) when $\kappa \not \in \{0,\infty\}$. The proof in these special cases is similar.

$(ii)$ Let us first consider a fixed parameter $\beta \in \C$ and $m=0$. By definition we have
\begin{equation*}
H_{\beta,0,\kappa}=H_{\beta,0}=H_{\beta,0}^\infty,
\end{equation*}
independently of $\kappa\in\C\cup\{\infty\}$. We now consider a fixed parameter $\beta \in \C$ and $\kappa=-1$. Choosing $k \in \C$ with $\Re( k ) > 0$ such that $\frac{\beta}{2k}-\frac{1}{2}\not\in \N$, it follows from \eqref{resolvent_generic} that for any $m\neq0$ in a complex neighborhood of $0$, the integral kernel of the resolvent of $H_{\beta,m,-1}$ is given by
\begin{align*}
&R_{\beta,m,-1}(-k^2 ; x,y) = \frac{1}{2k \;\!\eta_{\beta,m,-1}(k) }  \\
&\times
\begin{cases}
\Big( \frac{(2k)^{-m}}{\Gamma(1-2m)} \cI_{\frac{\beta}{2k},m}(2k x) - \frac{(2k)^m}{\Gamma(1+2m)}\cI_{\frac{\beta}{2k},-m}(2k x)\Big) \cK_{\frac{\beta}{2k},m}(2k y) & \mbox{ for }0<x<y,\\[2ex]
\Big(\frac{(2k)^{-m}}{\Gamma(1-2m)} \cI_{\frac{\beta}{2k},m}(2k y) - \frac{(2k)^{m}}{\Gamma(1+2m)}\cI_{\frac{\beta}{2k},-m}(2k y)\Big) \cK_{\frac{\beta}{2k},m}(2k x) &  \mbox{ for }0<y<x ,
\end{cases}\notag
\end{align*}
where $\eta_{\beta,m,-1}(k)$ is defined in \eqref{def_eta}. One then infers that
\begin{align*}
g_{\beta,k,x}(m):=& \frac{ 1 }{\eta_{\beta,m,-1}(k) } \Big( \frac{(2k)^{-m}}{\Gamma(1-2m)} \cI_{\frac{\beta}{2k},m}(2k x) - \frac{(2k)^m}{\Gamma(1+2m)}\cI_{\frac{\beta}{2k},-m}(2k x)\Big) \\
=& \frac{ \frac{ (2k)^{\frac12} }{ \Gamma( 1 - 2m ) \Gamma ( 1 + 2m ) } \big ( x^{\frac12+m} - x^{\frac12-m} \big ) }{ \frac{(2k)^{-m}}{\Gamma(\frac12+m-\frac{\beta}{2k})\Gamma(1-2m)} - \frac{(2k)^{m}}{\Gamma(\frac12-m-\frac{\beta}{2k})\Gamma(1+2m)} } + O( x^{\frac32 - |\Re(m)| } ) , \quad x \to 0.
\end{align*}
By using this expression, one can verify that the map $m\mapsto g_{\beta,k,x}(m)$, defined in a punctured complex neighborhood of $0$, can be analytically extended at $0$ with
\begin{align*}
g_{\beta,k,x}(0)= - \frac{ (2k)^{\frac12} \Gamma\big( \frac12 - \frac{\beta}{2k} \big) }{ \ln(2k) + \psi\big( \frac12 - \frac{\beta}{2k} \big) + 2\gamma  }  x^{\frac12} \ln( x ) + o( x^{\frac12} ) , \quad x \to 0 .
\end{align*}
Thus, the family of operators $\{\tilde H_{\beta,m,-1}\}$ defined by
\begin{align*}
\tilde H_{\beta,m,-1} =
\left \{
\begin{array}{ll}
H_{\beta,m,-1} & \text{if } m \neq 0 \\
H_{\beta,0}^{0} & \text{if } m = 0 ,
\end{array}
\right.
\end{align*}
is holomorphic for $m\in \Pi$.
It thus follows that
\begin{equation}\label{discountinuity4}
\lim_{\kappa\to-1}H_{\beta,0,\kappa}=H_{\beta,0}^\infty
\neq H_{\beta,0}^0=\lim_{m\to0}H_{\beta,m,-1},
\end{equation}
which concludes the proof.
\end{proof}

\subsection{Blowing up the singularities at $m=0$ and at $m=\pm\frac12$}

As presented above, the boundary conditions for $m=0$ and $m=\pm\frac12$ are described by separate holomorphic families of operators $H_{\beta,0}^\nu$ and $H_{\beta,\frac12}^\nu$. One can however view
these exceptional families as limiting cases of the generic family $H_{\beta,m,\kappa}$.
What is more, after an appropriate change of parameters near the points  $m=0$ and $m=\pm\frac12$
one can holomorphically pass from the generic family to the exceptional families. Such a  procedure is referred to as blowing up a singularity.

More precisely, let us define two new families of operators:
\begin{align}
H_{\beta,m}^{(0),\nu}&:=\begin{cases}
H_{\beta,m,\kappa},&\quad m\neq0,\qquad \kappa=\kappa^{(0)}(m,\nu)
:=-\frac{1}{(1+2m\nu)},\\
H_{\beta,0}^\nu,&\quad m=0;
\end{cases}\\
H_{\beta,m}^{(\frac12),\nu}&:=\begin{cases}
H_{\beta,m,\kappa},&\quad m\neq\frac12,\qquad \kappa=\kappa^{(\frac12)}(\beta,m,\nu):=\frac{1}{\big(-\frac{\beta}{(2m-1)}+\nu\big)},\\
H_{\beta,\frac12}^\nu,&\quad m=\frac12.
\end{cases}
\end{align}
Thus $H_{\beta,m}^{(0),\nu}$ includes both $H_{\beta,0}^{\nu}$ and
$  H_{\beta,m,\kappa}$, and
$H_{\beta,m}^{(\frac12),\nu}$ includes both $H_{\beta,\frac12}^{\nu}$ and
$  H_{\beta,m,\kappa}$.

\begin{theorem}
\begin{enumerate}[label=(\roman*)]
\item
The family $\{H_{\beta,m}^{(0),\nu}\}$ is holomorphic on
$\C\times\Pi\times \big(\C\cup\{\infty\}\big)$ except for
\begin{equation*}
\big(0,-\tfrac12\big){\times}\big(\C\cup\{\infty\}\big)\,\cup\,
\big(0,\tfrac12\big){\times}\big(\C\cup\{\infty\}\big).
\end{equation*}
\item
The family $\{H_{\beta,m}^{(\frac12),\nu}\}$ is holomorphic on
$\C\times\Pi\times \big(\C\cup\{\infty\}\big)$ except for
\begin{equation*}
\big(0,-\tfrac12\big){\times}\big(\C\cup\{\infty\}\big)\,\cup\,
\{(\beta,0,-1-\beta)\ \mid\ \beta\in\C\}.
\end{equation*}
\end{enumerate}
\end{theorem}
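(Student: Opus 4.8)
The plan is to leverage Theorem~\ref{thm:holomorphy} together with the observation that each new family coincides, off a single exceptional fibre, with $\{H_{\beta,m,\kappa}\}$ precomposed with a holomorphic reparametrisation of $\kappa$. For family $(0)$ the map $(m,\nu)\mapsto\kappa^{(0)}(m,\nu)=-\tfrac{1}{1+2m\nu}$ is holomorphic into $\C\cup\{\infty\}$, and for family $(\tfrac12)$ the map $(\beta,m,\nu)\mapsto\kappa^{(\frac12)}(\beta,m,\nu)=\big(-\tfrac{\beta}{2m-1}+\nu\big)^{-1}$ is holomorphic into $\C\cup\{\infty\}$ away from $m=\tfrac12$. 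Hence at every base point whose image under the reparametrisation avoids the singular set \eqref{except} of $\{H_{\beta,m,\kappa}\}$, holomorphy of the new family follows at once by composition. The excluded fibres are precisely those where the reparametrisation is forced into the singular locus of the generic family; there the point is that this singularity becomes \emph{removable} (``blown up'') for the reparametrised family. Throughout I would fix, near a given base point, a $k$ with $\Re(k)>0$ avoiding the finitely many bad values, and argue through the resolvent kernels \eqref{resolvent_generic}, \eqref{resolvent_1/2}, \eqref{resolvent_0} and the weak holomorphy criterion recalled before Theorem~\ref{thm:holomorphy}.

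For (i), the cases $m_0\notin\{-\tfrac12,0,\tfrac12\}$, and also $m_0=\pm\tfrac12$ with $\beta_0\neq0$, reduce directly to Theorem~\ref{thm:holomorphy} by composition, since there $\{H_{\beta,m,\kappa}\}$ is holomorphic for every $\kappa$. The substantive case is $m_0=0$, where $\kappa^{(0)}\to-1$ meets the singular fibre $\C\times(0,-1)$. Writing $\kappa=\kappa^{(0)}(m,\nu)$ in the convex-combination form of \eqref{res-gen}, namely $R_{\beta,m,\kappa}=\big(\gamma_{\beta,m}R_{\beta,m}+\kappa\gamma_{\beta,-m}R_{\beta,-m}\big)/\eta_{\beta,m,\kappa}$, and clearing the factor $1+2m\nu$, both numerator and denominator become holomorphic in $(\beta,m,\nu)$ near $(\beta_0,0,\nu_0)$ and vanish at $m=0$, since the $\pm m$ contributions cancel there. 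I would factor the common simple zero and write the resolvent as a ratio of holomorphic $\B(L^2(\R_+))$- and $\C$-valued functions; the derivative at $m=0$ of the denominator equals $-2\gamma_{\beta,0}(k)\,\omega_{\beta,0}^\nu(k)$ (after using $\psi(1)=-\gamma$), which is nonzero for suitable $k$ (those with $\tfrac{\beta_0}{2k}-\tfrac12\notin\N$ and $\omega_{\beta_0,0}^{\nu_0}(k)\neq0$). Holomorphy across $m=0$ follows, and the value of the extension at $m=0$ is identified with $R_{\beta,0}^\nu(-k^2)$ of \eqref{res-zero}: expanding the boundary solution in \eqref{resolvent_generic} via \eqref{eq_a2} and \eqref{eq_a0}, \eqref{eq_K0}, the normalised combination converges to $\tfrac{\omega_{\beta,0}^\nu(k)}{\Gamma(\frac12-\frac{\beta}{2k})}\cI_{\frac{\beta}{2k},0}(2kx)+\cK_{\frac{\beta}{2k},0}(2kx)$, whose $x\to0$ behaviour $\sim x^{1/2}\big((1-\beta x)\ln x+\cdots+\nu(1-\beta x)\big)$ is exactly the boundary condition defining $H_{\beta,0}^\nu$, so uniqueness in Proposition~\ref{red} forces the identification.

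For (ii) the easy reductions cover $m_0\neq\tfrac12$: when $m_0=0$ and $\nu_0\neq-1-\beta_0$ the reparametrisation lands at $\kappa=\tfrac1{\beta_0+\nu_0}\neq-1$, off the singular fibre of \eqref{except}; when $\beta_0=0$ it collapses to the $m$-independent value $\kappa=\tfrac1\nu$ and the claim follows from the holomorphy at $m=\tfrac12$ of the family $\{H_{m,\kappa}\}$ of \cite{DR1}; and $m_0=-\tfrac12$ with $\beta_0\neq0$ reduces to Theorem~\ref{thm:holomorphy} directly. The substantive case is $m_0=\tfrac12$ with $\beta_0\neq0$. Setting $t:=2m-1$, one has simultaneously $\kappa^{(\frac12)}=\tfrac{t}{-\beta+\nu t}\to0$ and $\gamma_{\beta,m}\to0$. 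Clearing $-\beta+\nu t$ in the convex-combination form of \eqref{res-gen}, the resulting numerator and denominator are holomorphic in $t$, and their $t$-linear coefficients \emph{both vanish}: indeed $\gamma_{\beta,m}=-a\,t+O(t^2)$ with $a=\tfrac{(2k)^{-1/2}}{\Gamma(1-\frac{\beta}{2k})}$, while $\gamma_{\beta,-\frac12}=-\beta a$ and $R_{\beta,\frac12}=R_{\beta,-\frac12}$, so the linear parts cancel. Thus both vanish to second order; carrying the expansions of the Gamma and digamma factors to order $t^2$, the coefficient of $t^2$ in the denominator is $a\beta\,\omega_{\beta,\frac12}^\nu(k)$, nonzero for $k$ with $\tfrac{\beta_0}{2k}\notin\N$ and $\omega_{\beta_0,\frac12}^{\nu_0}(k)\neq0$. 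Dividing numerator and denominator by $t^2$ exhibits the resolvent as a ratio of holomorphic functions, holomorphic across $m=\tfrac12$, and the limit is identified with $R_{\beta,\frac12}^\nu(-k^2)$ of \eqref{res-1/2} by matching, through \eqref{eq_a2}, \eqref{eq_a12} and \eqref{eq_K1/2}, the boundary solution to the $\big(1-\beta x\ln x+\nu x\big)$-behaviour defining $H_{\beta,\frac12}^\nu$.

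The main obstacle is precisely this second-order degeneracy at $m=\tfrac12$ in part (ii): unlike the $m=0$ situation of part (i), where a single simple zero is factored out, here the $t$-linear terms cancel and one must push the expansions of the Gamma and digamma factors to order $t^2$ to uncover the nonvanishing coefficient $a\beta\,\omega_{\beta,\frac12}^\nu(k)$. Equally delicate is the identification of the extended value at $m=\tfrac12$ with the resolvent of $H_{\beta,\frac12}^\nu$ rather than of $H_{\beta,\frac12}^\infty$, which is the naive limit at fixed $m$: it is the \emph{joint} rate of approach $m\to\tfrac12$, $\kappa\to0$ encoded in $\kappa^{(\frac12)}$ that selects the finite parameter $\nu$, and this must be read off from the order-$t^2$ coefficient of the numerator, which involves $\partial_m R_{\beta,m}$ and reproduces the rank-one correction term of \eqref{res-1/2}.
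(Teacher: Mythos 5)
Your proposal is correct, and its analytic core coincides with the paper's: the coefficients you extract --- the simple-zero derivative $-2\gamma_{\beta,0}(k)\,\omega_{\beta,0}^\nu(k)$ at $m=0$ and the $t^2$-coefficient $a\beta\,\omega_{\beta,\frac12}^\nu(k)$ at $m=\tfrac12$ --- are exactly the limits $\omega^{(0),\nu}_{\beta,m}(k)\to\omega^\nu_{\beta,0}(k)$ and $\omega^{(\frac12),\nu}_{\beta,m}(k)\to\omega^\nu_{\beta,\frac12}(k)$ that the paper obtains by inserting $\kappa^{(0)}$ and $\kappa^{(\frac12)}$ into \eqref{omegg} (indeed, since $\frac{\pi}{\sin(2\pi m)}\sim-\frac1t$ and $\kappa\gamma_{\beta,-m}\sim-\frac{t\beta a}{-\beta+\nu t}$, the paper's $\omega^{(\frac12),\nu}_{\beta,m}(k)$ is precisely your cleared denominator divided by $\beta a t^2$). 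Where you genuinely diverge is in the decomposition of the resolvent. The paper uses the second line of \eqref{res-gen}, writing $R_{\beta,m,\kappa}$ as $R_{\beta,m}$ plus a rank-one term whose vector part $\cK_{\frac{\beta}{2k},m}(2kx)\cK_{\frac{\beta}{2k},m}(2ky)$ is manifestly holomorphic and even in $m$; since $R_{\beta,m}$ is already known to be holomorphic away from $(0,-\tfrac12)$, the whole problem reduces to continuity of a single scalar function of $m$, and the identification of the limit with \eqref{res-zero} or \eqref{res-1/2} is automatic. You instead clear the reparametrisation denominator in the convex-combination form \eqref{resolvent_generic} and factor common zeros, which obliges you to control the $\B\big(L^2(\R_+)\big)$-valued numerator as well: a simple zero at $m=0$, and the second-order cancellation at $m=\tfrac12$ resting on $\gamma_{\beta,-\frac12}(k)=-\beta a$ together with $R_{\beta,\frac12}=R_{\beta,-\frac12}$, plus one further order of Taylor expansion and holomorphy of $R_{\beta,\pm m}$ near $m=\tfrac12$. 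Both observations are correct, so your route closes; it simply pays for not isolating the rank-one correction first with the extra order of expansion and the boundary-condition matching needed to identify the limit with $R^\nu_{\beta,\frac12}(-k^2)$ rather than $R^\infty_{\beta,\frac12}(-k^2)$.
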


\begin{proof}
For any fixed $m\in\Pi$, Theorem \ref{thm:holomorphy} shows that $(\beta,\nu) \mapsto H_{\beta,m}^{(0),\nu}$ is holomorphic on $\C \times (\C \cup \{\infty\})$ if $m\neq \pm\frac12$ and on $(\C\setminus\{0\}) \times (\C \cup \{\infty\})$ if $m=\pm\frac12$. Likewise, $(\beta,\nu) \mapsto H_{\beta,m}^{(\frac12),\nu}$ is holomorphic in $\C \times (\C \cup \{\infty\})$ if $m\not \in \big\{-\frac{1}{2},0\big\}$, on $\C \times (\C \cup \{\infty\}) \setminus \{ \beta , 1 - \beta \, | \, \beta \in \C \}$ if $m=0$, and on $(\C\setminus\{0\}) \times (\C \cup \{\infty\})$ if $m=-\frac12$. It remains to study holomorphy in $m$ for fixed $(\beta,\nu)$.

Recall that in Theorem \ref{thm_hard}
we introduced the functions $\omega_{\beta,m,\kappa}(k)$,
$\omega_{\beta,0}^\nu(k)$, and $\omega_{\beta,\frac12}^\nu(k)$. Let us now define two more functions
\begin{align*}
\omega_{\beta,m}^{(0),\nu}(k)&:=\begin{cases}
\omega_{\beta,m,\kappa}(k),&\quad m\neq0,\qquad \kappa=\kappa^{(0)}(m,\nu),\\
\omega_{\beta,0}^\nu(k),&\quad m=0;
\end{cases}\\
\omega_{\beta,m}^{(\frac12),\nu}(k)&:=\begin{cases}
\omega_{\beta,m,\kappa}(k),&\quad m\neq\frac12,\qquad \kappa=\kappa^{(\frac12)}(\beta,m,\nu),\\
\omega_{\beta,\frac12}^\nu(k),&\quad m=\frac12.
\end{cases}
\end{align*}
Clearly, by Theorem \ref{thm_hard} one has
\begin{align*}
& R_{\beta,m}^{(0),\nu}(-k^2 ; x,y) \\
& = R_{\beta,m}(-k^2;x,y)
+ \frac{\Gamma\big(\frac12+m-\frac{\beta}{2k}\big)\Gamma\big(\frac12-m-\frac{\beta}{2k}\big)}
{2k\;\!\omega_{\beta,m}^{(0),\nu}(k)}
\cK_{\frac{\beta}{2k},m}(2k y)  \cK_{\frac{\beta}{2k},m}(2k x)
\end{align*}
and
\begin{align*}
& R_{\beta,m}^{(\frac12),\nu}(-k^2 ; x,y) \\
&=  R_{\beta,m}(-k^2;x,y)
+ \frac{\Gamma\big(\frac12+m-\frac{\beta}{2k}\big)\Gamma\big(\frac12-m-\frac{\beta}{2k}\big)}
{2k\;\!\omega_{\beta,m}^{(\frac12),\nu}(k)}
\cK_{\frac{\beta}{2k},m}(2k y)  \cK_{\frac{\beta}{2k},m}(2k x).
\end{align*}

Let us show that, for fixed $(\beta,\nu)$ such that $\frac{\beta}{2k}-\frac{1}{2}\not \in \N$, the map
\begin{equation}
m\mapsto \frac{\Gamma\big(\frac12+m-\frac{\beta}{2k}\big)\Gamma\big(\frac12-m-\frac{\beta}{2k}\big)}
{2k\;\!\omega_{\beta,m}^{(\frac12),\nu}(k)}  \label{eq:zlo}
\end{equation}
is holomorphic for $m$ near $0$. It is clearly holomorphic in a punctured neighborhood of $0$. Hence it suffices to show that it is continuous at $m=0$.
Recall from \eqref{def_omegabm} that
\begin{equation}\label{omegg}
\omega_{\beta,m,\kappa}(k)
=\Big(1+\frac{(2k)^{-2m}\Gamma\big(\frac12-m-\frac{\beta}{2k}\big)\Gamma(1+2m)}{\kappa
\Gamma\big(\frac12+m-\frac{\beta}{2k}\big)\Gamma(1-2m)}\Big)\frac{\pi}{\sin(2\pi m)}.
\end{equation}
Then, by inserting $\kappa=\kappa^{(0)}(m,\nu)$ for $m\neq0$ into \eqref{omegg} we obtain
\begin{align*}
\omega_{\beta,m}^{(0),\nu}(k)
& =\pi \frac{\Gamma\big(\frac12+m-\frac{\beta}{2k}\big)\Gamma(1-2m)-
(2k)^{-2m}\Gamma\big(\frac12-m-\frac{\beta}{2k}\big)\Gamma(1+2m)}{    \Gamma\big(\frac12+m-\frac{\beta}{2k}\big)\Gamma(1-2m)\sin(2\pi m)}\\
&\quad -\nu\frac{ (2k)^{-2m}\Gamma\big(\frac12-m-\frac{\beta}{2k}\big)\Gamma(1+2m)2\pi m}{\Gamma\big(\frac12+m-\frac{\beta}{2k}\big)\Gamma(1-2m)\sin(2\pi m)}\\
&\underset{m\to0}\to
\psi\Big(\frac12-\frac{\beta}{2k}\Big)+2\gamma+\ln(2k)-\nu \\
&=\omega_{\beta,0}^{(0),\nu}(k).
\end{align*}
Thus \eqref{eq:zlo} is holomorphic for $m$ near $0$.

Similarly, let us show that, for fixed $(\beta,\nu)$ such that $\frac{\beta}{2k}\not \in \N$, the map
\begin{equation}
m\mapsto \frac{\Gamma\big(\frac12+m-\frac{\beta}{2k}\big)\Gamma\big(\frac12-m-\frac{\beta}{2k}\big)}
{2k\;\!\omega_{\beta,m}^{(\frac12),\nu}(k)}  \label{eq:zlo2}
\end{equation}
is holomorphic for $m$ near $\frac12$. 
By inserting $\kappa=\kappa^{(\frac12)}(\beta,m,\nu)$ for $m\neq\frac12$ into \eqref{omegg} we obtain
\begin{align*}
\omega_{\beta,m}^{(\frac12),\nu}(k)
&= \pi \frac{\Gamma\big(\frac12+m-\frac{\beta}{2k}\big)\Gamma(2-2m)+\beta
(2k)^{-2m}\Gamma\big(\frac12-m-\frac{\beta}{2k}\big)\Gamma(1+2m)}{    \Gamma\big(\frac12+m-\frac{\beta}{2k}\big)\Gamma(2-2m)\sin(2\pi m)}\\
&\quad -\nu\frac{ (2k)^{-2m}\Gamma\big(\frac12-m-\frac{\beta}{2k}\big)\Gamma(1+2m)\pi (2m-1)}{\Gamma\big(\frac12+m-\frac{\beta}{2k}\big)\Gamma(2-2m)\sin(2\pi m)}\\
&\underset{m\to\frac12}\to
-\frac12\psi\Big(1- \frac{\beta}{2k} \Big)- \frac12\psi\Big(- \frac{\beta}{2k} \Big) -2\gamma - \ln(2k) + 1-\frac{\nu}{\beta}\\
&=\omega_{\beta,\frac12}^{(\frac12),\nu}(k),
\end{align*}
which proves that \eqref{eq:zlo2} is holomorphic for $m$ near $\frac12$.

The remaining restrictions on the domain of holomorphy are inferred directly from
Theorem \ref{thm:holomorphy}.
\end{proof}

\subsection{Eigenprojections}\label{sec_eigenp}

Let us now describe a family of projections $\{P_{\beta,m}(\lambda)\}$ which is closely related to the Whittaker operator. We will define it by specifying its integral kernel.

We first introduce a holomorphic function for $m\not \in \{-\frac12,0,\frac12\}$ by
\begin{equation*}
(\beta,m,k)\mapsto  \zeta_{\beta,m}(k)
:=\frac{\pi\Big(2m+\frac{\beta}{2k}\psi\big(\frac12+m-\frac{\beta}{2k}\big)-
\frac{\beta}{2k}\psi\big(\frac12-m-\frac{\beta}{2k}\big)\Big)}{\sin(2\pi m)}.
\end{equation*}
One easily observes that $\zeta_{\beta,m}(k)=  \zeta_{\beta,-m}(k)$.
We can extend this function continuously to $m\in\{-\frac12,0,\frac12\}$ by
\begin{align*}
\zeta_{\beta,0}(k)
&=1+\frac{\beta}{2k}\psi'\Big(\frac12-\frac{\beta}{2k}\Big),\\
\zeta_{\beta,-\frac12}(k)=  \zeta_{\beta,\frac12}(k)
&:= -\bigg(1+\frac{\beta}{4k}\psi'\Big(1-\frac{\beta}{2k}\Big)+\frac{\beta}{4k}\psi'\Big(-\frac{\beta}{2k}\Big)\bigg).
\end{align*}

We now consider $\lambda\in\C\backslash[0,\infty[$, and as usual we write $\lambda=-k^2$ with $\Re (k)>0$.
We then define the integral kernel  $P_{\beta,m}(\lambda;x,y)$:
\begin{align}\label{defo}
P_{\beta,m}(-k^2;x,y)
:=\frac{k
\Gamma\big(\frac12+m-\frac{\beta}{2k}\big) \Gamma\big(\frac12-m-\frac{\beta}{2k}\big)}    {\zeta_{\beta,m}(k)}\cK_{\frac{\beta}{2k},m}(2kx)\cK_{\frac{\beta}{2k},m}(2ky).
\end{align}

The definition \eqref{defo} naturally extends to
$\lambda\in]0,\infty[$, where we distinguish between points coming from the upper and lower half-plane by writing $\lambda\pm\i0=-(\mp\i\mu)^2$ with $\mu>0$.
Thus, let us set $k=\mp\i \mu$ and
\begin{equation*}
\zeta_{\beta,m}(\mp\i \mu):=\frac{\pi\Big(
2m\pm\i\frac{\beta}{2\mu}\psi\big(\frac12+m\mp\i\frac{\beta}{2\mu}\big)\mp\i\frac{\beta}{2\mu}\psi\big(\frac12-m\mp\i\frac{\beta}{2\mu}\big)\Big)}{\sin(2\pi m)}
\end{equation*}
which can be naturally extended to $m\in\{-\frac12,0,\frac12\}$ by
\begin{align*}
\zeta_{\beta,0}(\mp\i\mu)
&=1\pm\i\frac{\beta}{2\mu}\psi'\Big(\frac12\mp\i\frac{\beta}{2\mu}\Big),\\
\zeta_{\beta,-\frac12}(\mp\i\mu)=  \zeta_{\beta,\frac12}(\mp\i\mu)
&:= -\bigg(1\pm\i\frac{\beta}{4\mu}\psi'\Big(\mp\i\frac{\beta}{2\mu}\Big)\pm\i\frac{\beta}{4k}\psi'\Big(1\mp\i\frac{\beta}{2\mu}\Big)\bigg).
\end{align*}
For $k=\mp\i \mu$ we can then rewrite \eqref{defo} as
\begin{equation*}
P_{\beta,m}(\mu^2\pm\i0;x,y) := \frac{\e^{\pm\i\pi m}\mu
\Gamma\big(\frac12+m\mp \i\frac{\beta}{2\mu}\big) \Gamma\big(\frac12-m\mp \i\frac{\beta}{2\mu}\big)}{\zeta_{\beta,m}(\mp\i\mu)}
\cH_{\frac{\beta}{2\mu},m}^\pm(2\mu x)
\cH_{\frac{\beta}{2\mu},m}^\pm(2\mu y).
\end{equation*}

Finally, to handle $\lambda=0$
we shall use the function
$\frac{ \sin(2\pi m)}{m(4m^2-1) }$
extended to $\{-\frac12,0,\frac12\}$ by
\begin{equation*}
\frac{ \sin(2\pi m)}{m(4m^2-1) }\Big|_{m=0}=-2\pi
\quad \hbox{and} \quad
\frac{ \sin(2\pi m)}{m(4m^2-1) }\Big|_{m=\pm\frac12}=-\pi.
\end{equation*}
We set, for $\pm \Im(\sqrt{\beta} ) > 0$,
\begin{equation*}
P_{\beta,m}(0;x,y)
:=3\e^{\pm \i\pi 2m} \beta
\frac{\sin(2\pi m)}{m(4m^2-1) } (\beta x)^{\frac14}\cH_{2m}^{\pm}(2\sqrt{\beta x}) (\beta y)^{\frac14}\cH_{2m}^{\pm}(2\sqrt{\beta y}).
\end{equation*}

The integral kernel
$P_{\beta,m}(-k^2;x,y)$ defines an operator-valued map $(\beta,m,k)\mapsto P_{\beta,m}(-k^2)$ described in the following proposition.

\begin{proposition}
On  the set
\begin{align}\label{eq:domain_proj0}
\begin{split}
&\C\times\Pi\times\{k\in \C\mid\Re(k)>0\}  \\
&\ \cup\{(\beta,m,\mp\i\mu)\mid \beta \in \C , \,  m\in\Pi , \, 0<\mu<\pm\Im(\beta) \} \\
&\ \cup\{(\beta,m,0)\mid \beta \in \C , \,  m\in\Pi , \, 0<\pm\Im(\sqrt{\beta}) \} ,
\end{split}
\end{align} the function $(\beta,m,k)\mapsto P_{\beta,m}(-k^2)$
has values in bounded projections. Moreover, it is continuous on
\begin{align}\label{eq:domain_proj}
\begin{split}
&\C\times\Pi\times\{k\in \C\mid\Re(k)>0\}  \\
&\ \cup\{(\beta,m,\mp\i\mu)\mid \beta \in \C , \,  m\in\Pi , \, 0<\mu<\pm\Im(\beta) \} ,
\end{split}
\end{align}
and holomorphic on  $\C\times\Pi\times\{\Re(k)>0\}$. It satisfies
\begin{align}
P_{\beta,m}(-k^2)&= P_{\beta,-m}(-k^2), \label{eq:proj1} \\
P_{\beta,m}(-k^2)^\#&= P_{\beta,m}(-k^2),\\
P_{\beta,m}(-k^2)^*&= P_{\bar\beta,\bar m}(-\overline {k^2}) , \label{eq:proj3}
\end{align}
for all $(\beta,m,k)$ in the set \eqref{eq:domain_proj0}.
 \end{proposition}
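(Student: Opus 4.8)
The plan is to exploit that, for each fixed $(\beta,m,k)$, the kernel in \eqref{defo} factorizes as a product, so that $P_{\beta,m}(-k^2)$ is a \emph{rank-one} integral operator. Writing
\[
\phi(x):=\cK_{\frac{\beta}{2k},m}(2kx), \qquad c:=\frac{k\,\Gamma\big(\tfrac12+m-\tfrac{\beta}{2k}\big)\Gamma\big(\tfrac12-m-\tfrac{\beta}{2k}\big)}{\zeta_{\beta,m}(k)},
\]
one has $\big(P_{\beta,m}(-k^2)f\big)(x)=c\,\phi(x)\,\langle\phi|f\rangle$. Boundedness is then immediate from $\phi\in L^2(\R_+)$: for $\Re(k)>0$ this was already used in the proof of Theorem \ref{spectrum} (the function $\cK_{\frac{\beta}{2k},m}(2k\cdot)$ is square integrable near both endpoints), while on the two boundary pieces of \eqref{eq:domain_proj0} the relevant function is $\cH^\pm_{\frac{\beta}{2\mu},m}(2\mu\cdot)$, resp.\ $x\mapsto(\beta x)^{1/4}\cH^\pm_{2m}(2\sqrt{\beta x})$, which lie in $L^2(\R_+)$ precisely under the stated restrictions $0<\mu<\pm\Im(\beta)$, resp.\ $\pm\Im(\sqrt\beta)>0$, as established there.

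First I would reduce the projection property to a single scalar identity. For a rank-one operator of the above form a direct computation gives $P_{\beta,m}(-k^2)^2=c\,\langle\phi|\phi\rangle\,P_{\beta,m}(-k^2)$, so idempotency is equivalent to
\[
\int_0^\infty \cK_{\frac{\beta}{2k},m}(2kx)^2\,\d x=\frac{\zeta_{\beta,m}(k)}{k\,\Gamma\big(\tfrac12+m-\tfrac{\beta}{2k}\big)\Gamma\big(\tfrac12-m-\tfrac{\beta}{2k}\big)}.
\]
This is exactly the content of the explicit evaluation of integrals of products of Whittaker functions developed in the appendix; indeed $\zeta_{\beta,m}(k)$ is \emph{defined} so that this equality holds, and its continuous extensions to $m\in\{-\tfrac12,0,\tfrac12\}$ are dictated by the corresponding limits of that formula. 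Invoking this integral identity is the \textbf{main obstacle}: after the change of variable $z=2kx$ the integral runs along the ray $\{2kx:x>0\}$, and one must justify deforming the contour to the real axis using the large-$z$ decay of $\cK_{\delta,m}$ recorded in the appendix, and then read off the closed form. The two boundary regimes require the same identity with $\cH^\pm$ in place of $\cK$: the positive-energy case $k=\mp\i\mu$ follows from the boundary values of $\zeta_{\beta,m}$, and the $\lambda=0$ case by the same mechanism with the normalisation $\tfrac{\sin(2\pi m)}{m(4m^2-1)}$ and its stated extensions.

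The three functional identities follow from kernel symmetries. Relation \eqref{eq:proj1} holds because the prefactor $\Gamma\big(\tfrac12+m-\tfrac{\beta}{2k}\big)\Gamma\big(\tfrac12-m-\tfrac{\beta}{2k}\big)$ is invariant under $m\mapsto-m$, because $\zeta_{\beta,m}(k)=\zeta_{\beta,-m}(k)$ was noted just after its definition, and because $\cK_{\delta,m}=\cK_{\delta,-m}$ (appendix). The transpose identity is immediate since the kernel is symmetric, $P_{\beta,m}(-k^2;x,y)=P_{\beta,m}(-k^2;y,x)$, and the transpose of an integral operator has kernel $(x,y)\mapsto K(y,x)$. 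For \eqref{eq:proj3} recall that $A^*$ has kernel $(x,y)\mapsto\overline{K(y,x)}$; using $\overline{\Gamma(w)}=\Gamma(\bar w)$, the Schwarz-reflection property $\overline{\cK_{\delta,m}(z)}=\cK_{\bar\delta,\bar m}(\bar z)$ of the principal-branch Whittaker function, the analogous reflection $\overline{\zeta_{\beta,m}(k)}=\zeta_{\bar\beta,\bar m}(\bar k)$ read off from its explicit $\psi$-expression, and $\overline{2kx}=2\bar k x$ for $x>0$, one checks that $\overline{P_{\beta,m}(-k^2;y,x)}$ equals $P_{\bar\beta,\bar m}(-\overline{k^2};x,y)$, noting $-\overline{k^2}=-(\bar k)^2$ with $\Re(\bar k)>0$.

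Finally, holomorphy and continuity follow from the regularity of the two factors. On $\C\times\Pi\times\{\Re(k)>0\}$ the map $(\beta,m,k)\mapsto\phi=\cK_{\frac{\beta}{2k},m}(2k\cdot)$ is holomorphic as an $L^2(\R_+)$-valued map, and $c$ is holomorphic there (its denominator $\zeta_{\beta,m}(k)$ is nonvanishing on this set, consistent with $P_{\beta,m}(-k^2)$ being a well-defined nonzero bounded operator). Hence for fixed $f,g\in L^2(\R_+)$ the matrix element $(f|P_{\beta,m}(-k^2)g)=c\,(f|\phi)\,\langle\phi|g\rangle$ is separately, and thus jointly, holomorphic, and $\|P_{\beta,m}(-k^2)\|\le|c|\,\|\phi\|_{L^2(\R_+)}^2$ is locally bounded; by the weak-holomorphy criterion recalled in Section \ref{sec_spectral} this yields holomorphy of the $\B(L^2(\R_+))$-valued map. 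Continuity on the larger set \eqref{eq:domain_proj} is obtained the same way, now using $L^2$-continuity of $\phi$ across the exceptional values $m\in\{-\tfrac12,0,\tfrac12\}$ — where the continuous extensions of $\zeta_{\beta,m}$ guarantee continuity of $c$ — and up to the segment $k=\mp\i\mu$, $0<\mu<\pm\Im(\beta)$, where $\cK$ is replaced by its boundary value $\cH^\pm$ while remaining in $L^2(\R_+)$.
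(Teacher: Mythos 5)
Your proposal is correct and follows essentially the same route as the paper: the paper's proof likewise reduces everything to the rank-one structure of the kernel, cites the normalization integrals of Corollaries \ref{prop-lag2} and \ref{cor:lag2} and Proposition \ref{propB6} (which are exactly your scalar identity $c\,\langle\phi|\phi\rangle=1$, with $\zeta_{\beta,m}$ built to make it hold), and derives the symmetry relations and regularity directly from the explicit expressions. The only immaterial discrepancy is your aside on how the key integral is obtained — the appendix derives it from the Wronskian/Lagrange identity of Lemma \ref{lem_int_id} followed by L'Hospital's rule (contour deformation appears only in Proposition \ref{propB6}) — but since you invoke those results rather than reprove them, this changes nothing.
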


\begin{proof}
The fact that $P_{\beta,m}(-k^2)$ are rank-one projections follows directly from their expressions together with Corollaries \ref{prop-lag2} and \ref{cor:lag2} and Proposition \ref{propB6}. Continuity on the domain \eqref{eq:domain_proj} and holomorphy on $\C\times\Pi\times\{\Re(k)>0\}$, as well as the relations \eqref{eq:proj1}--\eqref{eq:proj3}, follow again from the expressions involved in the definitions of $P_{\beta,m}(-k^2)$.
\end{proof}

We recall from Proposition \ref{propo2} that the operators $H_{\beta,m,\kappa}$,  $H_{\beta,0}^\nu$ and $H_{\beta,\frac12}^\nu$ are self-transposed. Moreover, it follows from Theorem \ref{spectrum} and its proof that all eigenvalues of these operators are simple. If $\lambda$ is a simple eigenvalue of a self-transposed operator $H$ associated to an eigenvector $u$ such that $\langle u | u \rangle = 1$, we define the \emph{self-transposed eigenprojection} associated to $\lambda$ as
\begin{equation*}
P = \langle u | \cdot \rangle u.
\end{equation*}
In the case where $\lambda$ is in addition an isolated point of the spectrum, it is then easy to see that the self-transposed eigenprojection $P$ coincides with the usual Riesz projection corresponding to $\lambda$.

\begin{theorem}
Let $\beta \in \C$, $m\in\Pi\setminus\big\{-\frac12,0,\frac12\big\}$, $\kappa \in \C \cup \{ \infty \}$ and $\nu \in \C \cup \{ \infty \}$. Let $\lambda \in \C$ be an eigenvalue of $H_{\beta,m,\kappa}$,  $H_{\beta,0}^\nu$ or $H_{\beta,\frac12}^\nu$ respectively. Then the self-transposed eigenprojection is $P_{\beta,m}(\lambda)$ for the corresponding value of $m$.
\end{theorem}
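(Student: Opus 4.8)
The plan is to identify, for each of the three operators and each eigenvalue $\lambda$, the associated eigenfunction $u$, and then to recognise $P_{\beta,m}(\lambda)$ as the rank-one self-transposed projection onto $\C u$. First I would read off the eigenfunctions from the proof of Theorem \ref{spectrum}. When $\lambda=-k^2$ with $\Re(k)>0$ is an eigenvalue, that proof shows that the only solution of $L_{\beta,m^2}u=-k^2u$ which is square integrable near infinity and satisfies the boundary condition at $0$ defining the operator is, up to a scalar, $u=\cK_{\frac{\beta}{2k},m}(2k\,\cdot)$ (with the relevant value of $m$, including $m=0$ and $m=\frac12$); when $\lambda=\mu^2$ with $0<\mu<\pm\Im(\beta)$ the eigenfunction is $\cH^\pm_{\frac{\beta}{2\mu},m}(2\mu\,\cdot)$; and when $\lambda=0$ it is $h^\pm_{\beta,m}$. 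In each case this is exactly the function entering quadratically in the integral kernel of $P_{\beta,m}(\lambda)$ given in \eqref{defo} and in its extensions to $\lambda\in{]0,\infty[}$ and $\lambda=0$.

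Next I would invoke the preceding proposition, which already establishes that $P_{\beta,m}(\lambda)$ is a rank-one projection with $P_{\beta,m}(\lambda)^\#=P_{\beta,m}(\lambda)$. Since its kernel is a scalar multiple of $u(x)\,u(y)$, its range is the line $\C u$; and because $\lambda$ is a simple eigenvalue (by Theorem \ref{spectrum} and its proof), this line is precisely the eigenspace. Hence $P_{\beta,m}(\lambda)$ is a self-transposed rank-one projection whose range is the eigenspace of $\lambda$.

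It then remains to observe that such an operator must coincide with the self-transposed eigenprojection. Writing any rank-one projection onto $\C u$ as $P=\ell(\cdot)\,u$ with a linear functional $\ell$ satisfying $\ell(u)=1$, the relation $P^\#=P$ together with the symmetry of $\langle\cdot|\cdot\rangle$ forces $\ell(h)\langle u|g\rangle=\ell(g)\langle u|h\rangle$ for all $h,g$; specialising $g=u$ gives $\ell=\langle u|\cdot\rangle/\langle u|u\rangle$, so that $P=\langle u|\cdot\rangle\,u/\langle u|u\rangle$, which is by definition the self-transposed eigenprojection associated with $\lambda$. Equivalently, the projection identity $P_{\beta,m}(\lambda)^2=P_{\beta,m}(\lambda)$ says precisely that the scalar prefactor in \eqref{defo} equals $\big(\int_0^\infty u(x)^2\,\d x\big)^{-1}$. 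The genuine analytic content—the evaluation of this square integral and the verification that it is encoded by the factor $\zeta_{\beta,m}(k)$—is the main obstacle, but it has already been carried out in Corollaries \ref{prop-lag2} and \ref{cor:lag2} and Proposition \ref{propB6} of the appendix, on which the preceding proposition rests; given these results the present statement reduces to the bookkeeping above.
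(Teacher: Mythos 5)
Your proposal is correct and follows essentially the same route as the paper: identify the eigenfunction ($\cK_{\frac{\beta}{2k},m}(2k\cdot)$, $\cH^\pm_{\frac{\beta}{2\mu},m}(2\mu\cdot)$ or $h^\pm_{\beta,m}$) from the proof of Theorem \ref{spectrum}, and use Corollaries \ref{prop-lag2} and \ref{cor:lag2} and Proposition \ref{propB6} to recognise the normalizing factor in \eqref{defo} as $\langle u|u\rangle^{-1}$. The only difference is that you make explicit the small abstract step that a self-transposed rank-one projection onto the eigenspace must coincide with the self-transposed eigenprojection, which the paper leaves implicit.
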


\begin{proof}
We prove the theorem in the case where $\lambda = -k^2$ with $\Re(k)>0$ and $m \notin \big\{ -\frac12 , 0 , \frac12 \big\}$.
The other cases are similar.

From the proof of Theorem \ref{spectrum}, we know that if $\lambda$ is an eigenvalue of $H_{\beta,m,\kappa}$, then a corresponding eigenstate is given by $x \mapsto \Ka_{\frac{\beta}{2k},m}(2kx)$. Corollary \ref{prop-lag2} shows that
\begin{equation*}
\big \langle \Ka_{\frac{\beta}{2k},m}(2k \cdot ) \mid \Ka_{\frac{\beta}{2k},m}(2k \cdot) \big \rangle =
\frac{\pi}{\sin(2\pi m)} \frac{2m+\frac{\beta}{2k}\psi\big(\frac12+m-\frac{\beta}{2k}\big)-\frac{\beta}{2k}\psi\big(\frac12-m-\frac{\beta}{2k}\big)}
{k\Gamma\big(\frac12+m-\frac{\beta}{2k}\big) \Gamma\big(\frac12-m-\frac{\beta}{2k}\big)}.
\end{equation*}
This proves that $P_{\beta,m}(-k^2)$ is the self-transposed eigenprojection corresponding to $\lambda$, as claimed.
\end{proof}

The point $k=0$ is rather special for the family $P_{\beta,m}( - k^2 )$, as shown in next proposition.

\begin{proposition}\label{prop:not_continuous}
Let $m \in \Pi$ and $\beta \in \C$ such that $\pm \Im( \sqrt{\beta} ) > 0$. Then the map $k \mapsto P_{\beta,m}( - k^2 )$ is not continuous at $k=0$.
\end{proposition}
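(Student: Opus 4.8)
The plan is to disprove continuity by testing against fixed vectors. Since $P_{\beta,m}(-k^2)$ is the self-transposed rank-one projection built from $x\mapsto\cK_{\frac{\beta}{2k},m}(2kx)$, for $f,g\in C_{\rm c}^\infty(\R_+)$ one has
\[
\langle f|P_{\beta,m}(-k^2)g\rangle=\frac{1}{N_{\beta,m}(k)}\,\langle f|\cK_{\frac{\beta}{2k},m}(2k\,\cdot)\rangle\,\langle g|\cK_{\frac{\beta}{2k},m}(2k\,\cdot)\rangle ,
\]
where $N_{\beta,m}(k)=\zeta_{\beta,m}(k)\big/\big(k\Gamma(\tfrac12+m-\tfrac{\beta}{2k})\Gamma(\tfrac12-m-\tfrac{\beta}{2k})\big)$ is the normalisation read off from \eqref{defo} (equivalently from Corollary \ref{prop-lag2}). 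If $k\mapsto P_{\beta,m}(-k^2)$ were continuous at $0$, these numbers would converge to $\langle f|P_{\beta,m}(0)g\rangle$ for every $f,g$; I will produce sequences $k_n\to0$ inside the domain \eqref{eq:domain_proj} for which the limit differs from this value.

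First I would record the threshold asymptotics of the two ingredients. Using the small-argument expansion of $\cK_{\frac{\beta}{2k},m}$ from the proof of Theorem \ref{spectrum} together with the reflection formula for $\Gamma$, one finds on any compact $x$-set that $\cK_{\frac{\beta}{2k},m}(2kx)=\Lambda_{\beta,m}(k)\,h^{\tau}_{\beta,m}(x)\big(1+o(1)\big)$ as $k\to0$ with $\Re(k)>0$, where $h^{\pm}_{\beta,m}$ are the zero-energy Hankel solutions of \eqref{hankel}, $\tau=\operatorname{sgn}\!\big(\Im(\beta/k)\big)$, and the selection of the sign comes from the large-parameter behaviour $\Gamma(\tfrac12-m-\tfrac{\beta}{2k})/\Gamma(\tfrac12+m-\tfrac{\beta}{2k})\sim e^{2\i\tau\pi m}(\beta/2k)^{-2m}$. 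The delicate input is the normalisation: inserting $\psi(z)\sim\ln z-\tfrac1{2z}-\dots$ into $\zeta_{\beta,m}(k)$, one checks that with $w=\beta/2k$ both the $O(w^{-1})$ and $O(w^{-2})$ contributions of the digamma difference cancel, leaving
\[
\zeta_{\beta,m}(k)=\frac{2\pi m\,(1-4m^2)}{3\sin(2\pi m)}\,\frac{k^2}{\beta^2}\big(1+o(1)\big).
\]
The factor $k^2$ coming from $k\,\Lambda_{\beta,m}(k)^2$ then cancels against $\zeta_{\beta,m}(k)$, and a short computation using $\Gamma(2m)\Gamma(1-2m)=\pi/\sin(2\pi m)$ gives $\Lambda_{\beta,m}(k)^2/N_{\beta,m}(k)\to\frac{3\beta\sin(2\pi m)}{m(4m^2-1)}e^{2\i\tau\pi m}$. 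Hence, along any ray on which $\tau$ is constant,
\[
\langle f|P_{\beta,m}(-k^2)g\rangle\longrightarrow\frac{3\beta\sin(2\pi m)}{m(4m^2-1)}\,e^{2\i\tau\pi m}\,\langle f|h^{\tau}_{\beta,m}\rangle\,\langle g|h^{\tau}_{\beta,m}\rangle ,
\]
whose $\tau=\pm$ version with $\pm\Im(\sqrt\beta)>0$ reproduces $\langle f|P_{\beta,m}(0)g\rangle$ exactly, factor $3$ included.

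The conclusion is then a matter of realising both values of $\tau$. Writing $\arg(\beta/k)=\arg\beta-\arg k$ with $\arg k\in(-\tfrac\pi2,\tfrac\pi2)$, the sign of $\Im(\beta/k)$ is non-constant on $\{\Re(k)>0\}$ precisely when $\beta\notin\i\R$; choosing $k_n\to0$ with $\tau=+$ and $k_n'\to0$ with $\tau=-$ yields limits proportional to $e^{2\i\pi m}h^{+}_{\beta,m}\otimes h^{+}_{\beta,m}$ and $e^{-2\i\pi m}h^{-}_{\beta,m}\otimes h^{-}_{\beta,m}$. Since $h^{+}_{\beta,m}$ and $h^{-}_{\beta,m}$ are linearly independent ($2m\notin\Z$), one can pick $f,g$ with $\langle f|h^{-}_{\beta,m}\rangle=\langle g|h^{-}_{\beta,m}\rangle=0$ but $\langle f|h^{+}_{\beta,m}\rangle\langle g|h^{+}_{\beta,m}\rangle\neq0$, so the two limits differ and $k\mapsto P_{\beta,m}(-k^2)$ has no limit at $0$. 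For the remaining purely imaginary $\beta$ (where only one $\tau$ occurs in the open right half-plane and the right-half-plane limit does equal $P_{\beta,m}(0)$) I would instead let $k\to0$ along the admissible imaginary segment $k=\mp\i\mu$ of \eqref{eq:domain_proj}: there $\beta/2k$ is real, the reflection phase $e^{2\i\tau\pi m}$ is absent, and the analogous computation (now with the oscillatory eigenfunction $\cH^{\pm}_{\frac{\beta}{2\mu},m}(2\mu\,\cdot)$) produces a different limiting constant, again distinct from $\langle f|P_{\beta,m}(0)g\rangle$.

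The hard part will be the threshold asymptotics of the second paragraph: proving the connection formula $\cK_{\frac{\beta}{2k},m}(2k\,\cdot)\sim\Lambda_{\beta,m}(k)\,h^{\tau}_{\beta,m}$, carrying out the two-step cancellation that pins $\zeta_{\beta,m}(k)$ down at order $k^2$, and tracking the reflection phase $e^{2\i\tau\pi m}$ whose sign is dictated by the direction of approach—this directional dependence being exactly the mechanism that destroys continuity at $k=0$.
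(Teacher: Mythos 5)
Your directional analysis is correct as far as it goes, and for $\beta\notin\i\R$ it gives a legitimate proof that is genuinely different from the paper's: on $\{\Re(k)>0\}$ the sign $\tau=\mathrm{sgn}\,\Im(\beta/k)$ takes both values, the two sectors produce weak limits proportional to $\e^{2\i\pi m}\langle f|h^+_{\beta,m}\rangle\langle g|h^+_{\beta,m}\rangle$ and $\e^{-2\i\pi m}\langle f|h^-_{\beta,m}\rangle\langle g|h^-_{\beta,m}\rangle$ respectively, and since $h^+_{\beta,m}$, $h^-_{\beta,m}$ are independent you can separate them with compactly supported test functions. That rules out even weak-operator continuity along the half-plane, which is not how the paper argues. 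Your asymptotics of $\zeta_{\beta,m}$ and the reflection phase also agree with the paper's computation.

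The gap is the purely imaginary case, and it is not repairable by the route you propose. On the admissible segment $k=\mp\i\mu$ with $\beta=\pm\i b$, $b>0$, one has $\frac{\beta}{2k}=-\frac{b}{2\mu}<0$, and the reflection phase does \emph{not} disappear: the Gamma ratio $\Gamma\big(\tfrac12-m-\tfrac{\beta}{2k}\big)/\Gamma\big(\tfrac12+m-\tfrac{\beta}{2k}\big)\sim(b/2\mu)^{-2m}$ now carries no phase, but the power $\big(\tfrac{\beta}{2k}\big)^{2m}$ evaluated at $\arg\big(\tfrac{\beta}{2k}\big)=\pm\pi$ (equivalently, the explicit prefactor $\e^{\pm\i\pi m}$ and the boundary-value functions $\cH^\pm_{\frac{\beta}{2\mu},m}$ in the definition of $P_{\beta,m}(\mu^2\pm\i0)$) restores exactly $\e^{\pm2\i\pi m}$. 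Carrying out your own computation on the segment yields the limiting constant $\frac{3\beta\sin(2\pi m)}{m(4m^2-1)}\e^{\pm2\i\pi m}\langle f|h^\pm_{\beta,m}\rangle\langle g|h^\pm_{\beta,m}\rangle$, i.e.\ precisely $\langle f|P_{\beta,m}(0)g\rangle$. So for $\beta\in\i\R^\times$ every pairing against compactly supported $f,g$ converges to the ``right'' value, both in the half-plane and on the segment, and no contradiction can be extracted from such pairings. This is exactly why the paper's proof takes a different route: it first \emph{proves} this weak convergence, and then shows that the strong limit nevertheless fails by expanding $\langle(P_{\beta,m}(-k^2)-P_{\beta,m}(0))f|(P_{\beta,m}(-k^2)-P_{\beta,m}(0))f\rangle$ and showing that the cross term $\langle P_{\beta,m}(0)f|P_{\beta,m}(-k^2)f\rangle$ tends to $0$ rather than to $\langle P_{\beta,m}(0)f|f\rangle$. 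The key input is Remark \ref{curious}: the overlap $\langle g_{\beta,m,k}|g_{\beta,m,0}\rangle$ (your $\Lambda_{\beta,m}(k)\cK_{\frac{\beta}{2k},m}(2k\cdot)$ paired with $h^\pm_{\beta,m}$) tends to $0$ as $k\to0$, even though $g_{\beta,m,k}\to g_{\beta,m,0}$ pointwise and $\langle g_{\beta,m,0}|g_{\beta,m,0}\rangle\neq0$ by Proposition \ref{propB6} --- a failure of dominated convergence at infinity. Your proposal contains no substitute for this mechanism, so the purely imaginary case remains unproved; you would also need to treat $m\in\big\{-\tfrac12,0,\tfrac12\big\}$, where the $\sin(2\pi m)$ denominators degenerate.
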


\begin{proof}
We consider the case where $m \notin \{ -\frac12 , 0 , \frac12 \}$. The other cases are similar.

First, we claim that for all continuous and compactly supported function $f$,
\begin{equation*}
\lim_{k\to 0} \big \langle f | P_{\beta,m}( - k^2 ) f \big \rangle = \big \langle f | P_{\beta,m}( 0 ) f \big \rangle,
\end{equation*}
where $k\in \C$ is chosen such that $\Re(k)>0$ and $\pm \big(\arg(\beta)-\arg(k)\big)\in ]\varepsilon, \pi-\varepsilon[$ with $\varepsilon > 0$. To shorten the expressions below, we set in this proof
\begin{equation*}
g_{\beta,m,k}( x ) :=\mp \i \frac{\Gamma\big(\frac12+m-\frac{\beta}{2k}\big)}{\sqrt{\pi}}
\Big(\frac{\beta}{2k}\Big)^{\frac12-m}
\cK_{\frac{\beta}{2k},m}(2k x ) ,
\end{equation*}
and
\begin{equation*}
g_{\beta,m,0}( x ) := (\beta x )^{\frac14}\cH_{2m}^\pm(2\sqrt{\beta x}) .
\end{equation*}
We show that $g_{\beta,m,k}$ is uniformly bounded, for $k$ satisfying the conditions above, by a locally integrable function. From the definition \eqref{Taylor_2} of $\cI_{\beta,m}$ and proceeding as in the proof of Proposition \ref{propB1}, we obtain that,
for $k\in \C$ such that $\Re(k)>0$, $|k|<1$, and $\pm \big(\arg(\beta)-\arg(k)\big)\in ]\varepsilon, \pi-\varepsilon[$ with $\varepsilon > 0$,
\begin{align*}
\Big | \Big ( \frac{ \beta }{2k} \Big )^{\frac12+m} \cI_{\frac{\beta}{2k},m}(2kx) \Big |
&=\Big | (\beta x)^{\frac{1}{2}+m}\e^{- k x } \suma{j=0}{\infty}\frac{\big(\frac{1}{2}+m - \frac{\beta}{2k} \big)_j}{\Gamma(1+2m+j)}\;\!\frac{ (2kx)^j}{j!} \Big | \\
&\le  |\beta x|^{\frac{1}{2}+m} \suma{j=0}{\infty} \frac{c^j x^j }{ | \Gamma(1+2m+j) |} ,
\end{align*}
for some constant $c>0$ depending on $\beta$ and $m$ but independent of $k$ and $x$. Using that
\begin{align*}
\nonumber & g_{\beta,m,k}( x ) = \frac{\mp\i\sqrt{\pi}}{\sin(2\pi m)}
\Big(\frac{\beta}{2k}\Big)^{\frac12-m}
\Big(-\frac{\Gamma\big(\frac12+m-\frac{\beta}{2k}\big)}{\Gamma\big(\frac{1}{2}-m-\frac{\beta}{2k}\big)}
\cI_{\frac{\beta}{2k},m}(2kx)
+ \cI_{\frac{\beta}{2k},-m}(2kx)\Big),
\end{align*}
together with Lemma \ref{lem_key}, one then deduces that
\begin{align*}
 & \big | g_{\beta,m,k}( x ) \big | \le c_1 \e^{c_2 x } ,
\end{align*}
for some positive constants $c_1$, $c_2$ independent of $k$ and $x$.

The previous bound together with the dominated convergence theorem and Proposition \ref{propB1} show that
\begin{align*}
\lim_{k\to0} \big \langle g_{\beta,m,k} | f \big \rangle &= \big \langle g_{\beta,m,0} | f \big \rangle ,
\end{align*}
for all continuous and compactly supported function $f$, and for $k$ satisfying the conditions exhibited above.
We then have that
\begin{align*}
& \big \langle f | P_{\beta,m}( - k^2 ) f \big \rangle \\
& = \frac{k\sin(2\pi m) \Gamma\big(\frac12+m-\frac{\beta}{2k}\big) \Gamma\big(\frac12-m-\frac{\beta}{2k}\big)}    {\pi\big[2m+\frac{\beta}{2k}\psi\big(\frac12+m-\frac{\beta}{2k}\big)-\frac{\beta}{2k}\psi\big(\frac12-m-\frac{\beta}{2k}\big)\big]} \big \langle \cK_{\frac{\beta}{2k},m}(2k \cdot ) | f \big \rangle^2 \\
& = - \frac{k\sin(2\pi m)}    {2m+\frac{\beta}{2k}\psi\big(\frac12+m-\frac{\beta}{2k}\big)-\frac{\beta}{2k}\psi\big(\frac12-m-\frac{\beta}{2k}\big)} \frac{ \Gamma\big(\frac12-m-\frac{\beta}{2k}\big)}{ \Gamma\big(\frac12+m-\frac{\beta}{2k}\big) } \Big(\frac{\beta}{2k}\Big)^{2m-1} \big \langle g_{\beta,m,k} | f \big \rangle^2 \\
& = \frac{2k^2\sin(2\pi m)}{\beta\big[\frac{\beta}{2k} \big(\frac{2k}{\beta}\big)^3 \frac{m}{6}( -1 + 4m^2 ) +o(1) \big ] } \frac{ \Gamma\big(\frac12-m-\frac{\beta}{2k}\big)}{ \Gamma\big(\frac12+m-\frac{\beta}{2k}\big) } \Big(\frac{\beta}{2k}\Big)^{2m} \big \langle g_{\beta,m,k} | f \big \rangle^2 \\
& \underset{k\to0}{\to} \frac{3\beta\sin(2\pi m)}{m ( 4m^2 - 1 ) } \e^{\pm\i\pi 2m} \big \langle g_{\beta,m,0} | f \big \rangle^2 \\
& = \big \langle f | P_{\beta,m}( 0 ) f \big \rangle,
\end{align*}
where we used Lemma \ref{lemB5} in the third equality.

Now, we claim that $P_{\beta,m}(-k^2)$ is not continuous at $k=0$ for the strong operator topology. Indeed, using that $P_{\beta,m}(-k^2)$ is a self-transposed projection, we infer that, for $f$ continuous and compactly supported,
\begin{align*}
& \big \langle \big( P_{\beta,m}( -k^2 ) - P_{\beta,m}( 0 ) \big) f | \big( P_{\beta,m}( -k^2 ) - P_{\beta,m}( 0 ) \big) f \big \rangle \\
& = \big \langle P_{\beta,m}( -k^2 ) f | f \big \rangle + \big \langle P_{\beta,m}( 0 ) f | f \big \rangle  - 2 \big \langle P_{\beta,m}( 0 ) f | P_{\beta,m}( -k^2 ) f \big \rangle .
\end{align*}
A similar computation as above gives
\begin{align*}
& \big \langle P_{\beta,m}( 0 ) f | P_{\beta,m}( -k^2 ) f \big \rangle \\
& = \frac{3\beta\sin(2\pi m)}{m \e^{\mp \i\pi 2m} \big(4m^2-1\big) } \frac{2k^2\sin(2\pi m)}{\beta\big[\frac{\beta}{2k} \big(\frac{2k}{\beta}\big)^3 \frac{m}{6}( -1 + 4m^2 ) +o(1) \big ] } \frac{ \Gamma\big(\frac12-m-\frac{\beta}{2k}\big)}{ \Gamma\big(\frac12+m-\frac{\beta}{2k}\big) } \Big(\frac{\beta}{2k}\Big)^{2m} \\
& \quad \times \langle f | g_{\beta,m,k} \rangle \langle g_{\beta,m,k} | g_{\beta,m,0} \rangle \langle g_{\beta,m,0} | f \rangle \\
& \underset{k\to0}{\to} 0 ,
\end{align*}
since $\lim\limits_{k\to 0}\langle g_{\beta,m,k} | g_{\beta,m,0} \rangle =0$ by Remark \ref{curious}, while the other terms converge. Therefore,
\begin{align*}
& \big \langle \big( P_{\beta,m}( -k^2 ) - P_{\beta,m}( 0 ) \big) f | \big( P_{\beta,m}( -k^2 ) - P_{\beta,m}( 0 ) \big) f \big \rangle \underset{k\to0}{\to} 2 \big \langle P_{\beta,m}( 0 ) f | f \big \rangle \neq 0 ,
\end{align*}
for suitably chosen compactly supported functions $f$. This proves that $P_{\beta,m}(-k^2)$ is not continuous at $k=0$.
\end{proof}

\appendix
\section{The Whittaker equation}\label{sec_Whittaker}
\setcounter{equation}{0}
\renewcommand{\theequation}{A.\arabic{equation}}

\subsection{General theory}\label{subsec:general}

In this section we collect basic information about the Whittaker equation.
This should be considered as a supplement to \cite[Sec.~2]{DR2}.

The Whittaker equation is represented by the equation
\begin{equation}\label{whit}
\Big(L_{\beta,m^2}+\frac14\Big)f := \bigg(-\partial_z^2+\Big(m^2-\frac14\Big)\frac{1}{z^2}-\frac{\beta}{z}+\frac{1}{4}\bigg)f=0.
\end{equation}
We observe that the equation does not change when we replace $m$ with $-m$. It has also another symmetry:
\begin{equation}\label{symme}
\Big(L_{\beta,m^2}+\frac14\Big)f(z)=0\quad\Rightarrow\quad \Big(L_{-\beta,m^2}+\frac14\Big)f(-z)=0.
\end{equation}
Solutions of \eqref{whit} are provided by the functions
$z \mapsto \Ia_{\beta,\pm m}(z)$ which are defined by
\begin{align}\label{Taylor_2}
\cI_{\beta,m}(z)
\nonumber &=z^{\frac{1}{2}+m}\e^{\mp\frac{z}{2}}\frac{{}_{1}F_{1}\big(\frac{1}{2}+m\mp\beta;1+2m;\pm z\big)}{\Gamma(1+2m)} \\
&= z^{\frac{1}{2}+m}\e^{\mp\frac{z}{2}}\suma{k=0}{\infty}\frac{\naw{\frac{1}{2}+m\mp\beta }_k}{\Gamma(1+2m+k)}\;\!\frac{(\pm z)^k}{k!},
\end{align}
where $(a)_k:=a(a+1)\cdots(a+k-1)$ and $(a)_0=1$ are the usual Pochhammer's symbols and ${}_1F_1$ is Kummer's confluent hypergeometric function.
For $\Re(m)>-\frac{1}{2}$ and $\Re\big(m\mp \beta +\frac{1}{2}\big)>0$
the function $\cI_{\beta,m}$ has also an integral representation given by
\begin{equation*}
\cI_{\beta,m}(z)
=\frac{z^{\frac12+m}}{\Gamma\big(\frac12+m+\beta\big)
\Gamma\big(\frac12+m-\beta\big)}
\int_0^1\e^{\pm z(s-\frac12)}s^{m\mp\beta-\frac12}(1-s)^{m\pm \beta-\frac12}\;\!\d s.
\end{equation*}
Based on \eqref{Taylor_2} one easily gets
\begin{equation}\label{wro1}
\Wr\big(\Ia_{\beta,m},\Ia_{\beta,-m};x\big)
=-\frac{\sin(2\pi m)}{\pi}
\end{equation}
as well as the following identity
\begin{equation}\label{eq_miracle}
\cI_{\beta,m}(z)=\e^{\mp\i\pi(\frac12+m)}  \cI_{-\beta,m}\big(\e^{\pm\i\pi}z\big).
\end{equation}

Another solution of \eqref{whit} is provided by the function $z \mapsto \Ka_{\beta,m}(z)$.
For $m\not \in \frac12\Z$ it can be defined by the following relation:
\begin{equation}\label{generic}
\cK_{\beta,m}
= \frac{\pi}{\sin(2\pi m)}\bigg(-\frac{\cI_{\beta,m}}{\Gamma\big(\frac{1}{2}-m-\beta\big)}
+ \frac{\I{\beta}{-m}}{\Gamma\big(\frac{1}{2}+m-\beta\big)}\bigg).
\end{equation}
For the remaining $m$ we can extend the definition of
$\Ka_{\beta,m}$ by continuity, see  Subsect.~\ref{The degenerate case}.
Note that $\Ka_{\beta,-m}=\Ka_{\beta,m}$, and that the function
$\Ka_{\beta,m}$ can also be expressed in
terms of the function ${}_2F_0$, namely:
\begin{equation*}
\cK_{\beta,m}(z)
= z^\beta\e^{-\frac{z}{2}}{}_2F_0\big(\tfrac12+m-\beta,
\tfrac12-m-\beta;-;-z^{-1}\big).
\end{equation*}
An alternative definition of $\Ka_{\beta,m}$  can be provided by an integral representation valid for $\Re\big(-\beta \mp m+\frac{1}{2}\big)>0$ and $\Re(z)>0$:
\begin{equation*}
\cK_{\beta,m}(z)
=\frac{z^{\frac12\mp m}\e^{-\frac{z}{2}}}{\Gamma\big(\frac12-\beta\mp m\big)}
\int_0^\infty\e^{-zs}s^{-\frac12-\beta\mp m}(1+s)^{-\frac12+\beta\mp m}\d s.
\end{equation*}
Note that the function $\Ka_{\beta,m}$ decays exponentially for large $\Re(z)$, more precisely, if $\varepsilon>0$ and $\abs{\arg(z)}<\frac32\pi-\varepsilon$, then one has
\begin{equation}\label{Kbm-around-infinity}
\cK_{\beta,m}(z) = z^\beta\;\!\e^{-\frac{z}{2}} \big(1+O(z^{-1})\big).
\end{equation}
By using the relation \eqref{generic} one also obtains that
\begin{equation}\label{Wr}
\Wr\big(\Ia_{\beta,m},\Ka_{\beta,m};x\big) = -\frac{1}{\Gamma\big(\frac{1}{2}+m-\beta\big)}.
\end{equation}

We would like to treat
$\cI_{\beta,m}$,  $\cI_{\beta,-m}$  and  $\cK_{\beta, m}$ as the principal solutions of the Whittaker equation \eqref{whit}. There are however cases for which this is not sufficient. Therefore, we introduce below a fourth solution, which we denote by $\cX_{\beta,m}$.
To the best of our knowledge, this function has never appeared elsewhere in the literature.

The function $\cK_{\beta,m}$ is distinguished by the fact that it decays exponentially,
while the solutions $\cI_{\beta,\pm m}(z)$
explode exponentially, see \cite[Eq.~(2.14) \& (2.22)]{DR2}.
This is also the case for the analytic continuations of
$\Ka_{-\beta,m}$ by the angles $\pm\pi$, which by the symmetry \eqref{symme} are also solutions of \eqref{whit}. It will be convenient to introduce a name for a solution constructed from these two analytic continuations. There is some arbitrariness for this choice, but we have decided on:
\begin{equation}\label{eq_def_X}
\cX_{\beta,m}(z):=\tfrac12\Big(
\e^{-\i\pi(\frac{1}{2}+m)}\cK_{-\beta,m}\big(\e^{\i\pi}z\big)
+ \e^{\i\pi(\frac{1}{2}+m)}\cK_{-\beta,m}\big(\e^{-\i\pi}z\big)\Big).
\end{equation}

As a consequence of this definition and of \eqref{eq_miracle} one gets the relations
\begin{equation}\label{generic*1}
\cX_{\beta,m}(z)
= -\frac{\pi}{\sin(2\pi m)}\bigg(\frac{\cI_{\beta,m}(z)}{\Gamma\big(\frac{1}{2}-m+\beta\big)}
- \frac{\cos(2\pi m)\cI_{\beta,-m}(z)}{\Gamma\big(\frac{1}{2}+m+\beta\big)}\bigg),
\end{equation}
and
\begin{equation*}
\e^{\mp\i\pi(\frac12+m)}\cK_{-\beta,m}\big(\e^{\pm\i\pi}z\big)
=\cX_{\beta,m}(z)\mp\frac{\i\pi\cI_{\beta,-m}(z)}{\Gamma\big(\frac12+m+\beta\big)}.
\end{equation*}
In addition, by using the equalities
\begin{align}\label{eq_cos}
\nonumber \cos\big(\pi(m-\beta)\big) & = \cos\big(2\pi m - \pi(m+\beta)\big) \\
&= \cos(2\pi m)\cos \big(\pi(m+\beta)\big)+ \sin(2\pi m)\sin \big(\pi(m+\beta)\big),
\end{align}
one infers from \eqref{generic} and \eqref{generic*1} that
\begin{align*}
& \frac{\cos(2\pi m)\cK_{\beta,m}}{\Gamma\big(\frac12+m+\beta\big)}-
\frac{\cX_{\beta,m}}{\Gamma\big(\frac12+m-\beta\big)} \\
&= \frac{1}{\sin(2\pi m)}
\Big(\cos\big(\pi(m-\beta)\big)-\cos(2\pi m)\cos\big(\pi(m+\beta)\big)\Big)
\cI_{\beta,m} \\
& = \sin\big(\pi(m+\beta)\big)\cI_{\beta,m},
\end{align*}
which finally leads to the relation
\begin{equation}\label{eq_lin_comb}
\cI_{\beta,m} =\frac{1}{\sin(\pi(m+\beta))}\bigg(
\frac{\cos(2\pi m)}{\Gamma\big(\frac12+m+\beta\big)} \cK_{\beta,m}
 - \frac{1}{\Gamma\big(\frac12+m-\beta\big)}  \cX_{\beta,m}\bigg).
\end{equation}

By taking formulas \eqref{generic}, \eqref{generic*1}, and  \eqref{eq_cos}
into account, one infers that the Wronskian is provided by
\begin{equation*}
\Wr(\cK_{\beta,m},\cX_{\beta,m};x)=-\sin\big(\pi(m+\beta)\big).
\end{equation*}
Hence for $m+\beta\in\Z$ the solutions $\cK_{\beta,m}$ and $\cX_{\beta,m}$ are proportional to one another.
In fact, for such $\beta,m$, we have
\begin{equation*}
\cX_{\beta,m}(z)=\frac{\Gamma\big(\frac12-m-\beta\big)}{\Gamma\big(\frac12-m+\beta\big)} \cK_{\beta,m}(z).
  \end{equation*}
Note that this corresponds to the lines
$m+\beta=n\in\Z$. However in our applications, we need $\cX_{\beta,m}$ on the lines
$m+\beta-\frac12=n\in\Z$, where
 $\cK_{\beta,m}$ and $\cX_{\beta,m}$ are linearly independent.

\subsection{The Laguerre cases}

Let us now consider two special cases, namely when  $- \frac{1}{2}-m+\beta: = n\in\N$
and when $- \frac{1}{2}-m-\beta: = n\in\N$.
In the former case, observe that the Wronskian of
$\cI_{\beta,m}$  and  $\cK_{\beta,m}$ vanishes, see \eqref{Wr}.
It means that in such a case these two functions are proportional to one another.
In order to deal with this situation we define, for
$p\in\C$ and $n\in \N$, the {\em Laguerre polynomials}
by the formulas
\begin{align*}
L_n^{(p)}(z) &= \frac{z^{-p}\e^z}{n!} \frac{\d^n}{\d z^n}\big(\e^{-z}z^{p+n}\big) \\
& =\sum_{k=0}^n\frac{(p+k+1)_{n-k}(-z)^k}{(n-k)!k!} \\
&=\frac{(p+1)_n}{n!}{}_1F_1(-n;{p+1};z) \\
&={\frac{(-1)^n}{n!}z^n {}_2F_0(-n,-p-n;-;-z^{-1}).}
\end{align*}
Then, by setting $2m=p$, we get
\begin{equation*}
\cI_{\frac{1+p}{2}+n,\frac{p}{2}} = \frac{n!\;\!z^{\frac{1+p}{2}}\e^{-\frac{z}{2}}}{\Gamma(1+p+n)}L_n^{(p)}.
\end{equation*}
Note that this solution can also be expressed in terms of the $\cK_{\beta,m}$ function, namely
\begin{equation}\label{eq_bb}
\cK_{\frac{1+p}{2}+n,\frac{p}{2}} = (-1)^n n!\, z^{\frac{1+p}{2}}\e^{-\frac{z}{2}}L_n^{(p)}.
\end{equation}
We shall call this situation the {\em decaying Laguerre case}.
In this case the relation \eqref{eq_lin_comb} reduces to
\begin{equation}\label{eq_bbb}
\cI_{\frac{1+p}{2}+n,\frac{p}{2}} =
\frac{(-1)^n}{\Gamma(1+p+n)} \cK_{\frac{1+p}{2}+n,\frac{p}{2}},
\end{equation}
and more generally for $\ell\in \Z$ one has
\begin{equation*}
\cI_{\frac{1+p}{2}+\ell,\frac{p}{2}} =
\frac{(-1)^\ell}{\Gamma(1+p+\ell)} \cK_{\frac{1+p}{2}+\ell,\frac{p}{2}}
+ \frac{(-1)^{\ell+1}}{\cos(\pi p)\Gamma(-\ell)}  \cX_{\frac{1+p}{2}+\ell,\frac{p}{2}}.
\end{equation*}

In the special case $- \frac{1}{2}-m-\beta: = n\in\N$ a similar analysis with $p=2m$
leads to
\begin{equation*}
\cI_{-\frac{1+p}{2}-n,\frac{p}{2}}(z) = \frac{n!\;\!z^{\frac{1+p}{2}}\e^{\frac{z}{2}}}{\Gamma(1+p+n)}L_n^{(p)}(-z)
\end{equation*}
and to
\begin{equation}\label{eq_magic_3}
\cX_{-\frac{1+p}{2} -n, \frac{p}{2}}(z)=
\e^{\mp\i\frac{1+p}{2}\pi}\cK_{\frac{1+p}{2} +n,\frac{p}{2}}\big(\e^{\pm\i\pi}z\big)
= (-1)^nn!\;\! z^{\frac{1+p}{2}}\e^{\frac{z}{2}}L_n^{(p)}(-z).
\end{equation}
We shall call this situation the {\em exploding Laguerre case}.
In this case the relation \eqref{eq_lin_comb} reduces to
\begin{equation}\label{eq_aa}
\cI_{-\frac{1+p}{2}-n,\frac{p}{2}} = \frac{(-1)^{n}}{\Gamma(1+p+n)}
\cX_{-\frac{1+p}{2}-n,\frac{p}{2}},
\end{equation}
and more generally for $\ell\in \Z$ one has
\begin{equation*}
\cI_{-\frac{1+p}{2}-\ell,\frac{p}{2}} =
\frac{(-1)^{\ell+1}\cos(\pi p)}{\Gamma(-\ell)} \cK_{-\frac{1+p}{2}-\ell,\frac{p}{2}}
+\frac{(-1)^\ell}{\Gamma(1+p+\ell)}  \cX_{-\frac{1+p}{2}-\ell,\frac{p}{2}}.
\end{equation*}

\subsection{The degenerate case}\label{The degenerate case}

In this section we consider the special case $m\in\frac12\Z$, which will be called {\em the degenerate case}, see Figure \ref{fig:M1}. In this situation the Wronskian of  $\cI_{\beta,m}$ and  $\cI_{\beta,-m}$ vanishes, see \eqref{wro1}. More precisely, for any $p\in \N$ one has the identity
\begin{equation*}
\cI_{\beta,-\frac{p}{2}}= \Big(-\beta - \frac{p-1}{2}\Big)_{p}\, \cI_{\beta,\frac{p}{2}},
\end{equation*}
or equivalently,
\begin{equation*}
\frac{1} {\Gamma\big(\frac{1+p}{2}-\beta\big)}\I{\beta}{-\frac{p}{2}}
= \frac{1} {\Gamma\big(\frac{1-p}{2}-\beta\big)} \cI_{\beta,\frac{p}{2}}.
\end{equation*}

Based on this equality and by a limiting procedure,
one can provide an expression for the functions $\Ka_{\beta,\frac{p}{2}}$
(see \cite[Thm.~2.2]{DR2}), namely
\begin{align}\label{eq_K_p/2}
\begin{split}
\cK_{\beta,\frac{p}{2}}(z)
& = \frac{(-1)^{p+1}\ln(z)\;\!\cI_{\beta,\frac{p}{2}}(z)}
{\Gamma\big(\tfrac{1-p}{2}-\beta\big)} \\
&+\frac{(-1)^{p+1}\e^{-\frac{z}{2}}z^{\frac{1+p}{2}}}{\Gamma\big(\frac{1-p}{2}-\beta\big)} \sum_{k=0}^\infty
\frac{\big(\frac{1+p}{2}-\beta\big)_k\;\!z^{k}}{(p+k)!\;\!k!}\\
&\quad \times \Big(\psi\big(\tfrac{1+p}{2}-\beta+k\big)-\psi(p+1+k)
-\psi(1+k)\Big)\\
&+\frac{(-1)^{p+1}\e^{-\frac{z}{2}}z^{\frac{1+p}{2}}}{\Gamma\big(\frac{1-p}{2}-\beta\big)}
\sum_{j=1}^{p}
\frac{\big(\frac{1+p}{2}-\beta\big)_{-j}(-1)^{j-1}(j-1)!z^{-j}}{(p-j)!},
\end{split}
\end{align}
where $\psi$ is the digamma function defined by $\psi(z)=\frac{\Gamma'(z)}{\Gamma(z)}$.
Note that the equality (or definition) $(a)_j=\frac{\Gamma(a+j)}{\Gamma(a)}$ has also been used for arbitrary $j\in \Z$.
For our applications the most important functions correspond to $m=\frac12$ and $m=0$:
\begin{align}\label{eq_a12}
\cK_{\beta,\frac{1}{2}}(z)
= & \frac{\ln(z)\;\!\cI_{\beta,\frac{1}{2}}(z)}
{\Gamma\big(-\beta\big)} +\frac{\e^{-\frac{z}{2}}}{\Gamma(1-\beta)} \\
&+\frac{\e^{-\frac{z}{2}}}{\Gamma\big(-\beta\big)} \sum_{k=0}^\infty
\frac{\big(1-\beta\big)_k\;\!z^{1+k}}{(1+k)!\;\!k!}  \Big(\psi\big(1-\beta+k\big)-\psi(2+k)
-\psi(1+k)\Big),\notag\\
  \label{eq_a0}
\cK_{\beta,0}(z) = & -\frac{\ln(z)\;\! \cI_{\beta,0}(z)}{\Gamma\big(\tfrac{1}{2}-\beta\big)} \\
& - \frac{\e^{-\frac{z}{2}}}{\Gamma\big(\frac{1}{2}-\beta\big)} \sum_{k=0}^\infty
\frac{\big(\frac12-\beta\big)_k\;\!z^{\frac12+k}}{(k!)^2}
\Big(\psi\big(\tfrac12-\beta+k\big)-2\psi(1+k)\Big).\notag
\end{align}

Let us still provide the expression for the function $\cX_{\beta, \frac{p}{2}}$.
Starting from its definition in \eqref{eq_def_X} and by using the expansion
\eqref{eq_K_p/2} as well as the identity provided in \eqref{eq_miracle}
one gets
\begin{align*}
\cX_{\beta,\frac{p}{2}}(z)
= &\frac{(-1)^{p+1}\ln(z)\;\!\cI_{\beta,\frac{p}{2}}(z)}
{\Gamma\big(\tfrac{1-p}{2}+\beta\big)}\\
&+\frac{(-1)^{p+1}\e^{\frac{z}{2}} z^{\frac{1+p}{2}}}{\Gamma\big(\frac{1-p}{2}+\beta\big)} \sum_{k=0}^\infty
\frac{\big(\frac{1+p}{2}+\beta\big)_k\;\!(-1)^k z^{k}}{(p+k)!\;\!k!}\\
&\quad \times \Big(\psi\big(\tfrac{1+p}{2}+\beta+k\big)-\psi(p+1+k)
-\psi(1+k)\Big)\\
&-\frac{(-1)^{p+1}\e^{\frac{z}{2}}z^{\frac{p+1}{2}}}{\Gamma\big(\frac{1-p}{2}+\beta\big)}\sum_{j=1}^{p}
\frac{\big(\frac{1+p}{2}+\beta\big)_{-j}(j-1)!z^{-j}}{(p-j)!}.
\end{align*}
In particular, the  expansions for  $m=\frac12$ and $m=0$ will be useful:
\begin{align}\label{exp_X12}
\cX_{\beta,\frac{1}{2}}(z)=&
 -\frac{1}{\Gamma(1+\beta)}+\frac{1}{\Gamma(\beta)}z\ln(z)\\
&+ \frac{1}{\Gamma(\beta)}\Big(\frac12\psi(1+\beta)+\frac12\psi(\beta)+2\gamma-1\Big)z + o(z)\notag\\
\label{exp_X0}
 \cX_{\beta,0}(z)  =&
-\frac{z^{\frac{1}{2}}}{\Gamma\big(\frac{1}{2}+\beta\big)}
\Big[(1-\beta z)\ln(z) +\Big(\psi\big(\tfrac{1}{2}+\beta\big)+2\gamma\Big) \\
& \quad -\beta\Big(
\psi\Big(\frac{1}{2}+\beta\Big)+2\gamma-2 \Big)z
\Big]
+ o\big(z^{\frac{3}{2}}\big).\notag
\end{align}

Note also that the following identity holds:
\begin{equation*}
\cX_{\beta,-\frac{p}{2}}=(-1)^p    \cX_{\beta,\frac{p}{2}},
\end{equation*}
as a consequence of \eqref{eq_def_X}.

\begin{figure}[H]
\centering
\begin{tikzpicture}
\setlength{\unitlength}{1cm}

\draw[thick, ->] (-5,0)--(5.3,0);
\draw[thick, ->] (0,-5)--(0,5.3);
\draw[blue, thin] (-5,-4.5)--(4.5,5);
\draw[blue, thin] (-5,-3.5)--(3.5,5);
\draw[blue, thin] (-5,-2.5)--(2.5,5);
\draw[blue, thin] (-5,-1.5)--(1.5,5);
\draw[blue, thin] (-5,-0.5)--(0.5,5);
\draw[blue, thin] (-5,0.5)--(-0.5,5);
\draw[blue, thin] (-5,1.5)--(-1.5,5);
\draw[blue, thin] (-5,2.5)--(-2.5,5);
\draw[blue, thin] (-5,3.5)--(-3.5,5);
\draw[blue, thin] (-5,4.5)--(-4.5,5);

\draw[green, thin] (-5,4.5)--(4.5,-5);
\draw[green, thin] (-5,3.5)--(3.5,-5);
\draw[green, thin] (-5,2.5)--(2.5,-5);
\draw[green, thin] (-5,1.5)--(1.5,-5);
\draw[green, thin] (-5,0.5)--(0.5,-5);
\draw[green, thin] (-5,-0.5)--(-0.5,-5);
\draw[green, thin] (-5,-1.5)--(-1.5,-5);
\draw[green, thin] (-5,-2.5)--(-2.5,-5);
\draw[green, thin] (-5,-3.5)--(-3.5,-5);
\draw[green, thin] (-5,-4.5)--(-4.5,-5);

\draw[red, thin] (-4.5,-5)--(-4.5,5);
\draw[red, thin] (-4,-5)--(-4,5);
\draw[red, thin] (-3.5,-5)--(-3.5,5);
\draw[red, thin] (-3,-5)--(-3,5);
\draw[red, thin] (-2.5,-5)--(-2.5,5);
\draw[red, thin] (-2,-5)--(-2,5);
\draw[red, thin] (-1.5,-5)--(-1.5,5);
\draw[red, thin] (-1,-5)--(-1,5);
\draw[red, thin] (-0.5,-5)--(-0.5,5);
\draw[red, thin] (0,-5)--(0,5);
\draw[red, thin] (5,-5)--(5,5);
\draw[red, thin] (4.5,-5)--(4.5,5);
\draw[red, thin] (4,-5)--(4,5);
\draw[red, thin] (3.5,-5)--(3.5,5);
\draw[red, thin] (3,-5)--(3,5);
\draw[red, thin] (2.5,-5)--(2.5,5);
\draw[red, thin] (2,-5)--(2,5);
\draw[red, thin] (1.5,-5)--(1.5,5);
\draw[red, thin] (1,-5)--(1,5);
\draw[red, thin] (0.5,-5)--(0.5,5);

\put(0.1,5){$\beta$};
\put(5,0.1){$m$};
\put(-4.1,0.1){$\scriptstyle{-}4$};
\put(-3.1,0.1){$\scriptstyle{-}3$};
\put(-2.1,0.1){$\scriptstyle{-}2$};
\put(-1.1,0.1){$\scriptstyle{-}1$};

\put(0.1,0.1){$\scriptstyle0$};
\put(1.1,0.1){$\scriptstyle1$};
\put(2.1,0.1){$\scriptstyle2$};
\put(3.1,0.1){$\scriptstyle3$};
\put(4.1,0.1){$\scriptstyle4$};

\filldraw[gray](-4,0)circle(0.06);
\filldraw[gray](-3,0)circle(0.06);
\filldraw[gray](-2,0)circle(0.06);
\filldraw[gray](-1,0)circle(0.06);
\filldraw[gray](0,0)circle(0.06);
\filldraw[gray](1,0)circle(0.06);
\filldraw[gray](2,0)circle(0.06);
\filldraw[gray](3,0)circle(0.06);
\filldraw[gray](4,0)circle(0.06);

\put(0.1,-4.07){$\scriptstyle{-}4$};
\put(0.1,-3.07){$\scriptstyle{-}3$};
\put(0.1,-2.07){$\scriptstyle{-}2$};
\put(0.1,-1.07){$\scriptstyle{-}1$};
\put(0.1,3.92){$\scriptstyle4$};
\put(0.1,2.92){$\scriptstyle3$};
\put(0.1,1.92){$\scriptstyle2$};
\put(0.1,0.92){$\scriptstyle1$};

\put(1.1,0.1){$\scriptstyle1$};
\put(2.1,0.1){$\scriptstyle2$};
\put(3.1,0.1){$\scriptstyle3$};
\put(4.1,0.1){$\scriptstyle4$};

\filldraw[gray](0,-4)circle(0.06);
\filldraw[gray](0,-3)circle(0.06);
\filldraw[gray](0,-2)circle(0.06);
\filldraw[gray](0,-1)circle(0.06);
\filldraw[gray](0,0)circle(0.06);
\filldraw[gray](0,1)circle(0.06);
\filldraw[gray](0,2)circle(0.06);
\filldraw[gray](0,3)circle(0.06);
\filldraw[gray](0,4)circle(0.06);

\end{tikzpicture}
\caption{The vertical lines correspond to the degenerate cases,
the lines with slope $1$ to the decaying Laguerre case,
the lines with slope $-1$ with the exploding Laguerre case.}
\label{fig:M1}
\end{figure}
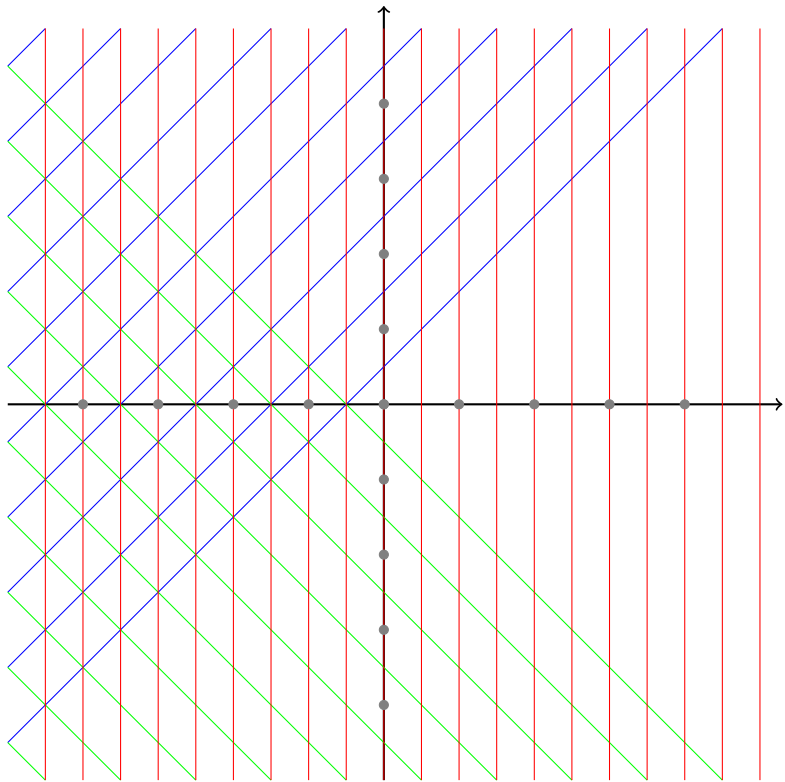

\subsection{The doubly degenerate case}\label{sec_dd}

We shall now consider the region
\begin{equation}\label{double}
\Big\{(m,\beta)\mid\beta\in\tfrac12\Z,\
m \in\tfrac12\Z,\ \beta+m+\tfrac12\in\Z\Big\}.
\end{equation}
In other words, we consider
$m\in \Z$, $\beta \in\Z+ \frac{1}{2}$, or  $m\in \Z+\frac{1}{2}$, $\beta \in \Z$.
This situation will be called the {\em doubly degenerate case}.
We will again set $m=\frac{p}{2}$ with $p\in \Z$.
Note that for $(m,\beta)$ in
\eqref{double} we have the identity
\begin{equation}\label{eq_gen3}
\cI_{\beta,m}=
\frac{  (-1)^{\beta+m+\frac32+p}}
{\Gamma\big(\frac12+m+\beta\big)}\cK_{\beta,m}+
\frac{  (-1)^{\beta+m+\frac12}}
{\Gamma\big(\frac12+m-\beta\big)}\cX_{\beta,m},
\end{equation}
which is a special case of \eqref{eq_lin_comb}.
In this case we also have
\begin{equation}\label{eq_Wr_2d}
\Wr(\cK_{\beta,m} , \cX_{\beta,m};x)=(-1)^{m+\beta+\frac12}.
\end{equation}
Hence $\cK_{\beta,m}$ and $\cX_{\beta,m}$ always span the space of solutions
in the doubly degenerate case.

In order to analyze  the doubly degenerate case more precisely, let us divide  \eqref{double} into 4 distinct regions (see Figure \ref{fig:M2}).

\medskip
\noindent{\bf Region $\mathrm{I}_-$.} $\beta+m\in-\big(\N+\frac12\big),\quad -\beta+m\in-\big(\N+\frac12\big).$

\medskip
\noindent We have
\begin{equation*}
\cI_{\beta,m}=0,
\end{equation*}
which follows for example from \eqref{eq_gen3}.
By setting $n_1:=\beta-m-\frac12\in\N$ and $n_2=-\beta-m-\frac12\in\N$, then
$\cK_{\beta,m}=\cK_{\frac{1+p}{2}+n_1,\frac{p}{2}}$ is the decaying Laguerre solution, see \eqref{eq_bb},
and $\cX_{\beta,m}=\cX_{-\frac{1+p}{2}-n_2,\frac{p}{2}}$ is the exploding Laguerre solution, see  \eqref{eq_magic_3}.

\medskip
\noindent{\bf Region $\mathrm{I}_+$.} $\beta+m\in\N+\frac12,\quad -\beta+m\in\N+\frac12.$

\medskip
\noindent
First note that $(m,\beta)\in \mathrm{I}_-$ if and only if $(-m,\beta)\in \mathrm{I}_+$.
By setting $n_1:=\beta+m-\frac12\in\N$ and $n_2:=-\beta+m-\frac12\in\N$, one has
$\beta=\frac{n_1-n_2}{2}$, $m=\frac{n_1+n_2+1}{2}$, and the equality
\eqref{eq_gen3} can be rewritten as
\begin{equation*}
\cI_{\beta,m}= \frac{  (-1)^{n_2+1}}
{n_1!}\cK_{\beta,m}+\frac{  (-1)^{n_1+1}}{n_2!}\cX_{\beta,m}.
\end{equation*}
Note then that $\cK_{\beta,m}=\cK_{\frac{1-p}{2}+n_1,\frac{p}{2}}=\cK_{\frac{1-p}{2}+n_1,\frac{-p}{2}}$
corresponds to the decaying Laguerre solution, while
$\cX_{\beta,m}=\cX_{-\frac{1-p}{2}-n_2,\frac{p}{2}}
=(-1)^p\cX_{-\frac{1-p}{2}-n_2,-\frac{p}{2}}
=(-1)^p\cX_{-\frac{1-p}{2}-n_2,\frac{-p}{2}}$ corresponds to
the exploding Laguerre solution.
In this region, the space of solutions can also be spanned by the pair
$\cK_{\beta,m}$ and $\cI_{\beta,m}$, or by
the pair $\cI_{\beta,m}$ and $\cX_{\beta,m}$.

\medskip
\noindent{\bf Region $\mathrm{II}_-$.} $\beta+m\in-\big( \N+\frac12\big),\quad -\beta+m\in\N+\frac12.$

\medskip
\noindent
By setting $n:=-\beta-m-\frac12\in\N$, then the equality \eqref{eq_aa}
reduces to
\begin{equation*}
\cI_{-\frac{p+1}{2}-n,\frac{p}{2}}=
\frac{  (-1)^n}{(p+n)!}\cX_{-\frac{p+1}{2}-n,\frac{p}{2}}.
\end{equation*}
Thus  $\cI_{\beta,m}$ is proportional to
$\cX_{\beta,m}$
and  corresponds to the exploding Laguerre case.
The second solution is  $\cK_{\beta,m}$. It decays exponentially
and has a logarithmic singularity at zero, therefore we
call this function the {\em decaying logarithmic solution}.

\medskip
\noindent{\bf Region $\mathrm{II}_+$.} $\beta+m\in \N+\frac12,\quad -\beta+m\in-\big(\N+\frac12\big).$

\medskip
\noindent
By setting $n:=\beta-m-\frac12\in\N$, then the equality \eqref{eq_bbb}
reduces to
\begin{equation*}
  \cI_{\frac{p+1}{2}+n,\frac{p}{2}}=
\frac{  (-1)^n}{(p+n)!}\cK_{\frac{p+1}{2}+n,\frac{p}{2}}.
\end{equation*}
Thus  $\cI_{\beta,m}$ is proportional to
 $\cK_{\beta,m}$
 and  corresponds to the decaying Laguerre case.
The second solution is  $\cX_{\beta,m}$. It explodes exponentially
and has a logarithmic singularity at zero, therefore
we call this function the {\em exploding logarithmic solution}.

The results of this section are summarized in Figure \ref{fig:M2}.

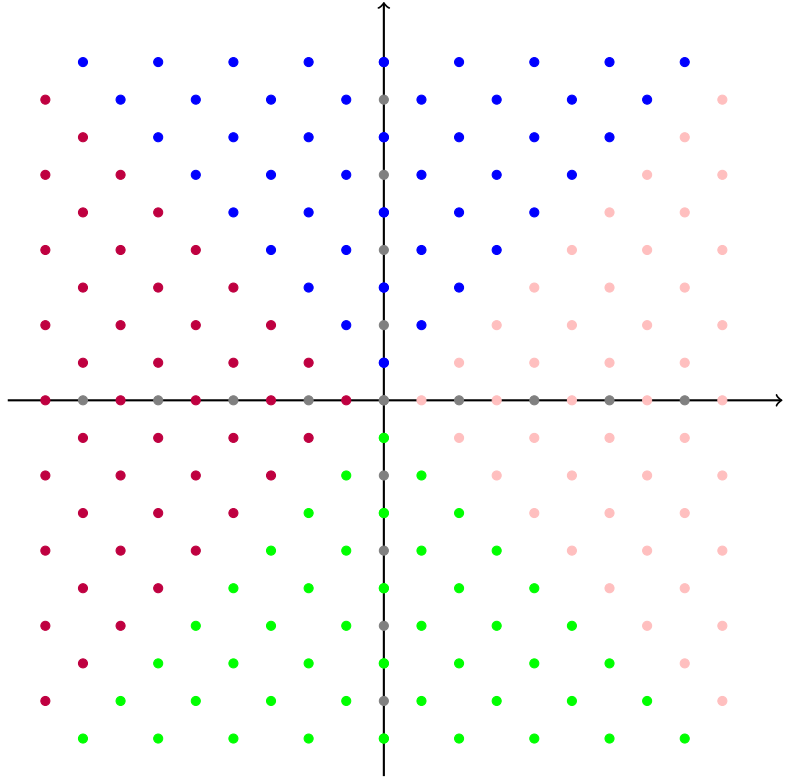
\begin{figure}[H]
\centering
\begin{tikzpicture}
\setlength{\unitlength}{1cm}

\draw[thick, ->] (-5,0)--(5.3,0);
\draw[thick, ->] (0,-5)--(0,5.3);

\put(0.1,5){$\beta$};
\put(5,0.1){$m$};
\put(-5.1,1.5){$\mathrm{I}_-$}
\put(4.9,1.5){$\mathrm{I}_+$}
\put(1.3,5){$\mathrm{II}_+$}
\put(1.3,-5){$\mathrm{II}_-$}

\filldraw[purple](-4.5,4)circle(0.06);
\filldraw[purple](-4.5,3)circle(0.06);
\filldraw[purple](-4.5,2)circle(0.06);
\filldraw[purple](-4.5,1)circle(0.06);
\filldraw[purple](-4.5,0)circle(0.06);
\filldraw[purple](-4.5,-1)circle(0.06);
\filldraw[purple](-4.5,-2)circle(0.06);
\filldraw[purple](-4.5,-3)circle(0.06);
\filldraw[purple](-4.5,-4)circle(0.06);
\filldraw[blue](-3.5,4)circle(0.06);
\filldraw[purple](-3.5,3)circle(0.06);
\filldraw[purple](-3.5,2)circle(0.06);
\filldraw[purple](-3.5,1)circle(0.06);
\filldraw[purple](-3.5,0)circle(0.06);
\filldraw[purple](-3.5,-1)circle(0.06);
\filldraw[purple](-3.5,-2)circle(0.06);
\filldraw[purple](-3.5,-3)circle(0.06);
\filldraw[green](-3.5,-4)circle(0.06);
\filldraw[blue](-2.5,4)circle(0.06);
\filldraw[blue](-2.5,3)circle(0.06);
\filldraw[purple](-2.5,2)circle(0.06);
\filldraw[purple](-2.5,1)circle(0.06);
\filldraw[purple](-2.5,0)circle(0.06);
\filldraw[purple](-2.5,-1)circle(0.06);
\filldraw[purple](-2.5,-2)circle(0.06);
\filldraw[green](-2.5,-3)circle(0.06);
\filldraw[green](-2.5,-4)circle(0.06);
\filldraw[blue](-1.5,4)circle(0.06);
\filldraw[blue](-1.5,3)circle(0.06);
\filldraw[blue](-1.5,2)circle(0.06);
\filldraw[purple](-1.5,1)circle(0.06);
\filldraw[purple](-1.5,0)circle(0.06);
\filldraw[purple](-1.5,-1)circle(0.06);
\filldraw[green](-1.5,-2)circle(0.06);
\filldraw[green](-1.5,-3)circle(0.06);
\filldraw[green](-1.5,-4)circle(0.06);
\filldraw[blue](-0.5,4)circle(0.06);
\filldraw[blue](-0.5,3)circle(0.06);
\filldraw[blue](-0.5,2)circle(0.06);
\filldraw[blue](-0.5,1)circle(0.06);
\filldraw[purple](-0.5,0)circle(0.06);
\filldraw[green](-0.5,-1)circle(0.06);
\filldraw[green](-0.5,-2)circle(0.06);
\filldraw[green](-0.5,-3)circle(0.06);
\filldraw[green](-0.5,-4)circle(0.06);
\filldraw[pink](4.5,4)circle(0.06);
\filldraw[pink](4.5,3)circle(0.06);
\filldraw[pink](4.5,2)circle(0.06);
\filldraw[pink](4.5,1)circle(0.06);
\filldraw[pink](4.5,0)circle(0.06);
\filldraw[pink](4.5,-1)circle(0.06);
\filldraw[pink](4.5,-2)circle(0.06);
\filldraw[pink](4.5,-3)circle(0.06);
\filldraw[pink](4.5,-4)circle(0.06);
\filldraw[blue](3.5,4)circle(0.06);
\filldraw[pink](3.5,3)circle(0.06);
\filldraw[pink](3.5,2)circle(0.06);
\filldraw[pink](3.5,1)circle(0.06);
\filldraw[pink](3.5,0)circle(0.06);
\filldraw[pink](3.5,-1)circle(0.06);
\filldraw[pink](3.5,-2)circle(0.06);
\filldraw[pink](3.5,-3)circle(0.06);
\filldraw[green](3.5,-4)circle(0.06);
\filldraw[blue](2.5,4)circle(0.06);
\filldraw[blue](2.5,3)circle(0.06);
\filldraw[pink](2.5,2)circle(0.06);
\filldraw[pink](2.5,1)circle(0.06);
\filldraw[pink](2.5,0)circle(0.06);
\filldraw[pink](2.5,-1)circle(0.06);
\filldraw[pink](2.5,-2)circle(0.06);
\filldraw[green](2.5,-3)circle(0.06);
\filldraw[green](2.5,-4)circle(0.06);
\filldraw[blue](1.5,4)circle(0.06);
\filldraw[blue](1.5,3)circle(0.06);
\filldraw[blue](1.5,2)circle(0.06);
\filldraw[pink](1.5,1)circle(0.06);
\filldraw[pink](1.5,0)circle(0.06);
\filldraw[pink](1.5,-1)circle(0.06);
\filldraw[green](1.5,-2)circle(0.06);
\filldraw[green](1.5,-3)circle(0.06);
\filldraw[green](1.5,-4)circle(0.06);
\filldraw[blue](0.5,4)circle(0.06);
\filldraw[blue](0.5,3)circle(0.06);
\filldraw[blue](0.5,2)circle(0.06);
\filldraw[blue](0.5,1)circle(0.06);
\filldraw[pink](0.5,0)circle(0.06);
\filldraw[green](0.5,-1)circle(0.06);
\filldraw[green](0.5,-2)circle(0.06);
\filldraw[green](0.5,-3)circle(0.06);
\filldraw[green](0.5,-4)circle(0.06);

\filldraw[blue](-4,4.5)circle(0.06);
\filldraw[purple](-4,3.5)circle(0.06);
\filldraw[purple](-4,2.5)circle(0.06);
\filldraw[purple](-4,1.5)circle(0.06);
\filldraw[purple](-4,0.5)circle(0.06);
\filldraw[purple](-4,-0.5)circle(0.06);
\filldraw[purple](-4,-1.5)circle(0.06);
\filldraw[purple](-4,-2.5)circle(0.06);
\filldraw[purple](-4,-3.5)circle(0.06);
\filldraw[green](-4,-4.5)circle(0.06);
\filldraw[blue](-3,4.5)circle(0.06);
\filldraw[blue](-3,3.5)circle(0.06);
\filldraw[purple](-3,2.5)circle(0.06);
\filldraw[purple](-3,1.5)circle(0.06);
\filldraw[purple](-3,0.5)circle(0.06);
\filldraw[purple](-3,-0.5)circle(0.06);
\filldraw[purple](-3,-1.5)circle(0.06);
\filldraw[purple](-3,-2.5)circle(0.06);
\filldraw[green](-3,-3.5)circle(0.06);
\filldraw[green](-3,-4.5)circle(0.06);
\filldraw[blue](-2,4.5)circle(0.06);
\filldraw[blue](-2,3.5)circle(0.06);
\filldraw[blue](-2,2.5)circle(0.06);
\filldraw[purple](-2,1.5)circle(0.06);
\filldraw[purple](-2,0.5)circle(0.06);
\filldraw[purple](-2,-0.5)circle(0.06);
\filldraw[purple](-2,-1.5)circle(0.06);
\filldraw[green](-2,-2.5)circle(0.06);
\filldraw[green](-2,-3.5)circle(0.06);
\filldraw[green](-2,-4.5)circle(0.06);
\filldraw[blue](-1,4.5)circle(0.06);
\filldraw[blue](-1,3.5)circle(0.06);
\filldraw[blue](-1,2.5)circle(0.06);
\filldraw[blue](-1,1.5)circle(0.06);
\filldraw[purple](-1,0.5)circle(0.06);
\filldraw[purple](-1,-0.5)circle(0.06);
\filldraw[green](-1,-1.5)circle(0.06);
\filldraw[green](-1,-2.5)circle(0.06);
\filldraw[green](-1,-3.5)circle(0.06);
\filldraw[green](-1,-4.5)circle(0.06);
\filldraw[blue](-0,4.5)circle(0.06);
\filldraw[blue](-0,3.5)circle(0.06);
\filldraw[blue](-0,2.5)circle(0.06);
\filldraw[blue](-0,1.5)circle(0.06);
\filldraw[blue](-0,0.5)circle(0.06);
\filldraw[green](-0,-0.5)circle(0.06);
\filldraw[green](-0,-1.5)circle(0.06);
\filldraw[green](-0,-2.5)circle(0.06);
\filldraw[green](-0,-3.5)circle(0.06);
\filldraw[green](-0,-4.5)circle(0.06);
\filldraw[blue](4,4.5)circle(0.06);
\filldraw[pink](4,3.5)circle(0.06);
\filldraw[pink](4,2.5)circle(0.06);
\filldraw[pink](4,1.5)circle(0.06);
\filldraw[pink](4,0.5)circle(0.06);
\filldraw[pink](4,-0.5)circle(0.06);
\filldraw[pink](4,-1.5)circle(0.06);
\filldraw[pink](4,-2.5)circle(0.06);
\filldraw[pink](4,-3.5)circle(0.06);
\filldraw[green](4,-4.5)circle(0.06);
\filldraw[blue](3,4.5)circle(0.06);
\filldraw[blue](3,3.5)circle(0.06);
\filldraw[pink](3,2.5)circle(0.06);
\filldraw[pink](3,1.5)circle(0.06);
\filldraw[pink](3,0.5)circle(0.06);
\filldraw[pink](3,-0.5)circle(0.06);
\filldraw[pink](3,-1.5)circle(0.06);
\filldraw[pink](3,-2.5)circle(0.06);
\filldraw[green](3,-3.5)circle(0.06);
\filldraw[green](3,-4.5)circle(0.06);
\filldraw[blue](2,4.5)circle(0.06);
\filldraw[blue](2,3.5)circle(0.06);
\filldraw[blue](2,2.5)circle(0.06);
\filldraw[pink](2,1.5)circle(0.06);
\filldraw[pink](2,0.5)circle(0.06);
\filldraw[pink](2,-0.5)circle(0.06);
\filldraw[pink](2,-1.5)circle(0.06);
\filldraw[green](2,-2.5)circle(0.06);
\filldraw[green](2,-3.5)circle(0.06);
\filldraw[green](2,-4.5)circle(0.06);
\filldraw[blue](1,4.5)circle(0.06);
\filldraw[blue](1,3.5)circle(0.06);
\filldraw[blue](1,2.5)circle(0.06);
\filldraw[blue](1,1.5)circle(0.06);
\filldraw[pink](1,0.5)circle(0.06);
\filldraw[pink](1,-0.5)circle(0.06);
\filldraw[green](1,-1.5)circle(0.06);
\filldraw[green](1,-2.5)circle(0.06);
\filldraw[green](1,-3.5)circle(0.06);
\filldraw[green](1,-4.5)circle(0.06);
\filldraw[blue](0,4.5)circle(0.06);
\filldraw[blue](0,3.5)circle(0.06);
\filldraw[blue](0,2.5)circle(0.06);
\filldraw[blue](0,1.5)circle(0.06);
\filldraw[blue](0,0.5)circle(0.06);
\filldraw[green](0,-0.5)circle(0.06);
\filldraw[green](0,-1.5)circle(0.06);
\filldraw[green](0,-2.5)circle(0.06);
\filldraw[green](0,-3.5)circle(0.06);
\filldraw[green](0,-4.5)circle(0.06);

\put(-4.1,0.1){$\scriptstyle{-}4$};
\put(-3.1,0.1){$\scriptstyle{-}3$};
\put(-2.1,0.1){$\scriptstyle{-}2$};
\put(-1.1,0.1){$\scriptstyle{-}1$};

\put(0.1,0.1){$\scriptstyle0$};
\put(1.1,0.1){$\scriptstyle1$};
\put(2.1,0.1){$\scriptstyle2$};
\put(3.1,0.1){$\scriptstyle3$};
\put(4.1,0.1){$\scriptstyle4$};

\filldraw[gray](-4,0)circle(0.06);
\filldraw[gray](-3,0)circle(0.06);
\filldraw[gray](-2,0)circle(0.06);
\filldraw[gray](-1,0)circle(0.06);
\filldraw[gray](0,0)circle(0.06);
\filldraw[gray](1,0)circle(0.06);
\filldraw[gray](2,0)circle(0.06);
\filldraw[gray](3,0)circle(0.06);
\filldraw[gray](4,0)circle(0.06);

\put(0.1,-4.07){$\scriptstyle{-}4$};
\put(0.1,-3.07){$\scriptstyle{-}3$};
\put(0.1,-2.07){$\scriptstyle{-}2$};
\put(0.1,-1.07){$\scriptstyle{-}1$};
\put(0.1,3.92){$\scriptstyle4$};
\put(0.1,2.92){$\scriptstyle3$};
\put(0.1,1.92){$\scriptstyle2$};
\put(0.1,0.92){$\scriptstyle1$};

\put(1.1,0.1){$\scriptstyle1$};
\put(2.1,0.1){$\scriptstyle2$};
\put(3.1,0.1){$\scriptstyle3$};
\put(4.1,0.1){$\scriptstyle4$};

\filldraw[gray](0,-4)circle(0.06);
\filldraw[gray](0,-3)circle(0.06);
\filldraw[gray](0,-2)circle(0.06);
\filldraw[gray](0,-1)circle(0.06);
\filldraw[gray](0,0)circle(0.06);
\filldraw[gray](0,1)circle(0.06);
\filldraw[gray](0,2)circle(0.06);
\filldraw[gray](0,3)circle(0.06);
\filldraw[gray](0,4)circle(0.06);

\end{tikzpicture}
\caption{Solutions for the doubly degenerate case:
Region $\mathrm{I}_-$:  the decaying Laguerre and the exploding Laguerre solutions.
Region $\mathrm{I}_+$: any of the three solutions.
Region $\mathrm{II}_+$: the decaying Laguerre and the exploding logarithmic solutions.
Region $\mathrm{II}_-$: the exploding Laguerre and the decaying logarithmic solutions.
}
\label{fig:M2}
\end{figure}

\subsection{Recurrence relations}

Solutions of the Whittaker equation satisfy interesting recurrence relations.
These relations can be checked by using the series provided in \eqref{Taylor_2}.
The computations are straightforward, but rather lengthy.
These relations read
\begin{align*}
\Big(\sqrt{z}\partial_z+\frac{-\frac12-m}{\sqrt{z}}-\frac {\sqrt{z}}{2}\Big)\cI_{\beta,m}(z)
&= \Big(-\frac12-m-\beta\Big)\cI_{\beta+\frac12,m+\frac12}(z),\\
\Big(\sqrt{z}\partial_z+\frac{-\frac12+m}{\sqrt{z}}+\frac {\sqrt{z}}{2}\Big)\cI_{\beta,m}(z)
&=\cI_{\beta-\frac12,m-\frac12}(z),\\
\Big(\sqrt{z}\partial_z+\frac{-\frac12+m}{\sqrt{z}}-\frac {\sqrt{z}}{2}\Big)\cI_{\beta,m}(z)
&=\cI_{\beta+\frac12,m-\frac12}(z),\\
\Big(\sqrt{z}\partial_z+\frac{-\frac12-m}{\sqrt{z}}+\frac {\sqrt{z}}{2}\Big)\cI_{\beta,m}(z)
&=\Big(\frac12+m-\beta\Big)\cI_{\beta-\frac12,m+\frac12}(z),\\
\Big(z\partial_z+\beta-\frac{z}{2}\Big)\cI_{\beta,m}(z)
&=\Big(\frac12+m+\beta\Big)\cI_{\beta+1,m}(z),\\
\Big(z\partial_z-\beta+\frac{z}{2}\Big)\cI_{\beta,m}(z)
&=\Big(\frac12+m-\beta\Big)\cI_{\beta-1,m}(z).
\end{align*}
By using the relation between the functions $\cK_{\beta,m}$ and the functions
$\cI_{\beta,m}$ provided in \eqref{generic}, one infers from the above
relations the following ones:
\begin{align*}
\Big(\sqrt{z}\partial_z+\frac{-\frac12-m}{\sqrt{z}}-\frac {\sqrt{z}}{2}\Big)
\cK_{\beta,m}(z)
&= -\cK_{\beta+\frac12,m+\frac12}(z),\\
\Big(\sqrt{z}\partial_z+\frac{-\frac12+m}{\sqrt{z}}+\frac {\sqrt{z}}{2}\Big)
\cK_{\beta,m}(z)
&= \Big(-\frac12+m+\beta\Big)    \cK_{\beta-\frac12,m-\frac12}(z),\\
\Big(\sqrt{z}\partial_z+\frac{-\frac12+m}{\sqrt{z}}-\frac {\sqrt{z}}{2}\Big)
\cK_{\beta,m}(z)
&= -\cK_{\beta+\frac12,m-\frac12}(z),\\
\Big(\sqrt{z}\partial_z+\frac{-\frac12-m}{\sqrt{z}}+\frac {\sqrt{z}}{2}\Big)
\cK_{\beta,m}(z)
&= \Big(-\frac12-m+\beta\Big)\cK_{\beta-\frac12,m+\frac12}(z),\\
\Big(z\partial_z+\beta-\frac{z}{2}\Big)\cK_{\beta,m}(z)
&= -\cK_{\beta+1,m}(z),\\
\Big(z\partial_z-\beta+\frac{z}{2}\Big)\cK_{\beta,m}(z)
&= \Big(\frac12+m-\beta\Big)    \Big(\frac12-m-\beta\Big)\cK_{\beta-1,m}(z).
\end{align*}

\subsection{Integral identities}\label{sec_int_id}

Let us start with a general fact about 1-dimensional Schr\"odinger operators,
see for example \cite[Eq.~(3.24)]{DG2019}.

\begin{lemma}\label{lem_int_id}
For $i\in \{1,2\}$, suppose that $v_i\in \Dom(L_{\beta,\alpha}^\max)$ satisfies $L_{\beta,\alpha} v_i=\lambda_i v_i$ for some $\lambda_i\in \C$. Then, for all $a,b \in ]0,\infty[$,
\begin{equation}\label{lagrange}
(\lambda_1-\lambda_2)\int_a^b v_1(x)v_2(x)\d x= \Wr(v_1,v_2;b)-  \Wr(v_1,v_2;a),
\end{equation}
where $\Wr$ is the Wronskian introduced in \eqref{wronskian}.
\end{lemma}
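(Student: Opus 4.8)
The plan is to prove the pointwise \emph{Lagrange identity}
\[
\frac{\d}{\d x}\Wr(v_1,v_2;x) = (\lambda_1-\lambda_2)\,v_1(x)\,v_2(x), \qquad x\in\R_+,
\]
and then to integrate it over the interval with endpoints $a$ and $b$, recognising the right-hand side of \eqref{lagrange} as $\Wr(v_1,v_2;b)-\Wr(v_1,v_2;a)$ via the fundamental theorem of calculus. The structural observation driving everything is that the potential $V(x):=\big(\alpha-\tfrac14\big)x^{-2}-\beta x^{-1}$ enters $L_{\beta,\alpha}$ purely as a multiplication operator, so it cancels in the antisymmetric combination $v_1\,L_{\beta,\alpha}v_2 - v_2\,L_{\beta,\alpha}v_1$; this is exactly the (formal) self-transposedness of the differential expression.

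First I would record the regularity available on the compact interval with endpoints $a$ and $b$, which is contained in $\R_+$ since $a,b>0$ are finite. By Proposition~\ref{lem_properties}\,(i), both $v_i$ and $v_i'$ are continuous on $\R_+$. Since the coefficients $x^{-2}$ and $x^{-1}$ are smooth away from $0$, the eigenvalue equation $-v_i''+Vv_i=\lambda_i v_i$ exhibits $v_i''=Vv_i-\lambda_i v_i$ as a continuous function on $\R_+$; in particular each $v_i$ is $C^2$ on every compact subinterval of $\R_+$, so $\Wr(v_1,v_2;\cdot)$ is $C^1$ there and the fundamental theorem of calculus applies. Note that the singular endpoints $0$ and $\infty$ never intervene, as $a$ and $b$ are strictly positive and finite.

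The computation is then immediate. Differentiating $\Wr(v_1,v_2;x)=v_1v_2'-v_1'v_2$ gives
\[
\frac{\d}{\d x}\Wr(v_1,v_2;x) = v_1v_2''-v_1''v_2 ,
\]
and substituting $v_i''=Vv_i-\lambda_i v_i$ makes the terms containing $V$ cancel, leaving
\[
v_1\big(Vv_2-\lambda_2 v_2\big)-\big(Vv_1-\lambda_1 v_1\big)v_2 = (\lambda_1-\lambda_2)\,v_1v_2 ,
\]
which is the asserted Lagrange identity. Integrating this equality over the interval with endpoints $a$ and $b$ yields \eqref{lagrange}.

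I do not expect any genuine obstacle here: the sole point requiring care is the justification that $\Wr(v_1,v_2;\cdot)$ is absolutely continuous, so that one may recover it as the integral of its derivative, and this follows from the $C^2$ regularity of $v_1,v_2$ on compact subintervals of $\R_+$ furnished by Proposition~\ref{lem_properties}\,(i) together with the eigenvalue equation. Everything else reduces to the cancellation of the multiplication potential and to the fundamental theorem of calculus.
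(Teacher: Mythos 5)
Your proof is correct and is the standard Lagrange-identity argument; the paper itself offers no proof of this lemma but simply cites \cite[Eq.~(3.24)]{DG2019}, and the argument you supply (cancellation of the multiplication potential in the antisymmetric combination, $C^2$ regularity of the $v_i$ on compact subintervals of $\R_+$ from Proposition~\ref{lem_properties}\,(i) plus the eigenvalue equation, then the fundamental theorem of calculus) is exactly the one being delegated to that reference. No gaps.
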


As a consequence of this lemma one has:

\begin{proposition}\label{prop-lag1}
Let $k,p\in \C$ with $\Re(k)>0$ and $\Re(p)>0$.
\begin{enumerate}\item[(i)] If $-1<\Re(m)<1$, $m\not\in\big\{-\frac12,0,\frac12\big\}$, then
\begin{align*}
&(k^2-p^2)\int_0^\infty
\cK_{\frac{\beta}{2k},m}(2kx)
\cK_{\frac{\beta}{2p},m}(2px)\d x\\
&=\frac{\pi}{\sin(2\pi m)}\sqrt{4kp}
\Bigg(
\frac{k^{m}p^{-m}}{\Gamma\big(\frac12+m-\frac{\beta}{2p}\big)\Gamma\big(\frac12-m-\frac{\beta}{2k}\big)}
-\frac{p^{m}k^{-m}}{\Gamma\big(\frac12+m-\frac{\beta}{2k}\big)
\Gamma\big(\frac12-m-\frac{\beta}{2p}\big)}\Bigg).
\end{align*}

\item[(ii)]
If $m=0$, then
\begin{align*}
&(k^2-p^2)\int_0^\infty
\cK_{\frac{\beta}{2k},0}(2kx)
\cK_{\frac{\beta}{2p},0}(2px)\d x\\
&= \sqrt{4kp}\;\! \frac{\psi\big(\frac12-\frac{\beta}{2k}\big)-\psi\big(\frac12-\frac{\beta}{2p}\big) + \ln(k)-\ln(p)}{
\Gamma\big(\frac12-\frac{\beta}{2p}\big)\Gamma\big(\frac12-\frac{\beta}{2k}\big)}.
\end{align*}

\item[(iii)]
If $m=\pm\frac12$, then
\begin{align*}
&(k^2-p^2)\int_0^\infty
\cK_{\frac{\beta}{2k},\frac12}(2kx)
\cK_{\frac{\beta}{2p},\frac12}(2px)\d x\\
&= \beta\;\!\frac{\frac12\psi\big(1-\frac{\beta}{2k}\big) + \frac12\psi\big(-\frac{\beta}{2k}\big)-\frac12 \psi\big(1-\frac{\beta}{2p}\big) - \frac12 \psi\big(-\frac{\beta}{2p}\big) + \ln(k)-\ln(p)}{
\Gamma\big(1-\frac{\beta}{2p}\big)\Gamma\big(1-\frac{\beta}{2k}\big)}.
\end{align*}
\end{enumerate}
\end{proposition}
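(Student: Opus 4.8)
The plan is to apply the Lagrange-type identity of Lemma~\ref{lem_int_id}. For fixed $k,p$ with $\Re(k),\Re(p)>0$, set $v_1(x):=\cK_{\frac{\beta}{2k},m}(2kx)$ and $v_2(x):=\cK_{\frac{\beta}{2p},m}(2px)$. A change of variables in the Whittaker equation \eqref{whit} shows that $L_{\beta,m^2}v_1=-k^2v_1$ and $L_{\beta,m^2}v_2=-p^2v_2$, so $v_1,v_2\in\Dom(L_{\beta,m^2}^{\max})$ with eigenvalues $\lambda_1=-k^2$ and $\lambda_2=-p^2$; here $|\Re(m)|<1$ guarantees square-integrability near $0$, while \eqref{Kbm-around-infinity} gives exponential decay, hence square-integrability, near $\infty$. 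Lemma~\ref{lem_int_id} applied on $[a,b]$ then yields $(p^2-k^2)\int_a^b v_1 v_2\,\d x=\Wr(v_1,v_2;b)-\Wr(v_1,v_2;a)$, and the task reduces to evaluating the two boundary Wronskians as $a\searrow0$ and $b\nearrow\infty$.

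The endpoint at infinity is the easy one: by \eqref{Kbm-around-infinity} both $v_i$ and $v_i'$ decay exponentially, so $\lim_{b\to\infty}\Wr(v_1,v_2;b)=0$. Consequently
\[
(k^2-p^2)\int_0^\infty v_1(x)v_2(x)\,\d x=\lim_{x\searrow0}\Wr(v_1,v_2;x),
\]
and the whole proposition comes down to computing this single limit in each of the three cases.

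The zero-endpoint Wronskian is obtained by inserting the near-$0$ asymptotics of $\cK_{\frac{\beta}{2k},m}(2kx)$ and collecting leading terms. In the generic case $m\notin\{-\tfrac12,0,\tfrac12\}$, combining \eqref{generic} with \eqref{eq_a2} gives $v_1=A_k\,j_{\beta,m}+B_k\,j_{\beta,-m}+o(x^{3/2})$ with $A_k=(2k)^{\frac12+m}\Gamma(-2m)/\Gamma(\tfrac12-m-\tfrac{\beta}{2k})$ and $B_k=(2k)^{\frac12-m}\Gamma(2m)/\Gamma(\tfrac12+m-\tfrac{\beta}{2k})$, and similarly for $v_2$ with $k\to p$. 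Since $j_{\beta,\pm m}$ solve $L_{\beta,m^2}f=0$, their Wronskian is constant and equals $-2m$ by \eqref{eq_asymp1}; a short estimate shows the $o(x^{3/2})$ remainders (whose derivatives are $o(x^{1/2})$) contribute nothing in the limit. Bilinearity of $\Wr$ then gives $\lim_{x\searrow0}\Wr(v_1,v_2;x)=-2m\,(A_kB_p-B_kA_p)$, and the reflection formula $\Gamma(2m)\Gamma(1-2m)=\pi/\sin(2\pi m)$ converts $-2m\,\Gamma(2m)\Gamma(-2m)$ into $\pi/\sin(2\pi m)$, producing precisely the right-hand side of $(i)$. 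The cases $m=0$ and $m=\pm\tfrac12$ follow the same pattern, but using \eqref{eq_K0}, respectively \eqref{eq_K1/2}, to write each $v_i$ as a combination of the logarithmic pairs $\{j_{\beta,0},y_{\beta,0}\}$, respectively $\{j_{\beta,\frac12},y_{\beta,\frac12}\}$, whose constant Wronskians $\Wr(j_{\beta,0},y_{\beta,0};\cdot)=1$ and $\Wr(j_{\beta,\frac12},y_{\beta,\frac12};\cdot)=-1$ are recorded in Section~\ref{sec_zero-e}. The digamma contributions in \eqref{eq_K0} and \eqref{eq_K1/2} are exactly what is needed to produce the $\psi$-differences in $(ii)$ and $(iii)$.

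I expect the main obstacle to be the careful bookkeeping of the zero-endpoint Wronskian in the degenerate cases $m=0,\pm\tfrac12$: there the two governing solutions differ by a logarithm, the expansions \eqref{eq_K0} and \eqref{eq_K1/2} carry both $x^{\frac12}\ln x$ and $x^{\frac32}\ln x$ terms, and one must verify that, after decomposing $v_i$ along $\{j,y\}$, the log-terms cancel in the Wronskian so that only the advertised finite limit survives. Rewriting the prefactors via $\Gamma(1-\tfrac{\beta}{2k})=-\tfrac{\beta}{2k}\Gamma(-\tfrac{\beta}{2k})$ is then the routine final simplification matching the stated denominators.
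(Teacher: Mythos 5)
Your proposal is correct and follows essentially the same route as the paper: apply Lemma~\ref{lem_int_id}, observe that the Wronskian vanishes at infinity by the exponential decay of $\cK$, and evaluate $\lim_{x\searrow 0}\Wr(v_1,v_2;x)$ from the near-zero asymptotics, with the reflection formula and $\Gamma(1-\tfrac{\beta}{2k})=-\tfrac{\beta}{2k}\Gamma(-\tfrac{\beta}{2k})$ supplying the final simplifications. The only (cosmetic) difference is that the paper expands each $\cK$ exactly into the pair $\cI_{\frac{\beta}{2k},\pm m}(2k\cdot)$ via \eqref{generic} and computes the four cross-Wronskians of those, whereas you decompose along the zero-energy basis $j_{\beta,\pm m}$ (resp.\ $j,y$ in the degenerate cases) up to an $o(x^{3/2})$ remainder whose harmlessness is guaranteed by Proposition~\ref{lem_properties}(ii)--(iii).
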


\begin{proof}
The proof consists in an application of Lemma \ref{lem_int_id}.
Consider $k,p\in \C$ with $\Re(k)>0$, $\Re(p)>0$ and set $\lambda_1=-k^2$ and $\lambda_2=-p^2$. As shown in the proof of Theorem \ref{spectrum} the functions
$v_i$ defined by
\begin{equation*}
v_1(x)=\cK_{\frac{\beta}{2k},m}(2kx) \quad \hbox{and} \quad
v_2(x)=\cK_{\frac{\beta}{2p},m}(2px)
\end{equation*}
belong to $\Dom(L_{\beta,m^2}^\max)$ and are eigenfunctions of $L_{\beta,m^2}$ associated with the eigenvalues $\lambda_i$.
Let us then set $\Wr(v_1,v_2;0):=
\lim\limits_{x\searrow0}\Wr(v_1,v_2;x)$
and observe that $
\lim\limits_{x\to+\infty}\Wr(v_1,v_2;x)=0$, as a consequence of Proposition \ref{lem_properties}.
This yields directly
\begin{equation}\label{lagrange1}
(k^2-p^2)  \int_0^\infty v_1(x)v_2(x)\d x= \Wr(v_1,v_2;0).
\end{equation}

Let us now set
\begin{equation*}
u_{1,\pm}(x)=\cI_{\frac{\beta}{2k},\pm m}(2kx)
\quad \hbox{and} \quad u_{2,\pm}(x)=\cI_{\frac{\beta}{2p},\pm m}(2px).
\end{equation*}
Then, the identity \eqref{generic} leads to
\begin{align*}
v_1(x)&
= \frac{\pi}{\sin(2\pi m)}\Big(-\frac{u_{1,+}(x)}{\Gamma\big(\frac{1}{2}-m-\frac{\beta}{2k}\big)} + \frac{u_{1,-}(x)}{\Gamma\big(\frac{1}{2}+m-\frac{\beta}{2k}\big)}\Big),\\
v_2(x)&
= \frac{\pi}{\sin(2\pi m)}\Big(-\frac{u_{2,+}(x)}{\Gamma\big(\frac{1}{2}-m-\frac{\beta}{2p}\big)} + \frac{u_{2,-}(x)}{\Gamma\big(\frac{1}{2}+m-\frac{\beta}{2p}\big)}\Big),
\end{align*}
and with the expansion provided in \ref{Taylor_2} one directly infers that
\begin{align*}
\Wr(u_{1,+},u_{2,+};0)&=  \Wr(u_{1,-},u_{2,-};0)\,=\,0,\\
\Wr(u_{1,+},u_{2,-};0)&=  -\frac{4m k^{\frac12+m}p^{\frac12-m}}{\Gamma(1+2m)\Gamma(1-2m)}=-\frac{2\sin(2\pi m)}{\pi}
k^{\frac12+m}p^{\frac12-m},\\
\Wr(u_{1,-},u_{2,+};0)&=  \frac{4m k^{\frac12-m}p^{\frac12+m}}{\Gamma(1+2m)\Gamma(1-2m)}= \frac{2\sin(2\pi m)}{\pi} k^{\frac12-m}p^{\frac12+m}.
\end{align*}
As a consequence of these equalities one gets
\begin{align*}
& \Wr(v_1,v_2;0) \\
& =\frac{\pi}{\sin(2\pi m)}\bigg(\frac{2k^{\frac12+m}p^{\frac12-m}}{\Gamma\big(\frac12-m-\frac{\beta}{2k}\big)
\Gamma\big(\frac12+m-\frac{\beta}{2p}\big)}
-\frac{2k^{\frac12-m}p^{\frac12+m}}{\Gamma\big(\frac12+m-\frac{\beta}{2k}\big)
\Gamma\big(\frac12-m-\frac{\beta}{2p}\big)}\bigg).
\end{align*}
This proves $(i)$.
The equalities $(ii)$ and $(iii)$ can be proved similarly by
using \eqref{eq_a12} and \eqref{eq_a0}.
\end{proof}

By using the L'Hospital's rule one directly obtains:

\begin{corollary}\label{prop-lag2}
Let $\Re(k)>0$.
\begin{enumerate}
\item[(i)]
For $-1<\Re(m)<1$, $m\not \in \big\{-\frac{1}{2},0,\frac{1}{2}\big\}$ one has
\begin{equation*}
\int_0^\infty\cK_{\frac{\beta}{2k},m}(2kx)^2\d x=
\frac{\pi}{\sin(2\pi m)} \frac{2m+\frac{\beta}{2k}\psi\big(\frac12+m-\frac{\beta}{2k}\big)-\frac{\beta}{2k}\psi\big(\frac12-m-\frac{\beta}{2k}\big)}
{k\Gamma\big(\frac12+m-\frac{\beta}{2k}\big) \Gamma\big(\frac12-m-\frac{\beta}{2k}\big)}.
\end{equation*}
\item[(ii)]
For $m=0$,
\begin{equation*}
\int_0^\infty\cK_{\frac{\beta}{2k},0}(2kx)^2\d x=
\frac{1+\frac{\beta}{2k}\psi'\big(\frac12-\frac{\beta}{2k}\big)}{k\Gamma\big(\frac12-\frac{\beta}{2k}\big)^2}.
\end{equation*}
\item[(iii)]
For $m=\frac12$,
\begin{equation*}
\int_0^\infty\cK_{\frac{\beta}{2k},\frac12}(2kx)^2\d x=
-\frac{1+\frac{\beta}{4k}\psi'\big(-\frac{\beta}{2k}\big)+\frac{\beta}{4k}\psi'\big(1-\frac{\beta}{2k}\big)}{k\Gamma\big(-\frac{\beta}{2k}\big)\Gamma\big(1-\frac{\beta}{2k}\big)}.
\end{equation*}
\end{enumerate}
\end{corollary}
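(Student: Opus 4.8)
The plan is to obtain all three identities as the confluent limit $p\to k$ of Proposition \ref{prop-lag1}. The starting observation is that the diagonal integral is the limit of the off-diagonal one,
\begin{equation*}
\int_0^\infty \cK_{\frac{\beta}{2k},m}(2kx)^2\,\d x = \lim_{p\to k}\int_0^\infty \cK_{\frac{\beta}{2k},m}(2kx)\,\cK_{\frac{\beta}{2p},m}(2px)\,\d x,
\end{equation*}
so that, after dividing the identity of Proposition \ref{prop-lag1} by $k^2-p^2$, the sought integral equals $\lim_{p\to k}\mathrm{RHS}(k,p)/(k^2-p^2)$, where $\mathrm{RHS}(k,p)$ denotes the right-hand side in that proposition. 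In each of the three cases the right-hand side is manifestly antisymmetric under the exchange $k\leftrightarrow p$ (the factor $\sqrt{4kp}$ being symmetric, the remaining bracket changing sign), and in particular vanishes at $p=k$; since the denominator $k^2-p^2$ also vanishes, we are in a $0/0$ situation and can apply L'Hospital's rule in the variable $p$.

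First I would justify the displayed continuity statement. By \eqref{Kbm-around-infinity} the integrand decays like $\e^{-(k+p)x}$ near infinity, uniformly for $p$ in a small neighborhood of $k$ (recall $\Re(k),\Re(p)>0$), while near $0$ the expansion \eqref{Taylor_2} together with \eqref{generic}, \eqref{eq_a12} and \eqref{eq_a0} shows that $\cK_{\frac{\beta}{2p},m}(2px)$ is bounded by a fixed multiple of $x^{\frac12-|\Re(m)|}$ (with an extra factor $|\ln x|$ when $m\in\{0,\frac12\}$), again uniformly in $p$ near $k$. Since $|\Re(m)|<1$, this furnishes a $p$-independent dominating function in $L^1(\R_+)$, so dominated convergence applies and the integral is continuous in $p$ at $p=k$.

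Then I would carry out the differentiation. The only $p$-dependence that matters sits in the factors $p^{\pm m}$ and $\Gamma\big(\tfrac12\pm m-\tfrac{\beta}{2p}\big)^{-1}$, for which the key elementary formula is
\begin{equation*}
\frac{\d}{\d p}\,\frac{1}{\Gamma\big(c-\frac{\beta}{2p}\big)}
= -\frac{\beta}{2p^2}\,\frac{\psi\big(c-\frac{\beta}{2p}\big)}{\Gamma\big(c-\frac{\beta}{2p}\big)}.
\end{equation*}
Because the bracketed difference in $\mathrm{RHS}(k,p)$ vanishes at $p=k$, the derivative of the symmetric prefactor $\sqrt{4kp}$ contributes nothing, and one only differentiates that bracket. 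Dividing by $\frac{\d}{\d p}(k^2-p^2)\big|_{p=k}=-2k$ and simplifying yields exactly the formula in $(i)$: the two $-\frac{m}{k}$ contributions coming from the $p^{\pm m}$ derivatives combine into the $2m$ in the numerator, while the two $\psi$ terms assemble into $\frac{\beta}{2k}\psi\big(\frac12+m-\frac{\beta}{2k}\big)-\frac{\beta}{2k}\psi\big(\frac12-m-\frac{\beta}{2k}\big)$. Cases $(ii)$ and $(iii)$ are entirely analogous, starting from parts $(ii)$ and $(iii)$ of Proposition \ref{prop-lag1}; there the terms $\ln(k)-\ln(p)$ and $\psi\big(\frac12-\frac{\beta}{2p}\big)$ (respectively $\psi\big(1-\frac{\beta}{2p}\big)$ and $\psi\big(-\frac{\beta}{2p}\big)$) differentiate to produce the $\psi'$ contributions in the stated formulas.

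The main obstacle is not the algebra but the analytic justification of the limit: one must produce a genuine $p$-uniform $L^1$ bound on the integrand, valid simultaneously near $0$ and near $\infty$, so that the diagonal integral may legitimately be identified with the limit of the off-diagonal integrals before invoking L'Hospital's rule. Once this dominated-convergence step is in place, the remainder is a direct computation.
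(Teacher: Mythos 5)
Your proposal is correct and follows exactly the route the paper takes: the paper derives the corollary from Proposition \ref{prop-lag1} by letting $p\to k$ and invoking L'Hospital's rule. Your additional dominated-convergence justification of the continuity of the integral in $p$ is a detail the paper leaves implicit, and your differentiation of the $p^{\pm m}$ and $\Gamma\big(c-\frac{\beta}{2p}\big)^{-1}$ factors reproduces the stated formulas.
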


\subsection{The trigonometric type Whittaker equation}

Along with the standard Whittaker equation \eqref{whit}, sometimes called {\em hyperbolic type}, it is natural to consider the {\em trigonometric type}
Whittaker equation
\begin{equation}\label{Whittaker-trig}
\Big(L_{\beta,m^2}-\frac14\Big)f = \bigg(-\pder_z^2 +\Big(m^2 - \frac{1}{4}\Big)\frac{1}{z^2} - \frac{\beta}{z}-\frac{1}{4}\bigg)f = 0.
\end{equation}
In \cite[Sec.~2.6 \& 2.7]{DR2} we introduced the functions
\begin{equation}\label{eq:jbm}
\J{\beta}{m}(z) = \e^{\mp\i\frac{\pi}{2}(\frac{1}{2}+m)}\I{\mp\i\beta}{m}\big(\e^{\pm\i\frac{\pi}{2}}z\big)
\end{equation}
and
\begin{align}\label{Hpm-definition}
\begin{split}
\Hpm{\beta}{m}(z) =& \e^{\mp \i\frac{\pi}{2}\naw{\frac{1}{2}+m}}\K{\pm\i\beta}{m}(\e^{\mp\i\frac{\pi}{2}}z)\\
=& \frac{\pm \i\pi}{\sin(2\pi m)}\Big(\frac{\e^{\mp\i\pi m}\J{\beta}{m}(z)}{\Gamma\naw{\frac{1}{2}-m\mp\i\beta}} - \frac{\J{\beta}{-m}(z)}{\Gamma\naw{\frac{1}{2}+m\mp\i\beta}}\Big),
\end{split}
\end{align}
which solve \eqref{Whittaker-trig}.
Note that the function $\Hpm{\beta}{m}$ has been used in the proof of Theorem
\ref{spectrum} when dealing with positive eigenvalues of the Whittaker operators.

\subsection{Integral identities in the trigonometric case}

Here are the analogues of Proposition \ref{prop-lag1} and Corollary  \ref{prop-lag2}
in the trigonometric case. The approach can be mimicked from Section \ref{sec_int_id} because of the identity
$$
L_{\beta,m^2} \;\!\Hpm{\frac{\beta}{2\mu}}{m}(2\mu x) = \mu^2 \;\!\Hpm{\frac{\beta}{2\mu}}{m}(2\mu x)
$$
valid for any $\mu>0$.

\begin{proposition}
Let $\mu,\eta>0$ with $\mu<\pm\Im\big(\beta)$ and $\eta< \pm\Im\big(\beta)$.
\begin{enumerate}

\item[(i)]
If $-1<\Re(m)<1$, $m\not\in\big\{-\frac12,0,\frac12\big\}$, then
\begin{align*}
& (\mu^2-\eta^2)\int_0^\infty
\cH_{\frac{\beta}{2\mu},m}^\pm(2\mu x)\cH_{\frac{\beta}{2\eta},m}^\pm(2\eta x)\d x\\
&=\frac{\pi\e^{\mp\i\pi m}}{\sin(2\pi m)}\sqrt{4\mu\eta}
\Bigg(\frac{\mu^{m}\eta^{-m}}{\Gamma\big(\frac12+m\mp \i\frac{\beta}{2\eta}\big)\Gamma\big(\frac12-m\mp \i\frac{\beta}{2\mu}\big)}
- \frac{\eta^{m}\mu^{-m}}{\Gamma\big(\frac12+m\mp \i\frac{\beta}{2\mu}\big)
\Gamma\big(\frac12-m\mp \i\frac{\beta}{2\eta}\big)}
\Bigg).
\end{align*}

\item[(ii)]
If $m=0$, then
\begin{align*}
& (\mu^2-\eta^2)\int_0^\infty
\cH^\pm_{\frac{\beta}{2\mu},0}(2\mu x)
\cH^\pm_{\frac{\beta}{2\eta},0}(2\eta x)\d x\\
&= \sqrt{4\mu\eta} \;\!\frac{\psi\big(\frac12\mp \i\frac{\beta}{2\mu}\big)-\psi\big(\frac12\mp \i\frac{\beta}{2\eta}\big)+\ln(\mu)-\ln(\eta)}{
\Gamma\big(\frac12\mp \i\frac{\beta}{2\mu}\big)\Gamma\big(\frac12\mp \i\frac{\beta}{2\eta}\big)}.
\end{align*}

\item[(iii)]
If $m=\frac12$, then
\begin{align*}
&(\mu^2-\eta^2)\int_0^\infty
\cH^\pm_{\frac{\beta}{2\mu},\frac12}(2\mu x)
\cH^\pm_{\frac{\beta}{2\eta},\frac12}(2\eta x)\d x\\
&= \beta\frac{\frac12 \psi\big(1\mp \i \frac{\beta}{2\mu}\big)
+ \frac12 \psi\big(\mp \i \frac{\beta}{2\mu}\big)
-\frac12\psi\big(1\mp \i \frac{\beta}{2\eta}\big)
- \frac12\psi\big(\mp \i \frac{\beta}{2\eta}\big)
+\ln(\mu)-\ln(\eta)}{
\Gamma\big(1\mp \i \frac{\beta}{2\eta}\big)\Gamma\big(1\mp \i\frac{\beta}{2\mu}\big)}.
\end{align*}
\end{enumerate}
\end{proposition}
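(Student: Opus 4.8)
The plan is to mimic the argument of Proposition \ref{prop-lag1}, replacing the exponentially decaying solutions $\cK$ by the trigonometric solutions $\Hpm{}{}$. First I would set
\[
v_1(x):=\Hpm{\frac{\beta}{2\mu}}{m}(2\mu x),\qquad v_2(x):=\Hpm{\frac{\beta}{2\eta}}{m}(2\eta x),
\]
which by the identity displayed just before the statement satisfy $L_{\beta,m^2}v_1=\mu^2 v_1$ and $L_{\beta,m^2}v_2=\eta^2 v_2$. Under the hypotheses $\mu,\eta<\pm\Im(\beta)$ the asymptotic expansion \eqref{eq_near_infty} shows that $v_1,v_2$ are in $L^2$ near infinity, and since $L_{\beta,m^2}v_i$ is a multiple of $v_i$, both functions lie in $\Dom(L_{\beta,m^2}^{\max})$. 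By Proposition \ref{lem_properties}(i) their values and derivatives vanish at infinity, so $\Wr(v_1,v_2;\infty)=0$. Applying Lemma \ref{lem_int_id} on $]0,\infty[$ with $\lambda_1=\mu^2$ and $\lambda_2=\eta^2$ then reduces everything to the boundary Wronskian at $0$:
\[
(\mu^2-\eta^2)\int_0^\infty v_1(x)v_2(x)\,\d x=-\Wr(v_1,v_2;0).
\]

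For part (i) I would expand $v_1$ and $v_2$ through \eqref{Hpm-definition} as linear combinations of $\J{\frac{\beta}{2\mu}}{\pm m}(2\mu\cdot)$ and $\J{\frac{\beta}{2\eta}}{\pm m}(2\eta\cdot)$. By \eqref{eq:jbm} together with \eqref{Taylor_2}, the function $\J{\delta}{m}(z)$ has the same leading behaviour $z^{\frac12+m}/\Gamma(1+2m)$ near $0$ as $\cI_{\delta,m}(z)$; consequently the four boundary Wronskians of the $\cJ$-solutions at $0$ coincide with those of the $\cI$-solutions already computed in the proof of Proposition \ref{prop-lag1}, with $k,p$ replaced by $\mu,\eta$. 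Substituting the coefficients from \eqref{Hpm-definition}, using $(\pm\i)^2=-1$ and the reflection identity $\Gamma(1+2m)\Gamma(1-2m)=2\pi m/\sin(2\pi m)$, and rewriting $\mu^{\frac12+m}\eta^{\frac12-m}=\tfrac12\sqrt{4\mu\eta}\,\mu^{m}\eta^{-m}$, yields exactly the claimed formula (the sign from the reduction above matching the $e^{\mp\i\pi m}$ prefactor).

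For parts (ii) and (iii) the functions $\Hpm{\delta}{0}$ and $\Hpm{\delta}{\frac12}$ acquire logarithmic terms near $0$, so I would instead insert the explicit degenerate expansions of these functions derived in the proof of Theorem \ref{spectrum}. Writing, for $m=0$, $v_i(x)\sim P_i\,x^{\frac12}(\ln x+a_i)$ with $a_i$ collecting the $\ln(2\mu),\ln(2\eta)$, $\psi$ and $\gamma$ contributions, a direct computation of $v_1v_2'-v_1'v_2$ gives $\Wr(v_1,v_2;0)=P_1P_2(a_1-a_2)$; the analogous computation for $m=\frac12$, where $v_i\sim c_i+d_i\,x\ln x+e_i\,x$, gives $\Wr(v_1,v_2;0)=c_1e_2-c_2e_1$ once the divergent terms cancel. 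Simplifying $P_1P_2$ (resp. $c_i,d_i,e_i$) by means of $\Gamma(1+w)=w\Gamma(w)$ then produces the stated combinations of digamma functions.

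The main obstacle is precisely these degenerate cases: one must verify that the $\ln^2 x$ and $\ln x$ divergences in the boundary Wronskian cancel so that the limit $\Wr(v_1,v_2;0)$ exists and is finite. For $m=0$ this cancellation is automatic from the antisymmetric structure of the Wronskian, whereas for $m=\frac12$ it rests on the identity $c_1d_2=c_2d_1$, itself a consequence of $\Gamma(1+w)=w\Gamma(w)$; this is exactly what forces the appearance of the differences $\psi(\cdots)-\psi(\cdots)$ and $\ln\mu-\ln\eta$ in (ii) and (iii). As a consistency check I would also pass to the limits $m\to0$ and $m\to\frac12$ in the formula of part (i), both sides being continuous in $m$, via L'Hospital's rule as in Corollary \ref{prop-lag2}.
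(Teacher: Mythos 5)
Your proposal is correct and follows essentially the same route as the paper, which states only that "the approach can be mimicked" from the hyperbolic case: apply Lemma \ref{lem_int_id} with $\lambda_i=\mu^2,\eta^2$, kill the Wronskian at infinity via Proposition \ref{lem_properties}(i), and evaluate $\Wr(v_1,v_2;0)$ from the small-$z$ expansions of $\cH^\pm_{\delta,m}$, treating $m\in\{0,\tfrac12\}$ separately via the logarithmic expansions. Your sign bookkeeping ($(\mu^2-\eta^2)\int=-\Wr(v_1,v_2;0)$) and the cancellation mechanism $c_1d_2=c_2d_1$ in the degenerate cases are both accurate.
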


\begin{corollary}\label{cor:lag2}
Let $0<\mu<\pm\Im(\beta)$.
\begin{enumerate}

\item[(i)]
For $-1<\Re(m)<1$, $m\not \in \big\{-\frac{1}{2},0,\frac{1}{2}\big\}$ one has
\begin{equation*}
\int_0^\infty\cH_{\frac{\beta}{2\mu},m}^\pm(2\mu x)^2\d x=
\frac{\pi \e^{\mp\i\pi m}}{\sin(2\pi m)}\Bigg(\frac{2m\pm\i \frac{\beta}{2\mu}\psi\big(\frac12+m\mp \i \frac{\beta}{2\mu}\big)\mp\i \frac{\beta}{2k}\psi\big(\frac12-m\mp \i \frac{\beta}{2\mu}\big)}
{\mu \Gamma\big(\frac12+m\mp \i\frac{\beta}{2\mu}\big) \Gamma\big(\frac12-m\mp \i\frac{\beta}{2\mu}\big)}\Bigg).
\end{equation*}

\item[(ii)]
For $m=0$,
\begin{equation*}
\int_0^\infty\cH_{\frac{\beta}{2\mu},0}^\pm(2\mu x)^2\d x=
\frac{1\pm \i \frac{\beta}{2\mu}\psi'\big(\frac12\mp \i\frac{\beta}{2\mu}\big)}{\mu\Gamma\big(\frac12\mp \i\frac{\beta}{2\mu}\big)^2}.
\end{equation*}

\item[(iii)]
For $m=\frac12$,
\begin{equation*}
\int_0^\infty\cH_{\frac{\beta}{2\mu},\frac12}^\pm(2\mu x)^2\d x=
\frac{\pm \i -\frac{\beta}{4\mu}\psi'\big(\mp\i\frac{\beta}{2\mu}\big) - \frac{\beta}{4\mu}\psi'\big(1\mp\i\frac{\beta}{2\mu}\big)}{\mu\Gamma\big(\mp\i\frac{\beta}{2\mu}\big)\Gamma\big(1\mp\i\frac{\beta}{2\mu}\big)}.
\end{equation*}
\end{enumerate}
\end{corollary}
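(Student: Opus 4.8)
The plan is to obtain Corollary \ref{cor:lag2} as the diagonal limit $\eta\to\mu$ of the preceding Proposition, exactly as Corollary \ref{prop-lag2} is derived from Proposition \ref{prop-lag1} in the hyperbolic case. First I would record the two facts that make this work. On the one hand, the identity $L_{\beta,m^2}\cH^\pm_{\frac{\beta}{2\mu},m}(2\mu\,\cdot)=\mu^2\cH^\pm_{\frac{\beta}{2\mu},m}(2\mu\,\cdot)$ together with the asymptotics \eqref{eq_near_infty} shows that $\cH^\pm_{\frac{\beta}{2\mu},m}(2\mu\,\cdot)\in L^2(\R_+)$ precisely under the hypothesis $0<\mu<\pm\Im(\beta)$. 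On the other hand, evaluating the right-hand side of each of the three identities of the preceding Proposition at $\eta=\mu$ gives $0$, matching the vanishing of the prefactor $\mu^2-\eta^2$ on the left.

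For each fixed admissible $\mu$ I would then write, for $\eta$ in a punctured neighbourhood of $\mu$ still satisfying $0<\eta<\pm\Im(\beta)$,
\[
\int_0^\infty \cH^\pm_{\frac{\beta}{2\mu},m}(2\mu x)\,\cH^\pm_{\frac{\beta}{2\eta},m}(2\eta x)\,\d x=\frac{F_m(\mu,\eta)}{\mu^2-\eta^2},
\]
where $F_m(\mu,\eta)$ denotes the explicit right-hand side supplied by the preceding Proposition in the relevant case. The left-hand side converges, as $\eta\to\mu$, to $\int_0^\infty \cH^\pm_{\frac{\beta}{2\mu},m}(2\mu x)^2\,\d x$ by dominated convergence. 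Since $F_m(\mu,\mu)=0$, the right-hand side is of the form $0/0$, and L'Hôpital's rule in $\eta$ yields
\[
\int_0^\infty \cH^\pm_{\frac{\beta}{2\mu},m}(2\mu x)^2\,\d x=\frac{\partial_\eta F_m(\mu,\eta)\big|_{\eta=\mu}}{-2\mu}.
\]

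The remaining work is the differentiation of $F_m(\mu,\eta)$ in $\eta$ at $\eta=\mu$. Here one uses $\partial_\eta\log\Gamma\big(c\mp\i\tfrac{\beta}{2\eta}\big)=\pm\i\tfrac{\beta}{2\eta^2}\,\psi\big(c\mp\i\tfrac{\beta}{2\eta}\big)$ and $\partial_\eta\big(\eta^{\pm m}\big)=\pm m\,\eta^{\pm m-1}$, together with the analogous derivative of the $\ln(\eta)$ terms. In case $(i)$ the two summands of $F_m$ are analytic and coincide at $\eta=\mu$, so differentiation produces the first derivatives $\psi$; in cases $(ii)$ and $(iii)$ the bracket of $F_m$ is already a first difference of $\psi$'s and logarithms, so its $\eta$-derivative produces the trigamma values $\psi'$ appearing in the statement. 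Collecting terms and simplifying the prefactor $\sqrt{4\mu\eta}/\big(\Gamma\cdots\Gamma\cdots\big)$ at $\eta=\mu$ gives the three stated formulas.

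The main obstacle will be the justification of the interchange of the limit $\eta\to\mu$ with the integral, rather than the L'Hôpital computation, which is purely mechanical. Because the integrand decays only polynomially at infinity, at the borderline rate dictated by $\mu<\pm\Im(\beta)$, one must produce a dominating function that is integrable uniformly for $\eta$ in a neighbourhood of $\mu$: near $0$ the factors are locally integrable uniformly in $\eta$, while near infinity \eqref{eq_near_infty} bounds $\big|\cH^\pm_{\frac{\beta}{2\eta},m}(2\eta x)\big|$ by a constant times $x^{-\Im(\beta)/(2\eta)}$ in the relevant sign. This is exactly where the strict inequality $\mu<\pm\Im(\beta)$, and the corresponding constraint on $\eta$, is used.
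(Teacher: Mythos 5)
Your proposal is correct and follows essentially the same route as the paper: the paper obtains Corollary \ref{cor:lag2} from the preceding trigonometric integral identities by L'H\^opital's rule in the limit $\eta\to\mu$, exactly mirroring the passage from Proposition \ref{prop-lag1} to Corollary \ref{prop-lag2}. Your additional care about dominating the integrand uniformly for $\eta$ near $\mu$ (using \eqref{eq_near_infty} and the strict inequality $\mu<\pm\Im(\beta)$) is a welcome justification of a step the paper leaves implicit.
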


\section{The Bessel equation}
\setcounter{equation}{0}
\renewcommand{\theequation}{B.\arabic{equation}}

\subsection{The modified Bessel equation}

The {\em modified (or hyperbolic type) Bessel equation for dimension $1$}
\begin{equation}\label{lap7}
\bigg(-\partial_z^2+\Big(m^2-\frac14\Big)\frac{1}{z^2}+1\bigg)f = 0,
\end{equation}
is up to a trivial rescaling, a special case of the Whittaker equation with $\beta=0$.
Its theory was discussed at length in \cite[App.~A]{DR1}.
Nevertheless, we briefly discuss some of its elements here, explaining the parallel elements to the theory of the Whittaker equation, as well as the differences.

Let the \emph{modified Bessel function for dimension $1$} be
\begin{align}\label{asym2}
\begin{split}
\Ia_m(z)& = \sum_{n=0}^\infty\frac{\sqrt\pi\left(\frac{z}{2}\right)^{2n+m+\frac12}}{n!\Gamma(m+n+1)}\\
&=\frac{\sqrt\pi}{\Gamma(m+1)}\Big(\frac{z}{2}\Big)^{m+\frac12}
{}_{0}F_{1}\Big(m+1;\Big(\frac{z}{2}\Big)^2\Big).
\end{split}
\end{align}
The equation \eqref{lap7} is invariant with respect to $m\to-m$.
At the level of the function \eqref{asym2} this property is reflected by
\begin{equation*}
\Ia_m(z)=\e^{\mp\i \pi (\frac12+m)}\Ia_m(\e^{\pm\i\pi}z).
\end{equation*}
For the Wronskian we have
\begin{equation*}
\Wr(\Ia_m,\Ia_{-m};z)=-\sin(\pi m).
\end{equation*}

The function $\cK_m$ can be introduced
for $m\not \in \Z$ by
\begin{equation*}
\Ka_m(z)= \frac{1}{\sin(\pi m)}\big(-\Ia_{m}(z)+\Ia_{-m}(z)\big).
\end{equation*}
For $m\in\Z$ the definition is extended by continuity.
Note that the relation $\cK_m(z)=\cK_{-m}(z)$ holds, and that
\begin{align*}
\Wr(\Ka_m,\Ia_m;z)&=1.
\end{align*}

To make our presentation of the hyperbolic Bessel equation as much parallel to that
of the Whittaker equation as possible, we introduce the function
\begin{equation*}
\cX_{m}(z):=\frac12\Big(
\e^{-\i\pi(\frac{1}{2}+m)}\cK_{m}\big(\e^{\i\pi}z\big)
+ \e^{\i\pi(\frac{1}{2}+m)}\cK_{m}\big(\e^{-\i\pi}z\big)\Big).
\end{equation*}
Then the following relations hold:
\begin{align}
\nonumber \cX_m&=-
\frac{1}{\sin(\pi m)}\big(\cI_{m}-\cos(2m\pi)\cI_{-m}\big),\\
\label{uyu} \cI_m&=
\frac{1}{2\sin(m\pi)}\big(\cos( 2m\pi)\cK_m-\cX_m\big).
\end{align}
The precise relations between the Whittaker functions for $\beta=0$ and Bessel-type functions are of the form
\begin{align}\label{eqI_{0,m}}
\cI_{0,m}(z)&
= \frac{2}{\Gamma\big(\frac{1}{2}+m\big)}\cI_m\Big(\frac{z}{2}\Big),\\
\nonumber \cK_{0,m}(z)&
= \cK_m\Big(\frac{z}{2}\Big),\\
\nonumber \cX_{0,m}(z)&
= \cX_m\Big(\frac{z}{2}\Big).
\end{align}

\subsection{Recurrence relations}\label{sec_recurrence_r}

For the functions $\cI_m$ and $\cK_m$, the following recurrence relations hold:
\begin{align*}
\left(\partial_z+\Big(m-\frac12\Big)\frac{1}{z}\right)\cI_m(z)& =\cI_{m-1}(z), \\
\left(\partial_z+\Big(-m-\frac12\Big)\frac{1}{z}\right)\cI_m(z)&= \cI_{m+1}(z),\\
\left(\partial_z+\Big(m-\frac12\Big)\frac{1}{z}\right)\cK_m(z)& =-\cK_{m-1}(z), \\
\left(\partial_z+\Big(-m-\frac12\Big)\frac{1}{z}\right)\cK_m(z)&= -\cK_{m+1}(z).
\end{align*}

\subsection{Integral identities}

It is proved for example in \cite[Sec.~A.8]{DR1} that for $|\Re (m)|<1$, $m\neq 0$
and for $\Re(a+b)>0$ one has
\begin{equation}\label{int0}
\int_0^\infty \Ka_m(ax)\Ka_m(bx)\d x
= \frac{(a^{2m}-b^{2m})a^{\frac12-m}b^{\frac12-m}}{\sin(\pi m)(a^2-b^2)}.
\end{equation}
Observe that the r.h.s.~of \eqref{int0} can be extended by continuity
to $a=b$ and $m=0$ since
\begin{eqnarray*}
\lim_{m\to 0}\frac{(a^{2m}-b^{2m})a^{\frac12-m}b^{\frac12-m}}{\sin(\pi m)(a^2-b^2)}
&=&\frac{2}{\pi}\frac{\big(\ln(a)-\ln(b)\big)a^{\frac12}b^{\frac12}}{a^2-b^2},\\
\lim_{b\to a}\frac{(a^{2m}-b^{2m})a^{\frac12-m}b^{\frac12-m}}{\sin(\pi m)(a^2-b^2)}
&=&\frac{ m}{\sin (\pi m) a}, \\
\lim_{m\to 0}\lim_{b\to a}\frac{(a^{2m}-b^{2m})a^{\frac12-m}b^{\frac12-m}}{\sin(\pi m)(a^2-b^2)}&=&\frac{1}{\pi a}.
\end{eqnarray*}

We shall need another integral identity:

\begin{proposition}
For $|\Re (m)|<1$ and $\Re(a+b)>0$ one has
\begin{align}\label{quad}
& \int_0^\infty x^2 \Ka_m(ax)\Ka_m(bx)\d x \\
& =\frac{4a^{\frac12-m}b^{\frac12-m}\big\{(m-1)(a^{2m+2}-b^{2m+2})
+(m+1)a^2b^2(b^{2m-2}-a^{2m-2})\big\}}
{\sin(\pi m)(b^2-a^2)^3}.
\end{align}
In addition, the following limiting cases hold:
\begin{eqnarray*}
\int_0^\infty x^2 \Ka_0(ax)\Ka_0(bx)\d x
& = & -\frac{8a^{\frac12}b^{\frac12}}{\pi(b^2-a^2)^2}
+ \frac{8a^{\frac12}b^{\frac12}(a^2+b^2)}{\pi(b^2-a^2)^3}
\big(\ln(b)-\ln(a)\big), \\
\int_0^\infty x^2 \Ka_m(ax)^2\d x
& = & \frac{2m(1-m^2)}{3 a^3 \sin(\pi m)}, \\
\int_0^\infty x^2 \Ka_0(ax)\Ka_0(ax)\d x
& =  & \frac{2}{3\pi  a^3}.
\end{eqnarray*}
\end{proposition}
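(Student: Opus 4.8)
The plan is to reduce the second-moment integral to the already-known zeroth-moment integral \eqref{int0} by exploiting the fact that $\Ka_m(a\,\cdot)$ is an eigenfunction of the one-dimensional Bessel operator. Write $I(a,b):=\int_0^\infty \Ka_m(ax)\Ka_m(bx)\,\d x$ and $J(a,b):=\int_0^\infty x^2\Ka_m(ax)\Ka_m(bx)\,\d x$. Since $\Ka_m$ solves \eqref{lap7}, one has $\Ka_m''(z)=\big(m^2-\tfrac14\big)z^{-2}\Ka_m(z)+\Ka_m(z)$, whence $\partial_a^2\Ka_m(ax)=x^2\Ka_m''(ax)=\big((m^2-\tfrac14)a^{-2}+x^2\big)\Ka_m(ax)$. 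Differentiating twice under the integral sign then yields the key identity
\begin{equation*}
J(a,b)=\partial_a^2 I(a,b)-\Big(m^2-\tfrac14\Big)a^{-2}I(a,b).
\end{equation*}
Both sides are holomorphic in $(a,b)$ on $\{\Re(a+b)>0\}$, so it suffices to prove the closed form on the subregion $\Re(a),\Re(b)>0$, where the integrand and its $a$-derivatives decay exponentially at infinity and stay integrable at $0$ for $|\Re(m)|<1$, and then to invoke analytic continuation.

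Next I would insert the explicit formula \eqref{int0}. Pulling out the factor independent of $a$, write $I(a,b)=\frac{b^{1/2-m}}{\sin(\pi m)}\,\frac{g(a)}{a^2-b^2}$ with $g(a):=a^{1/2-m}(a^{2m}-b^{2m})=a^{1/2+m}-b^{2m}a^{1/2-m}$. The crucial observation is that $g$ is a linear combination of the Euler powers $a^{1/2\pm m}$, hence solves $g''(a)=(m^2-\tfrac14)a^{-2}g(a)$. Carrying out the two $a$-derivatives of $g(a)/(a^2-b^2)$ and subtracting $(m^2-\tfrac14)a^{-2}I$, the terms containing $g''$ cancel exactly against the subtracted term, leaving only
\begin{equation*}
J=\frac{b^{1/2-m}}{\sin(\pi m)}\,\frac{-4a\,g'(a)\,(a^2-b^2)+(6a^2+2b^2)\,g(a)}{(a^2-b^2)^3}.
\end{equation*}
Substituting $g$ and $g'$ and collecting the four monomials $a^{2m+2},\,a^{2m}b^2,\,a^2b^{2m},\,b^{2m+2}$ gives $-4a g'(a^2-b^2)+(6a^2+2b^2)g=a^{1/2-m}\big[4(1-m)(a^{2m+2}-b^{2m+2})+4(1+m)a^2b^2(a^{2m-2}-b^{2m-2})\big]$. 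Using $a^2b^2(a^{2m-2}-b^{2m-2})=a^{2m}b^2-a^2b^{2m}$ and $(a^2-b^2)^3=-(b^2-a^2)^3$, this is exactly the announced formula \eqref{quad}.

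Finally, the three limiting cases are obtained just as for \eqref{int0}: the right-hand side of \eqref{quad} is of the form $0/0$ as $m\to 0$ for fixed $a\neq b$, and of the form $0/0$ of order three as $b\to a$ for fixed $m\neq 0$, and one evaluates them by Taylor expansion, respectively L'Hôpital's rule, exactly paralleling the three limits displayed after \eqref{int0}. The double limit follows by performing the two limits successively.

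I expect the genuinely delicate points to be analytic rather than algebraic: justifying the interchange of $\partial_a^2$ with the integral together with the subsequent analytic continuation to the full range $\Re(a+b)>0$ (one must dominate the $a$-derivatives of the integrand by a fixed integrable function on a neighborhood, and check that no spurious boundary contributions at $0$ or $\infty$ are introduced by differentiation), and the careful third-order expansion required for the $b\to a$ limit, where a naive substitution produces a triple zero over triple zero. The core algebraic simplification, by contrast, is short once the cancellation coming from $g''=(m^2-\tfrac14)a^{-2}g$ is recognized.
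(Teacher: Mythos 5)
Your proof is correct, and it takes a genuinely different route from the paper's, although both strategies reduce the second moment to the known integral \eqref{int0} by differentiating under the integral sign. The paper applies the first-order recurrence relations of Section \ref{sec_recurrence_r} twice---once in $a$ and once in $b$---to write $\int_0^\infty x^2\Ka_m(ax)\Ka_m(bx)\,\d x=\big(\partial_a+\tfrac{m+1/2}{a}\big)\big(\partial_b+\tfrac{m+1/2}{b}\big)\int_0^\infty\Ka_{m+1}(ax)\Ka_{m+1}(bx)\,\d x$, and then invokes \eqref{int0} at index $m+1$; since that requires $|\Re(m+1)|<1$, the paper first restricts to $-1<\Re(m)<0$ and recovers the full strip by analytic continuation in $m$. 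You instead exploit the second-order equation \eqref{lap7} directly, writing $x^2\Ka_m(ax)=\partial_a^2\Ka_m(ax)-(m^2-\tfrac14)a^{-2}\Ka_m(ax)$ and applying \eqref{int0} at the same index $m$; the computation then collapses because the Euler powers $a^{\frac12\pm m}$ are annihilated by $\partial_a^2-(m^2-\tfrac14)a^{-2}$. I checked your coefficient bookkeeping: the four monomials carry the coefficients $4(1-m)$, $4(1+m)$, $-4(1+m)$, $-4(1-m)$, which reproduces \eqref{quad} after the two sign flips coming from $(b^2-a^2)^3=-(a^2-b^2)^3$. What your route buys is that no analytic continuation in $m$ is needed (the closed form requires $m\neq0$ in any case, and $m=0$ is one of the stated limits); what it costs is a second $a$-derivative of the integrand, whose integrability near $x=0$ uses $|\Re(m)|<1$ in exactly the borderline way you anticipate. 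The analytic points you flag---interchange of derivative and integral for $\Re(a),\Re(b)>0$ followed by continuation in $(a,b)$ to $\Re(a+b)>0$, and the evaluation of the limiting cases by Taylor expansion and L'H\^{o}pital---are handled identically (and just as tersely) in the paper.
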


\begin{proof}
Assume first that $-1<\Re(m)<0$.
By using twice the recurrence relations of Section \ref{sec_recurrence_r} one gets
\begin{equation*}
\int_0^\infty x^2 \Ka_m(ax)\Ka_m(bx)\d x
= -\Big(\partial_b{+}\frac{m{+}\frac12}{b}\Big)\int_0^\infty x\Ka_m(ax)\Ka_{m+1}(bx)\d x,
\end{equation*}
and
\begin{equation*}
\int_0^\infty x \Ka_m(ax)\Ka_{m+1}(bx)\d x
= -\Big(\partial_a{+}\frac{m{+}\frac12}{a}\Big)\int_0^\infty \Ka_{m+1}(ax)\Ka_{m+1}(bx)\d x.
\end{equation*}
Then, we infer that
\begin{align*}
&  \int_0^\infty x^2 \Ka_m(ax)\Ka_m(bx)\d x \\
&=
\Big(\partial_a{+}\frac{m{+}\frac12}{a}\Big)  \Big(\partial_b{+}\frac{m{+}\frac12}{b}\Big)\int_0^\infty \Ka_{m+1}(ax)\Ka_{m+1}(bx)\d x \\
&=\Big(\partial_a{+}\frac{m{+}\frac12}{a}\Big)  \Big(\partial_b{+}\frac{m{+}\frac12}{b}\Big) \bigg(
 \frac{(a^{2m+2}-b^{2m+2})a^{-\frac12-m}b^{-\frac12-m}}{\sin(\pi m)(b^2-a^2)}
\bigg) \\
&=a^{-\frac12-m}b^{-\frac12-m} \partial_a \partial_b
\bigg(\frac{a^{2m+2}-b^{2m+2}}{\sin(\pi m)(b^2-a^2)}\bigg)
\end{align*}
where we have used \eqref{int0} with $m+1$ instead of $m$,
and the fact that $\partial_a (a^{-\frac12-m})= -\frac{m+\frac{1}{2}}{a}a^{-\frac12-m}$.
Clearly, a similar relation holds for $a$ replaced by $b$.
By computing the derivatives, one gets the expressions provided in the statement.
This proves \eqref{quad} for $-1<\Re(m)<0$. We then extend the equality to $|\Re(m)|<1$ by analytic continuation.
Finally, the limiting cases are obtained by taking the limit $m\to 0$ in the first case,
the limit $b\to a$ in the second case, and from this result the limit $m\to 0$.
Note that the same result is obtained if we take the limits in the reverse order.
\end{proof}

\subsection{The degenerate case}

For $m\in\Z$ the following relation holds:
\begin{equation*}
\cI_{-m}(z)=\cI_m(z).
\end{equation*}
Assuming that $m\in \N$, we also have
\begin{equation*}
\Ia_m(z) = \Big(\frac{z}{2}\Big)^{m+\frac12}
\sum_{k=0}^\infty\frac{\sqrt\pi}{k!(m+k)!}
\Big(\frac{z}{2}\Big)^{2k},
\end{equation*}
and
\begin{align*}
\nonumber \Ka_m(z)&=(-1)^{m+1} \frac{2}{\pi}\;\!\ln\Big(\frac{z}{2}\Big)\;\!\Ia_m(z)
\\
\nonumber &\quad +\frac{(-1)^m}{\sqrt\pi}\Big(\frac{z}{2}\Big)^{m+\frac12}\sum_{k=0}^\infty
\frac{\psi(k+1)+\psi(m+k+1)}{k!(m+k)!}\Big(\frac{z}{2}\Big)^{2k}
\\
&\quad +\frac{(-1)^{m}}{\sqrt\pi}\Big(\frac{z}{2}\Big)^{m+\frac12}\sum_{j=1}^{m} (-1)^{j} \frac{(j-1)!}{(m-j)!} \Big(\frac{z}{2}\Big)^{-2j}.
\end{align*}

\subsection{The half-integer case}

The half-integer case of the hyperbolic Bessel equation is a special case
of the doubly degenerate case of the Whittaker equation.
However, it is worthwhile to discuss it separately. In particular, for $n\in \N$ the function $\cI_{-\frac12-n}$ is not proportional to
the function $\cI_{0,-\frac12-n}$, which is identically $0$ by \eqref{eqI_{0,m}}.

By analogy of the presentation of Section \ref{sec_dd} we can divide the half-integer case into two
regions, namely
{\bf Region $\mathrm{I}_-$} with $m\in-\frac12-\N$, and
{\bf Region $\mathrm{I}_+$} with $m\in\frac12+\N$.
The following schematic diagram of various special cases for the Bessel equation is an analog of Fig. 2.
\bigskip

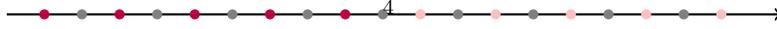
\begin{figure}[H]
\begin{center}
\begin{tikzpicture}
\setlength{\unitlength}{1cm}

\draw[thick, ->] (-5,0)--(5.3,0);
\put(5,0.1){$m$};

\filldraw[purple](-4.5,0)circle(0.06);
\filldraw[purple](-3.5,0)circle(0.06);
\filldraw[purple](-2.5,0)circle(0.06);
\filldraw[purple](-1.5,0)circle(0.06);
\filldraw[purple](-0.5,0)circle(0.06);
\filldraw[pink](4.5,0)circle(0.06);
\filldraw[pink](3.5,0)circle(0.06);
\filldraw[pink](2.5,0)circle(0.06);
\filldraw[pink](1.5,0)circle(0.06);
\filldraw[pink](0.5,0)circle(0.06);
\put(-4.1,0.1){$\scriptstyle{-}4$};
\put(-3.1,0.1){$\scriptstyle{-}3$};
\put(-2.1,0.1){$\scriptstyle{-}2$};
\put(-1.1,0.1){$\scriptstyle{-}1$};
\put(0.1,0.1){$\scriptstyle0$};
\put(1.1,0.1){$\scriptstyle1$};
\put(2.1,0.1){$\scriptstyle2$};
\put(3.1,0.1){$\scriptstyle3$};
\put(4.1,0.1){$\scriptstyle4$};

\filldraw[gray](-4,0)circle(0.06);
\filldraw[gray](-3,0)circle(0.06);
\filldraw[gray](-2,0)circle(0.06);
\filldraw[gray](-1,0)circle(0.06);
\filldraw[gray](0,0)circle(0.06);
\filldraw[gray](1,0)circle(0.06);
\filldraw[gray](2,0)circle(0.06);
\filldraw[gray](3,0)circle(0.06);
\filldraw[gray](4,0)circle(0.06);

\put(1.1,0.1){$\scriptstyle1$};
\put(2.1,0.1){$\scriptstyle2$};
\put(3.1,0.1){$\scriptstyle3$};
\put(4.1,0.1){$\scriptstyle4$};

\end{tikzpicture}
\caption{The two regions in the half-integer case}
\end{center}
\label{fig:M3}
\end{figure}

Note that unlike for the Whittaker equation, in both regions
$\mathrm{I}_-$ and  $\mathrm{I}_+$ the functions
$\cI_m$, $\cI_{-m}$ and $\cK_m$ are well defined and distinct,
and any two of them form a basis of solutions of \eqref{lap7}.
In this case all solutions are elementary functions:
For $n\in\N$ and $m=\pm(\frac12+n)$ one has
\begin{align}
\nonumber \cK_{\pm(\frac12+n)}(z)
&= (-1)^nn!(2z)^{-n}\e^{-z}L_n^{(-1-2n)}(2z),\\
\nonumber \cX_{\pm(\frac12+n)}(z)
&=\pm(-1)^nn!(2z)^{-n}\e^{z}L_n^{(-1-2n)}(-2z),\\
\cI_{\frac12+n}(z)
&=-\frac{1}{2}n!(2z)^{-n}\Big(\e^{-z}L_n^{(-1-2n)}(2z)-\e^{z}L_n^{(-1-2n)}(-2z)\Big),
\label{poi3}\\
\cI_{-\frac12-n}(z)
&=\frac{1}{2} n!(2z)^{-n}\Big(\e^{-z}L_n^{(-1-2n)}(2z)+\e^{z}L_n^{(-1-2n)}(-2z)\Big).  \label{poi4}
\end{align}
Note also that \eqref{poi3} and  \eqref{poi4} are special cases of
\eqref{uyu}, namely
\begin{align*}
\cI_{\frac12+n}(z)
&= \frac{(-1)^{n+1}}{2}\big(\cK_{\frac12+n}(z)-\cX_{\frac12+n}(z)\big), \\
\cI_{-\frac12-n}(z)
&=  \frac{(-1)^n}{2}\big(\cK_{\frac12+n}(z)+\cX_{\frac12+n}(z)\big).
\end{align*}

\subsection{The standard Bessel equation}\label{sec:standard_bessel}

The {\em standard (or trigonometric-type) Bessel equation for dimension $1$}
\begin{equation}\label{lap71}
\bigg(-\partial_z^2+\big(m^2-\frac14\big)\frac{1}{z^2}-1\bigg)f = 0,
\end{equation}
is up to a trivial rescaling, a special case of the trigonometric-type Whittaker equation with $\beta=0$.
One can introduce the following functions which solve this equation (see \cite[App.~A]{DR1} for more information)\;:
\begin{equation*}
\Ja_m(z) = \e^{\pm\i\frac{\pi}{2}(m+\frac12)}\Ia_m(\e^{\mp\i\frac{\pi}{2}}z)
= \sum_{n=0}^\infty\frac{(-1)^n\sqrt\pi\left(\frac{z}{2}\right)^{2n+m+\frac12}}{n!\Gamma(m+n+1)},
\end{equation*}
\begin{equation*}
\Ha_m^{\pm}(z) = \e^{\mp\i\frac{\pi}{2}(m+\frac12)}
\Ka_m(\e^{\mp\i\frac{\pi}{2}} z)
=\pm\i\frac{\e^{\mp\i \pi m}\Ja_m(z)- \Ja_{-m}(z)}{\sin (\pi m)},
\end{equation*}
and
\begin{equation*}
\cY_m(z):=\frac{\cos(\pi m)\cJ_m(z)-\cJ_{-m}(z)}{\sin(\pi m)}.
\end{equation*}

\subsection{The zero eigenvalue Whittaker equation}\label{B5}

The zero eigenvalue Whittaker equation is provided by the equation
\begin{equation}\label{whit0}
L_{\beta,m^2}f := \bigg(-\partial_z^2+\Big(m^2-\frac14\Big)\frac{1}{z^2}-\frac{\beta}{z}\bigg)f=0.
\end{equation}
It is easy to see that if $v$ solves the trigonometric Bessel equation of dimension 1 \eqref{lap7} with parameter $2m$, then the function $f$ defined by $f(x):= (\beta x)^{\frac14}v(2\sqrt{\beta x})$ solves the equation \eqref{whit0}.

One can also obtain solutions of \eqref{whit0} by rescaling solutions of the hyperbolic-type or trigonometric-type Whittaker equation:

\begin{proposition}\label{propB1}
For any fixed $x\in \R_+$, $m \in \Pi$ and $\beta \in \C^{\times}$, one has
\begin{align}
\label{zero.1} \lim_{k\to0}\Big(\frac{1}{2k}\Big)^{\frac12+m}
\cI_{\frac{\beta}{2k},m}(2kx)&= \beta^{-m-\frac12}
\frac{(\beta x)^{\frac14}}{\sqrt{\pi}}\cJ_{2m}(2\sqrt{\beta x}),\\
\label{zero.2}
\lim_{k \to0}\Big(\frac{1}{2k}\Big)^{\frac12+m}
\cJ_{\frac{\beta}{2k},m}(2k x)&= \beta^{-m-\frac12}
\frac{(\beta x)^{\frac14}}{\sqrt{\pi}}\cJ_{2m}(2\sqrt{\beta x}).
\end{align}
For any fixed $x\in \R_+$, any $m \in \Pi$ and $\beta \in \C^\times$, one has
\begin{align}
\label{zero3}
\lim_{k\to0} \mp \i \frac{\Gamma\big(\frac12+m-\frac{\beta}{2k}\big)}{\sqrt{\pi}}
\Big(\frac{\beta}{2k}\Big)^{\frac12-m}
\cK_{\frac{\beta}{2k},m}(2kx)&=
(\beta x)^{\frac14}\cH_{2m}^\pm(2\sqrt{\beta x}),\\
\label{zero4}
\lim_{\mu \to0}\frac{\Gamma\big(\frac12+m\mp\i\frac{\beta}{2\mu }\big)}{\sqrt{\pi}}
\Big(\frac{\beta}{2\mu }\Big)^{\frac12-m}
\cH_{\frac{\beta}{2\mu },m}^\pm(2\mu x)&
=(\beta x)^{\frac14}\cH_{2m}^\pm(2\sqrt{\beta x}).
\end{align}
where the first limit is taken such that $\pm \big(\arg(\beta)-\arg(k)\big)\in ]\varepsilon, \pi-\varepsilon[$ with $\varepsilon > 0$,
and the second limit is taken with $\mu>0$ and is valid if $\Re(\beta)>0$.
\end{proposition}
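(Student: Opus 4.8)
The plan is to establish \eqref{zero.1} first, directly from the power series, and then to deduce the remaining three limits from it. First I would insert the series \eqref{Taylor_2} (with the upper-sign choice) into the left-hand side of \eqref{zero.1}, obtaining
\[
\Big(\frac{1}{2k}\Big)^{\frac12+m}\cI_{\frac{\beta}{2k},m}(2kx)
= x^{\frac12+m}\e^{-kx}\sum_{j=0}^\infty
\frac{\big(\frac12+m-\frac{\beta}{2k}\big)_j\,(2k)^j\,x^j}{\Gamma(1+2m+j)\,j!}.
\]
The elementary identity $\big(\frac12+m-\frac{\beta}{2k}\big)_j(2k)^j=\prod_{i=0}^{j-1}\big(2k(\frac12+m+i)-\beta\big)$ shows that each coefficient converges to $(-\beta)^j$ as $k\to0$, so the termwise limit is $x^{\frac12+m}\sum_{j\ge0}\frac{(-\beta)^j x^j}{\Gamma(1+2m+j)j!}$, which I would identify with $\beta^{-m-\frac12}\frac{(\beta x)^{\frac14}}{\sqrt\pi}\cJ_{2m}(2\sqrt{\beta x})$ using the power series of $\cJ_{2m}$ recorded in Section \ref{sec:standard_bessel}. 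To justify exchanging limit and summation I would use the bound $\prod_{i=0}^{j-1}|2k(\frac12+m+i)-\beta|\le C^j j!$ for $|k|\le1$ (the same estimate as in the proof of Proposition \ref{prop:not_continuous}), which dominates the $j$-th term by $(Cx)^j/|\Gamma(1+2m+j)|$; this majorant is summable and independent of $k$, and $\e^{-kx}\to1$, so dominated convergence applies.

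Then \eqref{zero.2} would follow from \eqref{zero.1} and the definition \eqref{eq:jbm}. Writing $\cJ_{\frac{\beta}{2k},m}(2kx)=\e^{\mp\i\frac{\pi}{2}(\frac12+m)}\cI_{\mp\i\frac{\beta}{2k},m}(\e^{\pm\i\frac{\pi}{2}}2kx)$ and setting $k'=\e^{\pm\i\frac{\pi}{2}}k$, the right-hand side becomes $\e^{\mp\i\frac{\pi}{2}(\frac12+m)}\cI_{\frac{\beta}{2k'},m}(2k'x)$. Since $(\frac{1}{2k})^{\frac12+m}=\e^{\pm\i\frac{\pi}{2}(\frac12+m)}(\frac{1}{2k'})^{\frac12+m}$, the two phase factors cancel, and because \eqref{zero.1} holds as $k\to0$ along an arbitrary ray, letting $k'\to0$ yields exactly the right-hand side of \eqref{zero.2}.

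For \eqref{zero3} I would substitute \eqref{generic} to express $\cK_{\frac{\beta}{2k},m}$ through $\cI_{\frac{\beta}{2k},\pm m}$; the resulting quantity is precisely the function $g_{\beta,m,k}$ appearing in the proof of Proposition \ref{prop:not_continuous}. The piece carrying $\cI_{\frac{\beta}{2k},-m}$ is handled by \eqref{zero.1} with $-m$ in place of $m$ and produces the term $\mp\i\,\sin(2\pi m)^{-1}(\beta x)^{\frac14}\cJ_{-2m}(2\sqrt{\beta x})$. The piece carrying $\cI_{\frac{\beta}{2k},m}$ additionally involves the ratio $\frac{\Gamma(\frac12+m-\frac{\beta}{2k})}{\Gamma(\frac12-m-\frac{\beta}{2k})}$, and the one genuinely new ingredient is its asymptotics $\frac{\Gamma(\frac12+m-\frac{\beta}{2k})}{\Gamma(\frac12-m-\frac{\beta}{2k})}\sim\big(-\frac{\beta}{2k}\big)^{2m}$ as $\frac{\beta}{2k}\to\infty$. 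This is exactly where the hypothesis $\pm(\arg(\beta)-\arg(k))\in ]\varepsilon,\pi-\varepsilon[$ is used: it keeps $-\frac{\beta}{2k}$ in the sector $|\arg|<\pi-\varepsilon$ where Stirling's ratio formula holds, and it fixes the branch $(-\frac{\beta}{2k})^{2m}=\e^{\mp2\pi\i m}(\frac{\beta}{2k})^{2m}$. Collecting both limits, the combination reassembles $\pm\i\,\frac{\e^{\mp2\pi\i m}\cJ_{2m}-\cJ_{-2m}}{\sin(2\pi m)}(2\sqrt{\beta x})$, that is $\cH_{2m}^\pm(2\sqrt{\beta x})$ by the definition in Section \ref{sec:standard_bessel}. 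Finally \eqref{zero4} is proved in the same way, starting from \eqref{Hpm-definition} instead of \eqref{generic} and using \eqref{zero.2} for the $\cJ_{\frac{\beta}{2\mu},\pm m}$ factors; the conditions $\mu>0$ and $\Re(\beta)>0$ ensure that $\mp\i\frac{\beta}{2\mu}\to\infty$ in a sector away from the poles of $\Gamma$, so the same Gamma-ratio asymptotic applies.

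The main obstacle is the control of branches and phases in \eqref{zero3}--\eqref{zero4}: one must follow the Gamma-ratio asymptotic along a complex ray $\frac{\beta}{2k}\to\infty$ and isolate exactly the factor $\e^{\mp2\pi\i m}$, so that the two Bessel pieces recombine into $\cH_{2m}^\pm$ with the correct sign; the sector hypotheses are designed precisely for this. The degenerate values $m\in\{-\frac12,0,\frac12\}$, for which \eqref{generic} and \eqref{Hpm-definition} are not literally valid, would be treated either by continuity in $m$ (both sides being continuous) or directly from the limiting forms of $\cK_{\beta,m}$ and $\cH^\pm_{\beta,m}$. The remaining analytic points, namely the termwise convergence and $\e^{-kx}\to1$, are routine.
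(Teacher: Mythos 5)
Your proposal is correct and follows essentially the same route as the paper's proof: termwise passage to the limit in the series \eqref{Taylor_2} justified by the bound $\big|\big(\tfrac12+m\mp\tfrac{\beta}{2k}\big)_j(\pm 2k)^j\big|\le c^j j!$ and dominated convergence for \eqref{zero.1}, the rotation \eqref{eq:jbm} for \eqref{zero.2}, and the decomposition \eqref{generic} combined with the Stirling ratio asymptotic of Lemma \ref{lem_key} (including the branch identification $(-\tfrac{\beta}{2k})^{2m}=\e^{\mp 2\pi\i m}(\tfrac{\beta}{2k})^{2m}$) for \eqref{zero3}--\eqref{zero4}. The only cosmetic difference is that you rederive \eqref{zero4} from \eqref{Hpm-definition} directly rather than deducing it from \eqref{zero3} by the rotation relation, and you explicitly flag the degenerate values $m\in\{-\tfrac12,0,\tfrac12\}$, which the paper leaves implicit.
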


\begin{proof}
Using the definition of Pochhammer's symbol recalled in Section \ref{subsec:general}, one infers that
\begin{equation*}
\lim_{k \to 0}\Big(\frac12+m\mp\frac{\beta}{2k}\Big)_j(\pm2k)^j=(-\beta)^j.
\end{equation*}
In addition, for all $k \in \C$ with $|k|<1$, one has
\begin{equation*}
\Big|\Big(\frac12+m\mp\frac{\beta}{2k}\Big)_j(\pm2k)^j\Big|\leq c^j j!
\end{equation*}
for some constant $c$ independent of $k$ and $j$. Hence, by an application of the version of the Lebesgue dominated convergence theorem for series, one gets
\begin{equation*}
\lim_{k\to0}  \sum_{j=0}^\infty
\frac{\big(\frac12+m\mp\frac{\beta}{2k}\big)_j(\pm2kx)^j}
{\Gamma(1+2m+j)j!}= \sum_{j=0}^\infty\frac{(-\beta x)^j}{\Gamma(1+2m+j)j!},
\end{equation*}
which leads directly to the equality \eqref{zero.1}.
The equality \eqref{zero.2} can then be deduced from \eqref{zero.1}
by using the relation \eqref{eq:jbm} between the functions $\cI_{\beta,m}$ and $\cJ_{\beta,m}$.

For \eqref{zero3}, by using successively \eqref{generic}, \eqref{gamma},
\eqref{zero.1}, and \cite[App.~A.5]{DR1} one gets
\begin{align*}
\nonumber &\mp\i\frac{\Gamma\big(\frac12+m-\frac{\beta}{2k}\big)}{\sqrt{\pi}}
\Big(\frac{\beta}{2k}\Big)^{\frac12-m}
\cK_{\frac{\beta}{2k},m}(2kx)\\
\nonumber &= \frac{\mp\i\sqrt{\pi}}{\sin(2\pi m)}
\Big(\frac{\beta}{2k}\Big)^{\frac12-m}
\Big(-\frac{\Gamma\big(\frac12+m-\frac{\beta}{2k}\big)}{\Gamma\big(\frac{1}{2}-m-\frac{\beta}{2k}\big)}
\cI_{\frac{\beta}{2k},m}(2kx)
+ \cI_{\frac{\beta}{2k},-m}(2kx)\Big) \\
 & =
\frac{\mp\i\sqrt{\pi}}{\sin(2\pi m)}
\Big(-\e^{\mp\i\pi 2m}\Big(\frac{\beta}{2k}\Big)^{\frac12+m}
\cI_{\frac{\beta}{2k},m}(2kx)
+ \Big(\frac{\beta}{2k}\Big)^{\frac12-m}\cI_{\frac{\beta}{2k},-m}(2kx)\Big) + o( 1 ) \\
 & =
\frac{\mp\i}{\sin(2\pi m)}
\Big(-\e^{\mp\i\pi 2m}(\beta x)^{\frac14}\cJ_{2m}(2\sqrt{\beta x})
+ (\beta x)^{\frac14}\cJ_{-2m}(2\sqrt{\beta x})\Big) + o( 1 ) \\
\nonumber &= (\beta x)^{\frac14}\cH_{2m}^\pm(2\sqrt{\beta x}) + o (1 ),
\end{align*}
where we have used that $\pm \arg\big(\frac{\beta}{2k}\big) \in ]0,\pi]$  and that $\big|\arg\big(-\frac{\beta}{2k}\big)\big|<\pi-\varepsilon$ for $\varepsilon>0$. The equality \eqref{zero4} can then be deduced from \eqref{zero3} by using the relation \eqref{Hpm-definition} between the functions  $\cK_{\beta,m}$ and $\cH_{\beta,m}^\pm$.
\end{proof}

The following lemma plays a key role in the above proof.

\begin{lemma}\label{lem_key}
Let $a,b \in \C$. For $|z|\to\infty$ with $|\arg(z)|<\pi-\varepsilon$ and $\varepsilon>0$ one has
\begin{equation}\label{gamma}
\lim_{z\to \infty} \frac{\Gamma(a+z)}{\Gamma(b+z)}z^{b-a} =1.
\end{equation}
\end{lemma}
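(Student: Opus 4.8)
The plan is to derive \eqref{gamma} from Stirling's asymptotic formula for the logarithm of the Gamma function. Recall that for any $\delta>0$ one has, uniformly for $|\arg(w)|\leq \pi-\delta$ as $|w|\to\infty$,
\begin{equation*}
\ln\Gamma(w)=\Big(w-\tfrac12\Big)\ln(w)-w+\tfrac12\ln(2\pi)+O\big(w^{-1}\big).
\end{equation*}
Since $a$ and $b$ are fixed, for $|z|$ large enough and $|\arg(z)|<\pi-\varepsilon$ both $a+z$ and $b+z$ lie in the slightly narrower sector $|\arg(\cdot)|\leq \pi-\varepsilon/2$ (because $\arg(a+z)-\arg(z)=\arg(1+a/z)\to0$), so the formula above applies to $w=a+z$ and to $w=b+z$. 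First I would subtract the two resulting expansions: the terms $\tfrac12\ln(2\pi)$ cancel, and the linear terms combine to $-(a+z)+(b+z)=-(a-b)$.

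The key step is then to expand the two remaining logarithmic contributions. Writing $\ln(a+z)=\ln(z)+\ln\big(1+a/z\big)$ with the principal branch (legitimate once $|z|$ is large and $z$ stays away from $]-\infty,0]$) and using $\ln(1+a/z)=a/z+O\big(z^{-2}\big)$, a short computation gives
\begin{equation*}
\Big(a+z-\tfrac12\Big)\ln(a+z)=\Big(a+z-\tfrac12\Big)\ln(z)+a+O\big(z^{-1}\big),
\end{equation*}
and likewise with $b$ in place of $a$. Subtracting these, the $\ln(z)$ coefficients yield $(a-b)\ln(z)$ and the constant terms yield $a-b$, so that altogether
\begin{equation*}
\ln\Gamma(a+z)-\ln\Gamma(b+z)=(a-b)\ln(z)+(a-b)-(a-b)+O\big(z^{-1}\big)=(a-b)\ln(z)+O\big(z^{-1}\big).
\end{equation*}
Adding $(b-a)\ln(z)=\ln\big(z^{b-a}\big)$ cancels the leading logarithmic term exactly, leaving $\ln\big(\Gamma(a+z)\Gamma(b+z)^{-1}z^{b-a}\big)=O\big(z^{-1}\big)$. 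Exponentiating and letting $|z|\to\infty$ then gives the claimed limit $1$.

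I expect the only genuine obstacle to be bookkeeping rather than anything deep: one must check that the constant $a$ produced by expanding $(a+z-\tfrac12)\ln(a+z)$ is precisely the one that cancels the linear contribution $-(a-b)$, so that no stray constant survives the limit, and one must keep track of the branch of the logarithm so that $\ln(a+z)-\ln(z)=\ln(1+a/z)$ holds for the principal branch in the given sector. The uniformity of Stirling's estimate on sectors bounded away from $]-\infty,0]$ is exactly what licenses passing to the narrower sector for $a+z$ and $b+z$; beyond that the argument is a direct computation.
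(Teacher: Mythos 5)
Your proposal is correct and follows essentially the same route as the paper: both start from the logarithmic Stirling expansion $\ln\Gamma(w)=(w-\tfrac12)\ln(w)-w+\tfrac12\ln(2\pi)+O(w^{-1})$, subtract the two instances at $w=a+z$ and $w=b+z$, and exponentiate. You merely spell out the cancellation of constants that the paper dismisses with ``this readily implies,'' and your bookkeeping is accurate.
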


\begin{proof}
Recall first the logarithmic version of Stirling formula \cite[Eq.~6.1.41]{AS}\;\!:
\begin{equation*}
\ln\big(\Gamma(z)\big)= z\ln(z)-z+\frac12\ln(2\pi)-\frac12\ln(z)+
O\Big(\frac{1}{z}\Big).
\end{equation*}
This readily implies that
\begin{equation*}
\ln\big(\Gamma(a+z)\big)-\ln\big(\Gamma(b+z)\big)
+ (b-a)\ln(z) \underset{z\to\infty}\to0.
\end{equation*}
After exponentiation it leads to the statement.
\end{proof}

\subsection{Integrals for zero eigenvalue solutions of the Whittaker equation}

Based on the results of the previous sections and on Lemma \ref{lem_int_id}, one easily gets:

\begin{proposition} \label{prop_last}
Let $k\in \C$ with $\Re(k)>0$ and let $\beta\in \C$ with $\pm \Im(\sqrt{\beta})>0$.
If $m\in \C$ with $|\Re(m)|<1$, one has
\begin{align}\notag
& \int_0^\infty(\beta x)^{\frac14}\cH_{2m}^{\pm}(2\sqrt{\beta x})
\cK_{\frac{\beta}{2k},m}(2k x)
\d x\\ \label{into1}
&=\mp \i
\frac{(2\pi k\beta)^{\frac{1}{2}}}{\sin(2\pi m)}\Bigg(
\frac{\big(\frac{\beta}{2k}\big)^{-m}}{\Gamma\big(\frac12-m-\frac{\beta}{2k}\big)}
-\e^{\mp\i2\pi m}
\frac{\big(\frac{\beta}{2k}\big)^{m}}{\Gamma\big(\frac12+m-\frac{\beta}{2k}\big)}\Bigg).
\end{align}
Observe that the r.h.s.~of \eqref{into1} can be extended by continuity
to $m\in\big\{0,\frac12\big\}$ with
\begin{align*}
\eqref{into1}\big|_{m=0}=&
\mp \i \frac{1}{\sqrt{\pi}}\frac{(2k \beta)^{\frac{1}{2}}}{\Gamma\big(\frac{1}{2}-\frac{\beta}{2k}\big)}
\bigg(\psi\Big(\frac{1}{2}-\frac{\beta}{2k}\Big)-\ln\Big(\frac{\beta}{2k}\Big)\pm \i \pi\bigg),\\
\eqref{into1}\big|_{m=\frac12}=&
\pm\i\frac{1}{\sqrt{\pi}}\frac{2k}{\Gamma\big(-\frac{\beta}{2k}\big)}
\bigg(\frac12\psi\Big(-\frac{\beta}{2k}\Big)
+\frac12\psi\Big(1-\frac{\beta}{2k}\Big)-\ln\Big(\frac{\beta}{2k}\Big)\pm \i \pi\bigg).
\end{align*}
\end{proposition}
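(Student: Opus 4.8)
The plan is to read the integrand as a product of two eigenfunctions of $L_{\beta,m^2}$ and to exploit the Lagrange-type identity of Lemma \ref{lem_int_id}. Set $v_1(x):=(\beta x)^{1/4}\cH^\pm_{2m}(2\sqrt{\beta x})$ and $v_2(x):=\cK_{\frac{\beta}{2k},m}(2kx)$. As recalled through \eqref{hankel} in the proof of Theorem \ref{spectrum}, $v_1$ is a zero-energy solution, $L_{\beta,m^2}v_1=0$, while a routine rescaling of the variable shows $L_{\beta,m^2}v_2=-k^2v_2$. Under the standing hypotheses $\pm\Im(\sqrt{\beta})>0$ and $\Re(k)>0$, both $v_1$ (by the large-argument behavior of $\cH^\pm_{2m}$ established in the proof of Theorem \ref{spectrum}) and $v_2$ (by \eqref{Kbm-around-infinity}) lie in $L^2$ near infinity, and both lie in $L^2$ near $0$ when $|\Re(m)|<1$; hence $v_1,v_2\in\Dom(L_{\beta,m^2}^{\max})$. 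Applying Lemma \ref{lem_int_id} on $[a,b]$ with $\lambda_1=0$, $\lambda_2=-k^2$ and letting $a\searrow0$, $b\to\infty$, the boundary term at infinity vanishes by Proposition \ref{lem_properties}(i), so the integral is governed by the single boundary Wronskian $\Wr(v_1,v_2;0):=\lim_{x\searrow0}\Wr(v_1,v_2;x)$, up to the factor $k^2$ coming from the spectral gap, exactly as in the proof of Proposition \ref{prop-lag1}.

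It then remains to evaluate $\Wr(v_1,v_2;0)$. For $2m\notin\Z$, the near-zero expansion of $\cH^\pm_{2m}$ reproduced in the proof of Theorem \ref{spectrum} gives $v_1(x)=\mp\i\frac{\sqrt\pi}{\sin(2\pi m)}\big(\frac{\beta^{1/2-m}}{\Gamma(1-2m)}x^{1/2-m}-\e^{\mp\i2\pi m}\frac{\beta^{1/2+m}}{\Gamma(1+2m)}x^{1/2+m}\big)+\cdots$, while \eqref{generic} together with \eqref{eq_a2} gives $v_2(x)=(2k)^{1/2+m}\frac{\Gamma(-2m)}{\Gamma(\frac12-m-\frac{\beta}{2k})}x^{1/2+m}+(2k)^{1/2-m}\frac{\Gamma(2m)}{\Gamma(\frac12+m-\frac{\beta}{2k})}x^{1/2-m}+\cdots$. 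Both are of the form $c_+x^{1/2+m}+c_-x^{1/2-m}$, and since $\Wr(x^{1/2+m},x^{1/2-m};x)=-2m$ the limit is $-2m(a_+b_--a_-b_+)$, the antisymmetric combination of the four leading coefficients (the cross terms of order $x^{\pm 2m}$ cancel automatically because both factors solve the same operator). Using the reflection identities $\Gamma(2m)\Gamma(1-2m)=\pi/\sin(2\pi m)$ and $\Gamma(\pm2m)/\Gamma(1\pm2m)=\pm(2m)^{-1}$, the expression collapses to the bracket of \eqref{into1}, which yields the claimed formula.

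The remaining and delicate point is the exceptional values $m\in\{0,\tfrac12\}$, where the exponents $x^{1/2\pm m}$ coincide and logarithms appear; this is the main obstacle. I see three routes. One is to repeat the Wronskian computation with the degenerate expansions, using \eqref{eq_a12} and \eqref{eq_a0} for $\cK_{\frac{\beta}{2k},\frac12}$ and $\cK_{\frac{\beta}{2k},0}$ and the corresponding logarithmic expansions of $\cH^\pm_1,\cH^\pm_0$ from the proof of Theorem \ref{spectrum}; the $\ln(x)$ contributions cancel in the Wronskian and leave the digamma expressions quoted in the statement. The cleaner alternative, which I would prefer, is to note that the generic right-hand side of \eqref{into1} is holomorphic in $m$ across $\{0,\tfrac12\}$ and to obtain the two special cases by L'Hospital as $m\to0$ and $m\to\tfrac12$, precisely as the limiting cases of Corollary \ref{prop-lag2} were deduced. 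Finally, a self-contained derivation of the whole statement is to start from Proposition \ref{prop-lag1}, multiply by $\mp\i\,\Gamma(\tfrac12+m-\tfrac{\beta}{2p})(\tfrac{\beta}{2p})^{1/2-m}/\sqrt\pi$ and let $p\to0$, invoking \eqref{zero3} together with a uniform exponential bound of the type established in the proof of Proposition \ref{prop:not_continuous} to pass the limit under the integral sign; the same passage applied along the exceptional lines of Proposition \ref{prop-lag1} produces the two displayed limiting formulas.
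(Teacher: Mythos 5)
Your primary route is the paper's own: the authors prove this proposition by a one-line appeal to Lemma \ref{lem_int_id} (``Based on the results of the previous sections and on Lemma \ref{lem_int_id}, one easily gets''), so the Lagrange identity with $\lambda_1=0$, $\lambda_2=-k^2$, the vanishing of the Wronskian at infinity via Proposition \ref{lem_properties}(i), and the evaluation of $\Wr(v_1,v_2;0)$ from the near-zero expansions is exactly the intended argument, and your identification of $v_1,v_2$ as elements of $\Dom(L_{\beta,m^2}^{\max})$ is correct.

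There is, however, one point you must not gloss over: the constant. With $\lambda_1=0$ and $\lambda_2=-k^2$ the identity \eqref{lagrange} gives $k^2\int_0^\infty v_1v_2\,\d x=-\Wr(v_1,v_2;0)$, and the Wronskian computed from the coefficients you quote yields exactly minus the right-hand side of \eqref{into1}; hence your method proves $k^2\int(\cdots)=\text{(r.h.s.\ of \eqref{into1})}$, so your two phrases ``up to the factor $k^2$'' and ``which yields the claimed formula'' are mutually inconsistent. The discrepancy is real and lies in the printed statement rather than in your computation: the substitution $x\mapsto x/\lambda$ shows that the integral is homogeneous of degree $-1$ under $(k,\beta)\mapsto(\lambda k,\lambda\beta)$, while the right-hand side of \eqref{into1} is homogeneous of degree $+1$ (the bracket depends only on $\frac{\beta}{2k}$ and $(2\pi k\beta)^{\frac12}$ has degree one), so the two sides cannot be equal as printed and a factor $k^{-2}$ is missing on the right (consistently also in the two limiting formulas and in \eqref{eq_last_one}). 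State this explicitly instead of absorbing it. Two further caveats. First, for $\frac12\le\Re(m)<1$ the cross terms of order $x^{1-2m}$ in $\Wr(v_1,v_2;x)$ do not vanish individually as $x\searrow0$; they cancel because the $O(x)$ corrections multiplying the $x^{\frac12-m}$ branches of $v_1$ and $v_2$ are both $-\frac{\beta}{1-2m}x$, which is what your parenthetical ``because both factors solve the same operator'' is really using and deserves a sentence. Second, your third route (letting $p\to0$ in Proposition \ref{prop-lag1}) has a genuine gap: the uniform bound $c_1\e^{c_2x}$ established in the proof of Proposition \ref{prop:not_continuous} has $c_2$ independent of $p$ but not small, so the product with $\cK_{\frac{\beta}{2k},m}(2k\cdot)\sim\e^{-kx}$ need not be dominated by an integrable function when $\Re(k)\le c_2$, and Remark \ref{curious} shows that exchanging such limits with the integral can genuinely fail in this very setting. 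Keep the Wronskian route as the proof and drop, or substantially repair, the third one.
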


In the next proposition, we consider the integral of $\big((\beta x)^{\frac14}\cH_{2m}^{\pm}(2\sqrt{\beta x})\big)^2$ which cannot be computed by the same means.

\begin{proposition}\label{propB6}
Let $\beta\in \C$ with $\pm \Im(\sqrt{\beta})>0$. For all $-1 < \Re( m ) < 1$, one has
\begin{align}
& \int_0^\infty\Big((\beta x)^{\frac14}\cH_{2m}^{\pm}(2\sqrt{\beta x})\Big)^2\d x = \frac{m \big(4m^2-1\big) \e^{\mp \i\pi 2m}}{3\beta\sin(2\pi m)}. \label{eq:last_3}
\end{align}
\end{proposition}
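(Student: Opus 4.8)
The plan is to reduce the integral to a known Bessel integral by a change of variables, since the Wronskian method of Lemma \ref{lem_int_id} is useless here: the two factors are the \emph{same} zero-energy solution, so $\lambda_1-\lambda_2=0$ and \eqref{lagrange} only yields $0=0$. Instead, recall from Section \ref{B5} that $x\mapsto(\beta x)^{\frac14}v(2\sqrt{\beta x})$ solves the zero eigenvalue Whittaker equation \eqref{whit0} whenever $v$ solves the trigonometric Bessel equation with parameter $2m$, and that $\cH_{2m}^\pm$ is such a $v$. So I would substitute $t=2\sqrt{\beta x}$, i.e.\ $x=\frac{t^2}{4\beta}$ and $\d x=\frac{t}{2\beta}\d t$, which (using $\big((\beta x)^{\frac14}\big)^2=\sqrt{\beta x}=t/2$) turns the left-hand side of \eqref{eq:last_3} into
\begin{equation*}
\int_0^\infty\big((\beta x)^{\frac14}\cH_{2m}^\pm(2\sqrt{\beta x})\big)^2\d x=\frac{1}{4\beta}\int_{R_\beta}t^2\,\cH_{2m}^\pm(t)^2\,\d t ,
\end{equation*}
where $R_\beta$ is the ray $\{r\e^{\i\arg(\beta)/2}\mid r>0\}$.

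Next I would unwind the definition $\cH_{2m}^\pm(t)=\e^{\mp\i\frac{\pi}{2}(2m+\frac12)}\cK_{2m}(\e^{\mp\i\frac{\pi}{2}}t)$ of the dimension-$1$ Bessel--Hankel function and change variables once more to $s=\e^{\mp\i\frac{\pi}{2}}t$. A short bookkeeping of the phase factors (in which $t^2\,\d t$ and $\cH_{2m}^\pm(t)^2$ contribute $\e^{\mp\i\pi}$ and $\mp\i\,\e^{\mp\i2\pi m}$ respectively) gives
\begin{equation*}
\int_{R_\beta}t^2\,\cH_{2m}^\pm(t)^2\,\d t=-\e^{\mp\i2\pi m}\int_{R'_\beta}s^2\,\cK_{2m}(s)^2\,\d s ,
\end{equation*}
where $R'_\beta$ is the ray $\arg(s)=\arg(\beta)/2\mp\frac{\pi}{2}$, which lies in the open right half-plane precisely because $\pm\Im(\sqrt\beta)>0$. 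I would then rotate $R'_\beta$ onto $]0,\infty[$ by Cauchy's theorem and invoke the modified-Bessel integral identity proved above, namely $\int_0^\infty s^2\cK_\mu(s)^2\,\d s=\frac{2\mu(1-\mu^2)}{3\sin(\pi\mu)}$ with $\mu=2m$, to obtain $\int_{R'_\beta}s^2\cK_{2m}(s)^2\,\d s=\frac{4m(1-4m^2)}{3\sin(2\pi m)}$. Collecting the constants yields exactly $\frac{m(4m^2-1)\e^{\mp\i2\pi m}}{3\beta\sin(2\pi m)}$, which is the right-hand side of \eqref{eq:last_3}.

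The main obstacle is the justification of the contour rotation. I would close the region between $R'_\beta$ and the positive real axis with circular arcs of radii $\varepsilon$ and $R$, and let $\varepsilon\to0$, $R\to\infty$. The large arc is controlled by the exponential decay $\cK_{2m}(s)=\e^{-s}\big(1+O(s^{-1})\big)$, valid in the whole sector $|\arg(s)|<\frac{\pi}{2}$ (this follows from \eqref{Kbm-around-infinity} together with the rescaling \eqref{eqI_{0,m}} relating $\cK_{0,m}$ to $\cK_m$), so that $s^2\cK_{2m}(s)^2=O\big(s\,\e^{-2\Re(s)}\big)$ and the arc contribution vanishes; the small arc vanishes since near $0$ one has $s^2\cK_{2m}(s)^2=O\big(|s|^{3-4|\Re(m)|}\big)$, which is integrable for $|\Re(m)|<1$. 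Some care is also needed with the branches of $\e^{\pm\i\pi/2}$ and with the exceptional values $m\in\{0,\pm\frac12\}$, where the right-hand side of \eqref{eq:last_3} is read in the limiting sense and one passes to the limit in $m$ on both sides, using the continuous extension of the $\cK_\mu$-identity. Alternatively, the same value can be obtained by letting $k\to0$ in the explicit expression for $\int_0^\infty\cK_{\frac{\beta}{2k},m}(2kx)^2\,\d x$ of Corollary \ref{prop-lag2}; that route, however, additionally requires a uniform-in-$k$ integrable majorant at infinity (the decay rate of $\cK_{\frac{\beta}{2k},m}(2k\cdot)$ degenerates as $k\to0$), which the present substitution avoids entirely.
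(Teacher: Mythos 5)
Your proof is correct and is essentially the paper's own argument: the same change of variables reducing the left-hand side to $-\frac{\e^{\mp\i 2\pi m}}{4\beta}\int_0^\infty s^2\,\cK_{2m}(s)^2\,\d s$, followed by a contour rotation onto the positive half-line (which you justify more carefully than the paper, which only asserts a "vanishing contribution at infinity") and the known value of that Bessel integral. Two small points to tighten: the identity $\int_0^\infty s^2\cK_\mu(s)^2\,\d s=\frac{2\mu(1-\mu^2)}{3\sin(\pi\mu)}$ is established in the appendix only for $\abs{\Re(\mu)}<1$, whereas you need it at $\mu=2m$ with $\abs{\Re(m)}<1$, so one must extend it to $\abs{\Re(\mu)}<2$ by analytic continuation in $\mu$ (or use the closed form $\frac{2}{3\pi}\Gamma(2-\mu)\Gamma(2+\mu)$, as the paper does); and your parenthetical phase bookkeeping assigns $\e^{\mp\i\pi}$ to $t^2\,\d t$ while the correct factor is $\e^{\pm\i 3\pi/2}=\mp\i$ (the Jacobian $\d t=\e^{\pm\i\pi/2}\d s$ must be included), although your displayed identity and the final constant are nevertheless correct.
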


\begin{proof}
Let us consider for $|\Re(m)|<2$ the integral $\int_0^\infty y^2\cK_{2m}(y)^2\d y$.
After a change of variable and by taking into account the relation between the MacDonald function for dimension $1$ and the usual MacDonald function one infers from \cite{Wolfram} that
\begin{align}
\nonumber \int_0^\infty y^2\cK_{2m}(y)^2\d y
& = \frac{2}{3 \pi}\Gamma(2-2m)\Gamma(2+2m) \\
\nonumber & = \frac{4m}{3 \pi}(1-2m)(1+2m)\Gamma(1-2m)\Gamma(2m) \\
& = \frac{4m}{3 \sin(2\pi m)}\big(1-4m^2\big) .
\end{align}
Note that this result can also be obtained by an analytic continuation of the
result obtained in \ref{quad}.
By a contour integration with a vanishing contribution at infinity, one gets that
for $\pm \Im(\sqrt{\beta})>0$,
\begin{align*}
& \int_0^\infty\Big((\beta x)^{\frac14}\cH_{2m}^{\pm}(2\sqrt{\beta x})\Big)^2\d x\\
&=\frac{1}{2}\int_0^\infty2\sqrt{\beta x}\e^{\mp\i\pi( 2 m+\frac12)}
\cK_{2m}\big(\e^{\mp\i\frac{\pi}{2}}2\sqrt{\beta x}\big)^2\d x\\
&=-\frac{\e^{\mp\i \pi 2 m}}{4\beta}\int_0^\infty y^2\cK_{2m}(y)^2\d y .
\end{align*}
This leads to the statement of the proposition.
\end{proof}

\begin{remark}\label{curious}
Curiously, a naive computation suggests incorrectly that
\begin{equation*}
\int_0^\infty\Big( (\beta x)^{\frac14}\cH_{2m}^{\pm}(2\sqrt{\beta x})\Big)^2\d x=0.
\end{equation*}
Indeed, for $m\not\in\big\{-\frac12,0,\frac12\big\}$ and $k\in \C$ with $\Re(k)>0$,
and such that $\pm \big(\arg(\beta)-\arg(k)\big)\in ]\varepsilon, \pi-\varepsilon[$ with $\varepsilon > 0$,
one has
\begin{align}
\label{eq_last_one}
&\int_0^\infty(\beta x)^{\frac14}\cH_{2m}^{\pm}(2\sqrt{\beta x})
\bigg[ \mp \i \frac{\Gamma\big(\frac12+m-\frac{\beta}{2k}\big)}{\sqrt{\pi}}
\Big(\frac{\beta}{2k}\Big)^{\frac12-m} \cK_{\frac{\beta}{2k},m}(2k x)\bigg]
\d x\\
\nonumber &= \mp \i \frac{\Gamma\big(\frac12+m-\frac{\beta}{2k}\big)}{\sqrt{\pi}}
\Big(\frac{\beta}{2k}\Big)^{\frac12-m} \int_0^\infty(\beta x)^{\frac14}\cH_{2m}^{\pm}(2\sqrt{\beta x})
\cK_{\frac{\beta}{2k},m}(2k x)
\d x\\
\nonumber &=-
\frac{\beta}{\sin(2\pi m)}
\Bigg(
\frac{\Gamma\big(\frac12+m-\frac{\beta}{2k}\big)}{\Gamma\big(\frac12-m-\frac{\beta}{2k}\big)}\Big(\frac{\beta}{2k}\Big)^{-2m}
-\e^{\mp\i\pi 2m}
\Bigg).
\end{align}
By taking a limit as $k\to 0$, one obtains from Lemma \ref{lem_key} that
\begin{equation}\label{eq_last_two}
\lim_{k\to 0} \Bigg(
\frac{\Gamma\big(\frac12+m-\frac{\beta}{2k}\big)}{\Gamma\big(\frac12-m-\frac{\beta}{2k}\big)}\Big(\frac{\beta}{2k}\Big)^{-2m}
-\e^{\mp\i\pi 2m}
\Bigg) = 0.
\end{equation}
Although by \eqref{zero3} the term in the square braket of \eqref{eq_last_one} converges
pointwise to $(\beta x)^{\frac14}\cH_{2m}^\pm(2\sqrt{\beta x})$,
a limit $\lim_{k\to 0}$ and the integral in \eqref{eq_last_one} can certainly not be exchanged, since otherwise it would lead to a contradiction.
\end{remark}

To conclude, we give a lemma which was used in the proof of Proposition \ref{prop:not_continuous}.

\begin{lemma}\label{lemB5}
For $|z|\to\infty$ with $|\arg(z)|<\pi-\varepsilon$ and $\varepsilon>0$ one has
\begin{align*}
\psi(b+z)-\psi(c+z)=&\frac{b-c}{z}+\frac{(b-c)(1-b-c)}{2z^2} \\
&+\frac{(b-c)[ 1 - 3 (b + c) + 2(  b^2+bc+c^2 ) ]}{6z^3} + O\Big(\frac1{z^4}\Big).
\end{align*}
\end{lemma}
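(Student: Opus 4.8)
The plan is to reduce the statement to the standard asymptotic expansion of the digamma function and then expand in powers of $1/z$. Recall (for instance by differentiating the logarithmic Stirling formula already recalled in the proof of Lemma~\ref{lem_key}, or directly from the asymptotic expansion of $\psi$ in \cite{AS}) that in any sector $|\arg(w)|<\pi-\varepsilon$ one has
\begin{equation*}
\psi(w)=\ln(w)-\frac{1}{2w}-\frac{1}{12w^2}+O\big(w^{-4}\big),\qquad |w|\to\infty,
\end{equation*}
the next correction being of order $w^{-4}$ rather than $w^{-3}$ because, beyond the $w^{-1}$ term, the Bernoulli-number expansion of $\psi$ involves only even inverse powers of $w$.

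First I would apply this expansion with $w=b+z$ and with $w=c+z$. For $|z|$ large and $|\arg(z)|<\pi-\varepsilon$, both $b+z$ and $c+z$ eventually lie in a slightly narrower sector $|\arg(\cdot)|<\pi-\varepsilon'$, so the expansion is valid there; this is the same sectorial argument as in Lemma~\ref{lem_key}. Writing $b+z=z(1+b/z)$ and expanding $\ln(b+z)=\ln(z)+\tfrac bz-\tfrac{b^2}{2z^2}+\tfrac{b^3}{3z^3}+O(z^{-4})$, together with the geometric expansions of $\tfrac{1}{2(b+z)}$ and $\tfrac{1}{12(b+z)^2}$ up to order $z^{-3}$, one obtains
\begin{equation*}
\psi(b+z)=\ln(z)+\Big(b-\tfrac12\Big)\frac1z+\Big(\tfrac b2-\tfrac{b^2}{2}-\tfrac1{12}\Big)\frac1{z^2}+\Big(\tfrac b6-\tfrac{b^2}{2}+\tfrac{b^3}{3}\Big)\frac1{z^3}+O\big(z^{-4}\big).
\end{equation*}

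Then I would subtract the identical expansion with $b$ replaced by $c$. The $\ln(z)$ terms cancel, and collecting the coefficients of $1/z$, $1/z^2$ and $1/z^3$ yields $b-c$, then $\tfrac12(b-c)(1-b-c)$, and finally $\tfrac16(b-c)\big[1-3(b+c)+2(b^2+bc+c^2)\big]$, where in the last two the common factor $b-c$ is extracted using $b^2-c^2=(b-c)(b+c)$ and $b^3-c^3=(b-c)(b^2+bc+c^2)$. This is exactly the asserted expansion. The only step deserving a word of justification is the sectorial validity of the digamma expansion under the shifts by $b$ and $c$, noted above; the remainder is the elementary bookkeeping of collecting terms, so there is no genuine obstacle here.
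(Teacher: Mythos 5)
Your proposal is correct and follows essentially the same route as the paper: both start from the standard asymptotic expansion $\psi(w)=\ln(w)-\frac{1}{2w}-\frac{1}{12w^2}+O(w^{-4})$ (the paper cites \cite[Eq.~6.3.18]{AS}), substitute $w=b+z$ and $w=c+z$, expand each piece in powers of $1/z$, and collect coefficients using $b^2-c^2=(b-c)(b+c)$ and $b^3-c^3=(b-c)(b^2+bc+c^2)$. Your explicit remark on the absence of a $w^{-3}$ term (even Bernoulli powers) and on the sectorial validity after the shifts is a welcome, if minor, addition to what the paper writes.
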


\begin{proof}
The asymptotic expansion of the $\psi$ function is provided in \cite[Eq.~6.3.18]{AS}
and reads as $|z|\to\infty$ with $|\arg(z)|<\pi-\varepsilon$ and $\varepsilon>0$\;\!:
\begin{equation*}
\psi(z)=\ln(z)-\frac{1}{2z}-\frac{1}{12z^2}+O\Big(\frac{1}{z^4}\Big).
\end{equation*}
Hence
\begin{align*}
&\psi(b+z)-\psi(c+z) \\
& = \ln(b+z)-\frac{1}{2(b+z)} -\frac{1}{12(b+z)^2}
-\ln(c+z)+\frac{1}{2(c+z)}+\frac{1}{12(c+z)^2}+O\Big(\frac{1}{z^4}\Big)\\
&=
\ln\Big(1+\frac{b}{z}\Big)-\frac{1}{2z(1+\frac{b}{z})}-\frac{1}{12z^2(1+\frac{b}{z})^2} \\
&\quad - \ln\Big(1+\frac{c}{z}\Big)+\frac{1}{2z(1+\frac{c}{z})} + \frac{1}{12z^2(1+\frac{c}{z})^2} +O\Big(\frac{1}{z^4}\Big) \\
&= \frac{b-c}{z}+\frac{c^2-b^2 +b-c}{2z^2} + \frac{b-c - 3b^2 + 3c^2 + 2 b^3 - 2 c^3 }{6z^3} + O\Big(\frac{1}{z^4}\Big)
\end{align*}
which leads directly to the statement.
\end{proof}

\begin{acknowledgements}
S.~R. was supported
by the grant\emph{Topological invariants
through scattering theory and noncommutative geometry} from Nagoya University,
and by JSPS Grant-in-Aid for scientific research (C) no 18K03328.
\end{acknowledgements}

\end{document}